\newcommand{\comment}[1]{}
\def\tn{\textnormal}
\def\mc{\mathcal}
\def\NN{{\mathbb N}}
\def\Hom{\tn{Hom}}
\def\Mor{\tn{Mor}}
\def\Fun{\tn{Fun}}
\def\RQry{\tn{RQry}}
\def\Ob{\tn{Ob}}
\def\Arr{\tn{Arr}}
\def\SEL*{\tn{SEL*}}
\def\hsp{\hspace{.3in}}
\def\singleton{{\{*\}}}
\def\queryto{\leadsto}
\newcommand{\church}[1]{\llbracket #1\rrbracket}
\def\Loop{{\mcL oop}}
\def\LoopSchema{{\parbox{.5in}{\fbox{\xymatrix{\LMO{s}\ar@(l,u)[]^f}}}}}
\def\to{\rightarrow}
\def\down{\downarrrow}
\def\cross{\times}
\def\taking{\colon}
\def\too{\longrightarrow}
\def\ss{\subseteq}
\def\iso{\cong}
\def\down{\downarrow}
\def\|{{\;|\;}}
\def\m1{{-1}}
\def\ol{\overline}
\def\vect{\overrightarrow}
\newcommand{\LMO}[1]{\stackrel{#1}{\bullet}}
\newcommand{\LTOO}[1]{\stackrel{\sf{#1}}{\circ}}
\newcommand{\LTO}[1]{\stackrel{\sf{#1}}{\bullet}}
\def\ullimit{\ar@{}[rd]|(.3)*+{\lrcorner}}
\def\urlimit{\ar@{}[ld]|(.3)*+{\llcorner}}
\def\lllimit{\ar@{}[ru]|(.3)*+{\urcorner}}
\def\lrlimit{\ar@{}[lu]|(.3)*+{\ulcorner}}
\def\ulhlimit{\ar@{}[rd]|(.3)*+{\diamond}}
\def\urhlimit{\ar@{}[ld]|(.3)*+{\diamond}}
\def\llhlimit{\ar@{}[ru]|(.3)*+{\diamond}}
\def\lrhlimit{\ar@{}[lu]|(.3)*+{\diamond}}
\newcommand{\clabel}[1]{\ar@{}[rd]|(.5)*+{#1}}
\newcommand{\TriRight}[7]{\xymatrix{#1\ar[dr]_{#2}\ar[rr]^{#3}&&#4\ar[dl]^{#5}\\&#6\ar@{}[u] |{\Longrightarrow}\ar@{}[u]|>>>>{#7}}}
\newcommand{\TriLeft}[7]{\xymatrix{#1\ar[dr]_{#2}\ar[rr]^{#3}&&#4\ar[dl]^{#5}\\&#6\ar@{}[u] |{\Longleftarrow}\ar@{}[u]|>>>>{#7}}}
\newcommand{\TriIso}[7]{\xymatrix{#1\ar[dr]_{#2}\ar[rr]^{#3}&&#4\ar[dl]^{#5}\\&#6\ar@{}[u] |{\Longleftrightarrow}\ar@{}[u]|>>>>{#7}}}
\newcommand{\arr}[1]{\ar@<.5ex>[#1]\ar@<-.5ex>[#1]}
\newcommand{\arrr}[1]{\ar@<.7ex>[#1]\ar@<0ex>[#1]\ar@<-.7ex>[#1]}
\newcommand{\arrrr}[1]{\ar@<.9ex>[#1]\ar@<.3ex>[#1]\ar@<-.3ex>[#1]\ar@<-.9ex>[#1]}
\newcommand{\arrrrr}[1]{\ar@<1ex>[#1]\ar@<.5ex>[#1]\ar[#1]\ar@<-.5ex>[#1]\ar@<-1ex>[#1]}
\newcommand{\To}[1]{\xrightarrow{#1}}
\newcommand{\Too}[1]{\xrightarrow{\ \ #1\ \ }}
\newcommand{\From}[1]{\xleftarrow{#1}}
\newcommand{\Adjoint}[4]{\xymatrix@1{{#2}\ar@<.5ex>[r]^-{#1} &{#3} \ar@<.5ex>[l]^-{#4}}}
\def\id{\tn{id}}
\def\Cat{{\bf Cat}}
\def\Inst{{\bf Inst}}
\def\Set{{\bf Set}}
\def\set{{\text \textendash}{\bf Set}}
\def\inst{{{\text \textendash}\bf \Inst}}
\def\dispInt{\parbox{.1in}{$\int$}}
\def\bhline{\Xhline{2\arrayrulewidth}}
\def\bbhline{\Xhline{2.5\arrayrulewidth}}
\def\mcC{\mc{C}}
\def\mcD{\mc{D}}
\def\mcL{\mc{L}}
\def\mcS{\mc{S}}
\def\mcT{\mc{T}}
\def\mcU{\mc{U}}
\def\TSig{{\bf TSig}}
\newtheorem{thm}{Theorem}
\newtheorem{defn}{Definition}
\newtheorem{theorem}{Theorem}[subsection]
\newtheorem{lemma}[theorem]{Lemma}
\newtheorem{proposition}[theorem]{Proposition}
\newtheorem{corollary}[theorem]{Corollary}
\theoremstyle{remark}
\newtheorem{remark}[theorem]{Remark}
\newtheorem{example}[theorem]{Example}
{
\newtheorem{question}[theorem]{Question}
\newtheorem{guess}[theorem]{Guess}

\theoremstyle{definition}
\newtheorem{definition}[theorem]{Definition}

\newtheorem{construction}[theorem]{Construction}

\setcounter{tocdepth}{2}
\usepackage{diagrams}

\begin{document}

\title{Relational Foundations For Functorial Data Migration
\footnote{Extended version of the DBPL 2015 paper.} }
\author{David I. Spivak, Ryan Wisnesky
\footnote{Work funded by ONR grants N000141010841 and N000141310260.}
\\ Massachusetts Institute of Technology \\ {\sf \{dspivak, wisnesky\}@math.mit.edu} 
}
\date{\today}
\maketitle
\vspace*{-.4in}
\begin{abstract}
We study the data transformation capabilities associated with schemas that are presented by directed multi-graphs and path equations.  Unlike most approaches which treat graph-based schemas as abbreviations for relational schemas, we treat graph-based schemas as categories.  A schema $S$ is a finitely-presented category, and the collection of all $S$-instances forms a category, $S\inst$.  A functor $F$ between schemas $S$ and $T$, which can be generated from a visual mapping between graphs, induces three adjoint data migration functors, $\Sigma_F\taking S\inst \to T\inst$, $\Pi_F\taking S\inst \to T\inst$, and $\Delta_F\taking T\inst \to S\inst$.  We present an algebraic query language FQL based on these functors, prove that FQL is closed under composition, prove that FQL can be implemented with the select-project-product-union relational algebra (SPCU) extended with a key-generation operation, and prove that SPCU can be implemented with FQL.    
\end{abstract}
\setlength{\columnsep}{.83cm}\begin{multicols}{2}

\section{Introduction}

In this paper we describe how to implement {\it functorial data migration}~\cite{Spivak:2012:FDM:2324905.2325108} using relational algebra, and vice versa.  In the {\it functorial data model}~\cite{Spivak:2012:FDM:2324905.2325108}, a database schema is a finitely-presented category (a directed multigraph with path equations~\cite{BW}), and a database instance on a schema $S$ is a functor from $S$ to the category of sets, $\Set$.  The database instances on a schema $S$ constitute a category, denoted $S\inst$, and a functor $F$ between schemas $S$ and $T$ induces three data migration functors: $\Delta_F\taking T\inst \to S\inst $, defined as $\Delta_F(I) := F \circ I$, and its left and right adjoints $\Sigma_F\taking S\inst \to T\inst$ and $\Pi_F\taking S\inst \to T\inst$, respectively. We define a {\it functorial query language}, FQL, in which every query denotes a data migration functor, prove that FQL is closed under composition, and define a translation from FQL to the select-project-product-union (SPCU) relational algebra extended with an operation for creating globally unique IDs (GUIDs)\footnote{An operation to create IDs is required because functorial data migration can create constants not contained in input instances.}.  In addition, we describe how to implement the SPCU relational algebra with FQL.  

Our primary motivation for this work is to use SQL as a practical deployment platform for the functorial data model.  Using the results of this paper we have implemented a SQL compiler for FQL, and a (partial) translator from SQL to FQL, as part of a visual schema mapping tool available at {\sf categoricaldata.net/fql.html}.  We are interested in the functorial data model because we believe its category-theoretic underpinnings provide a better mathematical foundation for information integration than the relational model.  For example, many practical ``relational'' systems (including SQL-based RDBMSs) are not actually relational, but are based on bags and unique row IDs---like the functorial data model.  Numerous advantages of the functorial data model are described in~\cite{Spivak:2012:FDM:2324905.2325108} and~\cite{Fleming02adatabase}.

\subsection{Related Work}

Although category presentations---directed multi-graphs with path equivalence constraints---are a common notation for schemas~\cite{Buneman92theoreticalaspects}, prior work rarely treats such schemas categorically~\cite{Fleming02adatabase}.  Instead, most prior work treats such schemas as abbreviations for relational schemas.  For example, in Clio~\cite{haas:clio}, users draw lines connecting related elements between two schemas-as-graphs and Clio generates a relational query (semantically similar to our $\Sigma$) that implements the user's intended data transformation.  Behind the scenes, the user's correspondence is translated into a formula in the relational language of second-order tuple generating dependencies, from which a query is generated~\cite{Fagin:2005:CSM:1114244.1114249}.  

In many ways, our work is an extension and improvement of Rosebrugh et al's initial work on understanding category presentations as database schemas~\cite{Fleming02adatabase}.  In that work, the authors identify the $\Sigma$ and $\Pi$ data migration functors, but they do not identify $\Delta$ as their adjoint. Moreover, they do not implement $\Sigma$ and $\Pi$ using relational algebra, and they do not formalize a query language or investigate the behavior of $\Sigma$ and $\Pi$ with respect to composition; neither do they consider ``attributes''.  Our mathematical development diverges from their subsequent work on ``sketches''~\cite{Johnson200251}, in which some or all of $\Delta$, $\Sigma$, and  $\Pi$ data migration functors may no longer exist.  

Category-theoretic techniques were instrumental in the development of the nested relational data model~\cite{Wong:1994:QNC:921235}.  However, we do not believe that the functorial data model and the nested relational model are closely connected.  The functorial data model is more closely related to work on categorical logic and type theory, where operations similar to $\Sigma, \Pi$, and $\Delta$ often appear under the slogan that ``quantification is adjoint to substitution''~\cite{opac-b1094856}.

%
\subsection{Contributions and Outline}

We make the following  contributions:

\begin{enumerate}

\item The naive functorial data model~\cite{Fleming02adatabase}, as sketched above, is not quite appropriate for many practical information management tasks.  Intuitively, this is because many categorical constructions are only defined up to isomorphism; at a practical level, this means that every instance must contain only meaningless IDs~\cite{Spivak:2012:FDM:2324905.2325108}.  The idea to extend the functorial data model with ``attributes'' to capture meaningful concrete data such as strings and integers was originally developed in~\cite{Spivak:2012:FDM:2324905.2325108}.  In this paper, we refine that notion of attribute so that our schemas become special kinds of entity-relationship (ER) diagrams (those in ``categorical normal form''), and our instances can be represented as relational tables that conform to such ER diagrams.       (Sections 2 and 3)

\item We define an algebraic query language FQL in which every query denotes a data migration functor in the above extended sense.    We show that FQL queries are closed under composition, and how every query in FQL can be described as a triplet of graph correspondences (similar to schema mappings~\cite{haas:clio}).   Determining whether a triplet of  arbitrary graph correspondences is an FQL query is semi-decidable. (Section 4)

\item We provide a translation of FQL into the SPCU relational algebra of selection, projection, cartesian product, and union extended with a globally unique ID generator that constructs $N+1$-ary tables from $N$-ary tables by adding a globally unique ID to each row.  This allows us to generate SQL programs that implement FQL.  A corollary is that materializing result instances of FQL queries has polynomial time data complexity. (Section 5)
\\ 
\item We show that every SPCU query {\it under bag semantics} is expressible in FQL, and how to extend FQL with an operation allowing it to capture every SPCU query under set semantics.  (Section 6)

\end{enumerate}

For every schema $S$, the category $S\inst$ is a topos~\cite{BW}.  Hence, the functorial data model admits other operations on instances besides the functorial data migration operations that are the focus of this paper.  We conclude with a discussion of which of these operations can be implemented in SPCU+keygen (Section 7).

\section{Categorical Data}\label{sec:FDM}

In this section we define the original signatures and instances of~\cite{Spivak:2012:FDM:2324905.2325108}, as well as ``typed signatures'' and ``typed instances'', which are our extension of~\cite{Spivak:2012:FDM:2324905.2325108} to attributes.  The basic idea is that signatures are stylized ER diagrams that denote categories, and our database instances can be represented as instances of such ER diagrams, and vice versa (up to natural isomorphism).  From this point on, we will distinguish between a {\it signature}, which is a finite presentation of a category, and a {\it schema}, which is the category a signature denotes. 

\subsection{Signatures}
 The functorial data model~\cite{Spivak:2012:FDM:2324905.2325108} uses directed labeled multi-graphs and path equalities for signatures.   A {\it path} $p$ is defined inductively as:
$$
p ::= node \ | \ p.edge 
$$
A {\it signature} $S$ is a finite presentation of a category.  That is, a signature $S$ is a triple $S := (N,E,C)$ where $N$ is a finite set of nodes, $E$ is a finite set of directed edges, and $C$ is a finite set of path equations.  For example:
\begin{align*}
\xymatrix@=10pt{&\LTO{Emp}\ar@<.5ex>[rrrrr]^{\sf{worksIn}}\ar@(l,u)[]+<0pt,10pt>^{\sf{manager}}&&&&&\LTO{Dept}\ar@<.5ex>[lllll]^{\sf{secretary}}	}
\end{align*} 
\begin{footnotesize}
\begin{eqnarray*}
&		{\sf Emp}.{\sf manager}.{\sf worksIn} = {\sf Emp}.{\sf worksIn} & \\
&		{\sf Dept}.{\sf secretary}.{\sf worksIn} = {\sf Dept} &
\end{eqnarray*}
\end{footnotesize}
Here we see a signature $S$ with two nodes, three edges, and two path equations.  This information generates a category $\church{S}$: the free category on the graph, modulo the equivalence relation induced by the path equations.  The category $\church{S}$ is the {\it schema} for $S$, and database instances over $\church{S}$ are functors $\church{S}\to\Set$.  Every path $p\taking X \to Y$ in a signature $S$ denotes a morphism $\church{p}\taking \church{X} \to \church{Y}$ in $\church{S}$.  Given two paths $p_1, p_2$ in a signature $S$, we say that they are {\em equivalent}, written $p_1 \cong p_2$, if $\church{p_1}$ and $\church{p_2}$ are the same morphism in $\church{S}$. Two signatures $S$ and $T$ are {\it isomorphic}, written $S \cong T$, if they denote isomorphic schemas, i.e. if the categories they generate are isomorphic.  

\subsection{Cyclic Signatures}

If a signature contains a loop, it may or may not denote a category with infinitely many morphisms.  Hence, some constructions over signatures may not be computable.  Testing if two paths in a signature are equivalent is known as the {\it word problem} for categories. The word problem can be semi-decided using the ``completion without failure'' extension~\cite{Bachmair89completionwithout} of the Knuth-Bendix algorithm.  This algorithm first attempts to construct a strongly normalizing re-write system based on the path equalities; if it succeeds, it yields a linear time decision procedure for the word problem~\cite{xyz}. If a signature denotes a finite category, the Carmody-Walters algorithm~\cite{Carmody1995459} will compute its denotation.  This algorithm computes {\it left Kan extensions} and can be used for many purposes in computational category theory~\cite{Fleming02adatabase}.  

\subsection{Instances}

Let $S$ be a signature.  A $S$-instance is a functor from $\llbracket S \rrbracket$ to the category of sets.  We will represent instances as relational tables using the following binary format:

\begin{itemize}
\item To each node $N$ corresponds an ``identity'' or ``entity'' table named $N$, a reflexive table with tuples of the form $(x,x)$.   We can specify this using first-order logic:
\begin{align}\label{dia:entity FBR}
\forall x y. N(x,y) \Rightarrow x = y.
\end{align}
The entries in these tables are called {\it ID}s or {\it keys}, and for the purposes of this paper we require them to be globally unique.  We call this the {\it globally unique key assumption}.  Note that it is possible to use unary tables instead of binary tables, but we have found the uniform binary format to be simpler when manipulating instances using e.g., SQL.
\item To each edge $e\taking N_1 \to N_2$ corresponds a ``link'' table $e$ between identity tables $N_1$ and $N_2$. The axioms below say that every edge $e\taking N_1 \to N_2$ designates a total function $N_1 \to N_2$:
\begin{align}\label{dia:link FBR}
\begin{array}{l}
\forall x y. \ e(x,y) \Rightarrow N_1(x,x) \\
\forall x y. \ e(x,y) \Rightarrow N_2(y,y) \\
\forall x y z. \ e(x,y) \wedge e(x,z)  \Rightarrow y = z \\
\forall x. \ N_1(x,x) \Rightarrow \exists  y. e(x,y)
\end{array}
\end{align}
\end{itemize}
An example instance on our employees schema is:
\begin{footnotesize}
$$
\begin{tabular}{| l | l |}\bhline
\multicolumn{2}{| c |}{{\sf Emp}}\\\bbhline 
{\sf  Emp}&{\sf  Emp}\\\hline 
101& 101 \\\hline 
102& 102 \\\hline 
103& 103 \\\bhline
\end{tabular}
\ \ \ \ \ \ 
\begin{tabular}{| l | l |}\bhline
\multicolumn{2}{| c |}{{\sf Dept}}\\\bbhline 
{\sf  Dept}&{\sf  Dept}\\\hline 
q10& q10\\\hline 
x02& x02 \\\bhline 
\end{tabular}
$$
$$
\begin{tabular}{| l | l |}\bhline
\multicolumn{2}{| c |}{{\sf manager}}\\\bbhline 
{\sf  Emp}&{\sf  Emp}\\\hline 
101&103\\\hline 
102&102\\\hline 
103&103 \\\bhline
\end{tabular}
\ \ \ \ \ \ 
\begin{tabular}{| l | l |}\bhline
\multicolumn{2}{| c |}{{\sf worksIn}}\\\bbhline 
{\sf  Emp}&{\sf  Dept}\\\hline 
101&q10\\\hline 
102&q10\\\hline 
103&x02\\\bhline 
\end{tabular}
\ \ \ \ \ \ \
\begin{tabular}{| l | l |}\bhline
\multicolumn{2}{| c |}{{\sf secretary}}\\\bbhline 
{\sf  Dept}&{\sf  Emp}\\\hline 
x02&102\\\hline 
q10&101\\\bhline 
\end{tabular}
$$
\end{footnotesize}

To save space, we will sometimes present instances in a ``joined'' format:

\begin{footnotesize}
$$
\begin{tabular}{| c | c | c |}\bhline
\multicolumn{3}{| c |}{{\sf Emp}}\\\bbhline 
{\sf  Emp}&{\sf  manager }&{\sf  worksIn}\\\hline 
101& 103 & q10 \\\hline 
102& 102 & q10 \\\hline 
103& 103 & x02 \\\bhline
\end{tabular}
\ \ \ \ \ \ 
\begin{tabular}{| l | l |}\bhline
\multicolumn{2}{| c |}{{\sf Dept}}\\\bbhline 
{\sf  Dept}&{\sf  secretary}\\\hline 
q10& 102\\\hline 
x02& 101 \\\bhline 
\end{tabular}
$$
\end{footnotesize}

The natural notion of equivalence of instances is {\it isomorphism}.  In particular, the actual constants in the above tables should be considered meaningless IDs~\cite{Abiteboul:1989:OIQ:67544.66941}.  
\vspace{-.1in}
\subsection{Homomorphisms}
Let $S$ be a signature and $I, J : S \to \Set$ be $S$-instances.  A {\it database homomorphism} $h : I \Rightarrow J$ is a {\it natural transformation} from $I$ to $J$: for every node $N$ in $S$, $h_N$ is a function from the IDs in $I(N)$ to the IDs in $J(N)$ such that for every path $p : X \to Y$ in $S$, we have $h_Y \circ I(p) = J(p) \circ h_X$.  Homomorphisms can be represented as binary tables; for example, the identity homomorphism in our employees example is:

\begin{footnotesize}
%
%
$$
     \begin{tabular}{| l | l |}\bhline
\multicolumn{2}{| c |}{{\sf Emp}}\\\bbhline 
{\sf  Emp}&{\sf  Emp}\\\hline 
101& 101 \\\hline 
102& 102 \\\hline 
103& 103 \\\bhline
\end{tabular}
\ \ \ \ \ \ 
\begin{tabular}{| l | l |}\bhline
\multicolumn{2}{| c |}{{\sf Dept}}\\\bbhline 
{\sf  Dept}&{\sf  Dept}\\\hline 
q10& q10\\\hline 
x02& x02 \\\bhline 
\end{tabular}
$$
\end{footnotesize}
\vspace{-.2in}
\subsection{Attributes}
Signatures and instances, as defined above, do not have enough structure to be useful in practice.  At a practical level, we usually need fixed atomic domains like {\sf String} to store actual data.  We cannot simply include a node named, e.g., {\sf String} in a signature because instances must be compared up to isomorphism of IDs but not isomorphism of strings. Hence, in this section we extend the functorial data model by adding {\it attributes}.

Let $S$ be a signature.  A {\it typing} for $S$ is a triple $(A, i, \gamma)$ where $A$ is a discrete category (category containing only objects and identity morphisms) consisting of attribute names,  $i$ is a functor from $A$ to $\llbracket S \rrbracket$ mapping each attribute to its corresponding node, and $\gamma$ is a functor from $A$ to $\Set$, mapping each attribute to its type (e.g., the set of strings, the set of integers).  
$$\xymatrix@=25pt{
A\ar[rr]^i\ar[dr]_\gamma&&\llbracket S \rrbracket \\
&\Set
}
$$
Borrowing from ER-diagram notation, we will write attributes as open circles and omit types.  For example, we can enrich our previous signature with a typing (where each attribute has type {\sf String}) as follows:
\begin{align*}
\xymatrix@=10pt{&\LTO{Emp}\ar@<.5ex>[rrrrr]^{\sf{worksIn}}\ar@(l,u)[]+<0pt,10pt>^{\sf{manager}}\ar@{-}[dddl]\ar@{-}[dddr]&&&&&\LTO{Dept}\ar@<.5ex>[lllll]^{\sf{secretary}}\ar@{-}[ddd]\\\\\\\LTOO{FName}&&\LTOO{LName}&~&~&~&\LTOO{DName}			}
\end{align*}
\begin{footnotesize}
			$$
		{\sf Emp}.{\sf manager}.{\sf worksIn} = {\sf Emp}.{\sf worksIn}
		$$
		\vspace{-.1in}
		$$
		{\sf Dept}.{\sf secretary}.{\sf worksIn} = {\sf Dept}
		$$
		\end{footnotesize}
		
Importantly, and unlike~\cite{Spivak:2012:FDM:2324905.2325108}, paths may not refer to attributes; they may only refer to nodes and edges.  

\subsection{Typed Instances}
Let $S$ be a signature and $(A,i,\gamma)$ a typing.  A {\it typed instance} $I$ is a pair $(I^\prime,\delta)$ where $I'\taking\mc \llbracket S \rrbracket \to\Set$ is an (untyped) instance together with a natural transformation $\delta\taking I^\prime \circ i\Rightarrow\gamma$.  Intuitively, $\delta$ associates an appropriately typed constant (e.g., a string) to each ID in $I^\prime$: 
$$\xymatrix@=25pt{
A\ar[rr]^i\ar[dr]_\gamma&\ar@{}[d]|(.4){\stackrel{\delta}{\Leftarrow}}&\llbracket S \rrbracket \ar[dl]^{I^\prime}\\
&\Set
}
$$
  We represent the $\delta$-part of a typed instance $I$ as a set of binary tables as follows:
\begin{itemize}
\item To each node $N$ in $S$ and attribute $a \in i^{-1}(N)$ corresponds a binary attribute table mapping each ID in $I(N)$ to a value in $\gamma(a)$.  
\end{itemize}
In our employees example, we might add the following:

\begin{footnotesize}
$$
\begin{tabular}{| l | l |}\bhline
\multicolumn{2}{| c |}{{\sf FName}}\\\bbhline 
{\sf  Emp}&{\sf  String}\\\hline 
101&Alan\\\hline 
102&Andrey\\\hline 
103&Camille\\\bhline
\end{tabular}
\ \ 
\begin{tabular}{| l | l |}\bhline
\multicolumn{2}{| c |}{{\sf LName}}\\\bbhline 
{\sf  Emp}&{\sf  String}\\\hline 
101&Turing\\\hline 
102&Markov\\\hline 
103&Jordan\\\bhline 
\end{tabular}
\ \ 
\begin{tabular}{| l | l |}\bhline
\multicolumn{2}{| c |}{{\sf DName}}\\\bbhline 
{\sf  Dept}&{\sf  String}\\\hline 
x02&Math\\\hline 
q10&CS\\\bhline 
\end{tabular}
$$
\end{footnotesize}

Typed instances form a category, and two instances are {\it equivalent}, written $\cong$, when they are isomorphic objects in this category.  Isomorphism of typed instances captures our expected notion of equality on typed instances, where the ``structure parts'' are compared for isomorphism and the ``attribute parts'' are compared for equality under such an isomorphism.  This notion of equivalence of instances is the same notion as ``being equivalent up to object identity in semantic data models'' (chapter 11 of~\cite{DBLP:books/aw/AbiteboulHV95}).

\subsection{Typed Homomorphisms}

Database homomorphisms extend to typed instances.  Let $S$ be a typed signature and $I,J$ be $S$-instances.  A {\it database homomorphism} $h \taking I \Rightarrow J$ is still, for each node $N$ in $S$, a function from $I(N)$ to $J(N)$, obeying naturality conditions.  However, we also require that $h$ must preserve attributes, so, for example, if $(i, {\sf Alice}) \in I({\sf Name})$ then $(h(i), {\sf Alice}) \in J({\sf Name})$.  This definition of database homomorphism coincides with the traditional definition of database homomorphism from database theory~\cite{DBLP:books/aw/AbiteboulHV95}, where our IDs play the role of ``labelled nulls''.

\section{Functorial Data Migration}

In this section we define the original signature morphisms and data migration functors of~\cite{Spivak:2012:FDM:2324905.2325108}, as well as ``typed signature morphisms'' and ``typed data migration functors'', which are our extension of~\cite{Spivak:2012:FDM:2324905.2325108} to attributes.  The basic idea is that associated with a morphism between signatures $F \taking S \to T$ is a data transformation $\Delta_F \taking T\inst \to S\inst$ and left and right adjoints $\Sigma_F, \Pi_F : S\inst \to T\inst$.  

\subsection{Signature Morphism}

Let $S$ and $T$ be signatures.  A {\it signature morphism} $F\taking S \to T$ is a mapping that takes nodes in $S$ to nodes in $T$ and edges in $S$ to {\em paths} in $T$; in so doing, it must respect edge sources, edge targets, and path equivalences. In other words, if $p_1 \cong p_2$ is a path equivalence in $S$, then $F(p_1) \cong F(p_2)$ must be a path equivalence in $T$.  Each signature morphism $F\taking S \to T$ determines a unique {\it schema morphism} (indeed, a functor) $\church{F}\taking\church{S}\to\church{T}$ in the obvious way.  Two signature morphisms $F_1\taking S \to T$ and $F_2\taking S \to T$ are {\it equivalent}, written $F_1 \cong F_2$, if they denote naturally isomorphic functors.  Below is an example signature morphism.

\begin{align*}\parbox{1in}{\fbox{\xymatrix@=10pt{&\\&\LTO{A1}\\\color{red}{\LTO{N1}}\ar[ur]\ar[dr]&&\color{red}{\LTO{N2}}\ar[ul]\ar[dl]\\&\LTO{A2}\\&}}}\Too{}\parbox{.6in}{\fbox{\xymatrix@=10pt{&\\&\LTO{B1}\\\color{red}{\LTO{N}}\ar[ur]\ar[dr]\\&\LTO{B2}\\&}}}\end{align*}

In the above example, the nodes {\sf N1} and {\sf N2} are mapped to {\sf N}, the two morphisms to {\sf A1} are mapped to the morphism to {\sf B1}, and the two morphisms to {\sf A2} are mapped to the morphism to {\sf B2}.

\subsection{Typed Signature Morphisms}
Intuitively, signature morphisms are extended to typed signatures by providing an additional (node and type-respecting) mapping between attributes.  For example, we might have {\sf Name} and {\sf Age} attributes in our typings and map {\sf Name} to {\sf Name} and {\sf Age} to {\sf Age}:
\begin{align*}\parbox{1.15in}{\fbox{\xymatrix@=10pt{&\LTOO{Name}\\&\LTO{A1}\\\color{red}{\LTO{N1}}\ar@{-}[uur]\ar[ur]\ar[dr]&&\color{red}{\LTO{N2}}\ar[ul]\ar[dl]\ar@{-}[ddl]\\&\LTO{A2}\\&\LTOO{Age}}}}\Too{ }\parbox{.8in}{\fbox{\xymatrix@=10pt{&\LTOO{Name}\\&\LTO{B1}\\\color{red}{\LTO{N}}\ar@{-}[uur]\ar[ur]\ar[dr]\ar@{-}[ddr]\\&\LTO{B2}\\&\LTOO{Age}}}}\end{align*}
Formally, let $S$ be a signature and $\Gamma := (A, i, \gamma)$ a typing for S.  Let $S'$ be a signature and $\Gamma' := (A', i', \gamma')$ a typing for $S'$.  A typed signature morphism from $(S,\Gamma)$ to $(S',\Gamma')$ consists of a signature morphism $F : S \to S'$ and a functor (i.e. a ``morphism of typings'') $G : A \to A'$ such that the following diagram commutes: 

$$\xymatrix@=25pt{
A\ar[rr]^i\ar[dr]_\gamma\ar[dd]_{G}&&\llbracket S \rrbracket\ar[dd]^{\llbracket F \rrbracket}\\
&\Set\\
A'\ar[ur]^{\gamma'}\ar[rr]_{i'}&&\llbracket S' \rrbracket
}
$$
\subsection{Data Migration Functors}
Each signature morphism $F\taking S \to T$ is associated with three data migration functors, $\Delta_F, \Sigma_F$, and $\Pi_F$, which migrate instance data on $S$ to instance data on $T$ and vice versa.  A summary is in Figure~\ref{summary} on the next page. 
\begin{figure*}
\caption{Data migration functors induced by a signature morphism $F\taking S \to T$}
\label{summary}
\begin{center}\begin{tabular}{| c | c | c | c | c | }\bhline\multicolumn{5}{| c |}{ }\\ \bhline {\bf  Name}&{\bf  Symbol}&{\bf Type}&{\bf Idea of definition}&{\bf Relational partner}\\\bbhline Pullback & $\Delta_F$ & $\Delta_F\taking T\inst\to S\inst$&Composition with $F$&Project\\\hline Right Pushforward&$\Pi_F$&$\Pi_F\taking S \inst\to T \inst$&Right adjoint to $\Delta_F$&Join\\\hline Left Pushforward&$\Sigma_F$&$\Sigma_F\taking S \inst\to T \inst$&Left adjoint to $\Delta_F$&Union\\\bhline\end{tabular}\end{center}
\end{figure*}
\begin{figure*}
\caption{Data migration example}
\label{ex1}
\vspace{.2in}
\begin{footnotesize}
\begin{minipage}[t]{2.4in}
\begin{align*} 
\parbox{1.0in}{
\fbox{\xymatrix@=8pt{&\LTO{Name}\\&\LTO{Salary}\\\color{red}{\LTO{N1}}\ar[uur]\ar[ur]&&\color{red}{\LTO{N2}}\ar[dl]\\&\LTO{Age}}}}\Too{F}\parbox{.8in}{\fbox{\xymatrix@=8pt{&\LTO{Name}\\&\LTO{Salary}\\\color{red}{\LTO{N}}\ar[uur]\ar[ur]\ar[dr]\\&\LTO{Age}}}
} 
\end{align*}
\end{minipage}
\begin{minipage}[t]{2.3in}
\parbox{2.3in}{ 
\begin{tabular}{| c || c | c | }\bhline
\multicolumn{3}{| c |}{{\sf N1}}\\\bhline {\sf  ID}&{\sf  Name}&{\sf  Salary}\\\bbhline 
1&Alice&\$100\\\hline 
2&Bob&\$250\\\hline 
3&Sue&\$300\\\bhline
\end{tabular}
\begin{tabular}{| c || c | }\bhline
\multicolumn{2}{| c |}{{\sf N2}}\\\bhline {\sf  ID}&{\sf  Age}\\\bbhline 
4&20\\\hline 
5&20\\\hline 
6&30\\\bhline
\end{tabular}
}
$$
\downarrow \Sigma_F   
$$
\begin{centering}
\begin{tabular}{| c || c | c | c | }\bhline\multicolumn{4}{| c |}{{\sf N}}\\\bhline {\sf  ID}&{\sf  Name}&{\sf  Salary}&{\sf  Age}\\\bbhline 
a&Alice&\$100&$null_1$\\\hline 
b&Bob&\$250&$null_2$\\\hline 
c&Sue&\$300&$null_3$\\\hline 
d&$null_4$&$null_5$&20\\\hline 
e&$null_6$&$null_7$&20\\\hline 
f&$null_8$&$null_9$&30\\\hline 
\end{tabular}
\end{centering}
\end{minipage}
\begin{minipage}[t]{.01in}
\hspace{-.3in}
$\From{\Delta_F}$ \\ \newline \newline \newline
\hspace*{-.35in}
$\searrow \Pi_F$
\end{minipage}
\begin{minipage}[t]{2.5in}
\begin{tabular}{| c || c | c | c | }\bhline\multicolumn{4}{| c |}{{\sf N}}\\\bhline {\sf  ID}&{\sf  Name}&{\sf  Salary}&{\sf  Age}\\\bbhline 
a&Alice&\$100&20\\\hline 
b&Bob&\$250&20\\\hline 
c&Sue&\$300&30\\\bhline
\end{tabular}
$$
\;
$$
\begin{tabular}{| c || c | c | c | }\bhline\multicolumn{4}{| c |}{{\sf N}}\\\bhline {\sf  ID}&{\sf  Name}&{\sf  Salary}&{\sf  Age}\\\bbhline 
a&Alice&\$100&20\\\hline 
b&Alice&\$100&20\\\hline 
c&Alice&\$100&30\\\hline
d&Bob&\$250&20\\\hline 
e&Bob&\$250&20\\\hline 
f&Bob&\$250&30\\\hline
g&Sue&\$300&20\\\hline 
h&Sue&\$300&20\\\hline 
i&Sue&\$300&30\\\hline
\end{tabular}
\end{minipage}
\end{footnotesize}
\end{figure*}

\begin{figure*}
\caption{Data migration example with foreign keys}
\label{ex2}
\vspace{.2in}
\begin{footnotesize}
\begin{minipage}[t]{2.3in}
\vspace{-.7in}
\begin{align*} \parbox{1.0in}{\fbox{\xymatrix@=7pt{&\LTO{Name}\\&\LTO{Salary}\\\color{red}{\LTO{N1}} \ar[rr]^{\sf f} \ar[uur]\ar[ur]&&\color{red}{\LTO{N2}}\ar[dl]\\&\LTO{Age}}}}\Too{F}\parbox{.8in}{\fbox{\xymatrix@=7pt{&\LTO{Name}\\&\LTO{Salary}\\\color{red}{\LTO{N}}\ar[uur]\ar@{-}[ur]\ar[dr]\\&\LTO{Age}}}} \end{align*}
\end{minipage}
\begin{minipage}[b]{2.5in}
\parbox{2.7in}{ 
\begin{tabular}{| c || c | c | c | }\bhline
\multicolumn{4}{| c |}{{\sf N1}}\\\bhline {\sf  ID}&{\sf  Name}&{\sf  Salary} & {\sf  f} \\\bbhline 
1&Alice&\$100&4\\\hline 
2&Bob&\$250&5\\\hline 
3&Sue&\$300&6\\\bhline
\end{tabular}
\begin{tabular}{| c || c | }\bhline
\multicolumn{2}{| c |}{{\sf N2}}\\\bhline {\sf  ID}&{\sf  Age}\\\bbhline 
4&20\\\hline 
5&20\\\hline 
6&30\\\bhline
\end{tabular}}
\end{minipage}
\begin{minipage}[c]{.01in}
\vspace*{.2in}
\hspace{-.3in}
$\From{\Delta_F}$ \\ \newline
\hspace*{-.3in}
$\To{\Sigma_F}$ \\ \newline 
\hspace*{-.3in}
$\To{\Pi_F}$ \\ \newline 
\end{minipage}
\begin{minipage}[c]{2.4in}
\begin{tabular}{| c || c | c | c | }\bhline\multicolumn{4}{| c |}{{\sf N}}\\\bhline {\sf  ID}&{\sf  Name}&{\sf  Salary}&{\sf  Age}\\\bbhline 
a&Alice&\$100&20\\\hline 
b&Bob&\$250&20\\\hline 
c&Sue&\$300&30\\\hline 
\end{tabular}
\end{minipage}
\end{footnotesize}
\end{figure*}

\begin{defn}[Data Migration Functors]
Let $F\taking S \to T$ be a (untyped) signature morphism.  Define $\Delta_F\taking T\inst\to S\inst$ as: 
$$\parbox{1.6in}{$\Delta_F(I):=I\circ F\taking S\to\Set$} \ \ \ 
\parbox{1.5in}{\xymatrix{S\ar[r]^F\ar@/_1pc/[rr]_{\Delta_F(I)}&T\ar[r]^I&\Set}}
$$
and define $\Sigma_F, \Pi_F \taking S\inst\to T\inst$ to be \textnormal{left and right adjoint} to $\Delta_F$, respectively. 
\end{defn}
Data migration functors extend to {\it typed data migration functors} over typed instances in a natural way.  An example typed data migration is shown in Figure~\ref{ex1} on the next page.  A similar example, but with a signature that contains an edge, is shown in Figure~\ref{ex2}.   

\subsection{Data Migration on Homomorphisms}
\label{dmh}
 So far, we have only shown examples of the ``object'', or ``instance to instance'', part of the $\Delta$, $\Sigma$, and $\Pi$ data migration operations.  Because these data migration operations are functors, they also posses a ``morphism'', or ``database homomorphism to database homomorphism'' part.  Rather than give examples of these operations, we simply summarize that the functorial data model admits the following operations, for any $F \taking C \to D$.
\begin{itemize}
\item If $I,J \in D\inst$ and $h : I \Rightarrow J$ is a database homomorphism, then $\Delta_F(h) : \Delta_F(I) \Rightarrow \Delta_F(J)$. 
\item If $I, J \in C\inst$ and $h : I \Rightarrow J$ is a database homomorphism, then $\Sigma_F(h) : \Sigma_F(I) \Rightarrow \Sigma_F(J)$ and $\Pi_F(h) : \Pi_F(I) \Rightarrow \Pi_F(J)$. 
\end{itemize}

\section{FQL}

The goal of this section is to define and study an algebraic query language where every query denotes a composition of data migration functors.   Our syntax for queries is designed to build-in certain syntactic restrictions discussion in this section, and to provide a convenient normal form.

\begin{defn}[FQL Query]
A (typed) {\em FQL query} $Q$ from $S$ to $T$, denoted $Q\taking S\queryto T$, is a triple of typed signature morphisms $(F,G,H)$:
$$
S\From{F}S'\To{G}S''\To{H}T
$$
such that 
\begin{enumerate}
\item $\llbracket S \rrbracket, \llbracket S' \rrbracket, \llbracket S'' \rrbracket$, and $\llbracket T \rrbracket$ are finite
\item $G$'s attribute mapping is a bijection\footnote{All results in this paper should also apply to surjections, but we have only proved the bijective case.}.
\item $H$ is a {\it discrete op-fibration}~\cite{BW} 
\end{enumerate}

\end{defn}
\noindent
Semantically, the query $Q\taking S \queryto T$ corresponds to a functor $\church{Q}\taking S\inst\to T\inst$ (indeed, a functor taking finite instances to finite instances) given as follows:  
$$\church{Q}:=\church{ \Sigma_{H} \circ \Pi_{G} \circ \Delta_{F}}$$
By choosing two of $F$, $G$, and $H$ to be the identity, we can recover $\Delta$, $\Sigma$, and $\Pi$.  However, grouping $\Delta, \Sigma$, and $\Pi$ together like this formalizes a query as a disjoint union of a join of a projection.  Because $\Delta$ and $\Pi$ commute, we could also formalize a query as a disjoint union of a projection of a join.  

The three conditions above are crucial to ensuring both that FQL can be implemented on SPCU+keygen, and for FQL queries to be closed under composition:

\begin{enumerate}
\item  Condition 1 ensures that path equivalence in each signature is in fact decidable.  Moreover, the fiber product of two (infinite) finitely-presented categories may not be finitely presented, so it may not be possible to compute composition queries when schemas are infinite.  Finiteness of schemas also ensures that $\Pi$ operations always give finite answers.  When the target schema is infinite, $\Pi$ may create uncountably infinite result instances. Consider the unique signature morphism: $$S:=\fbox{$\LMO{s}$}\To{\;\;F\;\;}\LoopSchema=:T$$ Here $\church{T}$ has morphisms $\{f^n\|n\in\NN\}$ so it is infinite. Given the two-element instance $I\taking S\to\Set$ with $I(s)=\{\sf{Alice},\sf{Bob}\}$, the rowset in $\Pi_F(I)(s)$ is the (uncountable) set of infinite streams in $\{\sf{Alice},\sf{Bob}\}$, i.e. $\Pi_F(I)(s) \cong I(s)^\NN.$

\item Condition 2 ensures that the FQL query is ``domain independent''~\cite{DBLP:books/aw/AbiteboulHV95}; i.e., that constants resulting from any $\Pi$ operation are contained in the input instance.  Consider the unique morphism: $$S:=\fbox{$\LMO{s}$}\To{\;\;F\;\;}\fbox{$\LMO{s} - \circ {\sf a \ (Int)}$}=:T$$  Given the one-element instance $I\taking S\to\Set$ with $I(s)=\{\sf{Alice}\}$, the rowset in $\Pi_F(I)({\sf a})$ will contain, for every integer $i$, a row $(\sf{Alice}_i, i)$, i.e., $\Pi_F(I)({\sf a}) \cong \{ (\sf{Alice}_1, 1), (\sf{Alice}_2, 2), \ldots \}$.

\item Condition 3 ensures ``union-compatibility''~\cite{DBLP:books/aw/AbiteboulHV95}, which allows us to implement $\Sigma$ using SPCU's union operation.  A functor $F\taking\mcC\to\mcD$ is called a {\em discrete op-fibration} if, for every object $c\in\mcC$ and every morphism $g\taking d\to d'$ in $\mcD$ with $F(c)=d$, there exists a $c'$ and unique morphism $\bar{g}\taking c\to c'$ in $\mcC$ such that $F(\bar{g})=g$.   For example, the following functor, which maps $a$s to $A$, $b$s to $B$, etc. 
is a discrete op-fibration:

\begin{footnotesize}
$$
\mcC:=\parbox{2in}{\fbox{\xymatrix@=10pt{
\LMO{b_1}&\hspace{.3in}&\LMO{a_1}\ar[rr]^{h_1}\ar[ll]_{g_1}&\hspace{.3in}&\LMO{c_1}\\
\LMO{b_2}&&\LMO{a_2}\ar[ll]_{g_2}\ar[rr]^{h_2}&&\LMO{c_2}\\
&&\LMO{a_3}\ar[llu]^{g_3}\ar[rrd]^{h_3}&&\LMO{c_3}\\
&&&&\LMO{c_4}
}}}
$$
$$
\hspace{.4in}\xymatrix{~\ar[d]^F\\~}
$$
$$
\mcD:=\parbox{2in}{\fbox{\xymatrix@=10pt{
\LMO{B}&\hspace{.3in}&\LMO{A}\ar[rr]^H\ar[ll]_G&\hspace{.3in}&\LMO{C}
}}}
$$
\end{footnotesize}
Intuitively, $F\taking\mcC\to\mcD$ is a discrete op-fibration if ``the columns in each table $T$ of $\mcD$ are exactly matched by the columns in each table in $\mcC$ mapping to $T$.''  Since $a_1,a_2,a_3\mapsto A$, they must have the same column structure and they do: each has two non-ID columns. Similarly each of the $b_i$ and each of the $c_i$ have no non-ID columns, just like their images in $\mcD$. 

{\bf Remark.} $\Sigma_F$ is defined for any $F$, not just for discrete op-fibrations.  However, such unrestricted $\Sigma$s cannot be implemented using SPCU+keygen (even if we allow {\tt OUTER UNION}), and FQL queries that use such $\Sigma$s are not closed under composition.  
\end{enumerate}

\subsection{Composition}

\begin{thm}[Closure under composition]\label{thm:closed under comp}
Given any (typed) FQL queries $f\taking S \queryto X$ and $g\taking X \queryto T$, we can compute an FQL query $g \cdot f\taking S\queryto T$ such that $$\church{g\cdot f} \cong \church{g}\circ\church{f}$$  
\end{thm}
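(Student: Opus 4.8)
The plan is to unfold both queries into data-migration functors and then sort the resulting six-fold composite into the normal form $\Sigma\,\Pi\,\Delta$. Write $f=(F_1,G_1,H_1)$ and $g=(F_2,G_2,H_2)$, i.e. $S\From{F_1}S_1'\To{G_1}S_1''\To{H_1}X$ and $X\From{F_2}S_2'\To{G_2}S_2''\To{H_2}T$. By definition
$$\church{g}\circ\church{f}\;\cong\;\Sigma_{\church{H_2}}\,\Pi_{\church{G_2}}\,\Delta_{\church{F_2}}\,\Sigma_{\church{H_1}}\,\Pi_{\church{G_1}}\,\Delta_{\church{F_1}}.$$
I will drop the $\church{\cdot}$ brackets from now on. Two elementary laws let me collapse adjacent like-functors once the string is ordered: $\Delta$ is strictly functorial, so $\Delta_a\Delta_b\cong\Delta_{ba}$; and since left (resp.\ right) adjoints compose, $\Sigma_a\Sigma_b\cong\Sigma_{ab}$ and $\Pi_a\Pi_b\cong\Pi_{ab}$. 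Hence it suffices to commute every $\Delta$ to the far right, every $\Sigma$ to the far left, and leave the $\Pi$'s between them; this is exactly the shape of (and the composition rule for) a polynomial functor, of which the three migration functors are the elementary pieces.

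First I would push $\Delta_{F_2}$ rightward past $\Sigma_{H_1}$. Form the pullback $P$ of $H_1\taking S_1''\to X$ along $F_2\taking S_2'\to X$, with legs $p_2\taking P\to S_1''$ and $p_1\taking P\to S_2'$. Because $H_1$ is a discrete op-fibration and these are stable under pullback, $p_1$ is a discrete op-fibration, so $\Sigma_{p_1}$ is defined and the Beck--Chevalley isomorphism $\Delta_{F_2}\Sigma_{H_1}\cong\Sigma_{p_1}\Delta_{p_2}$ holds. Next I would push the new $\Delta_{p_2}$ past $\Pi_{G_1}$ using the dual Beck--Chevalley isomorphism $\Delta_{p_2}\Pi_{G_1}\cong\Pi_{q_1}\Delta_{q_2}$ coming from a pullback of $G_1$ along $p_2$, and then collapse $\Delta_{q_2}\Delta_{F_1}\cong\Delta_{F_1 q_2}$. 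At this point the composite reads $\Sigma_{H_2}\,\Pi_{G_2}\,\Sigma_{p_1}\,\Pi_{q_1}\,\Delta_{F_1 q_2}$, and the only remaining obstruction is the interior adjacency $\Pi_{G_2}\Sigma_{p_1}$. I would resolve it with the distributive law of this calculus, which rewrites $\Pi_{G_2}\Sigma_{p_1}\cong\Sigma_{f'}\Pi_{g'}\Delta_{e}$ by a dependent-product construction; the spurious $\Delta_{e}$ is then swept back to the right by the two Beck--Chevalley moves already described, after which all $\Sigma$'s, $\Pi$'s and $\Delta$'s are contiguous and collapse to a single $\Sigma_{H}\,\Pi_{G}\,\Delta_{F}$, yielding the query $g\cdot f:=(F,G,H)$.

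It then remains to verify that $(F,G,H)$ is a legal FQL query. Finiteness is preserved because pullbacks and the dependent-product construction of finite categories are finite. The morphism $H$ is assembled from $H_2$ and the pullback/distributivity legs by composition, all of which are discrete op-fibrations (composites and pullbacks of discrete op-fibrations are again discrete op-fibrations), so $H$ is a discrete op-fibration; the attribute map of $G$ is built from the surjective attribute maps of $G_1,G_2$ by pullback, which preserves surjectivity; and union-compatibility of attribute types is preserved at each step. Throughout, the attribute data $(A,i,\gamma)$ and the attribute functors are carried along in parallel with the node/edge data, so each typed lemma follows the same template as its untyped version.

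The hard part will be the $\Pi$--$\Sigma$ distributive law. Unlike the two base-change swaps it is not a bare pullback but a dependent-product construction, and I must check both that it produces the asserted natural isomorphism of instance-valued functors and that its output legs $f',g'$ still satisfy the discrete-op-fibration and attribute-surjectivity conditions needed to keep the final query legal. A secondary source of care is confirming that the abstract Beck--Chevalley isomorphisms genuinely hold in these instance categories under the finiteness and typing restrictions, rather than only formally in the adjoint calculus.
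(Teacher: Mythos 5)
Your outline follows the same skeleton as the paper's actual proof (Theorem~\ref{thm:query comp}, Corollary~\ref{cor:query comp for Delta-restricted}, and the typed versions): pull $\Delta_{F_2}$ past $\Sigma_{H_1}$ through a pullback square, resolve the interior $\Pi\Sigma$ adjacency with the distributive law, and sweep the leftover $\Delta$'s rightward past the $\Pi$'s. But your second step contains a genuine error. You assert $\Delta_{p_2}\Pi_{G_1}\iso\Pi_{q_1}\Delta_{q_2}$ ``coming from a pullback of $G_1$ along $p_2$.'' For a \emph{strict} pullback square of categories this Beck--Chevalley isomorphism fails in general: the paper proves it only when the leg one applies $\Delta$ along is a discrete op-fibration (Corollary~\ref{cor:pb bc DP dop}), and your $p_2$ is the pullback of the \emph{arbitrary} signature morphism $F_2$ along $H_1$ --- it is $p_1$, the pullback of the op-fibration $H_1$, that inherits the op-fibration property, not $p_2$. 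A minimal counterexample: let $F\taking\vect{0}\to\vect{1}$ pick out the source of the arrow and $G\taking\vect{0}\to\vect{1}$ its target. Then $\Delta_F\Pi_G(I)\iso I$ for every set $I$, while the strict pullback of $F$ and $G$ is the empty category, so the putative right-hand side $\Pi\Delta$ through the pullback is constant at the terminal set.

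The repair is exactly what the paper does: replace the strict pullback by the comma category $(p_2\downarrow G_1)$, whose lax square is exact for \emph{arbitrary} functors (Proposition~\ref{prop:BC Sigma}, Corollary~\ref{cor:BC Pi}); these are the squares $(iii)$ and $(iv)$ in the paper's composition diagram, and you must then also check that comma categories preserve finiteness (Lemma~\ref{lemma:comma as fp} and its sequel), which your finiteness argument, phrased only for pullbacks, does not cover. Your later sweep of the distributivity leg $\Delta_e$ happens to be salvageable by a pullback square, since $e$ \emph{is} a discrete op-fibration (it satisfies $w=u\circ e$ with $w,u$ op-fibrations, so Proposition~\ref{prop:functor between dopfs is dopf} applies) --- but you never establish this, and the comma square works uniformly. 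Finally, your closing claim that the attribute data are ``carried along in parallel'' is where the bulk of the paper's real work lives: each exactness statement must be re-proved for typed instances (Propositions~\ref{prop:typed comparison Delta Pi}, \ref{prop:pullback BC for typed Delta Sigma}, \ref{prop:typed distributivity}), with the $\Pi$-ready/$\Sigma$-ready conditions tracking precisely your surjectivity and union-compatibility requirements; that step is substantive, not routine.
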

\begin{proof}[Sketch of proof]
Following the proof in \cite[Proposition 1.1.2]{GK}, it suffices to show the Beck-Chevalley conditions hold for $\Delta$ and its adjoints, and that dependent sums distribute over dependent products when the former are indexed by discrete op-fibrations. The proof and algorithm are given in the appendix as Theorem~\ref{xyzabc}.
\end{proof}
\noindent
Detailing the algorithm involves a number of constructions we have not defined, so we only give a brief sketch here.  To compose $\Sigma_t \circ \Pi_f \circ \Delta_s$ with $\Sigma_v \circ \Pi_g \circ \Delta_u$ we must construct the following diagram:

\begin{footnotesize}
\begin{align*}
\xymatrix{
&&&N\ar[rr]^{p}\ar[dl]_n\ar@{}[dr]|{(iv)}&&D'\ar[r]^{q}\ar[dl]^{e}\ar@{}[ddr]|{(ii)}&M\ar[dd]^{w}\\
&&B'\ar[rr]^{r}\ar[dl]_{m}\ar@{}[dr]|{(iii)}&&A'\ar[rd]^{k}\ar[ld]_{h}\ar@{}[dd]|{(i)}&&&\\
&B\ar[dl]_{s}\ar[rr]^{f}&&A\ar[dr]_{t}&&D\ar[dl]^{u}\ar[r]^{g}&C\ar[d]^{v}&\\
S&&&&T&&U}
\end{align*}
\end{footnotesize}
\noindent
First we form the pullback $(i)$. Then we form the typed distributivity diagram $(ii)$ as in Proposition \ref{prop:typed distributivity}. Then we form the typed comma categories $(iii)$ and $(iv)$ as in Proposition \ref{prop:typed comparison Delta Pi}. The result then follows from Proposition \ref{prop:pullback BC for typed Delta Sigma}, \ref{prop:typed distributivity}, and Corollary \ref{cor:comma BC for typed Delta Pi}.  The composed query is $\Sigma_{v \circ w} \circ \Pi_{q \circ p} \circ \Delta_{s \circ m \circ n}$.  We have implemented this composition algorithm in our prototype FQL IDE.

\section{FQL in SPCU+keygen}
In this section we let $F\taking S \to T$ be a typed signature morphism and we define SPCU+keygen translations for $\Delta_F$, $\Sigma_F$, and $\Pi_F$.  By an SPCU+keygen {\it program} we mean a list of SQL expressions, where each SQL expression is one of the following:
\begin{enumerate}
\item A conjuctive query over base tables; i.e., a SQL expression of the form {\tt SELECT DISTINCT - FROM - WHERE -}.
\item A union of SPCU-keygen expressions; i.e., a SQL expression {\tt - UNION - }.
\item A globally-unique ID generation operation (keygen) that appends to a given table a new column with globally unique IDs. To implement keygen in SQL, we first initialize a global variable, using, for example, the following MySQL code: {\tt 
SET @guid := 0}. Then, to implement keygen we increment this global variable.  For example, to add a column ``c'' of new unique IDs to a table ``t'' with one column called ``col'', we would generate:
\hspace*{-.3in}
\begin{verbatim}
SELECT col, @guid:=@guid+1 AS c FROM t
\end{verbatim}
\item A {\tt CREATE TABLE} followed by an {\tt INSERT INTO} of an SPCU-keygen expression.  
\end{enumerate}
A SQL/SPCU+keygen program that implements a functorial data migration $S\inst \to T\inst$ operates over a SQL schema containing one binary table for each node, edge, and attribute in $S$, and stores one binary table for each node, edge, and attribute in $T$.

 \subsection{Translating $\Delta$}
\begin{thm}
We can compute a SPCU+keygen program $[F]_\Delta$ such that for every $T$-instance $I\in T\inst$, we have $\Delta_F(I) \cong [F]_\Delta(I)$.
\end{thm}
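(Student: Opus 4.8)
The plan is to build $[F]_\Delta$ componentwise: I will emit one SQL view for each node, edge, and attribute table of the source schema $S$, and then show that the resulting tables assemble into the instance $I\circ\church{F}=\Delta_F(I)$. The guiding fact is that $\Delta_F(I)$ is defined on objects by $\Delta_F(I)(n)=I(\church{F}(n))$ and on a generating arrow $e\taking n_1\to n_2$ of $S$ by $\Delta_F(I)(e)=I(\church{F}(e))$, where $\church{F}(e)$ is the morphism of $\church{T}$ named by the path $F(e)$. So the program only has to read off the appropriate tables of $I$ and, for edges, compose link tables along the chosen path. Crucially, $\Delta$ creates no new keys, so the whole program stays within selection, projection, and cartesian product (SPC) and needs no GUID generator.

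First I would handle nodes. For each node $n$ of $S$, the entity table of $[F]_\Delta(I)$ at $n$ is a straight copy (a renaming \texttt{SELECT}) of the entity table of $I$ at $\church{F}(n)$. Because the keys are reused verbatim, the output is not merely isomorphic to $\Delta_F(I)$ but equal up to renaming of table headers; the $\cong$ in the statement is then automatic, since typed instances are compared only up to isomorphism.

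Next, edges and attributes. Fix, once and for all, a representative path $F(e)=b_1.b_2.\cdots.b_k$ in $T$ for each generating edge $e\taking n_1\to n_2$ of $S$. The link table for $e$ is the relational composite of the link tables of $I$ along this path:
$$[F]_\Delta(I)(e)\;=\;\pi_{1,k+1}\!\left(b_1\bowtie b_2\bowtie\cdots\bowtie b_k\right),$$
a multi-way join on matching target/source columns followed by projection onto the first source and last target columns; when $F(e)$ is the empty path the composite is the identity, so the link table is just a copy of the entity table at $\church{F}(n_1)$. Attributes are treated exactly as nodes: using the attribute mapping $G$ of the typed signature morphism, the attribute table for $a$ on $n$ is copied from the attribute table for $G(a)$ in $I$, which is well-typed because $G$ preserves base types. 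All of these are SPC operations, hence expressible in SQL with polynomial-time data complexity.

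Finally, correctness, which I would establish in three steps. \emph{Well-definedness}: although $F$ sends an edge to a path only up to path-equivalence, $I$ is a functor on $\church{T}$ and so sends equivalent paths to the same function; hence the join computed for $e$ is independent of the chosen representative and agrees with $I(\church{F}(e))$. \emph{Schema conformance}: each emitted entity table is reflexive, and each emitted link table is total and single-valued, since a composite of total functions is again a total function and join/projection preserve this property. \emph{Functoriality and path equations of $S$}: because every edge is interpreted as $I\circ\church{F}$ applied to the corresponding arrow and $F$ respects the path equivalences of $S$, the assembled tables form a genuine $\church{S}$-instance, namely $\Delta_F(I)$. The one mildly delicate point — the main obstacle — is precisely the edge case: verifying that the iterated join faithfully realizes function composition and remains total and single-valued for paths of arbitrary length (including the empty path), and that the choice of path representative is immaterial by functoriality of $I$. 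Everything else is bookkeeping.
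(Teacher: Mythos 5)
Your construction coincides with the paper's own (Construction \ref{const:Delta} and the sketch following the theorem): copy the entity table at $F(n)$ for each node $n$ of $S$, compose the link tables of $I$ along a chosen representative of the path $F(e)$ for each edge $e$, copy attribute tables through the attribute mapping, and observe that independence of the representative follows because $I$ is a functor on $\church{T}$ and so identifies equivalent paths. All of that is fine and is exactly the paper's route.

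The genuine gap is your claim that the keys can be ``reused verbatim,'' so that the program ``stays within selection, projection, and cartesian product (SPC) and needs no GUID generator.'' This contradicts the paper's globally unique key assumption, under which distinct tables of a single instance must have pairwise disjoint ID sets. Whenever $F$ is non-injective on nodes --- e.g.\ the morphism of Section 3.1 sending both {\tt N1} and {\tt N2} to {\tt N}, which is precisely the shape of $F$ in the paper's worked $\Delta$ example --- your program copies the same ID column into two different entity tables, so its output is not a legal instance in the paper's relational representation (note that in the paper's example the output tables {\tt N1} and {\tt N2} carry deliberately disjoint ID sets, $\{1,2,3\}$ and $\{a,b,c\}$). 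The paper flags this immediately after its sketch: the naive algorithm ``does not maintain the globally unique ID requirement\dots\ Hence we must also generate SQL to restore this invariant.'' The omission is not cosmetic, because downstream the implementation of $\Sigma$ simulates disjoint union by plain {\tt UNION}, which is sound only because IDs in distinct tables are disjoint; a composite such as $\Sigma$ applied after your $[F]_\Delta$ could silently conflate rows. The repair is a final pass generating a fresh GUID for each row of every copied node table and substituting it consistently into the incident link and attribute tables --- which is exactly why the paper's target language is SPCU extended with a GUID generator rather than bare SPC, and why the conclusion is $\Delta_F(I)\cong[F]_\Delta(I)$ rather than equality on the nose. With that step restored, your proof matches the paper's.
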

\begin{proof}
See Construction \ref{const:Delta}.
\end{proof}
We sketch the algorithm as follows.  We are given a $T$-instance $I$, presented as a set of functional binary tables, and are tasked with creating the $S$-instance $\Delta_F(I)$.  We describe the result of $\Delta_F(I)$ on each table in the result instance by examining the schema $S$:
\begin{itemize}
\item for each node $N$ in $S$, the binary table $\Delta_F(N)$ is the binary table $I(F(N))$.
\item for each attribute $A$ in $S$, the binary table $\Delta_F(A)$ is the binary table $I(F(A))$.
\item Each edge $e : X \to Y$ in $S$ maps to a path $F(e) : F(X) \to F(Y)$ in $T$.  We compose the binary edges tables making up the path $F(e)$, and that becomes the binary table $\Delta_F(e)$.
\end{itemize}
The SQL generation algorithm for $\Delta$ above (and $\Sigma$, below) does not maintain the globally unique ID requirement.  For example, $\Delta$ can copy tables.  Hence we must also generate SQL to restore the unique ID invariant.  

\subsection{Translating $\Sigma$}

\begin{thm}
Suppose $F$ is a discrete op-fibration.  Then we can compute a SPCU+keygen program $[F]_\Sigma$ such that for every $S$-instance $I\in S\inst$, $\Sigma_F(I) \cong [F]_\Sigma(I)$.
\end{thm}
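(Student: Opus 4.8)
The plan is to give an explicit formula for $\Sigma_F$ when $F\taking\mcC\to\mcD$ is a discrete op-fibration (writing $\mcC=\church{S}$, $\mcD=\church{T}$), verify it is the left adjoint to $\Delta_F$, and then read the SQL program off the formula. The key observation is that for a discrete op-fibration the general colimit (left Kan extension) defining $\Sigma_F$ collapses to a \emph{fiberwise disjoint union}. Concretely, for a $\mcC$-instance $I$ I would define $\Sigma_F(I)\taking\mcD\to\Set$ on objects by $\Sigma_F(I)(d):=\bigsqcup_{c\in F^{-1}(d)}I(c)$, where $F^{-1}(d)=\{c\in\Ob(\mcC)\mid F(c)=d\}$ is the fiber, and on an arrow $g\taking d\to d'$ by sending $x\in I(c)$ (for $c$ in the fiber over $d$) to $I(\bar g)(x)\in I(c')$, where $\bar g\taking c\to c'$ is the unique lift of $g$ guaranteed by Definition \ref{def:discrete op-fibration}. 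Functoriality follows from uniqueness of lifts: $\overline{g'\circ g}=\bar{g'}\circ\bar g$ and $\overline{\id_d}=\id_c$, so $\Sigma_F(I)$ respects composition and identities.

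Next I would show this formula really is left adjoint to $\Delta_F$, i.e.\ exhibit a bijection $\Hom_{\mcD\inst}(\Sigma_F I, J)\cong\Hom_{\mcC\inst}(I,\Delta_F J)$ natural in $I$ and $J$. A map on the left is a natural family $\alpha_d\taking\bigsqcup_{c\in F^{-1}(d)}I(c)\to J(d)$, which by the coproduct decomposition is the same as a family of components $\alpha^c\taking I(c)\to J(F(c))=(\Delta_F J)(c)$, one for each object $c$ of $\mcC$. Unwinding $\Sigma_F(I)(g)$ via the lift $\bar g$, the naturality square for an arrow $g$ of $\mcD$ checked on a basis element of $I(c)$ says exactly $\alpha^{c'}\circ I(\bar g)=J(g)\circ\alpha^c$, which is the naturality condition for a transformation $I\Rightarrow\Delta_F J$ evaluated at the arrow $\bar g$. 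The crucial point, and where the discrete op-fibration hypothesis is used essentially, is that \emph{every} arrow $h\taking c\to c'$ of $\mcC$ is the unique lift of its image $F(h)$; hence ranging over all pairs (arrow of $\mcD$, fiber element) is the same as ranging over all arrows of $\mcC$, and the two collections of naturality squares coincide. Setting $\alpha^c=\beta_c$ thus gives the desired natural bijection, so the explicit formula computes $\Sigma_F$ up to canonical isomorphism.

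With the formula in hand, the SQL program nearly writes itself. First I note that the discrete op-fibration condition makes all source tables in a single fiber \emph{union compatible} in Codd's sense: for $c,c'\in F^{-1}(d)$ the outgoing edges of $c$ and of $c'$ are both in bijection with those of $d$ via unique lifting, so the entity tables for $c$ and $c'$ carry the same foreign-key columns, and the union compatible attribute mapping makes their attribute columns agree in type. The program $[F]_\Sigma$ then produces, for each node $d$ of $T$, the \texttt{UNION} of the entity tables $I_c$ over $c\in F^{-1}(d)$; for each edge $g\taking d\to d'$ of $T$, the \texttt{UNION} over the fiber of the link tables $I_{\bar g}$ for the unique lifts $\bar g$; and for each attribute the \texttt{UNION} of the corresponding attribute tables. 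By the globally unique key assumption these tables have pairwise disjoint IDs, so ordinary \texttt{UNION} realizes the disjoint union $\bigsqcup$ exactly (as already noted, $|X\cup Y|=|X|+|Y|$ for distinct tables), and the resulting instance is isomorphic to $\Sigma_F(I)$.

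The step I expect to be the main obstacle is the edge (arrow-action) bookkeeping rather than the object-level union. On the categorical side, one must check that $x\mapsto I(\bar g)(x)$ lands in the correct summand of the target coproduct and that this is precisely what unioning the link tables $I_{\bar g}$ computes; on the SQL side, one must verify that the lifted edges assemble into a well-typed link table of the target schema and that the generated program restores the globally unique key invariant (as was needed for $\Delta$). The typed layer adds a second, orthogonal obligation: every construction above must commute with the natural transformation $\delta$ carrying attribute values, which is exactly where union compatibility of the attribute mapping is consumed. These checks are routine but must be tracked carefully, and the full construction is deferred to the appendix.
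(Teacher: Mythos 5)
Your proposal is correct and follows essentially the same route as the paper's Construction \ref{const:Sigma}: define $\Sigma_F$ fiberwise by $\Sigma_F(I)(d)=\coprod_{c\in F^{-1}(d)}I(c)$ with arrow action given by unique lifts, and implement the disjoint unions as plain SQL unions, sound because the globally unique key assumption makes the unioned tables disjoint; your explicit verification of the $(\Sigma_F,\Delta_F)$ adjunction is extra detail the paper simply asserts when it declares $\Sigma_F$ to be the left adjoint. One imprecision to fix: at the signature level the unique lift of an edge $g$ of $T$ at $c$ is a \emph{path} $p_c$ in $S$, unique only up to path equivalence (Lemma \ref{lemma:discrete op-fibrations}), so $I_{\bar g}$ must be taken as the composition-join of the link tables along a chosen representative (the choice is immaterial since $I$ respects path equivalence), and consequently objects in a fiber need not have matching outgoing \emph{edge} sets --- your ``same foreign-key columns'' claim holds for morphisms rather than edges unless the signatures are normalized to have non-redundant edges.
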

\begin{proof}
See Construction \ref{const:Sigma}.
\end{proof}

We sketch the algorithm as follows.  We are given a $S$-instance $I$, presented as a set of functional binary tables, and are tasked with creating the $T$-instance $\Sigma_F(I)$.  We describe the result of $\Sigma_F(I)$ on each table in the result instance by examining the schema $T$:
\begin{itemize}
\item for each node $N$ in $T$, the binary table $\Sigma_F(N)$ is the union of the binary node tables in $I$ that map to $N$ via $F$.  

\item for each attribute $A$ in $T$, the binary table $\Sigma_F(A)$ is the union of the binary attribute tables in $I$ that map to $A$ via $F$.  

\item  Let $e : X \to Y$ be an edge in $T$.  We know that for each $c \in F^{-1}(X)$ there is at least one path $p_c$ in $S$ such that $F(p_c) \cong e$.  Compose $p_c$ to a single binary table, and define $\Sigma_F(e)$ to be the union over all such $c$.  The choice of $p_c$ will not matter.

\end{itemize}

\subsection{Translating $\Pi$}
\begin{thm}
Suppose $\llbracket S \rrbracket$ and $\llbracket T \rrbracket$ are finite, and $F$ has a bijective attribute mapping.  Then we can compute a SPCU+keygen program $[F]_\Pi$ such that for every $S$-instance $I\in S\inst$, we have $\Pi_F(I) \cong [F]_\Pi(I)$.
\end{thm}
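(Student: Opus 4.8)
The plan is to compute $\Pi_F$ through its description as a right Kan extension, which is the standard pointwise formula for the right adjoint to precomposition. Since $\Delta_F$ is precomposition with $\church{F}$, its right adjoint $\Pi_F$ equals $\mathrm{Ran}_{\church{F}}$ and is computed pointwise by
\[
\Pi_F(I)(d) \;=\; \lim_{(c,g)\in (d\downarrow \church{F})} I(c),
\]
where the comma category $(d\downarrow\church{F})$ has objects the pairs $(c,g)$ with $c\in\Ob(\church{S})$ and $g\taking d\to \church{F}(c)$ a morphism of $\church{T}$, and where a morphism $(c,g)\to(c',g')$ is an arrow $h\taking c\to c'$ of $\church{S}$ with $\church{F}(h)\circ g = g'$. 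First I would observe that because $\church{S}$ and $\church{T}$ are finite, each such comma category has finitely many objects and morphisms: there are finitely many $c$, and for each of them $\Hom_{\church{T}}(d,\church{F}(c))$ is finite. Hence every limit above is a finite limit of finite sets, and in particular genuinely computable.

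Next I would turn the finite limit into an SPCU expression. A finite limit in $\Set$ is the subset of the product $\prod_{(c,g)} I(c)$ consisting of those tuples $(x_{(c,g)})$ that are compatible along every generating morphism of the comma category, i.e.\ $I(h)(x_{(c,g)}) = x_{(c',g')}$ whenever $h\taking (c,g)\to (c',g')$. The product is a cartesian product of finitely many entity and link tables of $I$, expressible by the product operation; the compatibility equations become equality selections on the joined columns; and projecting away redundant columns leaves one column per relevant generator. This realizes $\Pi_F(I)(d)$ as a select--project--product query over the binary tables of $I$. Because $\Pi$ manufactures genuinely new entities (each surviving tuple is a new element of the node $d$), I would then apply the GUID operation to stamp a fresh globally unique ID on every resulting row, restoring the entity-table format and the globally unique key assumption.

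It then remains to produce the link tables, i.e.\ the action of $\Pi_F(I)$ on the edges of $T$, and to attach attribute tables. For an edge $e\taking d\to d'$ of $\church{T}$, functoriality of the limit supplies a reindexing map $\Pi_F(I)(d)\to\Pi_F(I)(d')$ induced by the comma-category functor $(d'\downarrow\church{F})\to(d\downarrow\church{F})$, $(c,g)\mapsto(c,g\circ e)$; concretely this map reads off, from each tuple over $d$, the coordinates picked out by this reindexing, so the link table for $e$ is again a projection of the product table paired with the freshly generated IDs. For attributes I would use the hypothesis that the attribute mapping of $F$ is a \emph{bijection}: each target attribute of type $t$ pulls back to a unique source attribute of the same type, so its value is read off the corresponding coordinate of each tuple, producing the attribute table with no NULLs and no ambiguity. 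Finiteness of $\church{T}$ guarantees the whole construction terminates.

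The main obstacle I expect is the faithful encoding of the comma-category compatibility conditions --- not merely the cartesian product but the equalizer part of the limit, which is precisely where the path equations of $\church{S}$ and $\church{T}$ enter --- as selection predicates, together with the bookkeeping needed to keep the result in the required binary-table form with globally unique IDs and correctly defined link tables. When the comma category is discrete (no nontrivial morphisms, as happens when no path relates the relevant preimages, cf.\ the worked $\Pi$ example) the limit degenerates to a pure product and recovers the cartesian product computed there; the general case differs exactly by the extra selections, and verifying that these selections cut out the limit on the nose is the step that will require the most care.
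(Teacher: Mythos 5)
Your proposal is correct and takes essentially the same route as the paper's own proof: Construction \ref{const:Pi} likewise computes $\Pi_F(I)(d)$ pointwise as the limit of $H_d = I\circ q_d$ over the comma category $B_d=(d\down F)$, realizes that finite limit as a select--project--product ``limit table'' (Construction \ref{const:join all}, product of node tables followed by equality selections for each edge and a projection), and obtains the edge maps from the reindexing $(d'\down F)\to(d\down F)$ exactly as you describe. Your handling of GUID stamping for the new rows and of attributes via the bijective attribute mapping also matches the sketch in Section 5.3, so the only differences are notational.
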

\begin{proof}
See Construction \ref{const:Pi}.
\end{proof}
The algorithm for $\Pi_F$ is more complex than for $\Delta_F$ and $\Sigma_F$.  It makes use of comma categories, which we have not defined, as well as ``limit tables'', which are a sort of ``join all''.  We define these now.

Let $B$ be a typed signature and $H$ a typed $B$-instance.  The limit table ${\sf lim}_B$ is computed as follows.  First, take the cartesian product of every node table in $B$, and naturally join the attribute tables of $B$.  Then, for each edge $e : n_1 \to n_2$ filter the table by $n_1 = n_2$.  This filtered table is the limit table ${\sf lim}_B$.

Let $S : A \to C$ and $T : B \to C$ be functors.  The comma category $(S \downarrow T)$ has for objects triples $(\alpha, \beta, f)$, with $\alpha$ an object in $A$, $\beta$ an object in $B$, and $f : S(\alpha) \to T(\beta)$ a morphism in $C$.  The morphisms from $(\alpha, \beta, f)$ to $(\alpha', \beta', f')$ are all the pairs $(g,h)$ where $g : \alpha \to \alpha'$ and $h : \beta \to \beta'$ are morphisms in $A$ and $B$ respectively such that $T(h) \circ f = f' \circ S(g)$.

The algorithm for $\Pi_F$ proceeds as follows.  First, for every object $d \in T$ we consider the comma category $B_d := (d \downarrow F)$ and its projection functor $q_d : (d \downarrow F) \to C$.  (Here we treat $d$ as a functor from the empty category).  Let $H_d := I \circ q_d : B_d \to \Set$, constructed by generating SQL for $\Delta_{q_d}(I)$.  We say that the limit table for $d$ is ${\sf lim}_{B_d} H_d$, as described above.  Now we can describe the target tables in $T$:

\begin{itemize}
\item for each node $N$ in $T$, generate globally unique IDs for each row in the limit table for $N$.  These IDs are $\Pi_F(N)$.
\item for each attribute $A$ off $N$ in $T$, $\Pi_F(A)$ will be a projection from the limit table for $N$.
\item for each edge $e \taking X \to Y$ in $T$, $\Pi_F(e)$ will be a map from $X$ to $Y$ given by joining the limit tables for $X$ and $Y$ on columns which ``factor through'' $e$.
\end{itemize}

{\bf Remark.} Our SPCU+keygen generation algorithms for $\Delta$ and $\Sigma$ work even when $\llbracket S \rrbracket$ and $\llbracket T \rrbracket$ are infinite, but this is not the case for $\Pi$.   But if we are willing to take on infinite SQL queries, we have $\Pi$ in general; see Construction \ref{const:Pi}.  An interesting direction for future work is to relate these infinite SQL queries to recursive languages such as Datalog.

\subsection{Translating Functorial Data Migrations of Homomorphisms}

We can also generate SPCU+keygen to implement data migrations of homomorphisms (c.f. section~\ref{dmh}).

\begin{thm}
For all $T$-instances $I, J\in T\inst$ and homomorphisms $h : I \Rightarrow J$, we can compute a SPCU+keygen program $[F]_\Delta$ such that $\Delta_F(h) \cong [F]_\Delta(h)$, and similarly for $\Sigma_F, \Pi_F$.
\end{thm}
\begin{proof}
See Constructions \ref{const:Delta}, \ref{const:Sigma}, and \ref{const:Pi}.
\end{proof}

%

\vspace*{-.3in}
\section{SPCU in FQL}

To implement SPCU using FQL, we must encode relational schemas as signatures with a single ``active domain''~\cite{DBLP:books/aw/AbiteboulHV95} attribute.   For example, a schema with two relations $R(c_1, \ldots, c_n)$ and $R^\prime(c^\prime_1, \ldots, c^\prime_{n^\prime})$ becomes: 
$$
\fbox{\parbox{1in}{\xymatrix{\LMO{R}\ar@/_1pc/[dr]_{c_1}\ar@{}[dr]|{\cdots}\ar@/^1pc/[dr]^(.3){c_n}&&\LMO{R'}\ar@/_1pc/[dl]_(.3){c'_{n'}}\ar@{}[dl]|{\cdots}\ar@/^1pc/[dl]^{c'_1}\\&\LMO{D} \\ & \stackrel{A}{\circ} \ar@{-}[u]}}}
$$

We might expect that the $c_1, \ldots, c_n$ would be attributes of node $R$, and hence there would be no node $D$, but that does not work: attributes may not be (directly) joined on.  Instead, we must think of each column of $R$ as a mapping from $R$'s domain to IDs in $D$, and $A$ as a mapping from IDs in $D$ to constants.  Note that $A$ need not be injective; our constructions below will not in general maintain injectiveness of $A$ (although injectiveness can be recovered using the $dedup$ operation defined in the next section). We will write $[R]$ for the encoding of a relational schema $R$ and $[I]$ for the encoding of a relational $R$-instance $I$.
\vspace{-.1in}
\subsection{SPCU in FQL (Bag Semantics)}
\label{conjbag}
FQL can implement SPCU queries {\it under bag semantics} directly using the above encoding.  In what follows we will omit the attribute $A$ from the diagrams.  We may express the (bag) operations $\pi, \sigma, \times, +$ as follows: 
\begin{itemize}
\item Let $R$ be a table. We can express $\pi_{i_1, \ldots, i_k} R$ as $\Delta_F$, where $F$ is any functor sending $\pi R$ to $R$ and $D$ to $D$ in the following diagram:
 $$
\fbox{\parbox{1in}{\xymatrix{\LMO{\pi R}\ar@/_1pc/[d]_{i_1}\ar@{}[d]|{\cdots}\ar@/^1pc/[d]^{i_k}\\\LMO{D}}}}
\Too{F}
\fbox{\parbox{1in}{\xymatrix{\LMO{R}\ar@/_1pc/[d]_{c_1}\ar@{}[d]|{\cdots}\ar@/^1pc/[d]^{c_n}\\\LMO{D}}}}
 $$
This construction is only appropriate for bag semantics because $\pi R$ will have the same number of rows as $R$.

\item Let $R$ be a table.  We can express $\sigma_{a=b} R$ as $\Delta_F \circ \Pi_F$, where $F$ is:
$$
\fbox{\parbox{1in}{\xymatrix{\LMO{R}\ar@/_3pc/[d]_{a}\ar@/_1pc/[d]_{c_1}\ar@{}[d]|{\cdots}\ar@/^1pc/[d]^{c_n}\ar@/^3pc/[d]^{b}\\\LMO{D}}}}
\Too{F}
\fbox{\parbox{1in}{\xymatrix{\LMO{\sigma R}\ar@/_2pc/[d]_{x}\ar@/_1pc/[d]_{c_1}\ar@{}[d]|\cdots\ar@/^1pc/[d]^{c_n}\\\LMO{D}}}}
 $$
Here $F(a)=F(b)=x$ and $F(c_i) = c_i$ for $1 \leq i \leq n$.

{\bf Remark.} In fact, $\Delta_F$ creates the duplicated column found in $\sigma_{a=b} R$. If we wanted the more economical query in which the column is not duplicated, we would use $\Pi_F$ instead of $\Delta_F \circ \Pi_F$.

\item Let $R$ and $R'$ be tables.  We can express $R \times R'$ as $\Pi_F$, where $F$ is:
$$
\fbox{\parbox{1in}{\xymatrix{\LMO{R}\ar@/_1pc/[dr]_{c_1}\ar@{}[dr]|{\cdots}\ar@/^1pc/[dr]^(.3){c_n}&&\LMO{R'}\ar@/_1pc/[dl]_(.3){c'_{n'}}\ar@{}[dl]|{\cdots}\ar@/^1pc/[dl]^{c'_1}\\&\LMO{D}}}}
\Too{F}
\fbox{\parbox{1in}{\xymatrix{\LMO{R \times R'}\ar@/_3pc/[d]_{c_1}\ar@/_1pc/[d]_{\cdots\;\; c_n}\ar@{}[d]|{\cdots}\ar@/^1pc/[d]^{c'_1\;\;\cdots}\ar@/^3pc/[d]^{c'_{n'}}\\\LMO{D}}}}
$$

\item Let $R$ and $R'$ be union-compatible tables.  We can express $R + R'$ as $\Sigma_F$, where $F$ is :
$$
\fbox{\parbox{1in}{\xymatrix{\LMO{R}\ar@/_1pc/[dr]_{c_1}\ar@{}[dr]|{\cdots}\ar@/^1pc/[dr]^(.3){c_n}&&\LMO{R'}\ar@/_1pc/[dl]_(.3){c'_{n}}\ar@{}[dl]|{\cdots}\ar@/^1pc/[dl]^{c'_1}\\&\LMO{D}}}}
\Too{F}
\fbox{\parbox{1in}{\xymatrix{\LMO{R + R'}\ar@/_1pc/[d]_{c_1}\ar@{}[d]|{\cdots}\ar@/^1pc/[d]^{c_n}\\\LMO{D}}}}
$$
\end{itemize}

\begin{thm}[SPCU in FQL (Bag Semantics)]
Let $R$ be a relational schema, and $I$ an $R$-instance. For every SPCU query $q$ under bag semantics we can compute a FQL query $[q]$ such that $[q(I)] \cong [q]([I])$.
\end{thm}
\begin{proof}
See Propositions \ref{prop:conjunctive RA bag} and \ref{ghjkl}. 
\end{proof}

\subsection{SPCU in FQL (Set Semantics)}
The encoding strategy described above fails for SPCU under its typical set-theoretic semantics.  For example, consider a simple two column relational table $R$, its encoded FQL instance $[R]$, and an attempt to project off col1 using $\Delta$:

\begin{footnotesize}
\begin{align*}
\begin{tabular}{| l | l |}\bhline
\multicolumn{2}{| c |}{R}\\\bbhline 
{\sf  col1}&{\sf  col2}\\\hline 
x&y\\\hline 
x&z\\\bhline 
\end{tabular}\hspace{.1in}
\begin{tabular}{| c || c | c |}\bhline
\multicolumn{3}{| c |}{$[R]$}\\\bbhline 
{\sf  ID} & {\sf  col1}&{\sf  col2}\\\hline 
0 & x&y\\\hline 
1 & x&z\\\bhline  
\end{tabular}\hspace{.1in}
\begin{tabular}{| c || c |}\bhline
\multicolumn{2}{| c |}{$\Delta [R]$}\\\bbhline 
{\sf  ID} & {\sf  col1}\\\hline 
0& x \\\hline 
1 & x\\\bhline 
\end{tabular}
\end{align*}
\end{footnotesize}

This answer is incorrect under projection's set-theoretic semantics, because it has the wrong number of rows.  However, it is possible to extend FQL with an idempotent operation, $dedup_T \taking T\inst \to T\inst$, such that FQL+$dedup$ can implement every SPCU query under set-theoretic semantics.  Intuitively, $dedup_T$ converts a $T$-instance to a smaller $T$-instance by equating IDs that cannot be distinguished by any attribute along any path.  In relational languages based on object-IDs, this operation is usually called ``copy elimination''~\cite{Abiteboul:1989:OIQ:67544.66941}. The $dedup$ of the above instance is:

\begin{footnotesize}
\begin{align*}
\begin{tabular}{| c || c |}\bhline  
\multicolumn{2}{| c |}{$dedup(\Delta [R])$}\\\bbhline 
{\sf  ID} & {\sf  col1}\\\hline 
0& x \\\hline 
\end{tabular}
\end{align*}
\end{footnotesize}
\noindent
which is correct for the set-theoretic semantics.  

\begin{thm}[SPCU in FQL (Set Semantics)]
Let $R$ be a relational schema, and $I$ an $R$-instance. For every SPCU query $q$ under set semantics we can compute a FQL program $[q]$ such that $[q(I)] \cong dedup_T([q]([I]))$.
\end{thm}
\begin{proof}
See Propositions \ref{prop:conjunctive RA bag} and \ref{ghjkl}
\end{proof}
Provided $T$ is obtained from a relational schema (i.e., has the pointed form described in this section), $dedup_T$ can be implemented using SPCU+keygen, and data migrations of the form $dedup \circ Q$, where $Q$ is an FQL query, are closed under composition.  The FQL IDE implements a (partial) translation SQL $\to$ FQL using the above constructions.

\section{Conclusion}\label{sec:future}

The functorial data model admits other operations on instances and homomorphisms besides the $\Delta,\Sigma,\Pi$ data migration functors described in this paper.  In categorical terms, for every untyped schema $S$, the category of $S$ instances is a topos (model of higher-order logic)~\cite{BW}, and for every finite typed schema $S$ with finite attributes (e.g., {\sf String} is not allowed, but {\sf VARCHAR}(256) is), the category of finite $S$ instances is a topos.  The FQL IDE implements $n$-ary products and co-products of instances by translating to SPCU+keygen.  Exponential and sub-object classifier instances cannot be implemented using SQL, but the FQL IDE can execute them directly.

\bibliographystyle{plain}
\begin{footnotesize}

\end{footnotesize}


\end{multicols}

\newgeometry{left=1.6in,right=1.6in,top=1.4in,bottom=1.4in}

\section{Categorical constructions for data migration}

Here we give many standard definitions from category theory and a few less-than-standard (or original) results. The proofs often assume more knowledge than the definitions do. Let us also note our use of the term {\em essential}, which basically means ``up to isomorphism". Thus an object $X$ having a certain property is {\em essentially unique} if every other object having that property is isomorphic to $X$; an object $X$ is in the {\em essential image} of some functor if it is isomorphic to an object in the image of that functor; etc.
\subsection{Basic constructions}

\begin{definition}[Fiber products of sets]

Suppose given the diagram of sets and functions as to the left in (\ref{dia:fp sets}).
\begin{align}\label{dia:fp sets}
\xymatrix{&B\ar[d]^g&&A\cross_CB\ar[r]^{f'}\ar[dr]^h\ar[d]_{g'}&B\ar[d]^g\\
A\ar[r]_f&C&&A\ar[r]_f&C}
\end{align}
Its {\em fiber product} is the commutative diagram to the right in (\ref{dia:fp sets}), where we define 
$$A\cross_CB:=\{(a,c,b)\|f(a)=c=g(b)\}$$ 
and $f', g'$, and $h=g\circ f'=f\circ g'$ are the obvious projections (e.g. $f'(a,c,b)=b$). We sometimes refer to just $(A\cross_CB,f',g', h)$ or even to $A\cross_CB$ as the fiber product. 

\end{definition}

\begin{example}[Chain signatures]\label{ex:chains}

Let $n\in\NN$ be a natural number. The {\em chain signature on $n$ arrows}, denoted $\vect{n}$,  \footnote{The chain signature $\vect{n}$ is often denoted $[n]$ in the literature, but we did not want to further overload the bracket [-] notation} is the graph $$\xymatrix{\LMO{0}\ar[r]&\LMO{1}\ar[r]&\cdots\ar[r]&\LMO{n}}$$ with no path equivalences. One can check that $\vect{n}$ has ${n+2}\choose{2}$-many paths. 

The chain signature $\vect{0}$ is the terminal object in the category of signatures. 

A signature morphism $c\taking \vect{0}\to C$ can be identified with an object in $C$. A signature morphism $f\taking\vect{1}\to C$ can be identified with a morphism in $C$. It is determined up to path equivalence by a pair $(n,p)$ where $n\in\NN$ is a natural number and $p\taking\vect{n}\to C$ is a morphism of graphs.

\end{example}

\begin{definition}[Fiber product of signatures]\label{def:fp sig}

Suppose given the diagram of signatures and signatures mappings as to the left in (\ref{dia:fiber prod of sigs}).
\begin{align}\label{dia:fiber prod of sigs}
\xymatrix{&B\ar[d]^g&&A\cross_CB\ar[r]^{f'}\ar[dr]^h\ar[d]_{g'}&B\ar[d]^g\\
A\ar[r]_f&C&&A\ar[r]_f&C}
\end{align}
On objects and arrows we have the following diagrams of sets: 
\begin{align}\label{dia:fiber prod on objects and arrows}
\parbox{1in}{
\xymatrix{&\Ob(B)\ar[d]^{\Ob(g)}\\
\Ob(A)\ar[r]_{\Ob(f)}&\Ob(C)}
}\hspace{1in}
\parbox{1in}{
\xymatrix{&\Arr(B)\ar[d]^{\Arr(g)}\\
\Arr(A)\ar[r]_{\Arr(f)}&\Mor(C)}
}
\end{align}

We define the {\em fiber product} of the left-hand diagram in (\ref{dia:fiber prod of sigs}) to be the right-hand commutative diagram in (\ref{dia:fiber prod of sigs}), where we the objects and arrows of $A\cross_CB$ are given by the fiber product of the left and right diagram of sets in (\ref{dia:fiber prod on objects and arrows}), respectively. A pair of paths in $A\cross_CB$ are equivalent if they map to equivalent paths in both $A$ and $B$ (and therefore in $C$). Note that we have given $A\cross_CB$ as a signature, i.e. we have provided a finite presentation of it.

\end{definition}

\begin{example}[Pre-image of an object or morphism]

Let $F\taking C\to D$ be a functor. Given an object $d\in \Ob(D)$ (respectively a morphism $d\in\Mor(D)$), we can regard it as a functor $d\taking\vect{0}\to D$ (respectively a functor $d\taking \vect{1}\to D$), as in Example \ref{ex:chains}. The pre-image of $d$ is the subcategory of $C$ that maps entirely into $d$. This is the fiber product of the diagram $C\To{F}D\From{d}\vect{0}$ (respectively the diagram $C\To{F}D\From{d}\vect{1}$).

\end{example}

A signature is assumed to have finitely many arrows, but because it may have loops, $C$ may have infinitely many non-equivalent paths. We make the following definitions to deal with this.

\begin{definition}

A signature $C$ is called {\em finite} if there is a finite set $S$ of paths such that every path in $C$ is equivalent to some $s\in S$. An equivalent condition is that the schema $\church{C}$ has finitely many morphisms. (Note that if $C$ is finite then $C$ has finitely many objects too, since each object has an associated trivial path.)

\end{definition}

Recall (from the top of Section \ref{sec:FDM}) that every signature is assumed to have a finite set of objects and arrows. The above notion is a semantic one: a signature $C$ is called finite if $\church{C}$ is a finite category.

\begin{definition}

A signature $C$ is called {\em acyclic} if it has no non-trivial loops. In other words, for all objects $c\in\Ob(C)$ every path $p\taking c\to c$ is a trivial path.

\end{definition}

\begin{lemma}[Acyclicity implies finiteness]

If $C$ is an acyclic signature then it is finite.

\end{lemma}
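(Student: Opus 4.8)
The plan is to reduce the statement to a purely combinatorial fact about the underlying graph of $C$. By the standing assumption (stated at the top of Section \ref{sec:FDM}) a signature has a finite set $N$ of nodes and a finite set $E$ of edges; acyclicity says that the only paths $p\taking c\to c$ are the trivial ones, i.e. the underlying directed multigraph has no nontrivial directed cycles. I would show that under these hypotheses the \emph{free} category on the graph already has finitely many morphisms, before any quotient by the path equivalences is imposed. Since $\church{C}$ is a quotient of this free category, it will a fortiori have finitely many morphisms, which is exactly the condition defining finiteness.

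The central observation is that in an acyclic signature every directed path is \emph{simple}, meaning it visits no node twice. I would argue the contrapositive: if a path $p\taking c\to c'$ visited some node $v$ at two distinct positions, then the segment of $p$ strictly between those two occurrences would be a nontrivial path $v\to v$, contradicting acyclicity. Here one must be slightly careful to ensure the extracted loop genuinely contains at least one edge, so that it is indeed nontrivial. Consequently every path uses each of the finitely many nodes at most once, so its length -- the number of edges occurring in it -- is at most $|N|-1$.

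With this length bound in hand the counting is immediate: the number of edge-sequences of length at most $|N|-1$ drawn from the finite set $E$ is finite (crudely, there are $|N|$ trivial paths together with at most $\sum_{k=1}^{|N|-1}|E|^{k}$ nontrivial ones), and every path is determined by such data. Hence the set of all paths in the graph is finite, so the free category has finitely many morphisms. Taking $S$ to be this finite collection of paths (or a set of representatives) then witnesses finiteness of $C$ in the sense of the preceding definition.

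I expect the only real subtlety to be the simple-path lemma, and specifically the bookkeeping needed to extract a genuine nontrivial loop from a repeated node; once that is settled, the remaining length bound and cardinality estimate are routine. A secondary point worth a sentence is the observation that passing to the quotient $\church{C}$ can only identify morphisms and never create new ones, so that finiteness of the free category transfers to $\church{C}$.
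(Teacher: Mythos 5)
Your proof is correct and takes essentially the same approach as the paper: extract a nontrivial loop from any repetition in a path to get a length bound from acyclicity, then count the finitely many bounded-length edge-sequences (and note that $\church{C}$, being a quotient of the free category, inherits finiteness). The only immaterial difference is that you forbid repeated \emph{nodes}, bounding path length by $|N|-1$, whereas the paper forbids repeated \emph{edges}, bounding length by the number of arrows $n$.
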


\begin{proof}

All of our signatures are assumed to include only finitely many objects and arrows. If $C$ is acyclic then no edge can appear twice in the same path. Suppose that $C$ has $n$ arrows. Then the number of paths in $C$ is bounded by the number of lists in $n$ letters of length at most $n$. This is finite.

\end{proof}

\begin{lemma}[Acyclicity and finiteness are preserved under formation of fiber product]

Given a diagram $A\To{F}C\From{G}B$ of signatures such that each is acyclic (respectively, finite), the fiber product signature $A\cross_CB$ is acyclic (respectively, finite).

\end{lemma}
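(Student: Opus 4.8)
The plan is to exploit the two projection signature morphisms $g'\taking A\cross_CB\to A$ and $f'\taking A\cross_CB\to B$ coming from Definition \ref{def:fp sig}, together with two elementary facts. First, on arrows each projection is the corresponding coordinate projection of the set-level fiber product, so it sends a single edge $(e_A,e_B)$ of $A\cross_CB$ to the single edge $e_A$ of $A$ (resp. $e_B$ of $B$); being a signature morphism it then sends a path of $A\cross_CB$ to a path of $A$ (resp. $B$) of exactly the same length. Second, by the very definition of the fiber product, two paths of $A\cross_CB$ are declared equivalent \emph{precisely when} their $g'$-images are equivalent in $A$ and their $f'$-images are equivalent in $B$. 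Both cases of the lemma reduce to these two observations; note in particular that neither case needs the hypothesis on $C$, only the hypotheses on $A$ and $B$.

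For the acyclic case I would argue directly on the underlying graph. Acyclicity means every loop $c\to c$ is a trivial (length $0$) path. Suppose $p$ is a loop at an object $(a,b)$ of $A\cross_CB$. Applying $g'$ sends $p$ to a loop at $a$ in $A$ of the \emph{same} length as $p$, by the length-preservation observation. Since $A$ is acyclic this length must be $0$, so $p$ itself is trivial. Hence $A\cross_CB$ has no non-trivial loops, i.e. it is acyclic. (The projection $f'$ would give the same conclusion; either one alone suffices.)

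For the finite case I would count morphisms of the generated categories, using that $A\cross_CB$ is finite iff $\church{A\cross_CB}$ has finitely many morphisms, i.e. finitely many paths up to equivalence. Define $\phi\taking\Mor(\church{A\cross_CB})\to\Mor(\church{A})\cross\Mor(\church{B})$ on equivalence classes by $[p]\mapsto([g'p],[f'p])$. The second observation makes $\phi$ well defined (equivalent paths have equivalent projections) and, crucially, injective (if the projections agree up to equivalence in both $A$ and $B$, the paths are already equivalent in $A\cross_CB$). Since $A$ and $B$ are finite, the codomain $\Mor(\church{A})\cross\Mor(\church{B})$ is a finite set, so $\Mor(\church{A\cross_CB})$ injects into a finite set and is therefore finite. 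Hence $A\cross_CB$ is finite.

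The only real obstacle is the bookkeeping around Definition \ref{def:fp sig}: I must be certain that each projection genuinely sends an edge to a single edge (so that path length is preserved, which powers the acyclic case) and that the equivalence relation on $A\cross_CB$ is \emph{exactly} ``equivalent in both $A$ and $B$'' (so that $\phi$ is injective, which powers the finite case). Both are immediate from the definition, where the objects and arrows of $A\cross_CB$ are the set-level fiber products and a pair of paths is equivalent precisely when both projections are. Once these are pinned down, no further computation is required.
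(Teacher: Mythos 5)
Your proposal is correct, and it is worth comparing with the paper's proof, which splits along the same two cases but handles them differently. On finiteness you and the paper are essentially aligned: the paper simply observes that ``the fiber product of categories is computed by twice taking the fiber products of sets,'' so the morphisms of $\church{A\cross_CB}$ live inside a fiber product of finite morphism sets. Your explicit injection $\phi\taking\Mor(\church{A\cross_CB})\to\Mor(\church{A})\cross\Mor(\church{B})$ makes this precise, and is arguably the safer formulation: $\church{A\cross_CB}$ need not coincide with the fiber product of the categories $\church{A}$ and $\church{B}$ (a pair of morphisms with equivalent images in $C$ need not lift to a single path in the fiber-product signature), but injectivity --- which is exactly what your count needs, and which follows from the definitional ``equivalent iff equivalent in both projections'' --- does hold. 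On acyclicity you take a genuinely different route: the paper characterizes acyclicity by the condition that every signature morphism $\Loop\to X$ factors through $\Loop\to\vect{0}$ and then invokes the universal property of the fiber product, whereas you argue at the level of the underlying graph, using that each projection sends an edge to a single edge and hence preserves path length, so a nontrivial loop upstairs would project to a nontrivial loop in $A$. Your version is more elementary and makes visible a fact the paper only records later (in the comma-category analogue, as a parenthetical): acyclicity of just one of $A$, $B$ suffices, and the hypothesis on $C$ is never used. What the paper's abstract version buys in exchange is uniformity: the same $\Loop$-factorization trick is reused for the comma-category lemma, where a naive length argument would require slightly more care.
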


\begin{proof}

The fiber product of finite sets is clearly finite, and the fiber product of categories is computed by twice taking the fiber products of sets (see Definition \ref{def:fp sig}). We now show that acyclicity is preserved.

Let $\Loop$ denote the signature
\begin{align}\label{dia:loop}
\Loop:=\LoopSchema
\end{align} A category $C$ is acyclic if and only if every functor $L\taking\Loop\to C$ factors through the unique functor $\Loop\to\vect{0}$, i.e. if and only if every loop is a trivial path. The result follows by the universal property for fiber products.

\end{proof}

\begin{definition}[Comma category]\label{def:comma}

To a diagram of categories $A\To{f}C\From{g}B$ we can associate the {\em comma category of $f$ over $g$}, denoted $(f\down g)$, defined as follows. 
$$\Ob(f\down g):=\{(a,b,\gamma)\|a\in\Ob(A),b\in\Ob(B),\gamma\taking f(a)\to g(b)\text{ in }C\}$$
$$\Hom_{(f\down g)}((a,b,\gamma),(a',b',\gamma')):=\{(m,n)\|m\taking a\to a'\text{ in } A,n\taking b\to b'\text{ in } B, \gamma'\circ f(m)=g(n)\circ\gamma\}$$

There are obvious {\em projection} functors $(f\down g)\To{p} A$ and $(f\down g)\To{q} B$ and a natural transformation $\alpha\taking f\circ p\to g\circ q$, and we summarize this in a canonical natural transformation diagram 
\begin{align}\label{dia:nt for comma}
\xymatrix{(f\down g)\ar[r]^q\ar[d]_p\ar@{}[dr]|{\stackrel{\alpha}{\Nearrow}}&B\ar[d]^g\\A\ar[r]_f&C}
\end{align} 

\end{definition}

Below is a heuristic picture that may be helpful. The outside square consists of categories and functors (note that it does not commute!). The inside square represents a morphism in $(f\down g)$ from object $(a,b,\gamma)$ to object $(a',b',\gamma')$; this diagram is required to commute in $C$.

$$\xymatrix@=10pt{
&&&(f\down g)\ar@/_2pc/[lllddd]_p\ar@/^2pc/[rrrddd]^q\\\\
\\\color{red}{A}\ar@/_2pc/[dddrrr]_f&\color{red}{a}\ar@[red][d]_{\color{red}{m}}&\color{blue}{f(a)}\ar@[blue][rr]^{\color{blue}{\gamma}}\ar@[blue][d]_{\color{blue}{f(m)}}&&\color{blue}{g(b)}\ar@[blue][d]^{\color{blue}{g(n)}}&\color{ForestGreen}{b}\ar@[ForestGreen][d]^{\color{ForestGreen}{n}}&\color{ForestGreen}{B}\ar@/^2pc/[dddlll]^g\\
&\color{red}{a'}&\color{blue}{f(a')}\ar@[blue][rr]_{\color{blue}{\gamma'}}&&\color{blue}{g(b')}&\color{ForestGreen}{b'}\\\\
&&&\color{blue}{C}
}
$$

\begin{example}[Slice category and coslice category]\label{ex:slice}

Let $f\taking A\to B$ be a functor, and let $b\in\Ob(B)$ be an object, which we represent as a functor $b\taking\vect{0}\to B$. The {\em slice category of $f$ over $b$} is defined as $(f\down b)$. The {\em coslice category of $b$ over $F$} is defined as $(b\down f)$.

For example let $\singleton$ denote a one element set, and let $\id_\Set$ be the identity functor on $\Set$. Then the coslice category $(\singleton\down\id_\Set)$ is the category of pointed sets, which we denote by $\Set_*$. The slice category $(\id_\Set\down\singleton)$ is isomorphic to $\Set$.

\end{example}

\begin{lemma}[Comma categories as fiber products]\label{lemma:comma as fp}

Given a diagram $A\To{f}C\From{g}B$, the comma category $(f\down g)$ can be obtained as the fiber product in the diagram 
$$
\xymatrix{(f\down g)\ar[r]^{\alpha}\ar[d]_{(p,q)}&C^{\vect{1}}\ar[d]^{(dom,cod)}\\A\cross B\ar[r]_{(f,g)}&C\cross C}
$$
where $C^{\vect{1}}$ is the category of functors $\vect{1}\to C$. 

\end{lemma}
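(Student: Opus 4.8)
The plan is to exhibit an explicit isomorphism of categories between $(f\down g)$ and the fiber product $P:=(A\cross B)\cross_{C\cross C}C^{\vect{1}}$, by unwinding both constructions. I will use the observation (already invoked in the proof that acyclicity is preserved under fiber products) that the fiber product of categories is computed by separately forming the fiber product on object-sets and on morphism-sets. So it suffices to match objects with objects and morphisms with morphisms, and then check that the matching respects identities and composition.

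First I would spell out the arrow category $C^{\vect{1}}$: its objects are the morphisms $\gamma\taking x\to y$ of $C$, and a morphism $\gamma\to\gamma'$ is a commuting square, i.e. a pair $(u,v)$ with $u\taking x\to x'$ and $v\taking y\to y'$ satisfying $\gamma'\circ u=v\circ\gamma$. The functor $(dom,cod)$ sends $\gamma$ to $(x,y)$ and $(u,v)$ to $(u,v)$; the functor $(f,g)$ sends $(a,b)$ to $(f(a),g(b))$ and $(m,n)$ to $(f(m),g(n))$.

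Next I would compute $P$ on objects: an object of $P$ is a pair $\big((a,b),\gamma\big)$ with $(f(a),g(b))=(dom\,\gamma,\,cod\,\gamma)$, which says exactly that $\gamma$ is an arrow $f(a)\to g(b)$ in $C$; this is precisely the datum of an object $(a,b,\gamma)$ of $(f\down g)$. On morphisms, an arrow of $P$ from $\big((a,b),\gamma\big)$ to $\big((a',b'),\gamma'\big)$ is a pair consisting of $(m,n)\taking(a,b)\to(a',b')$ in $A\cross B$ together with $(u,v)\taking\gamma\to\gamma'$ in $C^{\vect{1}}$ whose images in $C\cross C$ coincide, i.e. $(f(m),g(n))=(u,v)$. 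Substituting $u=f(m)$ and $v=g(n)$ into the commuting-square condition $\gamma'\circ u=v\circ\gamma$ yields $\gamma'\circ f(m)=g(n)\circ\gamma$, which is verbatim the defining condition for a morphism of $(f\down g)$. Hence such an arrow of $P$ is determined by $(m,n)$ subject to exactly this equation, matching $\Hom_{(f\down g)}$.

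Finally I would verify that this bijection on objects and morphisms is functorial: composition in both $P$ and $(f\down g)$ is computed componentwise from composition in $A$, $B$, and $C$, and identities correspond, so the bijection is an isomorphism of categories. Under this identification the two canonical fiber-product projections $P\to A\cross B$ and $P\to C^{\vect{1}}$ become $(p,q)$ and the arrow-valued functor $\alpha$ respectively, so the displayed square commutes as asserted. The argument is essentially bookkeeping; the one step demanding care is the morphism level, where the fiber product over $C\cross C$ does its real work: forcing the $C^{\vect{1}}$-component $(u,v)$ to equal $(f(m),g(n))$ is exactly what converts the abstract commuting-square condition into the naturality condition $\gamma'\circ f(m)=g(n)\circ\gamma$ of the comma category.
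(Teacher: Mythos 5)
Your proof is correct and is exactly the routine unwinding that the paper dispatches with the one-word proof ``Straightforward'': you match objects and morphisms of the strict pullback $(A\cross B)\cross_{C\cross C}C^{\vect{1}}$ with those of $(f\down g)$ and check functoriality componentwise. Your key step---that pulling back over $C\cross C$ forces the $C^{\vect{1}}$-component to be $(f(m),g(n))$, converting the commuting-square condition into the comma-category equation $\gamma'\circ f(m)=g(n)\circ\gamma$---is precisely the content the lemma asserts, so nothing is missing.
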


\begin{proof}

Straightforward.

\end{proof}

\begin{lemma}[Acyclicity and finiteness are preserved under formation of comma category]

Let $A, B$, and $C$ be acyclic (resp. finite) categories, and suppose we have functors $A\To{f}C\From{g}B$. Then the comma category $(f\down g)$ is acyclic (resp. finite).

\end{lemma}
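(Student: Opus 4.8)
The plan is to reduce everything to the fiber-product case, which was already settled by the lemma that acyclicity and finiteness are preserved under formation of fiber product. By Lemma \ref{lemma:comma as fp}, the comma category $(f\down g)$ is the fiber product of the diagram $A\cross B\To{(f,g)}C\cross C\From{(dom,cod)}C^{\vect 1}$. So it suffices to verify that each of the three corners $A\cross B$, $C\cross C$, and $C^{\vect 1}$ is acyclic (resp.\ finite) whenever $A,B,C$ are, and then to invoke the fiber-product preservation lemma one final time.

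First I would dispose of the products, since a product is just a fiber product over the terminal signature. By Example \ref{ex:chains}, $\vect 0$ is the terminal object in the category of signatures, and it is visibly both acyclic and finite; hence $A\cross B\iso A\cross_{\vect 0}B$ and $C\cross C\iso C\cross_{\vect 0}C$, and the fiber-product lemma immediately yields that $A\cross B$ and $C\cross C$ are acyclic (resp.\ finite).

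The main work — and the step I expect to be the real obstacle — is the arrow category $C^{\vect 1}$, whose objects are the morphisms of $C$ and whose morphisms $\gamma\to\gamma'$ are the commuting squares, i.e.\ pairs $(m,n)$ with $\gamma'\circ m=n\circ\gamma$. Finiteness is routine: if $C$ is finite it has finitely many morphisms, so $C^{\vect1}$ has finitely many objects, and each hom-set of $C^{\vect1}$ embeds into a product of two hom-sets of $C$ and is therefore finite. For acyclicity the key observation is that the projection $(dom,cod)\taking C^{\vect 1}\to C\cross C$ is \emph{faithful}, because a morphism of $C^{\vect1}$ is completely determined by the pair $(m,n)$ of its legs. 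Given a loop $p\taking\gamma\to\gamma$, its image $(dom,cod)(p)$ is an endomorphism in $C\cross C$; since $C\cross C$ is acyclic (by the previous step) this image must be the identity, which equals $(dom,cod)(\id_\gamma)$, and faithfulness then forces $p=\id_\gamma$. Hence $C^{\vect1}$ is acyclic.

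With all three corners shown to be acyclic (resp.\ finite), applying the fiber-product preservation lemma to $A\cross B\To{(f,g)}C\cross C\From{(dom,cod)}C^{\vect1}$ gives that $(f\down g)$ is acyclic (resp.\ finite), as required. The one point I would flag as needing care is the passage between signatures and the categories they present: the fiber-product lemma is phrased for signatures, so I would observe that in the finite case $A\cross B$, $C\cross C$, and $C^{\vect1}$ all carry finite presentations, while in the acyclic case the arguments above are purely semantic — phrased in terms of loops and the faithfulness of $(dom,cod)$ — and so apply without reference to any chosen presentation.
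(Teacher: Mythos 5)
Your proposal is correct, and it splits into two halves relative to the paper. On finiteness you and the paper do essentially the same thing: both reduce via Lemma \ref{lemma:comma as fp} to the fiber-product preservation lemma (the paper does this in one sentence, leaving the finiteness of $C^{\vect{1}}$ and of the products implicit, which you check explicitly). On acyclicity your route is genuinely different. The paper argues directly in the comma category: a loop $L\taking\Loop\to(f\down g)$ projects to loops in $A$ and $B$, which are trivial by acyclicity of $A$ and $B$, so $L$ amounts to a commutative square in $C$ with identity vertical legs, forcing $\gamma=\gamma'$ and hence triviality of $L$ --- and the paper explicitly notes that acyclicity of $A$ and $B$ alone suffices; $C$ may have loops. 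Your argument instead verifies acyclicity of all three corners $A\cross B$, $C\cross C$, and $C^{\vect{1}}$ (the last via faithfulness of $(dom,cod)$ over the acyclic base $C\cross C$) and invokes the fiber-product lemma once more; this genuinely uses acyclicity of $C$, since without it $C\cross C$ and $C^{\vect{1}}$ can contain nontrivial endomorphisms. Under the stated hypotheses both proofs go through, so what each buys is: yours is uniform --- one mechanism handles both properties, with a clean faithfulness argument for the arrow category --- while the paper's is sharper, establishing the stronger statement in which the hypothesis on $C$ is dropped. Your closing caveat about signatures versus the categories they present is well placed (the paper is equally informal there, treating acyclicity semantically via $\Loop$-functors in its own fiber-product proof); note also that in the acyclic case any worry about finite presentability of $C^{\vect{1}}$ evaporates, since acyclicity of $C$ implies finiteness by the earlier lemma.
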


\begin{proof}

Since the fiber product of finite sets is finite, Lemma \ref{lemma:comma as fp} implies that the formation of comma categories preserves finiteness. Suppose that $A$ and $B$ are acyclic (in fact this is enough to imply that $(f\down g)$ is acyclic). Given a functor $L\taking\Loop\to(f\down g)$, this implies that composing with the projections to $A$ and $B$ yields trivial loops. In other words $L$ consists of a diagram in $C$ of the form 
$$
\xymatrix{f(a)\ar[r]^\gamma\ar@{=}[d]&g(b)\ar@{=}[d]\\f(a)\ar[r]_{\gamma'}&g(b)}
$$
But this implies that $\gamma=\gamma'$, so $L$ is a trivial loop too, competing the proof.

\end{proof}

\subsection{Data migration functors}

\begin{construction}[$\Delta$]\label{const:Delta}

Let $F\taking C\to D$ be a functor. We will define a functor $\Delta_F\taking D\inst\to C\inst$; that is, given a $D$-instance $I\taking D\to\Set$ we will naturally construct a $C$-instance $\Delta_F(I)$. Mathematically, this is obtained simply by composing the functors to get 
$$\parbox{1.6in}{$\Delta_F(I):=I\circ F\taking C\to\Set$}
\hspace{1in}
\parbox{1.5in}{\xymatrix{C\ar[r]^F\ar@/_1pc/[rr]_{\Delta_FI}&D\ar[r]^I&\Set}}
$$
To understand this on the level of binary tables, we take $F$ and $I$ as given and proceed to define the $C$-instance $J:=\Delta_FI$ as follows. Given an object $c\in\Ob(C)$, it is sent to an object $d:=F(c)\in\Ob(D)$, and $I(d)$ is an entity table. Assign $J(c)=I(d)$. Given an arrow $g\taking c\to c'$ in $C$, it is sent to a path $p=F(g)$ in $D$. We can compose this to a binary table $[p]$. Assign $J(g)=[p]$. It is easy to check that these assignments preserve the axioms.

\end{construction}

\begin{lemma}[Finiteness is preserved under $\Delta$]

Suppose that $C$ and $D$ are finite categories and $F\taking C\to D$ is a functor. If $I\taking D\to\Set$ is a finite $D$-instance then $\Delta_FI$ is a finite $C$-instance.

\end{lemma}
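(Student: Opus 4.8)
The plan is to observe that $\Delta_F$ is literally composition with $F$, so finiteness of each individual entity table is inherited directly from $I$, and finiteness of the whole instance then follows from finiteness of $C$. There is essentially no content beyond unwinding the definition of $\Delta$ given in Construction \ref{const:Delta}.

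First I would fix the meaning of a finite instance: a $D$-instance $I\taking D\to\Set$ is finite precisely when the entity table $I(d)$ is a finite set for every object $d\in\Ob(D)$. Since $D$ is a finite category it has only finitely many objects and morphisms, so this is the same as saying that the total amount of data recorded by $I$ is finite.

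Next, by Construction \ref{const:Delta} we have $J:=\Delta_F I = I\circ F$, so for an object $c\in\Ob(C)$ this gives $J(c)=I(F(c))$. Because $F(c)$ is an object of $D$ and $I$ is finite, the set $I(F(c))$ is finite; hence every entity table of $J$ is a finite set. For an arrow $g\taking c\to c'$ of $C$, the link table $J(g)$ is the binary table of the path $F(g)$ in $D$, i.e. a function between the finite sets $I(F(c))$ and $I(F(c'))$, which is therefore also finite.

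Finally, since $C$ is a finite category it has only finitely many objects and arrows (up to path equivalence), so $J$ consists of finitely many finite entity and link tables, and is thus a finite $C$-instance. I do not expect any real obstacle: the only step requiring care is pinning down ``finite instance'' as ``each entity table is a finite set,'' after which every assertion is immediate from the fact that $\Delta$ is just composition. It is worth noting in passing that finiteness of $D$ is not used beyond guaranteeing that ``finite $D$-instance'' has its intended meaning; the argument for each table relies only on the finiteness of the sets $I(d)$ and on $C$ having finitely many objects and arrows.
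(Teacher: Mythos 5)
Your proof is correct and follows essentially the same route as the paper's one-line argument: both simply unwind Construction \ref{const:Delta}, noting that each table of $\Delta_FI$ is $I(F(c))$ or a finite composite (join) of edge tables of $I$, hence finite, with finitely many such tables since the signature $C$ has finitely many nodes and edges. Your closing observation that finiteness of $D$ plays no real role is consistent with the paper's proof, which likewise never invokes it.
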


\begin{proof}

This follows by construction: every object and arrow in $C$ is assigned some finite join of the tables associated to objects and arrows in $D$, and finite joins of finite tables are finite.

\end{proof}

Recall the definition of discrete op-fibrations from Definition \ref{def:discrete op-fibration}.

\begin{lemma}[Discrete op-fibrations]\label{lemma:discrete op-fibrations}

Let $F\taking C\to D$ be a signature morphism. The following are conditions on $F$ are equivalent:
\begin{enumerate}
\item The functor $[F]\taking[C]\to[D]$ is a discrete op-fibration.
\item For every choice of object $c_0$ in $C$ and path $q\taking d_0\to d_n$ in $D$ with $F(c_0)=d_0$, there exists a path $p\in C$ such that $F(p)\sim q$, and $p$ is unique up to path equivalence.
\item For every choice of object $c_0$ in $C$ and edge $q\taking d_0\to d_1$ in $D$ with $F(c_0)=d_0$, there exists a path $p\in C$ such that $F(p)\sim q$, and $p$ is unique up to path equivalence.
\end{enumerate}

\end{lemma}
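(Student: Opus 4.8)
The plan is to prove the cycle $(1)\Rightarrow(2)\Rightarrow(3)\Rightarrow(1)$, concentrating essentially all of the work in $(3)\Rightarrow(2)$; the other implications are bookkeeping. I would begin by noting that $(1)\Leftrightarrow(2)$ is just Definition \ref{def:discrete op-fibration} transported across the presentations. A morphism $c_0\to c'$ in $[C]$ is by construction a $\sim$-class of paths in $C$, and likewise in $[D]$, with $[F]$ sending the class of $p$ to the class of $F(p)$. Unwinding the definition of a discrete op-fibration for $[F]$ — ``for every object $c_0$ and every morphism $g\taking F(c_0)\to d'$ of $[D]$ there is a unique lift'' — reads verbatim as ``for every $c_0$ and every path $q$ out of $F(c_0)$ there is a path $p$ out of $c_0$ with $F(p)\sim q$, unique up to $\sim$,'' which is exactly $(2)$. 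So $(1)\Leftrightarrow(2)$ requires only the identification of morphisms in the generated categories with $\sim$-classes of paths.

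For $(2)\Rightarrow(3)$ there is nothing to prove: an edge is in particular a path, so the instance of $(2)$ at an edge $q$ is the instance of $(3)$. The substance is $(3)\Rightarrow(2)$, which I would attack by induction on the length $n$ of a chosen edge-representative $q=q_1.q_2.\cdots.q_n$ of the given path, with the trivial path handled as the base case. Existence is the easy half: lift $q_1$ at $c_0$ using $(3)$ to get $p_1\taking c_0\to c_1$ with $F(p_1)\sim q_1$ (so $F(c_1)$ is the target of $q_1$, keeping the hypotheses of $(3)$ in force), then lift $q_2$ at $c_1$, and so on, concatenating to a path $p=p_1.\cdots.p_n$ with $F(p)\sim q_1.\cdots.q_n\sim q$.

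The main obstacle is the \emph{uniqueness} half of $(3)\Rightarrow(2)$, and it is genuinely delicate. The naive move is, given a competing lift $p'$ of $q$, to peel off the first edge $q_1$, match it against the first edge of $p'$ via the uniqueness clause of $(3)$, and recurse. This matching is exactly what resists, because $p'$ need not begin with an edge whose image is $\sim q_1$: the path-equivalences of $D$ can relate the representative $q_1.\cdots.q_n$ to paths of entirely different shape, even of different length (for instance a nontrivial loop that is equivalent to an identity), so that a lift of $q$ and a lift of the same class $[q]$ through a shorter representative need not peel off compatibly. The crux I would therefore isolate as a coherence sublemma — \emph{under $(3)$, if $p,p'$ share a source and $F(p)\sim F(p')$ then $p\sim p'$} — and prove it by an induction that explicitly tracks how each generating relation of $D$ acts on lifts, invoking the uniqueness clause of $(3)$ at every edge. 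I expect this step, rather than the concatenation, to absorb the entire difficulty; in particular one must verify that the concatenated lift is independent of the chosen edge-representative of $[q]$, which is precisely the point where the interaction between $(3)$ and the relations of $D$ must be controlled with care.
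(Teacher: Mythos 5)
Your decomposition matches the paper's own: the paper also treats $(1)\Leftrightarrow(2)$ as a translation exercise (it invokes a lifting-square characterization of discrete op-fibrations from an external reference, with functors $\vect{1}\to\church{C}$ corresponding to paths and uniqueness of the lift corresponding to uniqueness of the equivalence class --- your unwinding in different packaging), and it likewise regards $(2)\Rightarrow(3)$ as immediate. The divergence is at $(3)\Rightarrow(2)$: the paper disposes of it in one clause, ``the converse is true because concatenations of equivalent paths are equivalent,'' which delivers only the existence half. You are right that the uniqueness half is where all the content lies, and right that the paper's one-liner does not address it.

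But your proposed repair cannot work, because the coherence sublemma you isolate (under $(3)$, if $p,p'$ share a source and $F(p)\sim F(p')$ then $p\sim p'$) is false, and with it the implication $(3)\Rightarrow(2)$ itself. Concretely: let $D$ be the free signature on $d_0\To{a}d_1\To{b}d_2$, and let $C$ have nodes $c_0,c_1,c_2$, edges $\alpha\taking c_0\to c_1$, $\beta\taking c_1\to c_2$, $\gamma\taking c_0\to c_2$, and no path equivalences; define $F$ by $c_i\mapsto d_i$, $\alpha\mapsto a$, $\beta\mapsto b$, and $\gamma\mapsto d_0.a.b$ (legitimate, since signature morphisms send edges to \emph{paths}). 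Condition $(3)$ holds: among the paths out of $c_0$ (trivial, $\alpha$, $\alpha.\beta$, $\gamma$) only $\alpha$ has image equivalent to the edge $a$, and $\beta$ is the unique lift of $b$ at $c_1$. Yet $c_0.\alpha.\beta$ and $c_0.\gamma$ are inequivalent lifts of the length-two path $d_0.a.b$, so $(2)$ and $(1)$ both fail. Nor is the culprit merely edges mapping to longer paths: take $D$ a commutative square (edges $a,b,a',b'$ with the single relation $d_0.a.b\sim d_0.a'.b'$) and $C$ the same square with the relation deleted, $F$ the evident edge-to-edge morphism; each edge of $D$ again has a unique lift at each point of its fiber, so $(3)$ holds, while the class of $d_0.a.b$ has the two inequivalent lifts $\alpha.\beta$ and $\alpha'.\beta'$. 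These are exactly the two failure modes you flagged --- equivalent representatives of different shape, and lifts that do not peel off compatibly --- but they are fatal rather than controllable: no induction tracking the generating relations of $D$ can establish a false statement. So your account is sound through the reduction to the sublemma, and your suspicion about where the difficulty concentrates is vindicated; the gap you would need to fill is in fact a gap in the lemma itself and in the paper's one-line treatment of $(3)\Rightarrow(2)$. Only $(1)\Leftrightarrow(2)$ and $(2)\Rightarrow(3)$ survive without additional hypotheses on $F$, $C$, and $D$.
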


\begin{proof}

It is clear that condition (2) implies condition (3), and the converse is true because concatenations of equivalent paths are equivalent. It suffices to show that (1) and (2) are equivalent.
 
It is shown in \cite{Spivak:1202.2591} that a functor $\pi\taking X\to S$ between categories $X$ and $S$ is a discrete op-fibration if and only if for each solid-arrow square of the form 
$$
\xymatrix@=30pt{\vect{0}\ar[r]^{c_0}\ar[d]_{i_0}&X\ar[d]^\pi\\\vect{1}\ar@{-->}[ur]^{\exists!p}\ar[r]_q&S,}
$$
where $i_0\taking \vect{0}\to\vect{1}$ sends the unique object of $\vect{0}$ to the initial object of $\vect{1}$, there exists a unique functor $\ell$ such that both triangles commute. We now translate this statement to the language of signatures with $\church{C}=X$ and $\church{D}=S$. A functor $\vect{1}\to\church{C}$ corresponds to a path in $C$, and the uniqueness of lift $p$ corresponds to uniqueness of equivalence class. This completes the proof.

\end{proof}

\begin{example}[A discrete op-fibration]\label{ex:discrete opfib}

Let $C, D$ be signatures, and let $F\taking C\to D$ be as suggested by the picture below.
$$
C:=\parbox{2in}{\fbox{\xymatrix@=10pt{
\LMO{b_1}&\hspace{.3in}&\LMO{a_1}\ar[rr]^{h_1}\ar[ll]_{g_1}&\hspace{.3in}&\LMO{c_1}\\
\LMO{b_2}&&\LMO{a_2}\ar[ll]_{g_2}\ar[rr]^{h_2}&&\LMO{c_2}\\
&&\LMO{a_3}\ar[llu]^{g_3}\ar[rrd]^{h_3}&&\LMO{c_3}\\
&&&&\LMO{c_4}
}}}
$$
$$
\hspace{.4in}\xymatrix{~\ar[d]^F\\~}
$$
$$
D:=\parbox{2in}{\fbox{\xymatrix@=10pt{
\LMO{B}&\hspace{.3in}&\LMO{A}\ar[rr]^H\ar[ll]_G&\hspace{.3in}&\LMO{C}
}}}
$$
This is a discrete op-fibration.

\end{example}

\begin{corollary}[Pre-image of an object under a discrete op-fibration is discrete]\label{cor:fibers of discrete op-fibrations}

Let $F\taking C\to D$ be a discrete op-fibration and let $d\in\Ob(D)$ be an object. Then the pre-image $F^\m1(d)$ is a discrete signature (i.e. every path in $F^\m1(d)$ is equivalent to a trivial path).

\end{corollary}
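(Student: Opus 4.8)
The plan is to reduce the statement to the uniqueness clause in the definition of a discrete op-fibration (Definition \ref{def:discrete op-fibration}), applied to the identity arrow at $d$. Recall that the pre-image $F^\m1(d)$ is the fiber product $C\cross_D\vect{0}$ formed along the object $d\taking\vect{0}\to D$, computed as in Definition \ref{def:fp sig}. Unwinding this fiber product, the objects of $F^\m1(d)$ are exactly the objects $c\in\Ob(C)$ with $F(c)=d$, and its arrows are exactly the arrows of $C$ that $F$ sends to the trivial path $\id_d$ at $d$.

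First I would show that every path $p$ in $F^\m1(d)$ maps under $F$ to the trivial path at $d$. Indeed, each edge of $p$ lies in $F^\m1(d)$, hence is sent by $F$ to $\id_d$; concatenating, $F(p)$ is a concatenation of trivial paths at $d$, so $F(p)\sim\id_d$. Let $c_0$ denote the source object of $p$, so that $F(c_0)=d$.

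The key step is then to invoke uniqueness of lifts. By Lemma \ref{lemma:discrete op-fibrations}(2) (equivalently, Definition \ref{def:discrete op-fibration}), for the object $c_0$ and the path $q:=\id_d\taking d\to d$ there is a path in $C$ starting at $c_0$ lifting $q$, and it is unique up to path equivalence. But both $p$ and the trivial path $\id_{c_0}$ start at $c_0$ and lift $q$: we have $F(p)\sim\id_d=q$ and $F(\id_{c_0})=\id_d=q$. By uniqueness, $p\sim\id_{c_0}$, so $p$ is equivalent to a trivial path. As $p$ was arbitrary, $F^\m1(d)$ is discrete.

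The only delicate point I anticipate is the first step, namely correctly reading off from the fiber-product presentation (Definition \ref{def:fp sig}) that the arrows of $F^\m1(d)$ are precisely those arrows of $C$ mapping to $\id_d$, rather than to some nontrivial loop at $d$. Once this is pinned down, the argument is immediate, since $\id_{c_0}$ is visibly a lift of $\id_d$ and the discrete-op-fibration hypothesis forbids any competing lift.
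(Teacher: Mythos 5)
Your proof is correct and is essentially the paper's argument: both reduce the claim to the uniqueness-of-lifts clause of Lemma \ref{lemma:discrete op-fibrations}(2), pitting a path in $F^\m1(d)$ against the trivial path $\id_{c_0}$ as competing lifts of the trivial path at $d$. The only cosmetic difference is that the paper runs the argument edge-by-edge (an edge $e\taking c_0\to c_1$ with $F(e)=d=F(c_0)$ forces $e\sim c_0$, after disposing of the empty case), whereas you apply uniqueness directly to an arbitrary path after observing $F(p)\sim\id_d$.
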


\begin{proof}

For any object $d\in\Ob(D)$ the pre-image $F^\m1(d)$ is either empty or not. If it is empty, then it is discrete. If $F^\m1(d)$ is non-empty, let $e\taking c_0\to c_1$ be an edge in it. Then $F(e)=d=F(c_0)$ so we have an equivalence $e\sim c_0$ by Lemma \ref{lemma:discrete op-fibrations}.

\end{proof}

\begin{example}

Let $F\taking A\to B$ be a functor. For any object $b\in B$, considered as a functor $b\taking\vect{0}\to B$, the induced functor $(b\down F)\to B$ is a discrete op-fibration. Indeed, given an object $b\To{g}F(a)$ in $(b\down F)$ and a morphism $h\taking a\to a'$ in $A$, there is a unique map $b\To{g}F(a)\To{F(h)}F(a')$ over it.

\end{example}

\begin{definition}

A signature $C$ is said to {\em have non-redundant edges} if it satisfies the following condition for every edge $e$ and path $p$ in $C$: If $e\sim p$, then $p$ has length 1 and $e=p$.

\end{definition}

\begin{proposition}

Every acyclic signature is equivalent to a signature with non-redundant edges.

\end{proposition}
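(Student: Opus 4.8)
The plan is to eliminate redundant edges one at a time, using acyclicity to guarantee that each elimination is a well-defined, Tietze-style transformation that leaves the generated schema unchanged. Call an edge $e$ \emph{redundant} if there is a path $p$ with $e\sim p$ and $p\neq e$, where $e$ is viewed as the length-$1$ path consisting of itself. If no edge is redundant, then the signature already has non-redundant edges and we are done; otherwise I will produce a signature with strictly fewer edges and an isomorphic schema, then iterate.

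The key structural fact, and the place where acyclicity is essential, is that a redundant edge $e$ admits a witnessing path $p\sim e$ in which $e$ does not occur. First I would establish this: suppose $e\taking x\to y$ occurs in $p$, so $p=q_1\cdot e\cdot q_2$ with $q_1\taking x\to x$ and $q_2\taking y\to y$. Since $\church{C}$ is acyclic, both $q_1$ and $q_2$ are trivial paths, whence $p=e$ literally, contradicting $p\neq e$. (If $p$ has length $0$, or is a distinct parallel edge, then $e$ plainly does not occur in $p$.) Thus every redundant edge has a witness $p$ omitting $e$.

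Given such an $e$ and witness $p$, I would form a new signature $C'$ by deleting $e$ from the edge set and replacing every occurrence of $e$ in the defining path equivalences of $C$ by $p$; because $e$ does not occur in $p$, this substitution is well-defined and yields only finitely many equations over the surviving edges. This is precisely the elimination of a generator that the relation $e\sim p$ already expresses as a composite of the other generators, so $\church{C'}\cong\church{C}$ by the standard fact that eliminating a defined generator does not change the presented category. In particular $\church{C'}$ is again acyclic, so the hypothesis is preserved and the step may be repeated.

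Finally I would invoke termination: every signature has a finite edge set, and each elimination strictly decreases its cardinality, so after finitely many steps no redundant edge remains. The resulting signature $C'$ has non-redundant edges and satisfies $\church{C'}\cong\church{C}$, i.e. $C'\cong C$, as required. I expect the second paragraph to be the main obstacle: one must verify that acyclicity genuinely forces the witness path to omit $e$ (so that the substitution both makes sense and terminates) and that deleting the now-defined generator $e$ leaves the generated category unchanged up to isomorphism; the finiteness and termination bookkeeping is then routine.
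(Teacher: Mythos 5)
Your proof is correct and follows essentially the same route as the paper's: the crucial observation that acyclicity forces any witnessing path $p\sim e$ with $p\neq e$ to omit $e$ (since writing $p=q_1\cdot e\cdot q_2$ makes $q_1,q_2$ loops, hence trivial), followed by iterative removal of redundant edges with termination by finiteness of the edge set. You in fact supply details the paper leaves implicit, namely the explicit loop argument and the Tietze-style substitution of $p$ for $e$ in the defining equations showing $\church{C'}\cong\church{C}$.
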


\begin{proof}

The important observation is that if $C$ is acyclic then for any edge $e$ and path $p$ in $C$, if $e$ is an edge in $p$ and $e\sim p$ then $e=p$. Thus, we know that if $C$ is acyclic and $e\sim p$ then $e$ is not in $p$. So enumerate the edges of $C$ as $E_C:=\{e_1,\ldots,e_n\}$. If $e_1$ is equivalent to a path in $E_C-\{e_1\}$ then there is an equivalence of signatures $C\To{\iso}C-\{e_1\}=:C_1$; if not, let $C_1:=C$. Proceed by induction to remove each $e_i$ that is equivalent to a path in $C_{i-1}$, and at the end no edge will be redundant.

\end{proof}

\begin{corollary}[Discrete op-fibrations and preservation of path length]

If $C$ and $D$ have non-redundant edges and $F\taking C\to D$ is a discrete op-fibration, then for every choice of object $c_0$ in $C$ and path $q=d_0.e'_1.e'_2.\ldots.e'_n$ of length $n$ in $D$ with $F(c_0)=d_0$, there exists a unique path $p=c_0.e_1.e_2.\ldots.e_n$ of length $n$ in $C$ such that $F(e_i)=e'_i$ for each $1\leq i\leq n$, so in particular $F(p)=q$.

\end{corollary}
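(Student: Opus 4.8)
The plan is to induct on the length $n$ of the path $q$, reducing the whole statement to a single-edge lifting claim: for any object $c$ in $C$ with $F(c)=d_0$ and any \emph{edge} $e'\taking d_0\to d_1$ in $D$, there is a unique edge $e\taking c\to c_1$ in $C$ with $F(e)=e'$ --- equality of paths, not merely path equivalence. Granting this claim the corollary follows immediately: starting from $c_0$, lift $e'_1$ to an edge $e_1\taking c_0\to c_1$, observe that $F(c_1)=d_1$ so that $c_1$ is a legal base object for lifting $e'_2$, and continue. Concatenating the lifted edges yields a length-$n$ path $p=c_0.e_1.\ldots.e_n$ with $F(e_i)=e'_i$ for each $i$, hence $F(p)=q$. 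The base case $n=0$ is the trivial path $p=c_0$.

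For the existence half of the single-edge claim, I would feed the edge $e'$ into condition (3) of Lemma \ref{lemma:discrete op-fibrations}, which produces a path $p'$ in $C$ starting at $c$ with $F(p')\sim e'$, unique up to path equivalence. The role of non-redundancy is to upgrade this to a genuine edge. Since $e'$ is an edge of $D$ and $e'\sim F(p')$, the hypothesis that $D$ has non-redundant edges forces the path $F(p')$ to have length $1$ and to equal $e'$ on the nose. It remains to see that $p'$ itself is a single edge, and for this I would first record a small observation: a discrete op-fibration between non-redundant signatures collapses no edge to a trivial path. Indeed, if $F(e)$ were trivial then $e$ would lie in a fiber $F^\m1(d)$, which is discrete by Corollary \ref{cor:fibers of discrete op-fibrations}, so $e$ would be equivalent to a trivial path, contradicting non-redundancy of $C$. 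Consequently every edge of $p'$ contributes positive length to $F(p')$, and since that total length is $1$ the path $p'$ consists of exactly one edge $e$ with $F(e)=e'$.

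Uniqueness I would prove edge-by-edge alongside the same induction. If $e$ and $\tilde e$ are two edges out of a common object $c$ with $F(e)=F(\tilde e)=e'$, then the uniqueness clause of Lemma \ref{lemma:discrete op-fibrations}(3) gives $e\sim\tilde e$; applying non-redundancy of $C$ to the edge $e$ and the length-$1$ path $\tilde e$ upgrades this to $e=\tilde e$, which in particular forces their targets to agree so the lifting can proceed to the next edge. I expect the only real subtlety to be the bookkeeping that converts the equivalence-class statements of Lemma \ref{lemma:discrete op-fibrations} into the strict syntactic equalities demanded here; the non-redundancy hypotheses are exactly what license each such upgrade, and the no-collapse observation is the single place where both non-redundancy assumptions and Corollary \ref{cor:fibers of discrete op-fibrations} are genuinely needed.
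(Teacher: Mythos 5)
Your proof is correct and follows essentially the same route as the paper's: induction on the length of $q$ reduced to the single-edge case, with Lemma \ref{lemma:discrete op-fibrations} plus non-redundancy in $D$ pinning $F(p)$ down to the edge $e'$ itself, and the fiber-discreteness of Corollary \ref{cor:fibers of discrete op-fibrations} combined with non-redundancy in $C$ ruling out edges that map to trivial paths, so that $p$ must have length one. The only difference is cosmetic: the paper packages your ``no-collapse'' observation as the $n=0$ base case of its induction, whereas you isolate it as a standalone observation and also spell out the edge-by-edge uniqueness argument that the paper leaves implicit.
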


\begin{proof}

Let $c_0, d_0$, and $q$ be as in the hypothesis. We proceed by induction on $n$, the length of $q$. In the base case $n=0$ then Corollary \ref{cor:fibers of discrete op-fibrations} implies that every edge in $F^\m1(d_0)$ is equivalent to the trivial path $c_0$, so the result follows by the non-redundancy of edges in $C$. Suppose the result holds for some $n\in\NN$. To prove the result for $n+1$ it suffices to consider the final edge of $q$, i.e. we assume that $q=e'$ is simply an edge. By Lemma \ref{lemma:discrete op-fibrations} there exists a path $p\in C$ such that $F(p)\sim e'$, so by non-redundancy in $D$ we know that $F(p)$ has length 1. This implies that for precisely one edge $e_i$ in $p$ we have $F(e_i)=e'$, and for all other edges $e_j$ in $p$ we have $F(e_j)$ is a trivial path. But by the base case this implies that $p$ has length 1, completing the proof.

\end{proof}

\begin{construction}[$\Sigma$ for discrete opfibrations]\label{const:Sigma}

Suppose that $F\taking C\to D$ is a discrete op-fibration. We can succinctly define $\Sigma_F\taking C\inst\to D\inst$ to be the left adjoint to $\Delta_F$, however the formula has a simple description which we give now. Suppose we are given $F$ and an instance $I\taking C\to\Set$, considered as a collection of binary tables, one for each object and each edge in $C$. We are tasked with finding a $D$-instance, $J:=\Sigma_FI\taking D\to\Set$.

We first define $J$ on an arbitrary object $d\in\Ob(D)$. By Corollary \ref{cor:fibers of discrete op-fibrations}, the pre-image $F^\m1(d)$ is discrete in $C$; that is, it is equivalent to a finite collection of object tables. We define $J(d)$ to be the disjoint union $$J(d):=\coprod_{c\in F^\m1(d)}I(c).$$ 

Similarly, let $e\taking d\to d'$ be an arbitrary arrow in $D$. By Lemma \ref{lemma:discrete op-fibrations} we know that for each $c\in F^\m1(d)$ there is a unique equivalence class of paths $p_c$ in $C$ such that $F(p_c)\sim e$. Choose one, and compose it to a single binary table --- all other choices will result in the same result. Then define $J(e)$ to be the disjoint union $$J(e):=\coprod_{c\in F^\m1(d)}I(p_c).$$

\end{construction}

\begin{lemma}[Finiteness is preserved under $\Sigma$]

Suppose that $C$ and $D$ are finite categories and $F\taking C\to D$ is a functor. If $I\taking C\to\Set$ is a finite $C$-instance then $\Sigma_FI$ is a finite $D$ instance.

\end{lemma}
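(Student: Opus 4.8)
The plan is to read finiteness off directly from the explicit formula for $\Sigma_F$ recorded in Construction \ref{const:Sigma}, in exactly the same spirit as the corresponding lemma for $\Delta$. (Since $\Sigma_F$ is built by that construction, I take $F$ to be a discrete op-fibration, as required there.) Write $J:=\Sigma_FI$. The construction specifies $J$ by two disjoint-union formulas, one on objects and one on arrows, so it will suffice to check that each of these disjoint unions is a finite union of finite tables.

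First I would treat the object tables. For $d\in\Ob(D)$ we have $J(d)=\coprod_{c\in F^\m1(d)}I(c)$. The index set $F^\m1(d)$ is a subset of $\Ob(C)$, and since every signature has only finitely many objects, this index set is finite; by Corollary \ref{cor:fibers of discrete op-fibrations} it is moreover discrete, but all I need is its finiteness. Because $I$ is a finite instance, each $I(c)$ is a finite set, and a finite disjoint union of finite sets is finite, so $J(d)$ is finite.

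Next I would treat the arrow tables. For an arrow $e\taking d\to d'$ in $D$ we have $J(e)=\coprod_{c\in F^\m1(d)}I(p_c)$, where $p_c$ is the essentially unique path in $C$ over $e$ with source $c$ supplied by Lemma \ref{lemma:discrete op-fibrations}. The table $I(p_c)$ is obtained by composing the finitely many edge tables along $p_c$; equivalently it records a function out of the finite set $I(c)$, hence has $|I(c)|$ rows and is finite. Since $F^\m1(d)$ is again finite, $J(e)$ is a finite disjoint union of finite tables and so is finite. As every object and arrow of $D$ is thereby assigned a finite table, $J=\Sigma_FI$ is a finite $D$-instance.

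I expect no genuine obstacle here: the whole content is that the disjoint unions in Construction \ref{const:Sigma} are indexed by the finite set $F^\m1(d)\ss\Ob(C)$ and that composites of finite tables remain finite. The only point worth stating carefully is the finiteness of the index set, which follows already from the standing assumption that signatures have finitely many objects (so that finiteness of $C$ as a category is not even needed for this step), with finiteness of $I$ supplying finiteness of the individual tables.
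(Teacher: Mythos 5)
Your proof is correct and follows essentially the same route as the paper's: the paper's proof likewise reads finiteness directly off Construction \ref{const:Sigma}, observing that each table $\Sigma_FI(d)$ is a finite disjoint union of composition-joins of finite tables. You merely spell out the details the paper compresses (finiteness of the index set $F^\m1(d)$ from the standing finiteness of $\Ob(C)$, and the implicit restriction to discrete op-fibrations under which the construction's formula applies), which is a faithful elaboration rather than a different argument.
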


\begin{proof}

For each object or arrow $d$ in $D$, the table $\Sigma_FI(d)$ is a finite disjoint union of composition-joins of tables in $C$. The finite join of finite tables is finite, and the finite union of finite tables is finite.

\end{proof}

\begin{remark}[$\Sigma$ exists more generally and performs quotients and skolemization]\label{rmk:generalized sigma}

For any functor $F\taking C\to D$ the functor $\Delta_F\taking D\inst\to C\inst$ has a left adjoint, which we can denote by $\Sigma_F\taking C\inst\to D\inst$ because it agrees with the $\Sigma_F$ constructed in \ref{const:Sigma} in the case that $F$ is a discrete op-fibration. Certain queries are possible if we can use $\Sigma_F$ in this more general case---namely, quotienting by equivalence relations and the introduction of labeled nulls (a.k.a. Skolem variables). We do not consider it much in this paper for a few reasons. First, quotients and skolem variables are not part of the relational algebra. Second, the set of queries that include such quotients and skolem variables are not obviously closed under composition (see Section \ref{sec:future}).

\end{remark}

\begin{construction}[Limit as a kind of ``join all"]\label{const:join all}

Let $B$ be a signature and let $H$ be a $B$-instance. The functor $[H]\taking [B]\to\Set$ has a limit $\lim_BH\in\Ob(\Set)$, which can be computed as follows. Forgetting the path equivalence relations, axioms (\ref{dia:entity FBR}) and (\ref{dia:link FBR}) imply that $H$ consists of a set $\{N_1,\ldots,N_m\}$ of node tables, a set $\{e_1,\ldots,e_n\}$ of edge tables, and functions $s,t\taking\{1,\ldots,n\}\to\{1,\ldots,m\}$ such that for each $1\leq i\leq n$ the table $e_i$ constitutes a function $e_i\taking N_{s(i)}\to N_{t(i)}$. For each $i$ define $X_i$ as follows: 
\begin{align*}
X_0&:=\pi_{2,4,\ldots,2m}(N_1\cross\cdots\cross N_m)\\
X_1&:=\pi_{1,2,\ldots m}\sigma_{s(1)=m+1}\sigma_{t(1)=m+2}(X_0\cross e_1)\\
&\;\vdots\\
X_i&:=\pi_{1,2,\ldots m}\sigma_{s(i)=m+1}\sigma_{t(i)=m+2}(X_{i-1}\cross e_i)\\
&\;\vdots\\
X_n&:=\pi_{1,2,\ldots m}\sigma_{s(n)=m+1}\sigma_{t(n)=m+2}(X_{n-1}\cross e_n)
\end{align*}
Then $\ol{X}:=X_n$ is a table with $m$ columns, and its set $|\ol{X}|$ of rows (which can be constructed if one wishes by concatenating the fields in each row) is the limit, $|\ol{X}|\iso\lim_BH$. 

\end{construction}

\begin{construction}[$\Pi$]\label{const:Pi}

Let $F\taking C\to D$ be a signature morphism. We can succinctly define $\Pi_F\taking C\inst\to D\inst$ to be the right adjoint to $\Delta_F$, however the formula has an algorithmic description which we give now. Suppose we are given $F$ and an instance $I\taking C\to\Set$, considered as a collection of binary tables, one for each object and each edge in $C$. We are tasked with finding a $D$-instance, $J:=\Pi_FI\taking D\to\Set$.

We first define $J$ on an arbitrary object $d\taking\vect{0}\to D$ (see Example \ref{ex:chains}). Consider the comma category $B_d:=(d\down F)$ and its projection $q_d\taking (d\down F)\to C$. Note that in case $C$ and $D$ are finite, so is $B_d$. Let $H_d:=I\circ q_d\taking B_d\to\Set$. We define $J(d):=\lim_{B_d}H_d$ as in Construction \ref{const:join all}. 

Now let $e\taking d\to d'$ be an arbitrary arrow in $D$. For ease of notation, rewrite $$B:=B_d,\;\; B':=B_{d'},\;\; q:=q_d,\;\; q':=q_{d'},\;\; H:=H_d, \text{ and } H':=H_{d'}.$$ We have the following diagram of categories
$$
\xymatrix@=10pt{
(d'\down F)\ar[rr]^{(e\down F)}\ar[ddr]_{q'}\ar@/_1pc/[ddddr]_{H'}&&(d\down F)\ar[ddl]^{q}\ar@/^1pc/[ddddl]^{H}\\\\
&C\ar[dd]_-I\\\\
&\Set}
$$
A unique natural map $J(d)=\lim_BH\to\lim_BH'=J(d')$ is determined by the universal property for limits, but we give an idea of its construction. There is a map from the set of nodes in $B'$ to the set of nodes in $B$ and for each node $N$ in $B'$ the corresponding node tables in $H$ and $H'$ agree. Let $X_i$ and $X_i'$ be defined respectively for $(B,H)$ and $(B',H')$ as in Construction \ref{const:join all}. It follows that $X'_0$ is a projection (or column duplication) on $X_0$. The set of edges in $B'$ also map to the set of edges in $B$ and for each edge $e$ in $B'$ the corresponding edge tables in $H$ and $H'$ agree. Thus the select statements done to obtain $J(d')=\ol{X'}$ contains the set of select statements performed to obtain $J(d')=\ol{X}$. Therefore, the map from $J(d)$ to $J(d')$ is given as the inclusion of a subset followed by a projection.

\end{construction}

If $C$ or $D$ is not finite, then the right pushforward $\Pi_F$ of a finite instance $I\in C\set$ may have infinite, even uncountable results.

\begin{example}\label{ex:infinite results}

Consider the unique signature morphism $$\mcC=\fbox{$\LMO{s}$}\To{\;\;F\;\;}\LoopSchema=:\mcD.$$ Here $\church{\mcD}$ has arrows $\{f^n\|n\in\NN\}$ so it is infinite. In this case $(d\down F)$ is the discrete category with a countably infinite set of objects $\Ob(d\down F)\iso\NN$. 

Given the two-element instance $I\taking\mcC\to\Set$ with $I(s)=\{\sf{Alice},\tn{Bob}\}$, the rowset in the right pushforward $\Pi_F(I)$ is the (uncountable) set of infinite streams in $\{\tn{Alice},\tn{Bob}\}$, i.e. $$\Pi_F(I)(s)=I(s)^\NN.$$

\end{example}

\begin{proposition}\label{prop:limits are pushforwards}

Let $H\taking B\to\Set$ be a functor, and let $q\taking B\to\vect{0}$ be the terminal functor. Noting that there is an isomorphism of categories $\vect{0}\set\iso\Set$, we have a bijection $\lim_BH\iso\Pi_qH.$

\end{proposition}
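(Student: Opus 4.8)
The plan is to recognize $\Delta_q$ as the constant-diagram functor and then invoke uniqueness of right adjoints; this reduces the statement to the textbook fact that a limit is the right adjoint to the diagonal functor.

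First I would make $\Delta_q$ completely explicit. Under the isomorphism $\vect{0}\set\iso\Set$, a $\vect{0}$-instance is nothing but a set $S$ (its value on the unique object of $\vect{0}$). For such an $S$, the pullback $\Delta_q(S)=S\circ q\taking B\to\Set$ is the \emph{constant} functor at $S$: since $q$ collapses every object of $B$ to the single object of $\vect{0}$ and every morphism to its identity, $\Delta_q(S)$ sends each object of $B$ to $S$ and each morphism to $\id_S$. Thus $\Delta_q\taking\Set\to B\set$ is precisely the constant (diagonal) functor.

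Second, I would recall the defining universal property of the limit: the limit functor $\lim_B\taking B\set\to\Set$ is, by definition, the right adjoint to the constant functor, i.e.\ for every set $S$ and every $H\taking B\to\Set$ there is a natural bijection
$$\Hom_{\Set}(S,\lim_BH)\iso\Hom_{B\set}(\Delta_q S,H).$$
By definition $\Pi_q$ is \emph{also} the right adjoint to $\Delta_q$. Since right adjoints are essentially unique, $\Pi_q\iso\lim_B$ as functors $B\set\to\Set$; evaluating at $H$ yields the claimed bijection $\Pi_qH\iso\lim_BH$.

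There is really no hard step here: the only point requiring any care is the identification of $\Delta_q$ with the constant functor through the isomorphism $\vect{0}\set\iso\Set$, after which the result is just ``limit $=$ right adjoint to the diagonal'' together with uniqueness of adjoints. As a concrete cross-check one can instead run Construction \ref{const:Pi} directly: for the unique object $d$ of $\vect{0}$ the comma category $(d\down q)$ is isomorphic to $B$ (every structure map $d\to q(b)$ lives in $\vect{0}$ and is forced to be the identity), so $H_d\iso H$ and $J(d)=\lim_{(d\down q)}H_d\iso\lim_BH$, recovering the same bijection.
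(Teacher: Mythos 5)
Your proposal is correct, and it is worth noting that your \emph{primary} argument takes a different route from the paper's, whose entire proof is the one-liner ``Obvious by construction of $\Pi$'' --- that is, precisely your closing cross-check: for the unique object $d$ of $\vect{0}$ the comma category $(d\down q)$ is isomorphic to $B$ (every structure morphism lives in $\vect{0}$ and is forced to be an identity), so Construction \ref{const:Pi} returns $J(d)=\lim_{(d\down q)}(H\circ q_d)\iso\lim_BH$. Your main argument instead identifies $\Delta_q$ with the constant (diagonal) functor under $\vect{0}\set\iso\Set$ and invokes essential uniqueness of right adjoints, reducing the claim to the textbook fact that the limit functor is right adjoint to the diagonal. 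This is arguably the cleaner derivation given the paper's own definitions, since $\Pi_q$ is \emph{defined} as the right adjoint to $\Delta_q$ (Construction \ref{const:Pi} merely supplies an algorithmic formula for it), and your route needs none of that formula's details. What the paper's route buys, on the other hand, is contact with the concrete computation: the proposition's role in context is to certify that the SQL-style ``join all'' of Construction \ref{const:join all} really computes $\Pi$ along the terminal functor, which the comma-category unwinding exhibits directly. Since you supply both arguments, nothing is missing; the only point deserving the care you already gave it is the identification of $\Delta_q(S)$ with the constant functor at $S$.
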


\begin{proof}

Obvious by construction of $\Pi$.

\end{proof}

\begin{proposition}[Behavior of $\Delta,\Sigma,\Pi$ under natural transformations]\label{prop:dmf under nt}

Let $C$ and $D$ be categories, let $F,G\taking C\to D$ be functors, and let $\alpha\taking F\to G$ be a natural transformation as depicted in the following diagram:
$$\xymatrix{C\ar@/^1pc/[rr]^F\ar@/_1pc/[rr]_G\ar@{}[rr]|{\alpha\Downarrow}&&D}$$
Then $\alpha$ induces natural transformations $$\Delta_\alpha\taking\Delta_F\to\Delta_G,\hsp\Sigma_\alpha\taking\Sigma_G\to\Sigma_F,\hsp\text{and}\hsp\Pi_\alpha\taking\Pi_G\to\Pi_F.$$

\end{proposition}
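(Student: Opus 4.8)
The plan is to construct $\Delta_\alpha$ directly by whiskering, and then to obtain $\Sigma_\alpha$ and $\Pi_\alpha$ as its \emph{mates} under the defining adjunctions $\Sigma_F\dashv\Delta_F\dashv\Pi_F$ (and the analogous ones for $G$). These adjunctions are available in full generality: the left adjoint $\Sigma_F$ exists for any $F$ by Remark \ref{rmk:generalized sigma}, and the right adjoint $\Pi_F$ by Construction \ref{const:Pi} (computed as a right Kan extension into the complete category $\Set$).

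First I would define $\Delta_\alpha$. For a $D$-instance $I\taking D\to\Set$, the components $I(\alpha_c)\taking I(F(c))\to I(G(c))$ assemble into a natural transformation $I\circ F\Rightarrow I\circ G$, i.e. a morphism $\Delta_F(I)\to\Delta_G(I)$ in $C\inst$; naturality in $c\in\Ob(C)$ is just $I$ applied to the naturality squares of $\alpha$. Setting $(\Delta_\alpha)_I$ equal to this whiskered transformation $I\cdot\alpha$, the remaining requirement — naturality of $\Delta_\alpha$ in the variable $I$, for each $\phi\taking I\to I'$ in $D\inst$ — is exactly the middle-four interchange law for horizontal and vertical composition of natural transformations. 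This step is routine.

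Next, for $\Sigma_\alpha$ and $\Pi_\alpha$ I would invoke the mate correspondence rather than compute components by hand. Write $\eta^F,\epsilon^F$ for the unit and counit of $\Sigma_F\dashv\Delta_F$, and $\nu^F,\mu^F$ for those of $\Delta_F\dashv\Pi_F$ (and similarly for $G$). Regarding $\Delta_\alpha$ as a transformation between the \emph{right} adjoints $\Delta_F,\Delta_G$ of $\Sigma_F,\Sigma_G$, its mate is the direction-reversing composite $\Sigma_\alpha\taking\Sigma_G\to\Sigma_F$ given by
\begin{align*}
\Sigma_G\To{\Sigma_G\,\eta^F}\Sigma_G\Delta_F\Sigma_F
&\To{\Sigma_G\,\Delta_\alpha\,\Sigma_F}\Sigma_G\Delta_G\Sigma_F\\
&\To{\;\epsilon^G\Sigma_F\;}\Sigma_F.
\end{align*}
Dually, regarding $\Delta_\alpha$ as a transformation between the \emph{left} adjoints $\Delta_F,\Delta_G$ of $\Pi_F,\Pi_G$, its mate is the composite $\Pi_\alpha\taking\Pi_G\to\Pi_F$ given by
\begin{align*}
\Pi_G\To{\nu^F\Pi_G}\Pi_F\Delta_F\Pi_G
&\To{\Pi_F\,\Delta_\alpha\,\Pi_G}\Pi_F\Delta_G\Pi_G\\
&\To{\;\Pi_F\mu^G\;}\Pi_F.
\end{align*}

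The only genuine content is the naturality of these two mates, which is the standard mate calculus: it follows formally from the triangle identities of the two adjunctions together with interchange, and in fact shows that the passage $\alpha\mapsto\Delta_\alpha\mapsto\Sigma_\alpha$ (respectively $\Pi_\alpha$) is functorial. I expect the main obstacle to be purely bookkeeping — keeping the two families of adjunctions straight so that the direction reversals $\Sigma_G\to\Sigma_F$ and $\Pi_G\to\Pi_F$ come out correctly. These reversals are not accidental but are automatic consequences of whether $\Delta_F$ is being viewed as a right adjoint (for the $\Sigma$ statement) or as a left adjoint (for the $\Pi$ statement); no substantive difficulty beyond this is anticipated.
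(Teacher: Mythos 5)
Your proposal is correct and takes essentially the same route as the paper: both construct $\Delta_\alpha$ by whiskering $\alpha$ with the instance, and then transport $\Delta_\alpha$ across the adjunctions $\Sigma_F\dashv\Delta_F\dashv\Pi_F$ (and likewise for $G$) to obtain the direction-reversed $\Sigma_\alpha\taking\Sigma_G\to\Sigma_F$ and $\Pi_\alpha\taking\Pi_G\to\Pi_F$. The only cosmetic difference is that the paper realizes these mates by applying the Yoneda lemma to the induced hom-set maps $\Hom(J,\Pi_GI)\to\Hom(J,\Pi_FI)$ and $\Hom(\Sigma_FI,J)\to\Hom(\Sigma_GI,J)$, whereas you write out the equivalent explicit unit/counit composites, which are the standard mate formulas and come out with the correct variance.
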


\begin{proof}

For any instance $J\taking D\to\Set$ and any object $c\in\Ob(C)$ we have $\alpha_c\taking J\circ F(c)\to J\circ G(c)$, and the naturality of $\alpha$ implies that we can gather these into a natural transformation $\Delta_\alpha(J)\taking\Delta_F(J)\to\Delta_G(J)$. One checks easily that this assignment is natural in $J$, so we have $\Delta_\alpha\taking\Delta_F\to\Delta_G$ as desired.

Now suppose that $I\taking C\to\Set$ is an instance on $C$. Then for any $J\in D\set$ we have natural maps
$$\Hom(J,\Pi_GI)\iso\Hom(\Delta_GJ,I)\To{\Delta_\alpha}\Hom(\Delta_FJ,I)\iso\Hom(J,\Pi_FI)$$
so by the Yoneda lemma, we have a natural map $\Pi_G\to\Pi_F$, as desired. We also have natural maps
$$\Hom(\Sigma_FI,J)\iso\Hom(I,\Delta_FJ)\To{\Delta_\alpha}\Hom(I,\Delta_GJ)\iso\Hom(\Sigma_GI,J)$$
so by the Yoneda lemma, we have a natural map $\Sigma_G\to\Sigma_F$, as desired.

\end{proof}

\subsection{Relation to SQL queries}

Suppose we are working in a domain $DOM\in\Ob(\Set)$.

\begin{definition}\label{def:SQL bag set}
A {\em set-theoretic SQL query} $q$ is an expression of the form
\begin{tabbing}
\hsp\=SELECT DISTINCT\;\;\=$P$\\
\>FROM \>$(c_{1,1},\ldots,c_{1,k_1}),\ldots,(c_{n,1},\ldots,c_{n,k_n})$\\
\>WHERE\>$W$
\end{tabbing}
where $n,k_1,\ldots,k_n\in\NN$ are natural numbers, $C$ is a set of the form $C=\coprod_i\{c_{i,1},\ldots,c_{i,k_i}\}$, $W$ is a set of pairs $W\ss C\times C$, and $P\taking P_0\to C$ is a function, for some set $P_0$. 

A {\em bag-theoretic SQL query} $q'$ is an expression of the form
\begin{tabbing}
\hsp\=SELECT \;\;\=$P$\\
\>FROM \>$(c_{1,1},\ldots,c_{1,k_1}),\ldots,(c_{n,1},\ldots,c_{n,k_n})$\\
\>WHERE\>$W$
\end{tabbing}
where $n,k_1,\ldots,k_n, C, W, P, P_0$ are as above. We call $(n,k_1,\ldots,k_n, C, W, P, P_0)$ an {\em agnostic SQL query}.

Suppose that for each $1\leq i\leq n$ we are given a relation $R_i\ss \prod_{1\leq j\leq k_i}DOM$. 

The {\em evaluation of $q$} (respectively, the {\em evaluation of $q'$}) on $R_1,\ldots,R_n$ is a relation $Q(R_1,\ldots,R_n)\ss \prod_{i\in P_0}R_i$ (respectively, a function $Q'(R_1,\ldots,R_n)\taking B'\to \prod_{i\in P_0}R_i$ for some set $B'$), defined as follows. Let $$B'=\{r\in R_1\times\cdots\times R_n\|r.c_1=r.c_2\tn{ for all }(c_1,c_2)\in W\}.$$ We compose an inclusion with a projection to get a function 
$$B'\ss R_1\times\cdots\times R_n\To{P} \prod_{i\in P_0}R_i.$$ 
and define $Q'(R_1,\ldots,R_n)$ to be this function. Its image is the desired relation $Q(R_1,\ldots,R_n)\ss\prod_{i\in P_0}R_i$.

\end{definition}

\begin{definition}

Let $q=(n,k_1,\ldots,k_n,C,W,P,P_0)$ be an agnostic SQL query as in Definition \ref{def:SQL bag set}. The set $W$ generates an equivalence relation $\sim$ on $C$; let $C':=C/\sim$ be the quotient and $\ell\taking C\to C'$ the induced function. Suppose that $C'$ has $r=|C'|$ many elements and that $P_0$ has $s=|P_0|$ many elements.

We define the {\em categorical setup for $q$}, denoted $CS(q)=\mcS\To{F}\mcT\From{G}\mcU$ as follows.
$$
\parbox{2.2in}{\begin{center}$\mcS:=$\end{center}\fbox{\xymatrix{
\LMO{1}\ar@/_1pc/[ddrr]_{c_{1,1}}\ar@/^1pc/[ddrr]^{c_{1,k_1}}\ar@{}[ddrr]|\cdots&&\cdots&&\LMO{n}\ar@/_1pc/[ddll]_{c_{n,1}}\ar@/^1pc/[dlld]^{c_{n,k_n}}\\\\
&&\LMO{dom}}
}}
\parbox{.45in}{\xymatrix{~\ar[r]^F&~}}
\parbox{.9in}{\begin{center}$\mcT:=$\end{center}\fbox{\xymatrix{
\LMO{B'}\ar@/_1.5pc/[dd]_{c'_1}\ar@/_.5pc/[dd]_{c'_2}\ar@{}[dd]^\cdots\ar@/^1.5pc/[dd]^{c'_r}\\\\
\LMO{dom}
}}}
\parbox{.45in}{\xymatrix{~&~\ar[l]_G}}
\parbox{1in}{\begin{center}$\mcU:=$\end{center}\fbox{\xymatrix{
\LMO{B'}\ar@/_1.5pc/[dd]_{p_1}\ar@/_.5pc/[dd]_{p_2}\ar@{}[dd]^\cdots\ar@/^1.5pc/[dd]^{p_s}\\\\
\LMO{dom}
}}}
$$
Here, the functor $F$ sends $dom$ to $dom$, sends the other objects to $B$, and acts on morphisms by $\ell\taking C\to C'$; the functor $G$ sends $dom$ to $dom$, $B'$ to $B'$, and $P_0\to C'$ by the composite $P_0\To{P}C\To{\ell}C'$. 

We define the {\em categorical analogue of $q$}, denoted $CA(q)$, to be the FQL query $\Delta_G\Pi_F$.\footnote{The functor $\Delta_G\Pi_F$ is technically not in the correct $\Sigma\Pi\Delta$ order, but it is equivalent to a query in that order by Theorem \ref{thm:query comp}.}

\end{definition}

\begin{proposition}\label{prop:conjunctive RA bag}

Let $q'=(n,k_1,\ldots,k_n,C,W,P,P_0)$ be an agnostic SQL query, let $Q(R_1,\ldots,R_n)$ be its set-theoretic evaluation, and let $Q'(R_1,\ldots,R_n)$ be its bag-theoretic evaluation as in Definition \ref{def:SQL bag set}. Let $\mcS\To{F}\mcT\From{G}\mcU$ be the categorical setup for $q$. Suppose that for each $1\leq i\leq n$ we are given a relation $R_i\ss \prod_{1\leq j\leq k_i}DOM$. Let $I\taking\mcS\to\Set$ be the corresponding functor (i.e., for each $1\leq i\leq n$ let $I(i)=R_i$, let $I(dom)=DOM$, and let $c_{i,j}\taking R_i\to DOM$ be the projection). 

Then the categorical analogue $CA(q)(I)=\Delta_G\Pi_F(I)$ is equal to the bag-theoretic evaluation $Q'(R_1,\ldots,R_n)$, and the relationalization $REL(CA(q)(I))$ (see Definiton \ref{def:dedup}) is equal to the set-theoretic evaluation $Q(R_1,\ldots,R_n)$.

\end{proposition}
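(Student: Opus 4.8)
The plan is to compute $\Pi_F(I)$ explicitly via Construction \ref{const:Pi} and then compose with $G$ to obtain $\Delta_G\Pi_F(I)$, matching the result term-by-term against the function $Q'(R_1,\ldots,R_n)$ of Definition \ref{def:SQL bag set}. The whole argument is a bookkeeping exercise in which the nontrivial content is a single limit computation; once that limit is identified, both halves of the statement drop out. Note that $\mcS,\mcT,\mcU$ are acyclic and hence finite, so $\Pi_F$ is well-defined and computable by Construction \ref{const:Pi}.

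First I would unwind $\Pi_F(I)$ on the two objects of $\mcT$. The category $\church{\mcS}$ has only the identities and the edges $c_{i,j}\taking i\to dom$, and $\church{\mcT}$ has only the identities and the $r$ edges $c'_1,\ldots,c'_r\taking B'\to dom$, with $dom$ a sink in each. For the object $dom$ of $\mcT$, the comma category $(dom\down F)$ has the single object $(dom,\id_{dom})$, so $\Pi_F(I)(dom)=\lim_{(dom\down F)}H_{dom}=I(dom)=DOM$. For the object $B'$ of $\mcT$, I would describe $(B'\down F)$ as in Definition \ref{def:comma}: its objects are the $n$ ``source'' pairs $(i,\id_{B'})$ for $1\leq i\leq n$ (since $F(i)=B'$ forces $\gamma=\id_{B'}$) together with the $r$ ``target'' pairs $(dom,c'_j)$ for $1\leq j\leq r$; its non-identity arrows run from sources to targets, with one arrow $(i,\id_{B'})\to(dom,c'_j)$ for each column $c_{i,l}$ whose class is $\ell(c_{i,l})=c'_j$. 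Under $H_{B'}=I\circ q_{B'}$ this diagram assigns $R_i$ to each source, $DOM$ to each target, and the $l$-th coordinate projection $R_i\to DOM$ to the arrow coming from $c_{i,l}$.

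The key step is then to evaluate $\lim_{(B'\down F)}H_{B'}$ (equivalently, to run Construction \ref{const:join all}). An element is a compatible family $((x_i)_i,(y_j)_j)$ with $x_i\in R_i$ and $y_j\in DOM$ subject to $(x_i)_l=y_j$ along each arrow; since every class $c'_j$ of $C'=C/\sim$ is nonempty the $y_j$ are uniquely determined by the $x_i$, and the surviving constraints are exactly $(x_i)_l=(x_{i'})_{l'}$ whenever $c_{i,l}\sim c_{i',l'}$. Because $\sim$ is the equivalence relation generated by $W$, these constraints coincide with the conditions ``$r.c_1=r.c_2$ for all $(c_1,c_2)\in W$'' defining the set $B'$ of Definition \ref{def:SQL bag set}; hence $\Pi_F(I)(B')\iso B'$, and the same reading shows each edge $\Pi_F(I)(c'_j)\taking B'\to DOM$ is the projection onto the class $c'_j$. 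I expect this identification of the limit with the $W$-constrained product to be the main obstacle: one must track the comma-category arrows carefully and invoke that $\sim$ is $W$-generated in order to pass from the pairwise $W$-equalities to full equality within each class.

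Finally I would apply $\Delta_G$ by precomposition (Construction \ref{const:Delta}). Since $G$ fixes the objects $B'$ and $dom$ and sends each edge $p_i$ of $\mcU$ to the $\mcT$-edge $\ell(P(i))$, the instance $\Delta_G\Pi_F(I)$ has rowset $B'$ and sends $p_i$ to the projection of $B'$ onto the column $P(i)$. This is precisely the encoding of the function obtained by including $B'$ into $R_1\cross\cdots\cross R_n$ and projecting along $P$, i.e. of the bag-theoretic evaluation $Q'(R_1,\ldots,R_n)$ of Definition \ref{def:SQL bag set}, which proves the first claim. For the second claim, $Q(R_1,\ldots,R_n)$ is by definition the image of $Q'$, i.e. the relation obtained from $Q'$ by discarding duplicate rows; since $REL$ (Definition \ref{def:relationalize}) quotients a pointed instance by identifying indistinguishable IDs, $REL(CA(q)(I))$ is exactly this image, giving $REL(CA(q)(I))=Q(R_1,\ldots,R_n)$.
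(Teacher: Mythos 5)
Your proposal is correct and takes the same route as the paper: the paper's proof is simply the two-line remark that the first claim ``follows from Construction \ref{const:Pi}'' and the second holds ``since both sides are simply images,'' and your argument is a faithful, fully worked-out expansion of exactly that computation (the comma categories $(dom\down F)$ and $(B'\down F)$, the identification of $\lim_{(B'\down F)}H_{B'}$ with the $W$-constrained subset of $R_1\cross\cdots\cross R_n$, precomposition with $G$, and the observation that $Q$ and $REL(CA(q)(I))$ are images of the same function). Nothing in your write-up diverges from or adds assumptions beyond the paper's intended argument.
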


\begin{proof}

The first claim follows from Construction \ref{const:Pi}. The second claim follows directly, since both sides are simply images.

\end{proof}
\begin{proposition}\label{ghjkl}
Let $R$ and $R'$ be union-compatible tables.  We can express the disjoint union $R + R'$ as $\Sigma_F$, where $F$ sends $R$ and $R'$ to $S$, $c_i$ and $c_i'$ to $d$, and $D$ to $D$ in the following diagram :
$$
\fbox{\parbox{1in}{\xymatrix{\LMO{R}\ar@/_1pc/[dr]_{c_1}\ar@{}[dr]|{\cdots}\ar@/^1pc/[dr]^(.3){c_n}&&\LMO{R'}\ar@/_1pc/[dl]_(.3){c'_{n}}\ar@{}[dl]|{\cdots}\ar@/^1pc/[dl]^{c'_1}\\&\LMO{D}}}}
\Too{F}
\fbox{\parbox{1in}{\xymatrix{\LMO{S}\ar@/_1pc/[d]_{d_1}\ar@{}[d]|{\cdots}\ar@/^1pc/[d]^{d_n}\\\LMO{D}}}}
$$
\end{proposition}
\begin{proof}
This can be computed directly from Construction~\ref{const:Sigma}, noting that the inverse image of $S$ (resp. $d_i$, and $D$) is $\{R,R'\}$ (resp. $\{c_i,c_i'\}$, and $\{D\}$).
\end{proof}

\subsection{Query composition}

In this section and those that follow we were greatly inspired by, and follow closely, the work of \cite{GK}. While their work does not apply directly, the adaptation to our context is fairly straightforward.

\begin{lemma}[Unit and counit for $\Sigma$ are Cartesian]\label{lemma:cartesian}

Let $F\taking C\to D$ be a discrete op-fibration, and let 
$$\eta\taking\id_{C\inst}\to\Delta_F\Sigma_F\hsp\text{and}\hsp \epsilon\taking\Sigma_F\Delta_F\to\id_{D\inst}$$
be (respectively) the unit and counit of the $(\Sigma_F,\Delta_F)$ adjunction. Each is a Cartesian natural transformation. In other words, for any morphism $a\taking I\to I'$ of $C$-instances and for any morphism $b\taking J\to J'$ of $D$-instances, each of the induced naturality squares (left below for the unit and right below for the counit) 
$$
\parbox{1in}{
\xymatrix@=30pt{I\ar[r]^{a}\ar[d]_{\eta_I}\ullimit&I'\ar[d]^{\eta_{I'}}\\
\Delta_F\Sigma_F(I)\ar[r]_{\Delta_F\Sigma_F(a)}&\Delta_F\Sigma_F(I')}
}
\hspace{1in}
\parbox{1in}{
\xymatrix@=30pt{
\Sigma_F\Delta_F(J)\ar[r]^{\Sigma_F\Delta_F(b)}\ar[d]_{\epsilon_J}\ullimit&\Sigma_F\Delta_F(J')\ar[d]^{\epsilon_{J'}}\\
J\ar[r]_b&J'}
}
$$
is a pullback in $C\inst$ and $D\inst$ respectively.

\end{lemma}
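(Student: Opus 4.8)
The plan is to reduce the claim to a pointwise (objectwise) verification and then read off the unit and counit from the explicit formula for $\Sigma_F$ in Construction \ref{const:Sigma}. Since $C\inst$ and $D\inst$ are categories of $\Set$-valued functors, all limits — in particular pullbacks — are computed objectwise. Thus a commutative square in $C\inst$ (resp. $D\inst$) is a pullback exactly when, for every object $c\in\Ob(\church C)$ (resp. $d\in\Ob(\church D)$), the square of sets obtained by evaluating at $c$ (resp. $d$) is a pullback in $\Set$. So it suffices to check the two naturality squares objectwise.

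First I would pin down the unit and counit concretely. Because $F$ is a discrete op-fibration, Corollary \ref{cor:fibers of discrete op-fibrations} makes each fiber $F^\m1(d)$ discrete, and Construction \ref{const:Sigma} gives $(\Sigma_F I)(d)=\coprod_{c\in F^\m1(d)}I(c)$. The adjunction bijection $\Hom_{D\inst}(\Sigma_F I,J)\iso\Hom_{C\inst}(I,\Delta_F J)$ is, at each $d$, the universal property of this coproduct: a map $\coprod_{c\in F^\m1(d)}I(c)\to J(d)$ is the same as a family $I(c)\to J(d)=(\Delta_F J)(c)$ indexed by $c\in F^\m1(d)$. Transporting $\id_{\Sigma_F I}$ through this bijection shows that the unit component $(\eta_I)_c\taking I(c)\to(\Delta_F\Sigma_F I)(c)=\coprod_{c'\in F^\m1(F(c))}I(c')$ is the coprojection of the $c$-summand. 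Dually, since $(\Delta_F J)(c)=J(F(c))$, we have $(\Sigma_F\Delta_F J)(d)=\coprod_{c\in F^\m1(d)}J(d)\iso F^\m1(d)\times J(d)$, and transporting $\id_{\Delta_F J}$ shows the counit $(\epsilon_J)_d$ is the fold (codiagonal) map, i.e. the projection $F^\m1(d)\times J(d)\to J(d)$.

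With these descriptions the two pullbacks are short diagram chases in $\Set$. For the unit, evaluating at $c$ and writing $K=F^\m1(F(c))$, the square attached to $a\taking I\to I'$ is
$$
\xymatrix@=26pt{
I(c)\ar[r]^{a_c}\ar[d]_{\iota_c}&I'(c)\ar[d]^{\iota_c}\\
\coprod_{c'\in K}I(c')\ar[r]_{\coprod_{c'}a_{c'}}&\coprod_{c'\in K}I'(c')
}
$$
and this is a pullback precisely because coproducts in $\Set$ are disjoint and stable under pullback: the preimage under $\coprod_{c'}a_{c'}$ of the $c$-summand $\iota_c(I'(c))$ is exactly the $c$-summand $\iota_c(I(c))$, mapping by $a_c$, so the pullback is $I(c)$ and the comparison map is an isomorphism. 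For the counit, evaluating at $d$ and using $(\Sigma_F\Delta_F J)(d)=F^\m1(d)\times J(d)$ with $\Sigma_F\Delta_F(b)_d=\id_{F^\m1(d)}\times b_d$, the square is
$$
\xymatrix@=26pt{
F^\m1(d)\times J(d)\ar[r]^{\id\times b_d}\ar[d]_{\pi}&F^\m1(d)\times J'(d)\ar[d]^{\pi}\\
J(d)\ar[r]_{b_d}&J'(d)
}
$$
whose pullback $\{((k,y'),x)\mid y'=b_d(x)\}\iso F^\m1(d)\times J(d)$ is visibly the top-left corner; equivalently, this is the product of the trivial (identity) pullback on the constant fiber $F^\m1(d)$ with the map $b_d$.

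The only real content is the second paragraph — correctly identifying the unit as a coproduct coprojection and the counit as the codiagonal from $F^\m1(d)\times J(d)$; once that is done, both squares are formal consequences of disjointness/stability of coproducts in $\Set$ (for $\eta$) and of the constancy of the fiber factor $F^\m1(d)$ (for $\epsilon$). I expect the main obstacle to be the careful bookkeeping of the adjunction isomorphism, to be certain these two components are what I claim, rather than any genuine difficulty in the pullback verifications themselves.
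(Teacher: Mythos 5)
Your proposal is correct and takes essentially the same route as the paper: the paper likewise reduces to an objectwise check, unpacks the unit at $c$ as the coprojection into $\coprod_{\{c'\mid F(c')=F(c)\}}I(c')$ and the counit at $d$ as the fold map $\coprod_{\{c\mid F(c)=d\}}J(d)\to J(d)$, and concludes from the disjointness of coproducts in $\Set$ (formally, that $\Set$ is a positive coherent category). Your only difference is that you carry out explicitly the adjunction bookkeeping and the two elementary pullback chases that the paper compresses into that citation.
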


\begin{proof}

It suffices to check that for an arbitrary object $c\in\Ob(C)$ and $d\in\Ob(D)$ respectively, the induced commutative diagram in $\Set$ 
$$
\parbox{1in}{
\xymatrix@=35pt{I(c)\ar[r]^{a_c}\ar[d]_{(\eta_I)_c}&I'(c)\ar[d]^{(\eta_{I'})_c}\\
\Delta_F\Sigma_F(I)(c)\ar[r]_{\Delta_F\Sigma_F(a)_c}&\Delta_F\Sigma_F(I')(c)}
}
\hspace{.8in}
\parbox{1in}{
\xymatrix@=35pt{
\Sigma_F\Delta_F(J)(d)\ar[r]^{\Sigma_F\Delta_F(b)_d}\ar[d]_{(\epsilon_J)_d}\ullimit&\Sigma_F\Delta_F(J')(d)\ar[d]^{(\epsilon_{J'})_d}\\
J(d)\ar[r]_{b_d}&J'(d)}
}
$$
is a pullback. Unpacking definitions, these are:
$$
\parbox{2.7in}{
\xymatrix@=30pt{I(c)\ar[r]^{a_c}\ar[d]_{c'=c\;\;}&I'(c)\ar[d]^{\;\;c'=c}\\
\parbox{.7in}{\vspace{-.1in}$$\coprod_{\{c'\|F(c')=F(c)\}}$$\vspace{-.2in}}I(c')\ar[r]_{\amalg a_{c'}}&\parbox{.7in}{\vspace{-.1in}$$\coprod_{\{c'\|F(c')=F(c)\}}$$\vspace{-.2in}}I'(c')}
}
\hspace{.8in}
\parbox{2.2in}{
\xymatrix@=30pt{
\parbox{.6in}{$$\coprod_{\{c\|F(c)=d\}}$$\vspace{-.1in}}J(d)\ar[r]^{\amalg b_d}\ar[d]&\parbox{.6in}{$$\coprod_{\{c\|F(c)=d\}}$$\vspace{-.1in}}J'(d)\ar[d]\\
J(d)\ar[r]_{b_d}&J'(d)}~\\\vspace{.15in}
}\vspace{-.2in}
$$
Roughly, the fact that these are pullbacks squares follows from the way coproducts work (e.g. the disjointness property) in $\Set$, or more precisely it follows from the fact that $\Set$ is a positive coherent category \cite[p. 34]{Johnstone:MR1953060}.

\end{proof}

\begin{definition}[Grothendieck construction]

Let $C$ be a signature and $I\taking C\to\Set$ an instance. The {\em Grothendieck category of elements for $I$ over $C$} consists of a pair $(\int_CI,\pi_I)$, where $\int_CI$ is a signature and $\pi_I\taking\int_CI\to C$ is a signature morphism, constructed as follows. The set of nodes in $\int_CI$ is $\{(c,x)\|c\in\Ob(C), x\in I(c)\}$. The set of edges in $\int_CI$ from node $(c,x)$ to node $(c',x')$ is $\{e\taking c \to c'\|I(e)(x)=x'\}$. The signature morphism $\pi_I\taking\int_CI\to C$ is obvious: send $(c,x)$ to $c$ and send $e$ to $e$. Two paths are equivalent in $\int_CI$ if and only if their images under $\pi_I$ are equivalent. We sometimes denote $\int_C$ simply as $\int$.

\end{definition}

\begin{lemma}\label{lemma:grothendieck is dop}

Let $I\taking C\to\Set$ be an instance. Then $\pi_I\taking\int I\to C$ is a discrete op-fibration. 

\end{lemma}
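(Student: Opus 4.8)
The plan is to verify the edge-level characterization of discrete op-fibrations, namely condition (3) of Lemma \ref{lemma:discrete op-fibrations}, applied to the signature morphism $\pi_I\taking\int_CI\to C$. Concretely, I must show that for every object $(c,x)\in\Ob(\int_CI)$ and every edge $q\taking c\to c'$ in $C$ with $\pi_I(c,x)=c$, there is a path $p$ in $\int_CI$ starting at $(c,x)$ with $\pi_I(p)\sim q$, and that $p$ is unique up to path equivalence. Existence is immediate from the construction: by definition the edges of $\int_CI$ out of $(c,x)$ lying over $q$ are exactly those $e\taking(c,x)\to(c',x')$ with $I(q)(x)=x'$, so setting $x':=I(q)(x)$ yields a length-one lift $p$ of $q$, with target $(c',I(q)(x))$ and $\pi_I(p)=q$.

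The substance of the argument is uniqueness. Let $p'$ be any path in $\int_CI$ starting at $(c,x)$ with $\pi_I(p')\sim q$; I want $p'\sim p$. The key observation is that in the Grothendieck construction the second coordinate of the endpoint of a lifted path is forced: if $p'$ has underlying nodes $(c,x)=(c_0,x_0),(c_1,x_1),\ldots,(c_k,x_k)$ and projects to $\pi_I(p')=\bar e_1.\cdots.\bar e_k$, then the edge condition $I(\bar e_i)(x_{i-1})=x_i$ telescopes to $x_k=I(\pi_I(p'))(x)$. Because $I$ descends to a functor on the schema $\church{C}$ and therefore respects path equivalence, $\pi_I(p')\sim q$ gives $I(\pi_I(p'))=I(q)$ as functions, whence $x_k=I(q)(x)$. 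Thus $p'$ has the same target $(c',I(q)(x))$ as $p$, so the two lifts are parallel.

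Having made $p'$ and $p$ parallel, I invoke the defining property of equivalence in $\int_CI$: two paths are equivalent precisely when their $\pi_I$-images are equivalent in $C$. Since $\pi_I(p')\sim q=\pi_I(p)$, this yields $p'\sim p$ directly, completing uniqueness and hence the proof. The one step that requires care — the main obstacle, such as it is — is exactly the endpoint-matching in the previous paragraph: one must use that $I$ is defined on path-equivalence classes, not merely on the free category, to conclude that equivalent projected paths act identically on $x$, so that the candidate lift and the canonical lift are genuinely parallel and thus comparable under the path-equivalence relation on $\int_CI$.
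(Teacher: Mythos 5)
Your proof is correct and takes the same route as the paper, whose entire proof is the single line ``This follows by Lemma \ref{lemma:discrete op-fibrations}'': you verify exactly condition (3) of that lemma, merely supplying the details the paper leaves implicit. Your attention to the endpoint-matching step --- using that $I$ is a functor on $\church{C}$, hence respects path equivalence, so that the two lifts are genuinely parallel before the equivalence criterion in $\int_C I$ is applied --- is precisely the point the paper's terse citation glosses over.
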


\begin{proof}

This follows by Lemma \ref{lemma:discrete op-fibrations}.

\end{proof}

\begin{lemma}[Acyclicity and finiteness are preserved under the Grothendieck construction]

Suppose that $C$ is an acyclic (respectively, a finite) signature and that $I\taking C\to\Set$ is a finite instance. Then $\int I$ is an acyclic (respectively, a finite) signature.

\end{lemma}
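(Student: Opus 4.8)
The plan is to treat the acyclicity and finiteness claims separately, using in both cases that the projection $\pi_I\taking\int I\to C$ is a discrete op-fibration (Lemma \ref{lemma:grothendieck is dop}). The acyclicity claim comes essentially for free, whereas the finiteness claim needs a short argument, because $\int I$ may contain loops even when $C$ has only finitely many edges.

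For acyclicity, I would note that $\pi_I$ sends each edge of $\int I$ to an edge of $C$ (it is the signature morphism $(c,x)\mapsto c$, $e\mapsto e$), so it carries a path of length $n$ to a path of length $n$. Hence any nontrivial loop $P\taking(c,x)\to(c,x)$ in $\int I$ would map to a nontrivial loop $\pi_I(P)\taking c\to c$ in $C$, contradicting the acyclicity of $C$. Thus every loop in $\int I$ is trivial, which is exactly acyclicity. Alternatively, one can observe that $\int_C I$ is the comma category $(\singleton\down I)$ and invoke the fact, established in the proof of the comma-category preservation lemma, that acyclicity of the two outer inputs $\vect{0}$ and $C$ already forces acyclicity of the comma category, with no hypothesis on the codomain $\Set$.

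For finiteness, I would first record that the object set $\Ob(\int I)=\coprod_{c\in\Ob(C)}I(c)$ is finite: a finite signature $C$ has finitely many objects, and each table $I(c)$ is finite because $I$ is a finite instance. The only substantive step is then to bound the number of morphisms of $\church{\int I}$, i.e. the number of path-equivalence classes in $\int I$. Here I would apply Lemma \ref{lemma:discrete op-fibrations} to the discrete op-fibration $\pi_I$: for each object $(c,x)$ and each path out of $c=\pi_I(c,x)$ in $C$ there is a lift starting at $(c,x)$, unique up to equivalence, while by the very definition of equivalence in $\int I$ two paths out of $(c,x)$ are equivalent precisely when their $\pi_I$-images are. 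This gives a bijection between equivalence classes of paths out of $(c,x)$ in $\int I$ and equivalence classes of paths out of $c$ in $C$. Summing over the finitely many objects $(c,x)$, and using that $\church{C}$ has finitely many morphisms (finiteness of $C$), I conclude that $\church{\int I}$ has finitely many morphisms, so $\int I$ is finite.

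The crux I expect — and the reason the result is not immediate — is precisely this semantic finiteness of morphisms: the graph of $\int I$ has only finitely many edges (one edge $(c,x)\to(c',I(e)(x))$ for each edge $e\taking c\to c'$ of $C$ and each $x\in I(c)$), but finitely many edges does not bound the number of paths up to equivalence once loops are present. The discrete op-fibration property is exactly what transports the semantic finiteness of $C$ up to $\int I$. I also note that one cannot simply quote the comma-category finiteness lemma for $(\singleton\down I)$, since that lemma requires the codomain to be finite and $\Set$ is not; this is repairable by replacing $\Set$ with its finite full subcategory on the finitely many finite sets $\{I(c)\}\cup\{\singleton\}$, but the discrete op-fibration argument above is more direct.
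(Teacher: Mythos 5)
Your proof is correct; be aware, though, that the paper's entire proof of this lemma is the single sentence ``Both are obvious by construction,'' so there is no detailed argument to compare against --- what you have written is a fully fleshed-out version of what the authors declare obvious. Your acyclicity argument (the projection $\pi_I$ preserves path length, so a nontrivial loop in $\int I$ would project to a nontrivial loop in $C$) is surely the intended triviality. For finiteness you correctly isolate the one point that is not purely construction-level: finiteness of a signature is semantic (finitely many morphisms in $\church{\int I}$), and finitely many nodes and edges in $\int I$ do not by themselves bound the number of path-equivalence classes once loops are present. Your transport of semantic finiteness along the discrete op-fibration $\pi_I$ (Lemma \ref{lemma:grothendieck is dop}) via the lifting property of Lemma \ref{lemma:discrete op-fibrations} is sound: lifts of paths out of $c$ to paths out of $(c,x)$ exist and are unique up to equivalence, and conversely, since equivalence in $\int I$ is defined by $\pi_I$-images and $I$ is a functor on $\church{C}$ (so image-equivalent paths out of $(c,x)$ automatically share a target), the induced map on equivalence classes is a genuine bijection; summing over the finitely many objects $(c,x)$ then bounds $\Mor(\church{\int I})$. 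Your closing observations are also well taken: the identification $\int I\iso(\singleton\down I)$ lets the acyclicity half follow from the comma-category preservation lemma, whose proof explicitly notes that acyclicity of the two outer categories alone suffices, while the finiteness half of that lemma cannot be quoted verbatim because $\Set$ is not a finite category, and either your repair (restricting to the finite full subcategory of $\Set$ spanned by the sets $I(c)$ and $\singleton$) or your direct fibration argument is required. If anything, your version improves on the paper's, since it records exactly why ``obvious by construction'' is legitimate for the semantic notion of finiteness.
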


\begin{proof}

Both are obvious by construction. 

\end{proof}

\begin{definition}[DeGrothendieckification]

Let $\pi\taking X\to C$ be a discrete op-fibration. Let $\singleton^X$ denote a terminal object in $X\inst$, i.e. any instance in which every node table and edge table consists of precisely one row. Define {\em the deGrothendieckification of $\pi$}, denoted $\partial\pi\taking C\to\Set$ to be $\partial\pi:=\Sigma_\pi\left(\singleton^X\right)\in C\inst$. 

\end{definition}

One checks that for a discrete op-fibration $\pi\taking X\to C$ and object $c\in\Ob(C)$ we have the formula $$\partial\pi(c)=\pi^\m1(c),$$ so we can say that deGrothendieckification is given by pre-image.

\begin{lemma}[Finiteness is preserved under DeGrothendiekification]

If $X$ is a finite signature and $\pi\taking X\to C$ is any discrete op-fibration, then $\partial\pi$ is a finite $C$-instance.

\end{lemma}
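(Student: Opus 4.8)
The plan is to argue directly from the pre-image description of deGrothendieckification recorded just above the statement, namely $\partial\pi(c)=\pi^\m1(c)$ for every object $c\in\Ob(C)$, rather than to invoke the general ``Finiteness is preserved under $\Sigma$'' lemma. First I would note that the terminal instance $\singleton^X$ is itself a finite $X$-instance: by definition each of its node and edge tables consists of precisely one row, and $X$ has only finitely many nodes and edges, so $\singleton^X$ is assembled from finitely many single-row tables.

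Next I would handle the node tables of the output via the pre-image formula. For each object $c\in\Ob(C)$ we have $\partial\pi(c)=\pi^\m1(c)=\{x\in\Ob(X)\|\pi(x)=c\}$, which is a subset of $\Ob(X)$; since $X$ is a finite signature, $\Ob(X)$ is finite, so each $\pi^\m1(c)$ is finite. For the edge tables I would unwind Construction \ref{const:Sigma}: for an edge $e\taking c\to c'$ of $C$ we have $\partial\pi(e)=\coprod_{x\in\pi^\m1(c)}\singleton^X(p_x)$, a disjoint union indexed by the finite set $\pi^\m1(c)$ of single-row tables, hence finite. (Equivalently, $\partial\pi(e)$ is the graph of a function between the finite sets $\partial\pi(c)$ and $\partial\pi(c')$, so it is automatically finite once the node tables are.) Because $C$ has finitely many objects and arrows by the standing assumption on signatures, $\partial\pi$ consists of finitely many finite tables and is therefore a finite $C$-instance.

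The main point to be careful about is precisely why one cannot shortcut the argument by citing the $\Sigma$-finiteness lemma applied to $\partial\pi=\Sigma_\pi(\singleton^X)$: that lemma requires the \emph{codomain} category to be finite, whereas here $C$ is allowed to be infinite. Indeed $X$ can be finite and $\pi$ a discrete op-fibration while $C$ has a nontrivial loop, provided the loop's object has empty pre-image (as happens already for $\pi\taking\varnothing\to\Loop$). The pre-image description is exactly what lets us deduce finiteness of $\partial\pi$ from finiteness of $X$ alone, with no hypothesis on $C$, and so I would make the proof rest on that formula rather than on the $\Sigma$-finiteness result.
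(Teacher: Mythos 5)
Your proposal is correct and takes exactly the route the paper intends: the paper's own proof is simply ``Obvious,'' resting on the pre-image formula $\partial\pi(c)=\pi^\m1(c)$ recorded immediately before the lemma, which is precisely the argument you spell out (node tables are subsets of the finite $\Ob(X)$, edge tables are graphs of functions between them, and the signature $C$ has finitely many nodes and edges by the standing assumption). Your observation that one cannot instead cite the $\Sigma$-finiteness lemma --- since that lemma assumes the codomain category is finite, whereas here $C$ may be infinite, as in $\pi\taking\varnothing\to\Loop$ --- is accurate and correctly explains why the statement carries no hypothesis on $C$.
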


\begin{proof}

Obvious.

\end{proof}

\begin{proposition}\label{prop:on dopf}

Given a signature $C$, let $Dopf_C\ss\Cat_{/C}$ denote the full category spanned by the discrete op-fibrations over $C$. Then $\int\taking C\inst\to Dopf_C$ and $\partial\taking Dopf_C\to C\inst$ are functorial, $\partial$ is left adjoint to $\int$, and they are mutually inverse equivalences of categories. 

\end{proposition}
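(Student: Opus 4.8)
The plan is to produce the two natural isomorphisms $\partial\circ\int\iso\id_{C\inst}$ and $\int\circ\partial\iso\id_{Dopf_C}$ directly, and then invoke the standard fact that a pair of functors equipped with such isomorphisms assembles into an adjoint equivalence. Since in an adjoint equivalence both unit and counit are invertible, the two functors are mutually inverse, and one is free to orient the adjunction either way; orienting it so that $\partial$ (which is built from the left adjoint $\Sigma$) is the left adjoint then yields every assertion of the proposition at once.

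First I would pin down functoriality. For $\int$, a natural transformation $\phi\taking I\to I'$ in $C\inst$ sends the node $(c,x)$ of $\int I$ to $(c,\phi_c(x))$; this respects edges precisely because $\phi$ is natural, since an edge $(c,x)\to(c',I(e)(x))$ is carried to an edge $(c,\phi_c(x))\to(c',I'(e)(\phi_c(x)))$ using $\phi_{c'}\circ I(e)=I'(e)\circ\phi_c$, and it lies over $C$, so it is a morphism in $Dopf_C$. For $\partial$ I would use the pre-image formula $\partial\pi(c)=\pi^\m1(c)$ recorded just after the definition of deGrothendieckification: a morphism $f\taking X\to X'$ in $Dopf_C$ is a functor with $\pi'\circ f=\pi$, so it restricts to maps of fibers $f_c\taking\pi^\m1(c)\to(\pi')^\m1(c)$, i.e. a family $\partial\pi(c)\to\partial\pi'(c)$; naturality in $c$ (compatibility with transport along each edge $e$) follows from the uniqueness of lifts in Lemma \ref{lemma:discrete op-fibrations}.

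Next I would construct the counit $\epsilon\taking\partial\int\to\id_{C\inst}$. On objects, $\partial\int I(c)=\pi_I^\m1(c)=\{(c,x)\|x\in I(c)\}$, which is in canonical bijection with $I(c)$; and the transport of $e\taking c\to c'$ sends $(c,x)$ to $(c',I(e)(x))$, matching $I(e)$, so the bijection is natural in $c$ and is an isomorphism of instances, natural in $I$. For the unit I would construct $\eta\taking\id_{Dopf_C}\to\int\partial$. The nodes of $\int\partial\pi$ are pairs $(c,x)$ with $x\in\pi^\m1(c)$, i.e. exactly the objects of $X$; its edges $(c,x)\to(c',x')$ are edges $e\taking c\to c'$ with $x'$ the transport of $x$, which by the unique-lifting property of the discrete op-fibration $\pi$ correspond bijectively to the edges of $X$ lying over $e$. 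This yields an isomorphism $X\iso\int\partial\pi$ over $C$, natural in $X$; here I must also check compatibility with the path-equivalence relations, which holds because equivalence in $\int\partial\pi$ is by definition detected by its projection, and the Grothendieck construction declares two paths equivalent exactly when their images under that projection are.

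Finally I would conclude. Having $\partial\int\iso\id$ and $\int\partial\iso\id$, the pair $(\partial,\int)$ is an equivalence of categories; by the usual promotion of an equivalence to an adjoint equivalence (keep $\epsilon$ and adjust $\eta$ so the triangle identities hold), one obtains $\partial\dashv\int$ with invertible unit and counit, so $\partial$ and $\int$ are mutually inverse. The main obstacle I anticipate is the unit isomorphism on the $Dopf_C$ side: matching the edges of $X$ with those of $\int\partial\pi$ rests essentially on the unique-lifting characterization of discrete op-fibrations, and one must verify that this bijection is compatible both with composition (functoriality over $C$) and with the path-equivalence relations carried by these finitely presented categories, rather than merely at the level of the underlying graphs.
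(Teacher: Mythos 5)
Your proposal is correct, but it takes a genuinely different route from the paper: the paper's entire proof is a citation to \cite[Lemma 2.3.4, Proposition 3.2.5]{Spivak:1202.2591}, where the statement is established in the general setting of categories of elements over an arbitrary category, whereas you construct the equivalence by hand. Your argument is sound in all its parts: functoriality of $\int$ via naturality squares and of $\partial$ via the fiber formula $\partial\pi(c)=\pi^\m1(c)$ (with naturality in $c$ resting, correctly, on uniqueness of lifts); the counit $\partial\int I\iso I$ by direct inspection of fibers and transports; and the unit $X\iso\int\partial\pi$, which is exactly where the discrete op-fibration hypothesis does its work. You are also right to flag the path-equivalence issue as the delicate point: since the paper's $\int$ is defined on signatures and its edges are drawn from edges of $C$, while a signature morphism $\pi\taking X\to C$ may send edges of $X$ to longer paths, the comparison $X\to\int\partial\pi$ is in general an isomorphism of the denoted categories rather than of underlying graphs; this is harmless here because the proposition works in $\Cat_{/C}$ (note $Dopf_C\ss\Cat_{/C}$), where unique lifting of morphisms (Definition \ref{def:discrete op-fibration}) makes your edge-matching a genuine isomorphism over $C$, and the paper's convention that equivalence in $\int$ is detected by the projection handles the presented-category reading. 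What each approach buys: the citation keeps the appendix short and inherits the result from a framework where the equivalence (and its interaction with $\Sigma$, $\Delta$) is developed systematically; your direct proof is self-contained, makes visible precisely where unique lifting is used, and exposes the signature-versus-category subtlety that the citation silently absorbs. Your final step --- promoting the pair of natural isomorphisms to an adjoint equivalence with $\partial$ on the left, by keeping $\epsilon$ and adjusting $\eta$ to satisfy the triangle identities --- is the standard argument and completes all claims of the proposition.
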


\begin{proof}

See \cite[Lemma 2.3.4, Proposition 3.2.5]{Spivak:1202.2591}.

\end{proof}

\begin{corollary}

Suppose that $C$ and $D$ are categories, that $F,G\taking C\to D$ are discrete op-fibrations, and $\alpha\taking F\to G$ is a natural transformation. Then there exists a functor $p\taking C\to C$ such that $F\circ p=G$, and $\Sigma_\alpha=\partial(p)$, where $\Sigma_\alpha\taking\Sigma_G\to\Sigma_F$ is the natural transformation given in Proposition \ref{prop:dmf under nt}.

\end{corollary}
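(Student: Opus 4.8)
The plan is to manufacture $p$ by lifting the natural transformation $\alpha$ through the op-fibration $F$, and then to identify its image under the equivalence $\partial$ with $\Sigma_\alpha$. First I would construct $p$ on objects. Fix $c\in\Ob(C)$. The component $\alpha_c\taking F(c)\to G(c)$ is a morphism of $D$ whose source is $F(c)$, so since $F$ is a discrete op-fibration there is a unique lift $\bar\alpha_c\taking c\to p(c)$ in $C$ with $F(\bar\alpha_c)=\alpha_c$; this forces $F(p(c))=G(c)$, which is exactly $F\circ p=G$ on objects.

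Next, for a morphism $m\taking c\to c'$ in $C$, I would define $p(m)$ to be the unique $F$-lift of $G(m)\taking G(c)\to G(c')$ beginning at $p(c)$ (legitimate because $F(p(c))=G(c)$), so that $F(p(m))=G(m)$ by construction. To see that $p(m)$ actually lands in $p(c')$ and that $\bar\alpha$ is natural, I would lift the common composite of the naturality square $\alpha_{c'}\circ F(m)=G(m)\circ\alpha_c$ starting at $c$ in two ways and invoke uniqueness of lifts: reading it as $G(m)\circ\alpha_c$ produces $p(m)\circ\bar\alpha_c$, while reading it as $\alpha_{c'}\circ F(m)$ produces $\bar\alpha_{c'}\circ m$ (using that $m$ is the unique lift of $F(m)$ at $c$); equality of these two lifts yields both the target identity $p(m)\taking p(c)\to p(c')$ and the relation $p(m)\circ\bar\alpha_c=\bar\alpha_{c'}\circ m$. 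Functoriality of $p$ is then immediate from uniqueness of lifts, since identities lift to identities and the lift of a composite is the composite of the lifts.

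Having built $p$ with $F\circ p=G$, I would observe that this is precisely the condition for $p$ to be a morphism from $G$ to $F$ in the slice $\Cat_{/D}$, and since $F,G$ are discrete op-fibrations it is a morphism in $Dopf_D$. Applying the equivalence $\partial\taking Dopf_D\to D\inst$ of Proposition~\ref{prop:on dopf} produces a morphism $\partial(p)\taking\partial(G)\to\partial(F)$ of $D$-instances, computed fiberwise as the family $\partial(p)_d\taking G^{-1}(d)\to F^{-1}(d)$, $c\mapsto p(c)$ (well-defined since $F(p(c))=G(c)$). To identify this with $\Sigma_\alpha$, I would use $\partial(\pi)=\Sigma_\pi(\singleton^C)$ to write $\partial(G)=\Sigma_G(\singleton^C)$ and $\partial(F)=\Sigma_F(\singleton^C)$, which by Construction~\ref{const:Sigma} are exactly $d\mapsto G^{-1}(d)$ and $d\mapsto F^{-1}(d)$; then check that the component of $\Sigma_\alpha\taking\Sigma_G\to\Sigma_F$ at $\singleton^C$ coincides with $\partial(p)$ by unwinding the Yoneda definition of $\Sigma_\alpha$ from Proposition~\ref{prop:dmf under nt} and tracing an element $c\in G^{-1}(d)=\Sigma_G(\singleton^C)(d)$ through the transpose adjunctions, verifying it is sent to $p(c)\in F^{-1}(d)=\Sigma_F(\singleton^C)(d)$.

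I expect the main obstacle to be this last identification rather than the construction of $p$. The construction is a routine, if careful, exercise in uniqueness of lifts, whereas pinning down $\Sigma_\alpha$ concretely requires chasing its abstract Yoneda/adjunction definition down to the fiberwise map and reconciling the natural transformation $\Sigma_\alpha$ with the morphism of instances $\partial(p)$ under the equivalence of Proposition~\ref{prop:on dopf}.
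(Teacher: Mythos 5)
Your proposal is correct, but it runs in the opposite direction from the paper's proof. The paper obtains $p$ abstractly: starting from $\alpha$, Proposition \ref{prop:dmf under nt} gives $\Sigma_\alpha\taking\Sigma_G\to\Sigma_F$; evaluating at the terminal instance $\singleton^C$ (deGrothendieckification) gives a morphism $\partial G\to\partial F$ in $D\inst$; and the equivalence $\partial,\int$ of Proposition \ref{prop:on dopf} then transports this instance morphism to a functor $p\taking C\to C$ over $D$. With that route the identity $\Sigma_\alpha=\partial(p)$ holds essentially by construction, so the paper's proof is three lines. You instead build $p$ by hand, lifting each component $\alpha_c$ through the discrete op-fibration $F$ and using uniqueness of lifts to get functoriality, the strict equation $F\circ p=G$, and naturality of the lifts $\bar\alpha_c$ --- in effect re-proving by hand the piece of the Grothendieck correspondence that Proposition \ref{prop:on dopf} packages --- and you then carry the burden the paper avoids: identifying $\partial(p)$ with the component of $\Sigma_\alpha$ at $\singleton^C$ by unwinding the Yoneda/adjunction definition. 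That chase does go through: the unit at $c$ picks out $c\in F^\m1(F(c))=\Sigma_F(\singleton^C)(F(c))$, the map $\Delta_\alpha$ applies $\Sigma_F(\singleton^C)(\alpha_c)$, which by Construction \ref{const:Sigma} sends $c$ to the codomain of the unique lift of $\alpha_c$, i.e.\ to $p(c)$, and transposing yields exactly $G^\m1(d)\to F^\m1(d)$, $c\mapsto p(c)$, which is $\partial(p)$. Two remarks worth making: first, the corollary's equation ``$\Sigma_\alpha=\partial(p)$'' is a mild abuse (a natural transformation versus a single instance morphism), and both proofs tacitly read it as the component at the terminal instance; second, your explicit construction delivers $F\circ p=G$ on the nose and a concrete formula for $p$, whereas the paper's route through the equivalence gives $p$ only up to the canonical isomorphism $\int\partial\iso\id$ unless one chooses the inverse equivalences strictly --- so your longer argument buys a small amount of extra precision at the cost of redoing machinery the appendix already established.
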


\begin{proof}

By Proposition \ref{prop:dmf under nt} the natural transformation $\alpha\taking F\to G$ induces a natural transformation $\Sigma_G\to\Sigma_F$, and applying deGrothendieckification supplies a map $\partial G\to\partial F$. By Proposition \ref{prop:on dopf} this induces a map $p\taking C\to C$ over $D$ with the above properties.

\end{proof}

\begin{proposition}\label{prop:functor between dopfs is dopf}

Given a commutative diagram
$$
\xymatrix@=10pt{A\ar[rr]^f\ar[rdd]_h&&B\ar[ldd]^g\\\\&C}
$$
in which $g$ and $h$ are discrete op-fibrations, it follows that $f$ is also a discrete op-fibration.

\end{proposition}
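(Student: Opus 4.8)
The plan is to verify directly that $f$ satisfies the unique lifting property of Definition \ref{def:discrete op-fibration}, exploiting the lifting property of $h$ to produce and pin down arrows in $A$, and the lifting property of $g$ to control where those arrows land in $B$. Commutativity of the triangle means $h=g\circ f$, and this identity is what lets me transfer lifting data between the two given discrete op-fibrations. Concretely, I fix an object $a\in\Ob(A)$ together with an arrow $\beta\taking f(a)\to b'$ in $B$, and I must exhibit a unique arrow $\bar\beta\taking a\to a'$ in $A$ with $f(\bar\beta)=\beta$.

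For existence, I would first push $\beta$ down to $C$: since $h=g\circ f$, the arrow $g(\beta)\taking h(a)\to g(b')$ is an arrow out of $h(a)$. Because $h$ is a discrete op-fibration and $a$ lies over $h(a)$, there is a unique arrow $\gamma\taking a\to a''$ in $A$ with $h(\gamma)=g(\beta)$. The key claim is then that $f(\gamma)=\beta$. To see this, note that both $\beta$ and $f(\gamma)$ are arrows in $B$ emanating from $f(a)$, and applying $g$ gives $g(f(\gamma))=h(\gamma)=g(\beta)$; since $g$ is a discrete op-fibration, the lift of $g(\beta)$ starting at $f(a)$ is unique, which forces $f(\gamma)=\beta$ (and in particular $f(a'')=b'$). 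Thus $\bar\beta:=\gamma$ is the desired lift.

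For uniqueness, I would suppose $\gamma'\taking a\to a'''$ in $A$ also satisfies $f(\gamma')=\beta$. Then $h(\gamma')=g(f(\gamma'))=g(\beta)=h(\gamma)$, so $\gamma$ and $\gamma'$ are two arrows out of $a$ with the same image under $h$, and the uniqueness clause for the discrete op-fibration $h$ yields $\gamma=\gamma'$. Hence $f$ is a discrete op-fibration. The step I expect to require the most care is the claim $f(\gamma)=\beta$: a priori nothing guarantees that the target $a''$ of the $h$-lift sits over $b'$, and it is exactly the uniqueness of $g$-lifts that simultaneously identifies the target and the arrow; everything else is routine. This argument is essentially a cancellation (``two-out-of-three''-style) property for discrete op-fibrations, and I would expect no finiteness or acyclicity hypotheses to be needed.
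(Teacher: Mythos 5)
Your proof is correct and is essentially the paper's own argument: lift $g(\beta)$ through the discrete op-fibration $h$ to get $\gamma$, use uniqueness of $g$-lifts to force $f(\gamma)=\beta$, and use uniqueness of $h$-lifts for uniqueness of the $f$-lift. The only difference is presentational --- the paper phrases the same steps via the unique-lifting squares against $\vect{0}\to\vect{1}$ (leaving the uniqueness clause implicit), while you verify Definition \ref{def:discrete op-fibration} directly on arrows and spell out uniqueness explicitly.
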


\begin{proof}

Consider the diagram to the left below
$$
\xymatrix{\vect{0}\ar[r]^a\ar[d]_i&A\ar[d]^f\\\vect{1}\ar[r]_q\ar@{=}[d]&B\ar[d]^g\\\vect{1}\ar[r]_{gq}&C}
\hspace{.8in}
\xymatrix{\vect{0}\ar[r]^a\ar[d]_i&A\ar[d]^h\\\vect{1}\ar[ur]^{\ell}\ar[r]_{gq}&C}
\hspace{.8in}
\xymatrix@=30pt{\vect{0}\ar[r]^{fa}\ar[d]_i&B\ar[d]^g\\\vect{1}\ar@<.5ex>[ur]^q\ar@<-.5ex>[ur]_{f\ell}\ar[r]_{gq}&C}
$$
Then since $h$ is a discrete op-fibration there exists a unique $\ell\taking \vect{1}\to A$ making the middle diagram commute. But now we have two lifts, $q$ and $f\ell$, and since $g$ is a discrete op-fibration, it follows that $q=f\ell$, completing the proof.

\end{proof}

\begin{proposition}[Discrete op-fibrations are stable under pullback]\label{prop:dop and pullback}

Let $F\taking C'\to C$ be a functor and $\pi\taking X\to C$ be a discrete op-fibration. Then given the pullback square
$$
\xymatrix{C'\cross_CX\ar[r]\ar[d]_{\pi'}\ullimit&X\ar[d]^\pi\\C'\ar[r]_F&C}
$$
the map $\pi'$ is a discrete op-fibration, and there is a natural isomorphism $$\pi'\iso\int_{C'}\Delta_F(\partial\pi).$$

\end{proposition}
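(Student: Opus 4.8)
The statement has two parts: that $\pi'$ is again a discrete op-fibration, and that it is isomorphic over $C'$ to the Grothendieck construction $\int_{C'}\Delta_F(\partial\pi)$. For the first part, the plan is to verify the unique path-lifting property of Lemma \ref{lemma:discrete op-fibrations} directly in the fiber product signature. Recall from Definition \ref{def:fp sig} that a path in $C'\cross_C X$ is a pair of paths, one in $C'$ and one in $X$, whose images in $C$ are equivalent. So, given an object $(c'_0,x_0)$ of $C'\cross_C X$ (with $F(c'_0)=\pi(x_0)$) and an edge $q\taking c'_0\to c'_1$ in $C'$, I would lift $q$ as follows: the path $F(q)$ in $C$ starts at $\pi(x_0)$, so because $\pi$ is a discrete op-fibration there is a path $\bar q$ in $X$ starting at $x_0$, unique up to equivalence, with $\pi(\bar q)\sim F(q)$; writing $x_1$ for its endpoint, the pair $(q,\bar q)\taking(c'_0,x_0)\to(c'_1,x_1)$ is then the required lift, and its uniqueness up to equivalence follows from that of $\bar q$. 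By condition (3) of Lemma \ref{lemma:discrete op-fibrations} it suffices to treat edges, so this shows $\pi'$ is a discrete op-fibration.

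For the isomorphism, I would exhibit a bijection on objects and on arrows that commutes with the projections to $C'$. Using the formula $\partial\pi(c)=\pi^\m1(c)$ recorded after the definition of deGrothendieckification, the nodes of $\int_{C'}\Delta_F(\partial\pi)$ are the pairs $(c',x)$ with $x\in\Delta_F(\partial\pi)(c')=\partial\pi(F(c'))=\pi^\m1(F(c'))$; these are exactly the objects of $C'\cross_C X$, namely the pairs $(c',x)$ with $\pi(x)=F(c')$. On arrows, an edge of $\int_{C'}\Delta_F(\partial\pi)$ from $(c',x)$ to $(c'',x')$ is an edge $g\taking c'\to c''$ of $C'$ with $\Delta_F(\partial\pi)(g)(x)=x'$, i.e. such that the unique lift of $F(g)$ at $x$ ends at $x'$; this is precisely the data of the corresponding arrow $(g,\bar g)$ of $C'\cross_C X$ built in the first part. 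Both projections send such an arrow to $g\taking c'\to c''$, so the bijection is compatible with the maps to $C'$ and respects path equivalences (two arrows being equivalent on either side iff their $C'$-components are), giving the claimed isomorphism in $Dopf_{C'}$; naturality in $\pi$ is then a direct check on these formulas.

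A cleaner, more structural route to the second part is available via Proposition \ref{prop:on dopf}. Since both $\pi'$ (by the first part) and $\int_{C'}\Delta_F(\partial\pi)$ (by Lemma \ref{lemma:grothendieck is dop}) are discrete op-fibrations over $C'$, and $\partial$ and $\int$ are inverse equivalences between $Dopf_{C'}$ and $C'\inst$, it suffices to compare their deGrothendieckifications: using $\partial\int\iso\id$ one has $\partial(\int_{C'}\Delta_F(\partial\pi))\iso\Delta_F(\partial\pi)$, while $\partial\pi'(c')=(\pi')^\m1(c')$ is the fiber of the pullback over $c'$, which is $\{x\mid\pi(x)=F(c')\}=\pi^\m1(F(c'))=\Delta_F(\partial\pi)(c')$. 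Checking that these agree on arrows as well, and that the identification is natural, then yields the isomorphism.

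I expect the only real work to be bookkeeping: keeping the signature-level notions (paths, arrows, and their equivalence classes) straight from the schema-level ones when transporting the lifting property of $\pi$ through the fiber product, and verifying that the object-level bijection above is genuinely functorial and natural rather than merely a bijection of underlying sets.
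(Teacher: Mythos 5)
Your proposal is correct, but it takes a genuinely different---and more elementary---route than the paper. The paper's proof is two lines long and structural: it invokes the characterization that a functor $p\taking Y\to D$ is a discrete op-fibration if and only if it is a pullback of the forgetful functor $\Set_*\to\Set$ along some $d\taking D\to\Set$ (with $\Set_*$ as in Example \ref{ex:slice}), writes $\pi$ as such a pullback with $d=\partial\pi$, and then applies the pasting lemma for fiber products, so that $C'\cross_CX$ is the pullback of $\Set_*\to\Set$ along $\partial\pi\circ F=\Delta_F(\partial\pi)$; by Lemma \ref{lemma:grothendieck is dop} this yields the op-fibration property and the identification with $\int_{C'}\Delta_F(\partial\pi)$ in one stroke, with naturality coming from universal properties. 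You instead verify unique lifting by hand and then match the two discrete op-fibrations over $C'$, either by an explicit object/arrow bijection or---your cleaner variant---by invoking the equivalence of Proposition \ref{prop:on dopf} and checking $\partial\pi'\iso\Delta_F(\partial\pi)$ fiberwise; note that this second route correctly presupposes your first part, since one needs $\pi'\in Dopf_{C'}$ before the equivalence applies, and you have that. What each buys: your argument is self-contained, uses only results already proved in the paper, and makes the fiberwise content of the isomorphism explicit; the paper's avoids all element-chasing and gets both claims and their naturality simultaneously, at the cost of leaning on the classifying-object characterization of discrete op-fibrations.

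One bookkeeping point you half-anticipated deserves to be made explicit. Your opening description of the fiber product is not quite Definition \ref{def:fp sig}: there, the arrows of the signature-level fiber product are matched pairs of \emph{edges}, so a pair $(q,\bar q)$ with $\bar q$ a path of length different from $1$ is not literally an arrow or path of $C'\cross_CX$, and in fact that signature-level construction need not present $\church{C'}\times_{\church{C}}\church{X}$ (consider an edge $q$ of $C'$ sent to a path of length two in $C$ that lifts only to a length-two path in $X$). Your lifting argument is sound once carried out at the category level, where the unique lift of $g\taking c'_0\to c'_1$ at $(c'_0,x_0)$ is the morphism pair $(g,\bar g)$ with $\pi(\bar g)=F(g)$ on the nose---and the category level is exactly where the paper's proof, and its later applications, operate.
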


\begin{proof}

A functor $p\taking Y\to D$ is a discrete op-fibration if and only there exists a functor $d\taking D\to\Set$ such that the diagram 
$$
\xymatrix{Y\ar[r]\ar[d]_{p}\ullimit&\Set_*\ar[d]\\D\ar[r]_d&\Set}
$$
is a pullback square, where $\Set_*$ is the category of pointed sets as in Example \ref{ex:slice}. The result follows from the pasting lemma for fiber products, and Lemma \ref{lemma:grothendieck is dop}.

\end{proof}

\begin{lemma}[Comparison morphisms for squares]\label{lemma:comparison morphism}

Suppose given the following diagram of categories:
$$\xymatrix{
R\ar[r]^f\ar[d]_e\ar@{}[dr]|{\stackrel{\alpha}{\Nearrow}}&S\ar[d]^h\\T\ar[r]_g&U
}
$$
Then there are natural transformations of functors $S\inst\to T\inst$:
$$\Sigma_f\Delta_e\too\Delta_h\Sigma_g\hsp\tn{and}\hsp\Delta_g\Pi_h\too\Pi_e\Delta_f.$$
If $hf=ge$ and $\alpha=\id$ (i.e. if the diagram commutes) then, by symmetry, there are natural transformations of functors $T\inst\to S\inst$:
$$\Sigma_e\Delta_f\too\Delta_g\Sigma_h\hsp\tn{and}\hsp\Delta_h\Pi_g\too\Pi_f\Delta_e.$$

\end{lemma}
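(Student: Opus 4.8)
The plan is to realize both comparison maps as \emph{mates} of a single natural transformation extracted from $\alpha$, using the two adjunctions attached to the relevant pair of legs of the square.

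First I would pin down orientations. Reading off the $\Nearrow$ convention used for the comma-category square in Definition \ref{def:comma}, the $2$-cell $\alpha$ is a natural transformation $\alpha\taking g\circ e\too h\circ f$ between functors $R\to U$. Since $\Delta$ is contravariantly functorial in the translation --- $\Delta_{l\circ k}=\Delta_k\circ\Delta_l$ because $I\circ(l\circ k)=(I\circ l)\circ k$ --- Proposition \ref{prop:dmf under nt} applied to $\alpha$ yields a natural transformation
$$\Delta_\alpha\taking\Delta_e\Delta_g\too\Delta_f\Delta_h$$
of functors $U\inst\to R\inst$, after identifying $\Delta_{g\circ e}=\Delta_e\Delta_g$ and $\Delta_{h\circ f}=\Delta_f\Delta_h$. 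This $\Delta_\alpha$ is the common input for both constructions.

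Next I would assemble the first map $\Sigma_f\Delta_e\to\Delta_h\Sigma_g$ (a transformation of functors $T\inst\to S\inst$) as the mate of $\Delta_\alpha$ under the adjunctions $\Sigma_f\dashv\Delta_f$ and $\Sigma_g\dashv\Delta_g$. Writing $\eta^g\taking\id\to\Delta_g\Sigma_g$ for the unit of $(\Sigma_g,\Delta_g)$ and $\epsilon^f\taking\Sigma_f\Delta_f\to\id$ for the counit of $(\Sigma_f,\Delta_f)$, the map is the whiskered composite
$$\Sigma_f\Delta_e\To{\Sigma_f\Delta_e\,\eta^g}\Sigma_f\Delta_e\Delta_g\Sigma_g\To{\Sigma_f\Delta_\alpha\Sigma_g}\Sigma_f\Delta_f\Delta_h\Sigma_g\To{\epsilon^f\Delta_h\Sigma_g}\Delta_h\Sigma_g.$$
Dually, I would build $\Delta_g\Pi_h\to\Pi_e\Delta_f$ (a transformation of functors $S\inst\to T\inst$) from the adjunctions $\Delta_e\dashv\Pi_e$ and $\Delta_h\dashv\Pi_h$: with $\eta^e\taking\id\to\Pi_e\Delta_e$ the unit of $(\Delta_e,\Pi_e)$ and $\epsilon^h\taking\Delta_h\Pi_h\to\id$ the counit of $(\Delta_h,\Pi_h)$, it is
$$\Delta_g\Pi_h\To{\eta^e\Delta_g\Pi_h}\Pi_e\Delta_e\Delta_g\Pi_h\To{\Pi_e\Delta_\alpha\Pi_h}\Pi_e\Delta_f\Delta_h\Pi_h\To{\Pi_e\Delta_f\,\epsilon^h}\Pi_e\Delta_f.$$
Naturality in the input instance is automatic, since each factor is a whiskering of a natural transformation ($\eta$, $\epsilon$, or $\Delta_\alpha$).

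The main thing to get right --- and essentially the only place errors are likely --- is the bookkeeping: the direction of $\alpha$, the consequent direction of $\Delta_\alpha$, and the matching choice of \emph{unit} versus \emph{counit} on each leg so that the whiskerings type-check and the two endpoints are exactly the claimed functors. For the final ``by symmetry'' clause, I would observe that when $hf=ge$ and $\alpha=\id$ the reflected square
$$\xymatrix{R\ar[r]^e\ar[d]_f&T\ar[d]^g\\S\ar[r]_h&U}$$
again commutes, and feeding it into the construction just completed --- with the substitutions $f\leftrightarrow e$ and $g\leftrightarrow h$ --- produces precisely $\Sigma_e\Delta_f\to\Delta_g\Sigma_h$ and $\Delta_h\Pi_g\to\Pi_f\Delta_e$, as desired.
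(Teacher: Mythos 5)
Your proof is correct and takes essentially the same route as the paper: the paper's proof is exactly the unit--counit composites you wrote, whiskering $\Delta_\alpha\taking\Delta_e\Delta_g\to\Delta_f\Delta_h$ (from Proposition \ref{prop:dmf under nt}) between $\eta_g,\epsilon_f$ and $\eta_e,\epsilon_h$ respectively, with the commuting case handled by the same reflection-of-the-square symmetry. Your bookkeeping is in fact slightly more careful than the paper's statement, which labels both transformations as functors $S\inst\to T\inst$ even though $\Sigma_f\Delta_e\to\Delta_h\Sigma_g$ goes $T\inst\to S\inst$, as you correctly note.
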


\begin{proof}

These arise from units and counits, together with Proposition \ref{prop:dmf under nt}.
\begin{align*}
\Sigma_f\Delta_e\To{\eta_g}\Sigma_f\Delta_e\Delta_g\Sigma_g\To{\Delta_\alpha}\Sigma_f\Delta_f\Delta_h\Sigma_g\To{\epsilon_f}\Delta_h\Sigma_g\\
\Delta_g\Pi_h\To{\eta_e}\Pi_e\Delta_e\Delta_g\Pi_h\To{\Delta_\alpha}\Pi_e\Delta_f\Delta_h\Pi_h\To{\epsilon_h}\Pi_e\Delta_f.
\end{align*}
The second claim is symmetric to the first if $\alpha=\id$. 
\end{proof}

\begin{definition}

Suppose given a diagram of the form 
$$\xymatrix{
R\ar[r]^f\ar[d]_e\ar@{}[dr]|{\stackrel{\alpha}{\Nearrow}}&S\ar[d]^h\\T\ar[r]_g&U
}
$$
We say it is {\em exact} if the comparison morphism $\Sigma_f\Delta_e\to\Delta_h\Sigma_g$ is an isomorphism. Note that this is the case if and only if the comparison morphism $\Delta_g\Pi_h\to\Pi_e\Delta_f$ is an isomorphism.

\end{definition}

%
%
%
%
%

\begin{proposition}[Comma Beck-Chevalley for $\Sigma,\Delta$]\label{prop:BC Sigma}

Let $F\taking C\to D$ and $G\taking E\to D$ be functors, and consider the canonical natural transformation diagram (see Definition \ref{def:comma})
\begin{align}\label{dia:comma sigma delta exact}
\xymatrix{(F\down G)\ar[r]^q\ar[d]_p\ar@{}[dr]|{\stackrel{\alpha}{\Nearrow}}&E\ar[d]^G\\C\ar[r]_F&D}
\end{align}
Let $\Sigma_F$ be the generalized left push-forward as defined in Remark \ref{rmk:generalized sigma}. Then the comparison morphism (Lemma \ref{lemma:comparison morphism}) is an isomorphism
$$\Sigma_q\Delta_p\To{\iso}\Delta_G\Sigma_F$$
 of functors $C\inst\to E\inst$. In other words, (\ref{dia:comma sigma delta exact}) is exact.

\end{proposition}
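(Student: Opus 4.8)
The plan is to check that the comparison morphism is a pointwise bijection, using the pointwise formula for the generalized left pushforward. Since $\Sigma_F$ is the left adjoint of $\Delta_F$ (Remark \ref{rmk:generalized sigma}), it is the left Kan extension of $I$ along $F$, so for every $I\taking C\to\Set$ and every $d\in\Ob(D)$ we have $\Sigma_F(I)(d)\iso\colim_{(F\down d)}(I\circ\pi_d)$, where $\pi_d\taking(F\down d)\to C$ is the projection. Evaluating the two functors $C\inst\to E\inst$ at a fixed $I$ and a fixed object $e_0\in\Ob(E)$, I would identify
$$\Delta_G\Sigma_F(I)(e_0)=\Sigma_F(I)(Ge_0)\iso\colim_{(F\down Ge_0)}\bar T,\qquad \Sigma_q\Delta_p(I)(e_0)\iso\colim_{(q\down e_0)}\tilde T,$$
where $\bar T(c,\delta)=I(c)$ and $\tilde T(c,e,\gamma,\psi)=I(c)$ (using $\Delta_p(I)=I\circ p$ and again the pointwise $\Sigma_q$ formula). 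Here $(q\down e_0)$ has objects $(c,e,\gamma,\psi)$ with $\gamma\taking Fc\to Ge$ and $\psi\taking e\to e_0$, while $(F\down Ge_0)$ has objects $(c,\delta\taking Fc\to Ge_0)$.

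These two diagrams are related by the functor $k\taking(q\down e_0)\to(F\down Ge_0)$, $k(c,e,\gamma,\psi):=(c,G\psi\circ\gamma)$, which satisfies $\tilde T=\bar T\circ k$, so the canonical map $\colim\tilde T\to\colim\bar T$ induced by $k$ agrees (I verify this below) with the evaluation of the comparison morphism. Thus it suffices to prove that $k$ is final, i.e. that $(b\down k)$ is connected for every object $b$, which forces $k$ to preserve the relevant colimit. There is an evident section $j\taking(F\down Ge_0)\to(q\down e_0)$, $j(c,\delta):=(c,e_0,\delta,\id_{e_0})$, with $k\circ j=\id$, and the canonical natural transformation $\alpha$ of Definition \ref{def:comma} supplies a natural transformation $\eta\taking\id\Rightarrow j\circ k$ whose component at $(c,e,\gamma,\psi)$ is $(\id_c,\psi)$ (this realizes $(F\down Ge_0)$ as a reflective subcategory of $(q\down e_0)$, though I will only use the explicit morphisms it provides).

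The main obstacle, and the crux of the argument, is the finality of $k$. Fix $b=(c_0,\delta_0)$ in $(F\down Ge_0)$. The objects of $(b\down k)$ are tuples $(c,e,\gamma,\psi,m)$ with $m\taking c_0\to c$ and $G\psi\circ\gamma\circ Fm=\delta_0$, and it contains the canonical object $O_0=(c_0,e_0,\delta_0,\id_{e_0},\id_{c_0})$. Given an arbitrary object $O=(c,e,\gamma,\psi,m)$, the component of $\eta$ provides a morphism $O\to O'$, where $O'=(c,e_0,G\psi\circ\gamma,\id_{e_0},m)$; and the pair $(m,\id_{e_0})$ provides a morphism $O_0\to O'$ (the constraints $G\psi\circ\gamma\circ Fm=\delta_0$ and the compatibility with the structure maps hold by construction). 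Hence the zig-zag $O_0\to O'\leftarrow O$ connects $O$ to $O_0$, so $(b\down k)$ is connected. A direct morphism to $O_0$ need not exist, which is exactly why the detour through $O'$ (i.e. through $\eta$) is needed; this is the delicate point. Therefore $k$ is final and the comparison is a bijection at every $e_0$.

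Finally I would check naturality: the bijections are natural in $I$ because $k$, $\bar T$, $\tilde T$, and the canonical colimit comparison are all natural in $I$, and natural in $e_0$ (compatible with morphisms of $E$) by functoriality of the comma-category construction, yielding an isomorphism of functors $C\inst\to E\inst$. The one remaining bookkeeping point is confirming that the $k$-induced map on colimits really is the comparison morphism of Lemma \ref{lemma:comparison morphism} assembled from $\eta_F$, $\Delta_\alpha$, and $\epsilon_q$; this follows by unwinding those units and counits pointwise (equivalently, by checking that both maps are the same mate under the adjunction $\Sigma_q\dashv\Delta_q$), which is routine once the colimit descriptions above are in hand.
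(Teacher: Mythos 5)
Your proposal is correct, and at bottom it is the same argument as the paper's: both reduce the statement to a pointwise comparison of colimits over the two comma categories $(q\down e_0)$ and $(F\down Ge_0)$ and settle it by a finality argument. The difference is the direction in which finality is checked. The paper verifies that the diagonal functor $(F\down Ge_0)\to(q\down e_0)$ --- your section $j$ --- is final, which is automatic because $j$ is fully faithful with left adjoint $k$ (your reflection, with $k\circ j=\id$ and unit $\eta$); but that route produces an isomorphism $\colim\bar{T}\To{\iso}\colim\tilde{T}$ pointing \emph{against} the comparison morphism, which must then be identified as its inverse. You instead prove finality of $k\taking(q\down e_0)\to(F\down Ge_0)$, whose induced map on colimits points in the same direction as $\Sigma_q\Delta_p\to\Delta_G\Sigma_F$, so no inversion step is needed. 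Do note that your finality claim is \emph{not} a formal consequence of the adjunction $k\dashv j$: a functor with a right adjoint is automatically initial, not final, so the explicit zig-zag $O_0\to O'\leftarrow O$ is genuinely load-bearing --- and it checks out (the constraint $G\psi\circ\gamma\circ Fm=\delta_0$ is exactly what makes $(m,\id_{e_0})\taking O_0\to O'$ legitimate, and $\eta_O=(\id_c,\psi)$ gives $O\to O'$ over $b$ since $k(\eta_O)=\id_c$). Your remaining bookkeeping point is also right: tracing the cocone leg at $(c,e,\gamma,\psi)$ through $\eta_F$, $\Delta_\alpha$, and $\epsilon_q$ sends it to the leg at $(c,\id_{Fc})$, then along $\gamma$ to $(c,\gamma)$, then along $G\psi$ to $(c,G\psi\circ\gamma)=k(c,e,\gamma,\psi)$, which is precisely the canonical comparison along $k$. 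So your version is a complete and slightly more self-contained rendering of the step the paper dismisses as ``easily checked,'' at the cost of needing the hand-verified zig-zag where the paper can lean on the adjoint-functor criterion for finality.
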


\begin{proof}

The map is given by the composition
$$
\xymatrix{\Sigma_q\Delta_p\ar[r]^-{\eta_F}&\Sigma_q\Delta_p\Delta_F\Sigma_F\ar[r]^{\alpha}&\Sigma_q\Delta_q\Delta_G\Sigma_F\ar[r]^-{\epsilon_q}&\Delta_G\Sigma_F}
$$
To prove it is an isomorphism, one checks it on each object $e\in\Ob(E)$ by proving that the diagonal map $(F\down Ge)\to(q\down e)$ in the diagram 
$$
\xymatrix{
(F\down Ge)\ar@/^1pc/[rrd]\ar[rd]\ar@/_1pc/[rdd]\\
&(q\down e)\ar[r]\ar[d]\ar@{}[rd]|{\Nearrow}&[0]\ar[d]^e\\
&(F\down G)\ar@{}[dr]|{\Nearrow}\ar[d]_p\ar[r]^q&E\ar[d]^G\\
&C\ar[r]_F&D}
$$
is a final functor, and this is easily checked.

\end{proof}

\begin{corollary}[Comma Beck-Chevalley for $\Delta,\Pi$]\label{cor:BC Pi}

Let $F\taking C\to D$ and $G\taking E\to D$ be functors, and consider the canonical natural transformation diagram (\ref{dia:nt for comma})
$$
\xymatrix{(F\down G)\ar[r]^q\ar[d]_p\ar@{}[dr]|{\stackrel{\alpha}{\Nearrow}}&E\ar[d]^G\\C\ar[r]_F&D}
$$
Then the comparison morphism (Lemma \ref{lemma:comparison morphism}) is an isomorphism $$\Delta_F\Pi_G\To{\iso}\Pi_p\Delta_q$$ of functors $E\inst\to C\inst$.

\end{corollary}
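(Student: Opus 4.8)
The plan is to obtain this as a formal consequence of Proposition \ref{prop:BC Sigma} together with the equivalence built into the notion of exactness. First I would line the comma square up with the template square of the definition of exactness, taking the top, left, right, and bottom arrows to be $q$, $p$, $G$, and $F$ respectively (so that the corners are $(F\down G)$, $E$, $C$, $D$). Under this dictionary the two comparison morphisms produced by Lemma \ref{lemma:comparison morphism} read $\Sigma_q\Delta_p\to\Delta_G\Sigma_F$ and $\Delta_F\Pi_G\to\Pi_p\Delta_q$, the second being exactly the morphism $E\inst\to C\inst$ we are asked to show is invertible.

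Next I would invoke Proposition \ref{prop:BC Sigma}, which states precisely that the first of these, $\Sigma_q\Delta_p\to\Delta_G\Sigma_F$, is an isomorphism; in the vocabulary of the definition of exactness this says that the comma square is exact. That definition records that a square is exact if and only if its $\Delta,\Pi$ comparison morphism $\Delta_g\Pi_h\to\Pi_e\Delta_f$ is an isomorphism, and instantiating $g=F$, $h=G$, $e=p$, $f=q$ turns this into the assertion that $\Delta_F\Pi_G\to\Pi_p\Delta_q$ is an isomorphism. Combining these two observations yields the corollary at once, so the proof reduces to citing the previous proposition and the exactness equivalence.

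The only genuine content hides in the ``if and only if'' clause of that definition, so if a self-contained argument is wanted I would supply it via the mate correspondence under the adjunctions $\Sigma\dashv\Delta\dashv\Pi$ (with $\Sigma_F$ understood as the generalized left pushforward of Remark \ref{rmk:generalized sigma}, so that $\Sigma_F\dashv\Delta_F$ holds). Passing to right adjoints sends the composite $\Sigma_q\Delta_p$ to $\Pi_p\Delta_q$ and sends $\Delta_G\Sigma_F$ to $\Delta_F\Pi_G$, and the mate of the isomorphism $\Sigma_q\Delta_p\to\Delta_G\Sigma_F$ is exactly the comparison morphism $\Delta_F\Pi_G\to\Pi_p\Delta_q$; since taking mates is a bijection that carries isomorphisms to isomorphisms, invertibility of the one forces invertibility of the other. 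The main thing to check carefully is that this mate computation is compatible with the orientation of the natural transformation $\alpha$ in diagram (\ref{dia:nt for comma}), i.e. that the units and counits used in Lemma \ref{lemma:comparison morphism} are exactly the ones transposed by the right-adjoint passage; granting this bookkeeping, no calculation beyond Proposition \ref{prop:BC Sigma} is needed.
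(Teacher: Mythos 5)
Your proposal is correct and takes essentially the same route as the paper, whose entire proof reads ``This follows from Proposition \ref{prop:BC Sigma} by adjointness'': your instantiation of the exactness dictionary and the mate correspondence under $\Sigma_q\Delta_p\dashv\Pi_p\Delta_q$ and $\Delta_G\Sigma_F\dashv\Delta_F\Pi_G$ is precisely the content of that adjointness step. Your extra care about the generalized $\Sigma_F$ of Remark \ref{rmk:generalized sigma} and the unit/counit bookkeeping simply fills in what the paper leaves implicit.
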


\begin{proof}

This follows from Proposition \ref{prop:BC Sigma} by adjointness.

\end{proof}

Sometimes when we push forward an instance $I$ along a functor $B\To{G}C$, the resulting instance $\Sigma_GI$ agrees with $I$, at least on some subcategory $A\ss B$. This is very useful to know, because it gives an easy way to calculate row sets and name rows in the pushforward. The following lemma roughly says that this occurs if for all $b\in\Ob(B)$ and $a\in\Ob(A)$, the sense to which $b$ is to the left of $a$ in $C$, is the same as the sense to which $b$ is to the left of $a$ in $B$.

\begin{lemma}

Suppose that we are given a diagram of the form 
$$
\xymatrix{A\ar[r]^q\ar@{=}[d]&B\ar[d]^G\\A\ar[r]_F&C}
$$
Then the map $\beta\taking\Delta_q\to\Delta_F\Sigma_G$ is an isomorphism if, for all objects $a\in\Ob(A)$ the functor 
$$\ol{G}\taking(\id_B\down qa)\to (G\down Fa)$$ 
is final.

\end{lemma}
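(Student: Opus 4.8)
The plan is to identify $\beta$ explicitly, reduce the claim to a pointwise statement about colimits, and then exploit finality together with the presence of a terminal object in the slice category. First, since the square commutes we have $F=G\circ q$, and hence $\Delta_F=\Delta_q\Delta_G$ as functors $B\inst\to A\inst$. The map $\beta$ is then nothing but $\Delta_q$ applied to the unit $\eta\taking\id_{B\inst}\to\Delta_G\Sigma_G$ of the (generalized) adjunction $(\Sigma_G,\Delta_G)$ of Remark \ref{rmk:generalized sigma}; that is, $\beta=\Delta_q(\eta)\taking\Delta_q\to\Delta_q\Delta_G\Sigma_G=\Delta_F\Sigma_G$. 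To check that $\beta$ is an isomorphism it suffices to check that each component is a bijection of sets, i.e. that for every instance $I\taking B\to\Set$ and every object $a\in\Ob(A)$ the map
$$\beta_I(a)\taking \Delta_q(I)(a)=I(qa)\too \Sigma_G(I)(Fa)=\Delta_F\Sigma_G(I)(a)$$
is a bijection.

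Next I would invoke the pointwise colimit formula for the generalized left pushforward (left Kan extension) $\Sigma_G$: for any object $c\in\Ob(C)$ one has $\Sigma_G(I)(c)\iso\colim\big((G\down c)\To{\pi}B\To{I}\Set\big)$, where $\pi$ is the projection. Taking $c=Fa$, the component $\beta_I(a)$ is exactly the coprojection into this colimit coming from the object $(qa,\id_{Fa})\in\Ob(G\down Fa)$, which is well-defined because $G(qa)=Fa$. Separately, the category $(\id_B\down qa)$ is the slice $B_{/qa}$, whose objects are morphisms $b\to qa$ in $B$; it has a terminal object, namely $\id_{qa}\taking qa\to qa$. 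Since a terminal object computes colimits, the colimit of $I\circ\pi\circ\ol{G}$ over $(\id_B\down qa)$ is just its value at that terminal object, which is $I(qa)$, with the corresponding coprojection being the identity of $I(qa)$.

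Finally, finality of $\ol{G}\taking(\id_B\down qa)\to(G\down Fa)$ yields the canonical isomorphism $\colim_{(\id_B\down qa)}(I\circ\pi\circ\ol{G})\To{\iso}\colim_{(G\down Fa)}(I\circ\pi)$, and chasing the definitions shows that this isomorphism, under the identification $\colim_{(\id_B\down qa)}(I\circ\pi\circ\ol{G})\iso I(qa)$, is precisely $\beta_I(a)$. Hence $\beta_I(a)$ is a bijection for all $I$ and $a$, so $\beta$ is a natural isomorphism. The one step I expect to require genuine care — and the only real obstacle — is the bookkeeping that the finality isomorphism, transported through the terminal object of the slice, literally agrees with the unit coprojection that defines $\beta_I(a)$; the remaining ingredients (the colimit formula for $\Sigma_G$, the terminal object of the slice, and the colimit-preservation property of final functors) are standard.
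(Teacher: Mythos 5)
Your proof is correct, and it takes a genuinely more hands-on route than the paper, though the underlying engine (pointwise colimits plus finality) is the same. You identify $\beta$ explicitly as the whiskered unit $\Delta_q(\eta_G)$ --- which is indeed what the comparison morphism of Lemma \ref{lemma:comparison morphism} reduces to when the left edge of the square is an identity --- and then compute each component via the pointwise Kan-extension formula $\Sigma_G(I)(Fa)\iso\colim_{(G\down Fa)}(I\circ\pi)$, recognizing $\beta_I(a)$ as the coprojection at $(qa,\id_{Fa})$ and obtaining the bijection from the terminal object $\id_{qa}$ of the slice $(\id_B\down qa)$ transported along the finality isomorphism; the bookkeeping step you flag does go through, since the finality isomorphism carries the coprojection at each object $x$ to the coprojection at $\ol{G}(x)$, and $\ol{G}(qa,\id_{qa})=(qa,\id_{Fa})$ matches the unit coprojection exactly. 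The paper instead stays inside its $\Sigma/\Delta$ comparison calculus: it fixes an object $a\taking\vect{0}\to A$, uses the comma Beck--Chevalley isomorphism (Proposition \ref{prop:BC Sigma}) to rewrite $\Delta_a\Delta_q\iso\Sigma_t\Delta_p\Delta_q$ and $\Delta_a\Delta_F\Sigma_G\iso\Sigma_u\Delta_v$, so that $\Delta_a\beta$ becomes a counit-induced map $\beta'$ whose invertibility is equivalent to finality of the comparison functor $G'\taking(\id_A\down a)\to(G\down Fa)$, and then factors $G'=\ol{G}\circ\ol{q}$ through $(\id_B\down qa)$, using that $\ol{q}$ is final together with the cancellation property of final functors. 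What each approach buys: yours is shorter and self-contained for a reader who knows the standard colimit formula for left Kan extensions and the description of the unit as a coprojection (facts the paper effectively re-derives internally from Proposition \ref{prop:BC Sigma}, since $\Sigma_u\Delta_v$ evaluated on an instance is precisely that pointwise formula); the paper's version keeps every step expressible in the $\Delta,\Sigma,\Pi$ comparison-morphism language used throughout the appendix and, as a byproduct of routing through $G'$, shows that invertibility of $\Delta_a\beta$ is \emph{equivalent} to finality of $G'$, a slightly sharper statement than the one-directional claim of the lemma.
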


\begin{proof}

For typographical convenience, let $[0]$ denote the terminal category, usually denoted $\vect{0}$. To check that $\beta$ is an isomorphism, we may choose an arbitrary object $a\taking[0]\to A$ and check that $\Delta_a\beta$ is an isomorphism. Consider the diagram
$$
\xymatrix{
(G\down Fa)\ar@/^1pc/[rrrd]^v\ar@/_.5pc/[rdd]_u\\
&(\id_A\down a)\ar[r]^-p\ar[d]_t\ar[ul]_{G'}\ar@{}[dr]|{\Swarrow}&A\ar@{=}[d]\ar[r]^(.4)q&B\ar[d]^G\\
&[0]\ar[r]_a&A\ar[r]_F&C}
$$
The map $\beta$ is the composite 
$$\Delta_a\Delta_q\iso\Sigma_t\Delta_p\Delta_q=\Sigma_u\Sigma_{G'}\Delta_{G'}\Delta_v\To{\;\beta'\;}\Sigma_u\Delta_v\iso\Delta_a\Delta_F\Sigma_G,$$
and this is equivalent to showing that $\beta'$ is an isomorphism. It suffices to show that $\Sigma_t\to\Sigma_u\Delta_v$ is an isomorphism, but this is equivalent to the assertion that $G'$ is a final functor. One can factor $G'$ as 
$$
(\id_A\down a)\To{\ol{q}}(\id_B\down qa)\To{\ol{G}}(G\down Fa)
$$
and the first of these is easily checked to be final. Thus the composite is final if and only if the latter map is final, as desired.

\end{proof}

\begin{proposition}[Pullback Beck-Chevalley for $\Sigma,\Delta$ when $\Sigma$ is along a discrete op-fibration]\label{prop:pb beck sigma}

Suppose that $F\taking D\to C$ is a functor and $p\taking X\to C$ is a discrete op-fibration, and form the pullback square
\begin{align}\label{dia:pullback sigma delta exact}
\xymatrix{Y\ullimit\ar[r]^G\ar[d]_{q}&X\ar[d]^p\\D\ar[r]_F&C}
\end{align}
The comparison morphism (Lemma \ref{lemma:comparison morphism}) is a natural isomorphism $$\Sigma_q\Delta_G\To{\iso}\Delta_F\Sigma_p$$ of functors $X\inst\to D\inst$. In other words, (\ref{dia:pullback sigma delta exact}) is exact.

\end{proposition}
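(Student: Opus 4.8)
The plan is to reduce the statement to the explicit disjoint-union description of $\Sigma$ along a discrete op-fibration (Construction \ref{const:Sigma}) and then verify that the abstractly defined comparison morphism is, summand by summand, the evident fiber bijection.

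First I would record the structural facts that make everything explicit. Since $p$ is a discrete op-fibration, Proposition \ref{prop:dop and pullback} guarantees that the pulled-back projection $q\taking Y\to D$ is again a discrete op-fibration, with moreover $q\iso\int_D\Delta_F(\partial p)$. In particular both $\Sigma_p$ and $\Sigma_q$ are computed by the formula of Construction \ref{const:Sigma} (this is where the hypothesis on $p$ is used — otherwise only the generalized $\Sigma$ of Remark \ref{rmk:generalized sigma} would be available on the right). The comparison morphism $\Sigma_q\Delta_G\to\Delta_F\Sigma_p$ is then the instance of Lemma \ref{lemma:comparison morphism} attached to the transposed pullback square (reflecting \eqref{dia:pullback sigma delta exact} across its diagonal, which is again a pullback since $Y\iso D\cross_CX\iso X\cross_CD$).

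Next I would fix $I\in X\inst$ and an object $d\in\Ob(D)$ and evaluate both sides. On the right, $\Delta_F\Sigma_p(I)(d)=\Sigma_p(I)(F(d))=\coprod_{x\in p^{-1}(F(d))}I(x)$; on the left, $\Sigma_q\Delta_G(I)(d)=\coprod_{y\in q^{-1}(d)}I(G(y))$. By the construction of the fiber product of signatures (Definition \ref{def:fp sig}), the objects of $Y$ are pairs $(d',x)$ with $F(d')=p(x)$, and $q(d',x)=d'$, $G(d',x)=x$; hence $(d,x)\mapsto x$ is a bijection $q^{-1}(d)\To{\iso}p^{-1}(F(d))$ under which the summand $I(G(d,x))$ is literally $I(x)$. (Equivalently, this fiber bijection is the content of $q\iso\int_D\Delta_F(\partial p)$.) So the two coproducts are canonically identified.

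It remains to check that this canonical identification is the component of the comparison morphism and that it is natural. For the first point I would unwind the zig-zag of Lemma \ref{lemma:comparison morphism}, which here — using that the square commutes so $\Delta_\alpha=\id$ via $\Delta_G\Delta_p=\Delta_{pG}=\Delta_{Fq}=\Delta_q\Delta_F$ — specializes to
$$\Sigma_q\Delta_G\To{\eta_p}\Sigma_q\Delta_G\Delta_p\Sigma_p=\Sigma_q\Delta_q\Delta_F\Sigma_p\To{\epsilon_q}\Delta_F\Sigma_p$$
and track a single summand: the unit $\eta_p$ inserts $I(x)$ as the $x$-th summand of $\coprod_{p(x')=p(x)}I(x')$, and the counit $\epsilon_q$ projects back onto exactly that summand, so the composite restricts to the identity $I(x)\to I(x)$. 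This matching of the assembled natural transformation with the evident bijection is the one genuinely computational step, and it is the part I expect to be the main obstacle. For naturality along an arrow $e\taking d\to d'$ of $D$, both sides act by the arrow clause of Construction \ref{const:Sigma} through the unique lifts furnished by the discrete op-fibration structure of $p$ and $q$; the fiber bijection commutes with these lifts because $G$ carries the unique lift in $Y$ to the unique lift in $X$ (uniqueness of lifts), so the relevant squares commute. Naturality in $I$ is immediate from the construction. Assembling these checks gives that $\Sigma_q\Delta_G\To{\iso}\Delta_F\Sigma_p$ is a natural isomorphism, i.e. that \eqref{dia:pullback sigma delta exact} is exact.
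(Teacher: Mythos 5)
Your proof is correct, but it follows a genuinely different route from the paper's. The paper argues abstractly: the comparison morphism is exactly the zig-zag $\Sigma_q\Delta_G\To{\eta_p}\Sigma_q\Delta_q\Delta_F\Sigma_p\To{\epsilon_q}\Delta_F\Sigma_p$ you wrote down, and since $\eta_p$ and $\epsilon_q$ are Cartesian natural transformations (Lemma \ref{lemma:cartesian}, applicable to $\epsilon_q$ because $q$ is itself a discrete op-fibration by Proposition \ref{prop:dop and pullback}), the composite is Cartesian; a Cartesian transformation is an isomorphism as soon as its component at the terminal object $\ast^X$ of $X\inst$ is one, and there the claim becomes $\partial q\iso\Delta_F(\partial p)$, i.e.\ precisely the second assertion of Proposition \ref{prop:dop and pullback}. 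You instead evaluate both functors fiberwise via Construction \ref{const:Sigma} and verify by hand that the zig-zag realizes the canonical identification of $\coprod_{y\in q^{-1}(d)}I(G(y))$ with $\coprod_{x\in p^{-1}(F(d))}I(x)$, with naturality in $d$ supplied by uniqueness of lifts. What each buys: the paper's reduction is shorter and is a reusable pattern (the same Cartesian-plus-terminal-object argument drives Proposition \ref{prop:distributive}), but it silently uses that Cartesianness survives whiskering by $\Sigma_q$ and $\Delta_F$, which requires $\Sigma_q$ to preserve the relevant pullbacks --- true here because coproducts in $\Set$ are disjoint and pullback-stable, the positive-coherence point cited in Lemma \ref{lemma:cartesian}; your computation makes the component formula completely explicit and sidesteps that subtlety, at the price of the summand bookkeeping you rightly flagged as the real content. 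One small correction to that bookkeeping: $\epsilon_q$ is the fold (codiagonal) map $\coprod_{y\in q^{-1}(d)}J(d)\to J(d)$, not a projection onto a summand; the composite nonetheless carries the $y=(d,x)$ summand $I(x)$ identically onto the $x$-summand, as you conclude, so the argument stands.
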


\begin{proof}

Consider the morphism given in Lemma \ref{lemma:comparison morphism}.
Since $\eta_p$ and $\epsilon_q$ are Cartesian (by Lemma \ref{lemma:cartesian}), it suffices to check the claim on the terminal object of $X\inst$, where it follows by Proposition \ref{prop:dop and pullback}.

\end{proof}

\begin{corollary}[Pullback Beck-Chevalley for $\Delta,\Pi$ when $\Delta$ is along a discrete op-fibration]\label{cor:pb bc DP dop}

Suppose that $F\taking D\to C$ is a functor and $p\taking X\to C$ is a discrete op-fibration, and form the pullback square
\begin{align}\label{dia:pullback delta pi exact}
\xymatrix{Y\ullimit\ar[r]^G\ar[d]_{q}&X\ar[d]^p\\D\ar[r]_F&C}
\end{align}
The comparison morphism (Lemma \ref{lemma:comparison morphism}) is an isomorphism $$\Delta_p\Pi_F\To{\iso}\Pi_G\Delta_q$$ of functors $D\inst\to X\inst$. In other words, (\ref{dia:pullback delta pi exact}) is exact.

\end{corollary}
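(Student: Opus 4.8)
The plan is to obtain this corollary from its dual, Proposition \ref{prop:pb beck sigma}, purely by adjointness, in exactly the way Corollary \ref{cor:BC Pi} was obtained from Proposition \ref{prop:BC Sigma}. Note first that all the functors appearing in the statement are well defined: since discrete op-fibrations are stable under pullback (Proposition \ref{prop:dop and pullback}), the map $q$ is again a discrete op-fibration, so $\Sigma_q$ exists via Construction \ref{const:Sigma}, while $\Pi_F,\Pi_G$ are the right adjoints of $\Delta_F,\Delta_G$. Crucially, Proposition \ref{prop:pb beck sigma} is stated for the \emph{very same} pullback square (\ref{dia:pullback delta pi exact}) and already asserts that the $\Sigma,\Delta$ comparison $\Sigma_q\Delta_G\to\Delta_F\Sigma_p$ is an isomorphism, i.e.\ that the square is exact. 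Thus there is nothing new to compute: it remains only to pass from the $\Sigma,\Delta$ comparison to the $\Delta,\Pi$ comparison.

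The mechanism is the mates correspondence for the adjoint chains $\Sigma_p\dashv\Delta_p\dashv\Pi_p$ and $\Sigma_q\dashv\Delta_q\dashv\Pi_q$ (and likewise $\Delta_F\dashv\Pi_F$, $\Delta_G\dashv\Pi_G$). The comparison morphisms of Lemma \ref{lemma:comparison morphism} are assembled from the units and counits of these adjunctions, and the two comparisons attached to a single commuting square are mates of one another. This is precisely the equivalence recorded in the remark immediately following the definition of exactness: for one square the $\Sigma,\Delta$ comparison is an isomorphism if and only if the $\Delta,\Pi$ comparison is. Applying that equivalence to our square, with $e=q$, $f=G$, $g=F$, $h=p$ in the notation of Lemma \ref{lemma:comparison morphism}, identifies $\Delta_p\Pi_F\to\Pi_G\Delta_q$ as the mate of $\Sigma_q\Delta_G\to\Delta_F\Sigma_p$. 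Since a mate is invertible exactly when the transformation it is a mate of is invertible, the conclusion follows immediately from Proposition \ref{prop:pb beck sigma}.

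I expect the only genuine friction to be bookkeeping: one must keep straight which functor plays the role of $e,f,g,h$ in Lemma \ref{lemma:comparison morphism} (and in particular invoke the symmetric, $\alpha=\id$, form of that lemma, valid here because the square commutes) so that the two Beck--Chevalley transformations line up as mates. Beyond that, the argument is entirely formal; the substantive work, including the use of the Cartesian property of the unit and counit of $(\Sigma_p,\Delta_p)$ from Lemma \ref{lemma:cartesian}, has already been discharged inside Proposition \ref{prop:pb beck sigma}. It is worth emphasizing that this adjoint route is the natural one precisely because it avoids having to establish any Cartesian-type statement for $\Pi$: no object-wise verification of $\Delta_p\Pi_F\to\Pi_G\Delta_q$ is attempted, and none is needed.
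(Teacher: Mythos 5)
Your proposal is correct and takes essentially the same route as the paper, whose entire proof reads ``This follows from Proposition \ref{prop:pb beck sigma} by adjointness.'' You have merely made explicit the mates bookkeeping that the paper leaves implicit --- namely the symmetric ($\alpha=\id$) form of Lemma \ref{lemma:comparison morphism} and the equivalence recorded after the definition of exactness.
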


\begin{proof}

This follows from Proposition \ref{prop:pb beck sigma} by adjointness. 

\end{proof}

\begin{proposition}[Distributive law]\label{prop:distributive}

Let $u\taking C\to B$ be a discrete op-fibration and $f\taking B\to A$ any functor. Construct the {\em distributivity diagram}
\begin{align}\label{dia:dd}
\xymatrix@=15pt{&N\ullimit\ar[ld]_e\ar[rr]^g\ar[dd]^w&&M\ar[dd]^v\\
C\ar[dr]_u&&\\
&B\ar[rr]_f&&A}
\end{align}
as follows. Form $v\taking M\to A$ by $v:=\int_A\Pi_f(\partial u)$, form $w\taking N\to B$ by $w:=\int_B\Delta_f\Pi_f\partial u$. Finally let $e\taking N\to C$ be given by $\int_B\epsilon_f(\partial u)$, where $\epsilon_f\taking\Delta_f\Pi_f\to\id_B$ is the counit. Note that $N\iso B\cross_AM$ by Proposition \ref{prop:dop and pullback} and that one has $w=u\circ e$ by construction. 

Then there is a natural isomorphism 
\begin{align}\label{dia:Pi past Sigma}
\Sigma_v\Pi_g\Delta_e\To{\iso}\Pi_f\Sigma_u
\end{align}
of functors $C\inst\to A\inst$.

\end{proposition}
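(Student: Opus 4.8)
The plan is to prove the isomorphism \emph{objectwise}. Since an $A$-instance is a functor $A\to\Set$, it suffices to produce a bijection $(\Pi_f\Sigma_uI)(a)\iso(\Sigma_v\Pi_g\Delta_eI)(a)$ for every object $a\in\Ob(A)$, natural both in $a$ (so that the two sides agree as functors $A\to\Set$) and in the input $I\in C\inst$. A hom-set/Yoneda argument is not directly available, because $\Sigma_v$ is a left adjoint while $\Pi_f$ is a right adjoint, so the two composites sit on opposite sides of the relevant adjunctions; the content of the statement is precisely the combinatorial distributivity of products over sums, which I will verify on fibers, following \cite{GK}.

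First I would unwind the left-hand side. By Construction \ref{const:Pi}, $(\Pi_f\Sigma_uI)(a)=\lim_{(a\down f)}(\Sigma_uI)\circ q_a$, where $q_a\taking(a\down f)\to B$ is the projection; an element is a compatible family $(x_{b,\beta})$ indexed by the objects $(b,\beta)$ of $(a\down f)$, with $x_{b,\beta}\in(\Sigma_uI)(b)=\coprod_{c\in u^\m1(b)}I(c)$ (Construction \ref{const:Sigma}, valid since $u$ is a discrete op-fibration). Because coproducts in $\Set$ are disjoint and stable (positivity, as used in the proof of Lemma \ref{lemma:cartesian}, cf.\ \cite[p.\ 34]{Johnstone:MR1953060}), such a family splits canonically into: (i) a choice $\phi(b,\beta)\in u^\m1(b)$ for each $(b,\beta)$, which the compatibility of $x$ forces to respect the dop-liftings of morphisms and hence to be itself a compatible family, i.e.\ an element $\phi\in\lim_{(a\down f)}(\partial u)\circ q_a=(\Pi_f\partial u)(a)$; and (ii) for that fixed $\phi$, a compatible family $(y_{b,\beta})$ with $y_{b,\beta}\in I(\phi(b,\beta))$.

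Next I would identify these two pieces with the right-hand side. By construction $v=\int_A\Pi_f(\partial u)$, so $v^\m1(a)=(\Pi_f\partial u)(a)$ and the choices $\phi$ are exactly the objects $m\in v^\m1(a)$; since $v$ is a discrete op-fibration (Lemma \ref{lemma:grothendieck is dop}), Construction \ref{const:Sigma} gives $(\Sigma_v\Pi_g\Delta_eI)(a)=\coprod_{m\in v^\m1(a)}(\Pi_g\Delta_eI)(m)$. It then remains to match, for each fixed $m=\phi$, the data $(y_{b,\beta})$ with $(\Pi_g\Delta_eI)(m)=\lim_{(m\down g)}(\Delta_eI)\circ q'_m$. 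Here I use that $N\iso B\cross_AM$ is the pullback of $f$ along $v$ (Proposition \ref{prop:dop and pullback}) and that $e\taking N\to C$ satisfies $\Delta_eI(n)=I(e(n))$; the decisive point is to exhibit a final functor $(m\down g)\to(a\down f)$ carrying $e$ to $\phi$, so that the two limits coincide. Granting this, assembling (i) and (ii) yields the bijection.

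The main obstacle is exactly this last comma-category comparison: showing that the projection-induced functor between $(m\down g)$ and $(a\down f)$ is final, so that the limit computing $(\Pi_g\Delta_eI)(m)$ is genuinely the limit of the restricted family $\bigl(I(\phi(b,\beta))\bigr)$. This is where the pullback identification $N\iso B\cross_AM$ and the discreteness of the fibers of $u$ (Corollary \ref{cor:fibers of discrete op-fibrations}) do the work, and it is the concrete form of the distributivity of \cite{GK}. Once the bijection is established, naturality in $a$ reduces to checking compatibility with the two functorial actions on a morphism $a\to a'$ in $A$ (using $w=u\circ e$ and the universal properties of the limits), and naturality in $I$ is routine; both follow directly from the constructions.
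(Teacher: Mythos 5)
Your proof strategy is correct, but it is genuinely different from the paper's. The paper never computes elements: it writes down the canonical comparison as the composite $\Sigma_v\Pi_g\Delta_e\To{\eta_u}\Sigma_v\Pi_g\Delta_e\Delta_u\Sigma_u=\Sigma_v\Pi_g\Delta_w\Sigma_u\To{\iso}\Sigma_v\Delta_v\Pi_f\Sigma_u\To{\epsilon_v}\Pi_f\Sigma_u$, using the pullback Beck--Chevalley isomorphism (Corollary \ref{cor:pb bc DP dop}) for the middle step, then observes that every transformation in the chain is Cartesian (Lemma \ref{lemma:cartesian}), which reduces the whole claim to the terminal object $\singleton^C$, where the composite is visibly the identity on $\Pi_f(\partial u)$. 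Your fiberwise argument instead unwinds both sides at each $a\in\Ob(A)$; it is essentially the concrete content recorded in Remark \ref{rem:interpretation of distributive law}. Your step (i)--(ii) is sound: splitting a limit of coproducts into a compatible ``shape'' $\phi\in(\Pi_f\partial u)(a)$ plus a section over $\phi$ is valid in $\Set$ for any index category by disjointness and universality of coproducts, and the unique dop-liftings force $\phi$ to be compatible, exactly as you say. Two remarks on the step you defer. First, it is easier than you fear: since $v$ is a discrete op-fibration and $N\iso B\cross_AM$ (Proposition \ref{prop:dop and pullback}), the projection $(m\down g)\to(a\down f)$ sending $(n,\mu)$ to $(w(n),v(\mu))$ is an \emph{isomorphism} of categories --- unique lifting in $M$ recovers $\mu$ and $g(n)$ from $v(\mu)$, and similarly on morphisms --- and under it $\Delta_eI$ restricts to $(b,\alpha)\mapsto I(\phi(b,\alpha))$, so the two limits agree with no cofinality argument at all. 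Second, be careful with variance: for \emph{limits} the relevant comparison functors must be initial, not final (finality is the colimit-side condition, used in the paper only in Proposition \ref{prop:BC Sigma}); your phrase ``exhibit a final functor \dots so that the two limits coincide'' has this backwards, though the slip is harmless here precisely because the comparison is an isomorphism. As for what each approach buys: the paper's Cartesian-plus-terminal-object trick is three lines, shows the specific canonical transformation is invertible (the form that gets composed in Theorem \ref{thm:query comp}), and adapts to the typed setting (Proposition \ref{prop:typed distributivity}); yours is more elementary and self-contained, producing an explicit bijection at the cost of hand-checking naturality in $a$ and in $I$, which you correctly note is routine.
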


\begin{proof}

We have that $u,v,$ and $w$ are discrete op-fibrations. By Corollary \ref{cor:pb bc DP dop} we have the following chain of natural transformations 
\begin{align*}
\Sigma_v\Pi_g\Delta_e\To{\eta_u}\Sigma_v\Pi_g\Delta_e\Delta_u\Sigma_u=\Sigma_v\Pi_g\Delta_w\Sigma_u\To{\iso}\Sigma_v\Delta_v\Pi_f\Sigma_u\To{\epsilon_v}\Pi_f\Sigma_u.
\end{align*}
Every natural transformation in the chain is Cartesian, so it suffices to check that the composite is an isomorphism when applied to the terminal object $\singleton^C$ in $C\inst$. But there the composition is simply the identity transformation on $\Pi_f(\partial u)$, proving the result.

\end{proof}

\begin{remark}\label{rem:interpretation of distributive law}

The distributive law above takes on a much simpler form when we realize that any discrete opfibration over $B$ is the category of elements of some functor $I\taking B\to\Set$. We have an equivalence of categories $\int(I)\set\iso B\set_{/I}$. Thinking about the distributive law in these terms is quite helpful. 

Let $f\taking B\to A$ be a functor and consider the functor
$$B\set_{/I}\Too{``\Pi_f"}A\set_{/\Pi_fI}\;\;
\footnote{The quotes around $``\Pi_f"$ indicate that this is not actually a right Kan extension, but that $``\Pi_f"$ is a suggestive name for the functor we indicate.}$$
given by $(J\to I)\mapsto(\Pi_fJ\to\Pi_fI)$. Let $u\taking\int I\to B$ and $v\taking\int(\Pi_fI)\to A$ be the canonical projections. Form the distributivity diagram
$$\xymatrix{
\int\Delta_f\Pi_fI\ullimit\ar[d]_e\ar[r]^g\ar[d]_e&\int\Pi_fI\ar[dd]^v\\
\int I\ar[d]_u&\\
B\ar[r]_f&A
}
$$
Then the following diagram of categories commutes:
$$\xymatrix{
B\set_{/I}\ar[r]^{``\Pi_f"}\ar[d]_{\iso}&A\set_{/\Pi_fI}\ar[d]^\iso\\
(\int I)\set\ar[r]_{\Pi_g\Delta_e}\ar[d]_{\Sigma_u}&(\int\Pi_fI)\set\ar[d]^{\Sigma_v}\\
B\set\ar[r]_{\Pi_f}&A\set
}
$$
where the vertical composites are the ``forgetful" functors. In this diagram, the big rectangle clearly commutes. The distributive law precisely says that the bottom square commutes. Deducing that the top square commutes, we realize that the rather opaque looking $\Pi_g\Delta_e$ is quite a simple functor.

\end{remark}

\begin{lemma}\label{lemma:lots of pullbacks}

Suppose given a pullback square. 
$$\xymatrix{
R\ullimit\ar[r]^f\ar[d]_e&S\ar[d]^h\\T\ar[r]_g&U
}
$$
For any functor $\Gamma\taking T\to\Set$, every square in the following diagram is a pullback:
$$\xymatrix{
&\int\Delta_f\Pi_f\Delta_e\Gamma\ar[dd]\ar[rr]\ar[dl]&&\int\Pi_f\Delta_e\Gamma\ar[dddd]\ar[dl]\\
\int\Delta_g\Pi_g\Gamma\ar[dd]\ar[rr]&&\int\Pi_g\Gamma\ar[dddd]\\
&\int\Delta_e\Gamma\ar[dd]\ar[dl]\\
\int\Gamma\ar[dd]\\
&R\ar[rr]^f\ar[ld]_e&&S\ar[ld]^h\\
T\ar[rr]_g&&U
}
$$

\end{lemma}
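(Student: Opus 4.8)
The plan is to translate the entire diagram into the language of discrete op-fibrations and to reduce every square to one of three inputs: the given pullback at the bottom, the identity ``$\int\Delta = $ pullback'' coming from Proposition \ref{prop:dop and pullback}, and the pullback Beck--Chevalley isomorphism of Corollary \ref{cor:pb bc DP dop}. The organizing observation is that every downward leg $\int K\to(\text{base})$ in the diagram is the projection of a Grothendieck construction, hence a discrete op-fibration (Lemma \ref{lemma:grothendieck is dop}) with $\partial\int K\iso K$ (Proposition \ref{prop:on dopf}). Consequently, by Proposition \ref{prop:dop and pullback}, pulling the dop $\int_C K\to C$ back along any $F\taking C'\to C$ yields $\int_{C'}\Delta_F K\to C'$; informally, ``$\int\Delta_F = F^*\int$'' where $F^*$ denotes pullback along $F$. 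Under the equivalence $C\inst\simeq Dopf_C$ this says exactly that any square whose top-left corner is $\int\Delta_F K$, whose top-right corner is $\int K$, and whose bottom row is $C'\To{F}C$ is a pullback.

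Granting this, I would dispose of the squares in order of increasing difficulty. The bottom square is a pullback by hypothesis. The square with corners $\int\Delta_e\Gamma,\int\Gamma,R,T$ is immediate from the identity above, being the pullback of the dop $\int\Gamma\to T$ along $e$. The crux is the square with corners $\int\Pi_f\Delta_e\Gamma,\int\Pi_g\Gamma,S,U$: after the bookkeeping that identifies the drawn top arrow with the canonical comparison map of Lemma \ref{lemma:comparison morphism}, the identity ``$\int\Delta=$ pullback'' shows that this square is a pullback precisely when $\Pi_f\Delta_e\Gamma\iso\Delta_h\Pi_g\Gamma$ in $S\inst$. This is the pullback Beck--Chevalley isomorphism $\Delta_h\Pi_g\To{\iso}\Pi_f\Delta_e$ of Corollary \ref{cor:pb bc DP dop}, applied to the base pullback square.

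With that isomorphism in hand, the two remaining squares follow formally. For the upper-left square I would first check that the composite left-hand vertical strip is a pullback: its top-left corner $\int\Delta_f\Pi_f\Delta_e\Gamma$ must equal $e^*\int\Delta_g\Pi_g\Gamma=\int_R\Delta_e\Delta_g\Pi_g\Gamma$, and since $g\circ e=h\circ f$ we get $\Delta_e\Delta_g\Pi_g\Gamma=\Delta_f\Delta_h\Pi_g\Gamma\iso\Delta_f\Pi_f\Delta_e\Gamma$ by the crux isomorphism; the pasting law for pullbacks, together with the already-established square $\int\Delta_e\Gamma,\int\Gamma,R,T$, then forces the upper square to be a pullback. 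For the top square I would compute its putative pullback directly: using $\int\Pi_f\Delta_e\Gamma\iso h^*\int\Pi_g\Gamma$ (the crux) and $\int\Delta_g\Pi_g\Gamma\iso g^*\int\Pi_g\Gamma$ (the identity above), the pullback of these two over $\int\Pi_g\Gamma$ is $(S\times_U T)\times_U\int\Pi_g\Gamma=R\times_U\int\Pi_g\Gamma$, which by the same manipulation is $\int_R\Delta_f\Pi_f\Delta_e\Gamma$, the actual top-left corner.

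The main obstacle is the crux square, i.e.\ the isomorphism $\Pi_f\Delta_e\Gamma\iso\Delta_h\Pi_g\Gamma$. Corollary \ref{cor:pb bc DP dop} supplies this only when the leg being pulled back --- here $h$ --- is a discrete op-fibration, so the genuine content is to see that the relevant leg of the base pullback square is a dop; given that, one may apply the corollary directly, or equivalently use the Cartesian-ness of the unit and counit (Lemma \ref{lemma:cartesian}) to reduce the comparison map to the terminal instance as in Proposition \ref{prop:pb beck sigma}. Everything else is the pasting law together with the routine verification that the arrows drawn in the diagram are the canonical projection and comparison maps.
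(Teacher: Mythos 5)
Your argument is correct and is essentially the paper's own proof: both rest on Proposition \ref{prop:dop and pullback} (the identity $\int\Delta_F K\iso F^{*}\!\int K$, which disposes of the lower-left, front, and back squares) together with the pullback Beck--Chevalley isomorphism of Corollary \ref{cor:pb bc DP dop} for the right-hand square, the top and upper-left squares then following by pasting, with the only difference being the immaterial order in which the squares are dispatched (the paper does top before right, you do right first). Your closing caveat is also on target: Corollary \ref{cor:pb bc DP dop} requires the leg $h$ to be a discrete op-fibration, a hypothesis the lemma's statement omits but the paper's proof relies on just as silently, and which does hold in the lemma's intended applications (e.g.\ when the base square comes from a distributivity diagram, where the corresponding leg $v$ is a discrete op-fibration).
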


\begin{proof}

While the format of this diagram contains a couple copies of the distributivity diagram, this result has nothing to do with that one. Indeed it follows from Proposition \ref{prop:dop and pullback} and Corollary \ref{cor:pb bc DP dop}. The bottom square is a pullback by hypothesis. The lower left-hand square, the front square, and the back square are each pullbacks by the proposition. It now follows that the top square is a pullback. The right-hand square is a pullback by a combination of the proposition and the corollary. It now follows that the upper left-hand square is a pullback.

\end{proof}

\begin{example}[Why $\Sigma$-restrictedness appears to be necessary for composability of queries]\label{ex:why dist needs dopf}

Without the condition that $u$ be a discrete op-fibration, Proposition \ref{prop:distributive} does not hold. Indeed, let $C=\Loop$ as in (\ref{dia:loop}) and let $B=A=\vect{0}$ be the terminal category. One finds that $N=M=\vect{0}$ too in (\ref{dia:dd}), and $e$ is the unique functor. If one considers $C$-sets as special kinds of graphs (discrete dynamical systems) then we can say that $\Pi_f\Sigma_u$ will extract the set of connected components for a $C$-set, whereas $\Sigma_v\Pi_g\Delta_e$ will extract its set of vertices.

We may still wish to ask ``whether a $\Pi$ can be pushed past a $\Sigma$", i.e. whether for any $C\To{u}B\To{f}A$, some appropriate $v,g,e$ exist such that an isomorphism as in (\ref{dia:Pi past Sigma}) holds. The following example may help give intuition. Let $C\To{u}B\To{f}A$ be given as follows
$$
\parbox{.75in}{
\fbox{\xymatrix{&\LMO{e}\ar@/_.5pc/[d]\ar@/^.5pc/[d]\\\LMO{r}\ar@/^.5pc/[r]\ar@/_.5pc/[r]&\LMO{v}}}}
\xymatrix{~\ar[r]^u&~}
\parbox{.75in}{
\fbox{\xymatrix{&\color{white}{\LMO{e}}\\\LMO{r}\ar@/^.5pc/[r]\ar@/_.5pc/[r]&\LMO{v}}}}
\xymatrix{~\ar[r]^f&~}
\parbox{.75in}{
\fbox{\xymatrix{&\color{white}{\LMO{e}}\\\color{white}{\LMO{r}}&\LMO{v}}}}
$$
where $u(e)=v$ and $f(r)=v$. The goal is to find some $C\From{e}N\To{g}M\To{v}A$ such that isomorphism (\ref{dia:Pi past Sigma}) holds. This does not appear possible if $M$ and $N$ are assumed finitely presentable.

\end{example}

\begin{theorem}[Query composition]\label{thm:query comp}

Suppose that one has $\Sigma$-restricted data migration queries $Q\taking S\queryto T$ and $Q'\taking T\queryto U$ as follows:
\begin{align}\label{dia:composable queries}
\xymatrix{&B\ar[r]^f\ar[dl]_s&A\ar[dr]^t&&&D\ar[r]^g\ar[dl]_u&C\ar[dr]^v\\
S\ar@{}[rrr]|Q&&&T&T\ar@{}[rrr]|{Q'}&&&U}
\end{align}
Then there exists a $\Sigma$-restricted data migration query $Q''\taking S\queryto U$ such that $\church{Q''}\iso\church{Q'}\circ\church{Q}$.

\end{theorem}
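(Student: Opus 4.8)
The plan is to start from the composite $\church{Q'}\circ\church{Q}=\Sigma_v\Pi_g\Delta_u\Sigma_t\Pi_f\Delta_s$ and rewrite it into $\Sigma\Pi\Delta$-normal form by commuting adjacent functors past one another, using the Beck--Chevalley and distributivity results established above. Each rewrite introduces one of the auxiliary categories $A',B',D',M,N$ in the displayed diagram and replaces a ``bad'' adjacent pair by a good one, pushing every $\Sigma$ leftward, every $\Delta$ rightward, and leaving a single block of $\Pi$'s in the middle.

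First I would resolve the central pair $\Delta_u\Sigma_t$. Since $Q$ is $\Sigma$-restricted, $t$ is a discrete op-fibration, so I form the pullback $(i)$ of $t\taking A\to T$ along $u\taking D\to T$, obtaining $A'$ with projections $h\taking A'\to A$ and $k\taking A'\to D$. Proposition \ref{prop:pb beck sigma} gives $\Delta_u\Sigma_t\iso\Sigma_k\Delta_h$; moreover $k$ is again a discrete op-fibration by Proposition \ref{prop:dop and pullback}. This is the key point that keeps the rest of the argument inside the $\Sigma$-restricted world. Next I would form the comma category $B'=(h\down f)$ in square $(iii)$, with projections $m\taking B'\to B$ and $r\taking B'\to A'$, and apply Corollary \ref{cor:BC Pi} to get $\Delta_h\Pi_f\iso\Pi_r\Delta_m$. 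At this stage the composite has become $\Sigma_v\Pi_g\Sigma_k\Pi_r\Delta_{s\circ m}$.

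Now I would push the remaining $\Sigma_k$ to the left of $\Pi_g$. Because $k$ is a discrete op-fibration, Proposition \ref{prop:distributive} applies with the op-fibration taken to be $k$ and the second functor taken to be $g$: forming the distributivity diagram $(ii)$ produces $M$, $D'$, maps $w\taking M\to C$, $q\taking D'\to M$, $e\taking D'\to A'$, and yields $\Pi_g\Sigma_k\iso\Sigma_w\Pi_q\Delta_e$ with $w$ a discrete op-fibration. This leaves the pair $\Delta_e\Pi_r$, which I dispose of by forming the last comma category $N=(e\down r)$ in square $(iv)$, with projections $n\taking N\to B'$ and $p\taking N\to D'$, and applying Corollary \ref{cor:BC Pi} once more to get $\Delta_e\Pi_r\iso\Pi_p\Delta_n$. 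Collecting everything and using $\Sigma_v\Sigma_w\iso\Sigma_{v\circ w}$, $\Pi_q\Pi_p\iso\Pi_{q\circ p}$, and $\Delta_n\Delta_{s\circ m}\iso\Delta_{s\circ m\circ n}$, the composite becomes
$$\church{Q'}\circ\church{Q}\iso\Sigma_{v\circ w}\Pi_{q\circ p}\Delta_{s\circ m\circ n},$$
which is exactly $\church{Q''}$ for the query $Q''$ read off the outer boundary of the diagram. Since $v$ and $w$ are both discrete op-fibrations, so is $v\circ w$, confirming $Q''$ is $\Sigma$-restricted; finiteness of all the auxiliary signatures follows from the preservation lemmas for pullbacks and comma categories.

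The main obstacle I expect is not any single computation but verifying that the discrete op-fibration hypotheses propagate correctly so that the distributivity step $(ii)$ is legal: the distributive law genuinely fails for non-op-fibrations (Example \ref{ex:why dist needs dopf}), so the whole scheme hinges on the fact that pulling back the op-fibration $t$ keeps $k$ an op-fibration (Proposition \ref{prop:dop and pullback}) and that the distributivity construction returns an op-fibration $w$. Keeping careful track of which of the four auxiliary squares is a pullback, a comma, or a distributivity diagram, and in which order they must be built so that the domains and codomains of the maps line up, is the part that requires the most care.
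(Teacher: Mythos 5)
Your proposal is correct and follows essentially the same route as the paper's proof: the pullback square $(i)$ resolved by Proposition \ref{prop:pb beck sigma} (with $k$ a discrete op-fibration by Proposition \ref{prop:dop and pullback}), the distributivity square $(ii)$ via Proposition \ref{prop:distributive}, and the two comma squares handled by Corollary \ref{cor:BC Pi}, assembling to $\Sigma_{v\circ w}\Pi_{q\circ p}\Delta_{s\circ m\circ n}$ exactly as in the paper. The only divergence is cosmetic: you write the comma categories as $(h\down f)$ and $(e\down r)$ where the paper writes $(f\down h)$ and $(r\down e)$, and your orientation is in fact the one dictated by Corollary \ref{cor:BC Pi} as stated.
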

 
\begin{proof}

We follow the proof in \cite{GK}, as we have been throughout this section. We will form $Q''$ by constructing the following diagram, which we will explain step by step:
\begin{align}\label{dia:query comp}
\xymatrix{
&&&N\ar[rr]^p\ar[dl]_n\ar@{}[dr]|{(iv)}&&D'\ar[r]^q\ar[dl]^e\ar@{}[ddr]|{(ii)}&M\ar[dd]^w\\
&&B'\ar[rr]^r\ar[dl]_m\ar@{}[dr]|{(iii)}&&A'\ar[rd]^k\ar[ld]_h\ar@{}[dd]|{(i)}&&&\\
&B\ar[dl]_s\ar[rr]^f&&A\ar[dr]_t&&D\ar[dl]^u\ar[r]^g&C\ar[dr]^v&\\
S&&&&T&&&U}
\end{align}
First, form (i) by taking the pullback; note that $k$ is a discrete op-fibration by Proposition \ref{prop:dop and pullback}. Second, form (ii) as a distributivity diagram (see Proposition \ref{prop:distributive}) and note that $w$ is a discrete op-fibration. Third, form (iii) and (iv) with $B'=(f\down h)$ and $N=(r\down e)$. 

By Proposition \ref{prop:pb beck sigma} we have $\Delta_u\Sigma_t\iso\Sigma_k\Delta_h$. By Proposition \ref{prop:distributive} we have $\Pi_g\Sigma_k\iso\Sigma_w\Pi_q\Delta_e$. By Corollary \ref{cor:BC Pi} we have both $\Delta_h\Pi_f\iso\Pi_r\Delta_m$ and $\Delta_e\Pi_r\iso\Pi_p\Delta_n$. Pulling this all together, we have an isomorphism
\begin{align*}
\Sigma_v\Pi_g\Delta_u\Sigma_t\Pi_f\Delta_s
\iso\Sigma_v\Pi_g\Sigma_k\Delta_h\Pi_f\Delta_s
&\iso\Sigma_v\Sigma_w\Pi_q\Delta_e\Delta_h\Pi_f\Delta_s\\
&\iso\Sigma_v\Sigma_w\Pi_q\Delta_e\Pi_r\Delta_m\Delta_s\\
&\iso\Sigma_v\Sigma_w\Pi_q\Pi_p\Delta_n\Delta_m\Delta_s,
\end{align*}
which proves that $\church{Q''}\iso\church{Q'}\circ\church{Q}$, where $Q''$ is the data migration functor given by the triple of morphisms $Q':=(s\circ m\circ n, q\circ p, v\circ w)$, i.e. $\church{Q}=\Sigma_{vw}\Pi_{qp}\Delta_{smn}.$ This completes the proof.

\end{proof}

\begin{corollary}\label{cor:query comp for Delta-restricted}

Suppose that $Q\taking S\queryto T$ and $Q'\taking T\queryto U$ are queries as in (\ref{dia:composable queries}) and that both are $(\Delta,\Sigma)$-restricted. Then there exists a $(\Delta,\Sigma)$-restricted data migration query $Q''\taking S\queryto U$ such that $\church{Q''}\iso\church{Q'}\circ\church{Q}$.

\end{corollary}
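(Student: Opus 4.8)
The plan is to recognize Corollary \ref{cor:query comp for Delta-restricted} as the degenerate, $\Pi$-free case of Theorem \ref{thm:query comp} and to give the short direct argument that results when the distributivity and comma-category steps of that proof collapse. By $(\Delta,\Sigma)$-restricted I understand a query that uses no $\Pi$, i.e.\ whose middle morphism is an identity; concretely $\church{Q}\iso\Sigma_t\Delta_s$ and $\church{Q'}\iso\Sigma_v\Delta_u$, where in diagram (\ref{dia:composable queries}) we have set $f=\id_A$ (so $B=A$ and $s\taking A\to S$) and $g=\id$ (so $D=C$, $u\taking C\to T$, $v\taking C\to U$), and where $t$ and $v$ are discrete op-fibrations by $\Sigma$-restrictedness. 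Under these identifications the steps (ii), (iii), (iv) of diagram (\ref{dia:query comp}) become trivial, so the only substantive construction is the pullback (i), and I would simply carry out that single step directly.

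Concretely, the composite is $\church{Q'}\circ\church{Q}\iso\Sigma_v\Delta_u\Sigma_t\Delta_s$, so everything reduces to commuting $\Delta_u$ past $\Sigma_t$. First I would form the pullback of $A\To{t}T\From{u}C$, obtaining $A'$ together with legs $h\taking A'\to A$ and $k\taking A'\to C$. Since $t$ is a discrete op-fibration, Proposition \ref{prop:dop and pullback} guarantees that $k$ is a discrete op-fibration as well. Because $t$ is a discrete op-fibration, Proposition \ref{prop:pb beck sigma} applies and yields the exactness isomorphism $\Delta_u\Sigma_t\To{\iso}\Sigma_k\Delta_h$. Substituting and using that adjoints compose ($\Sigma_v\Sigma_k\iso\Sigma_{v\circ k}$ and $\Delta_h\Delta_s=\Delta_{s\circ h}$) gives $\church{Q'}\circ\church{Q}\iso\Sigma_{v\circ k}\Delta_{s\circ h}$. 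I would then set $Q''$ to be the query $S\From{s\circ h}A'\To{\id}A'\To{v\circ k}U$, so that $\church{Q''}=\Sigma_{v\circ k}\Delta_{s\circ h}\iso\church{Q'}\circ\church{Q}$ by construction.

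It remains to certify that $Q''$ is a legitimate $(\Delta,\Sigma)$-restricted FQL query. Its middle morphism is $\id$, so it is manifestly $\Pi$-free; its schemas are finite because $A'$, being a fiber product of finite signatures, is finite (by the lemma that acyclicity and finiteness are preserved under fiber products); and the attribute mapping of $v\circ k$ is union compatible since it is a composite of union-compatible maps. The one point that genuinely needs checking is that the $\Sigma$-leg $v\circ k$ is again a discrete op-fibration, which I expect to be the main (though mild) obstacle: one lifts a given arrow first along $v$ and then along $k$, with uniqueness at each stage yielding uniqueness of the composite lift, so that the composite of discrete op-fibrations is a discrete op-fibration. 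Granting this, $Q''$ is $\Sigma$-restricted and $\Pi$-free, hence $(\Delta,\Sigma)$-restricted, completing the argument; the harder distributive-law machinery of Theorem \ref{thm:query comp} is never invoked precisely because the $\Pi$-components vanish.
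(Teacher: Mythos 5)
Your proof is internally coherent, but it proves the wrong statement, because you have misread what ``$(\Delta,\Sigma)$-restricted'' means in this paper: it does not mean $\Pi$-free. As Definition \ref{def:morphism of dmqs} and Lemma \ref{lemma:cartesian over poly is poly} make clear, a $(\Delta,\Sigma)$-restricted query is a full triple $S\From{s}B\To{f}A\To{t}T$ with arbitrary middle ($\Pi$-)leg $f$, in which \emph{both} outer legs $s$ and $t$ are required to be discrete op-fibrations --- i.e.\ the $\Delta$-component, in addition to the $\Sigma$-component, is restricted (``$\Sigma$-restricted,'' as in Theorem \ref{thm:query comp}, restricts only $t$). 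So the composite you must handle is the full $\Sigma_v\Pi_g\Delta_u\Sigma_t\Pi_f\Delta_s$, and none of the steps $(ii)$, $(iii)$, $(iv)$ of diagram (\ref{dia:query comp}) collapse. Your argument --- pullback $(i)$, Proposition \ref{prop:pb beck sigma}, closure of discrete op-fibrations under composition --- is fine as far as it goes, but it only treats the degenerate case $f=\id$, $g=\id$, and therefore never engages the actual content of the corollary.

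That content is twofold. First, one must still run the entire proof of Theorem \ref{thm:query comp}, including the distributivity diagram $(ii)$. Second --- and this is the corollary's only real delta from the theorem --- one must arrange the construction so that the composite $\Delta$-leg $s\circ m\circ n$ is again a discrete op-fibration. The paper achieves this by forming $(iii)$ and $(iv)$ as \emph{pullbacks} rather than comma categories: since $u$ is a discrete op-fibration, so is $h$ (Proposition \ref{prop:dop and pullback}), hence so are $m$ and $n$, and the Beck-Chevalley isomorphisms for these squares are then supplied by Corollary \ref{cor:pb bc DP dop} (pullback Beck-Chevalley for $\Delta,\Pi$ along a discrete op-fibration) in place of the comma-category Corollary \ref{cor:BC Pi} used in the theorem. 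Your proposal contains no trace of this substitution. Note also that even within your own degenerate case, your hypotheses are off: you invoke only ``$t$ and $v$ are discrete op-fibrations by $\Sigma$-restrictedness,'' leaving $s$ and $u$ unconstrained, whereas under the correct definition $s$ and $u$ are also discrete op-fibrations and your $Q''$ would need $s\circ h$ to be one (which follows from $s$ and $u$ being such, since $h$ is a pullback of $u$) in order to be $(\Delta,\Sigma)$-restricted at all.
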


\begin{proof}

We form a diagram similar to (\ref{dia:query comp}), as follows. First, form (i) by taking the pullback; note that $k$ is a discrete op-fibration by Proposition \ref{prop:dop and pullback}. Second, form (ii) as a distributivity diagram (see Proposition \ref{prop:distributive}) and note that $w$ is a discrete op-fibration. Third, form (iii) and (iv) as pullbacks. Note that $h$, $m$, and $n$ will be discrete opfibrations. Following the proof of Theorem \ref{thm:query comp}, the result follows by Corollary \ref{cor:pb bc DP dop} in place of Corollary \ref{cor:BC Pi}. 

\end{proof}

\subsection{Syntactic characterization of query morphism and query equivalence}

We continue to closely follow \cite{GK}. 

\begin{definition}[Morphism of data migration queries]

A {\em morphism of $(\Delta,\Sigma)$-restricted data migration queries from $P=(u,g,v)$ to $Q=(s,f,t)$}, denoted $(c,a,\beta)\taking P\to Q$, consists of two functors $c,a$ and a natural transformation $\beta$, fitting into a diagram of the form
\begin{align}\label{dia:morphism of polys}
\xymatrix@=35pt{
P:&S\ar@{=}[dd]&D\ar@[blue][l]_u\ar[r]^g&C\ar@{=}[d]\ar@[blue][r]^v&T\ar@{=}[dd]\\
&&B\cross_AC\ar@{}[ur]|{\beta\Uparrow}\ar[u]^a\ar[r]^-{f'}\ar[d]_b\ullimit&C\ar@[blue][d]^c&\\
Q:&S&B\ar@[blue][l]^s\ar[r]_f&A\ar@[blue][r]_t&T
}
\end{align}
where blue arrows $(u,v,s,t,c)$ are required to be discrete op-fibrations and every square commutes except for the top middle square in which $\beta\taking f'\to ga$ is a natural transformation. 

\end{definition}

Note that for any diagram of the form (\ref{dia:morphism of polys}), the maps $a$ and $b$ will automatically be discrete op-fibrations too.

\begin{remark}

Note that a morphism of queries $P\to Q$ involves a ``backward" morphisms of schemas, $a\taking B\cross_AC\to D$ as in (\ref{dia:morphism of polys}). In fact, this reversal happens in the conjunctive fragment only (see Lemma \ref{lemma:crucial}). This may not be surprising because $a$ is a morphism of indexing categories, or ``bound variables", and thus the directionality is in line with classical database theory. 

\end{remark}

\begin{definition}[Composition of data migration queries]

Suppose given data migration queries $P,Q,R\taking S\queryto T$ and morphisms $P\to Q$ and $Q\to R$ as in Definition \ref{dia:morphism of polys}. Then they can be composed to give a query morphism $P\to R$, as follows.

We begin the following diagram 
$$
\xymatrix{
S\ar@{=}[dd]&F\ar[l]_w\ar[r]^h\ar@{}[dr]|{\Uparrow}&E\ar[r]^x\ar@{=}[d]&T\ar@{=}[dd]\\
&D\cross_CE\ar[u]\ar[r]\ar[d]\ullimit&E\ar@[blue][d]^e\\
S\ar@{=}[dd]&D\ar@{}[dr]|{\Uparrow}\ar[r]^g\ar[l]_u&C\ar@{=}[d]\ar[r]^v&T\ar@{=}[dd]\\
&B\cross_AC\ar[r]^-{f'}\ar[d]_b\ar[u]^a\ullimit&C\ar@[blue][d]^c\\
S&B\ar[r]_f\ar[l]^s&A\ar[r]_t&T
}
$$
By Proposition \ref{prop:dmf under nt} there is a canonical map $f'\cross_ce\to (ga)\cross_Ce$, and this allows us to construct a composite 
$$
\xymatrix{
S\ar@{=}[dd]&F\ar[l]_w\ar[r]^h\ar@{}[dr]|{\Uparrow}&E\ar[r]^x\ar@{=}[d]&T\ar@{=}[dd]\\
&B\cross_AE\ar[r]\ar[d]\ar[u]\ullimit&E\ar@[blue][d]^{ce}\\
S&B\ar[r]_f\ar[l]^s&A\ar[r]_t&T
}
$$

\end{definition}

\begin{remark}

Note that this composition is associative {\em up to isomorphism}, but not on the nose. Thus we cannot speak of the category of data migration queries $P\queryto Q$, but only the $(\infty,1)$-category of such.

\end{remark}

The following definition is slightly abbreviated, i.e. not fully spelled out, but hopefully it is clear to anyone who is following so far.

\begin{definition}[Category of data migration queries]\label{def:cat of dmq}

We define the {\em category of $(\Delta,\Sigma)$-restricted data migration queries from $S$ to $T$}, denoted $\RQry(S,T)$, to be the category whose objects are $(\Delta,\Sigma)$-restricted data migration queries $S\queryto T$ and whose morphisms are equivalence classes of diagrams as in (\ref{dia:morphism of polys}), where two such diagrams are equivalent if there are equivalences of categories between corresponding objects in their middle rows and the functors commute appropriately.

\end{definition}

\begin{lemma}

Given a morphism of $(\Delta,\Sigma)$-restricted data migration queries $P\to Q$, there is an induced natural transformation $\church{P}\to\church{Q}$.

\end{lemma}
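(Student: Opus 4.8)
The plan is to transport $\church{P}=\Sigma_v\Pi_g\Delta_u$ to $\church{Q}=\Sigma_t\Pi_f\Delta_s$ along a chain of mostly invertible steps, inserting the single genuinely non-invertible step that carries the information of $\beta$. First I would read off from diagram (\ref{dia:morphism of polys}) the four relations the morphism data satisfy: $v=t\circ c$, the commuting left square $u\circ a=s\circ b$, the pullback identity $f\circ b=c\circ f'$, and the $2$-cell $\beta\taking f'\Rightarrow g\circ a$. Throughout I would use the pseudofunctoriality relations $\Sigma_{tc}\iso\Sigma_t\Sigma_c$, $\Delta_{ua}=\Delta_a\Delta_u$, $\Delta_{sb}=\Delta_b\Delta_s$, and $\Pi_{ga}\iso\Pi_g\Pi_a$, all of which follow from $\Delta$ being strictly functorial and passing to adjoints.

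Using $v=tc$ I would rewrite $\church{P}\iso\Sigma_t\Sigma_c\Pi_g\Delta_u$, reducing the goal to constructing a natural transformation $\Sigma_c\Pi_g\Delta_u\to\Pi_f\Delta_s$ and then whiskering by $\Sigma_t$. Since $\Sigma_c\dashv\Delta_c$, such a map is the same datum as a transformation $\Pi_g\Delta_u\to\Delta_c\Pi_f\Delta_s$, and this is where the pullback square together with the hypothesis that $c$ is a discrete op-fibration enters. Corollary \ref{cor:pb bc DP dop} applied to the pullback of $f$ and $c$ gives $\Delta_c\Pi_f\iso\Pi_{f'}\Delta_b$, whence, using the commuting square $sb=ua$,
$$\Delta_c\Pi_f\Delta_s\iso\Pi_{f'}\Delta_b\Delta_s=\Pi_{f'}\Delta_{sb}=\Pi_{f'}\Delta_{ua}=\Pi_{f'}\Delta_a\Delta_u.$$
Thus it suffices to produce $\Pi_g\Delta_u\to\Pi_{f'}\Delta_a\Delta_u$, for which it is enough to build $\Pi_g\to\Pi_{f'}\Delta_a$ and whisker by $\Delta_u$.

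The remaining map $\Pi_g\to\Pi_{f'}\Delta_a$ is the only non-isomorphism in the chain, and it is where $\beta$ does its work. By Proposition \ref{prop:dmf under nt} the $2$-cell $\beta\taking f'\Rightarrow ga$ induces $\Pi_\beta\taking\Pi_{ga}\to\Pi_{f'}$, i.e. $\Pi_\beta\taking\Pi_g\Pi_a\to\Pi_{f'}$. Composing the unit $\Pi_g\eta_a\taking\Pi_g\to\Pi_g\Pi_a\Delta_a$ of the adjunction $\Delta_a\dashv\Pi_a$ with $\Pi_\beta\Delta_a\taking\Pi_g\Pi_a\Delta_a\to\Pi_{f'}\Delta_a$ yields the desired $\Pi_g\to\Pi_{f'}\Delta_a$. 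Reassembling, I would whisker by $\Delta_u$, transport across the displayed isomorphisms, take the adjoint transpose under $\Sigma_c\dashv\Delta_c$, and finally whisker by $\Sigma_t$; together with $\church{P}\iso\Sigma_t\Sigma_c\Pi_g\Delta_u$ this defines $\church{P}\to\church{Q}$.

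I expect the main obstacle to be bookkeeping rather than any deep idea: one must keep the variances straight (that $\Pi$ reverses the direction of an induced transformation while $\Delta$ preserves it, per Proposition \ref{prop:dmf under nt}) and correctly match the pullback square to the hypotheses of Corollary \ref{cor:pb bc DP dop}, which forces us to use that $c$, and not $f$, is the discrete op-fibration. Naturality in the input instance is automatic, since every constituent of the construction---the unit, $\Pi_\beta$, the Beck--Chevalley isomorphism, and the adjoint transpose---is already a natural transformation of functors, so all pastings and transposes are as well; hence no instancewise verification is needed beyond confirming that the composites are well-typed.
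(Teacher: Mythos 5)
Your proposal is correct and is essentially the paper's own proof: the paper's displayed chain is exactly your composite --- rewrite $\Sigma_v=\Sigma_t\Sigma_c$, insert the unit $\eta_a$ of $\Delta_a\dashv\Pi_a$, apply $\Pi_\beta\taking\Pi_g\Pi_a\to\Pi_{f'}$ from Proposition \ref{prop:dmf under nt}, use $u\circ a=s\circ b$ to rewrite $\Delta_a\Delta_u=\Delta_b\Delta_s$, and apply the pullback Beck--Chevalley isomorphism $\Pi_{f'}\Delta_b\iso\Delta_c\Pi_f$ of Corollary \ref{cor:pb bc DP dop} (valid because $c$ is the discrete op-fibration). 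The only difference is packaging: you take the adjoint transpose under $\Sigma_c\dashv\Delta_c$ and whisker by $\Sigma_t$, while the paper writes the counit $\epsilon_c$ explicitly at the end of the chain, and these are the same construction.
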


\begin{proof}

By Proposition \ref{prop:dmf under nt} and Corollary \ref{cor:pb bc DP dop} we have the natural natural transformations below:

\begin{align}\label{dia:poly morphism to nt}
\Sigma_v\Pi_g\Delta_u=\Sigma_t\Sigma_c\Pi_g\Delta_u\To{\eta_a}\;\;&\Sigma_t\Sigma_c\Pi_g\Pi_a\Delta_a\Delta_u\\
\nonumber\To{\;\beta\;}\;\;&\Sigma_t\Sigma_c\Pi_{f'}\Delta_a\Delta_u\\
\nonumber=\;\;\;&\Sigma_t\Sigma_c\Pi_{f'}\Delta_b\Delta_s\\
\nonumber\iso\;\;\;&\Sigma_t\Sigma_c\Delta_c\Pi_f\Delta_s\To{e_c}\Sigma_t\Pi_f\Delta_s
\end{align}

\end{proof}

\begin{lemma}\label{lemma:cartesian over poly is poly}

Let $Q\taking S\queryto T$ be a $(\Delta,\Sigma)$-restricted data migration query, and let $P\taking S\inst\to T\inst$ be any functor. If there exists a Cartesian natural transformation $\phi\taking P\to\church{Q}$, then there is an essentially unique $(\Delta,\Sigma)$-restricted data migration query isomorphic to $P$.

\end{lemma}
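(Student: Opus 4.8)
The plan is to reconstruct the polynomial presenting $P$ by pulling back, along $\phi$ evaluated at the terminal object, the polynomial $(s,f,t)$ presenting $\church{Q}$; the Cartesian hypothesis is exactly what guarantees that this pullback recovers $P$ on the nose. Write $\church{Q}=\Sigma_t\Pi_f\Delta_s$ for $S\From{s}B\To{f}A\To{t}T$ with $s,t$ discrete op-fibrations, and let $\singleton^S$ denote the terminal object of $S\inst$. The single structural fact driving everything is this: for a Cartesian $\phi\taking P\to R$, the naturality square at the unique map $X\to\singleton^S$ exhibits $P(X)\iso R(X)\cross_{R(\singleton^S)}P(\singleton^S)$, naturally in $X$. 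Hence a functor equipped with a Cartesian transformation to $R$ is determined, up to canonical isomorphism, by $R$ together with the single morphism $\phi_{\singleton^S}\taking P(\singleton^S)\to R(\singleton^S)$.

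Next I would build $Q'$. Set $A':=\int_T P(\singleton^S)$ with its canonical projection $t'\taking A'\to T$, a discrete op-fibration by Lemma \ref{lemma:grothendieck is dop}. Since $\Delta_s$ and $\Pi_f$ preserve the terminal object, $\church{Q}(\singleton^S)=\Sigma_t\Pi_f\Delta_s(\singleton^S)=\Sigma_t(\singleton^A)=\partial t$; applying $\int$ to $\phi_{\singleton^S}$ and using that $\int$ and $\partial$ are inverse equivalences (Proposition \ref{prop:on dopf}) produces a functor $\pi\taking A'\to A$ over $T$, which is a discrete op-fibration by Proposition \ref{prop:functor between dopfs is dopf}. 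Form the pullback $B':=A'\cross_A B$ with projections $a\taking B'\to B$ and $f'\taking B'\to A'$, and set $s':=s\circ a$, $t'=t\circ\pi$. Then $Q':=(s',f',t')$ is $(\Delta,\Sigma)$-restricted: $a$ is a discrete op-fibration by Proposition \ref{prop:dop and pullback} (it is a pullback of the discrete op-fibration $\pi$), so $s'=s\circ a$ and $t'=t\circ\pi$ are composites of discrete op-fibrations and hence themselves discrete op-fibrations.

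Then I would compute the presented functor. Using $\Sigma_{t\circ\pi}\iso\Sigma_t\Sigma_\pi$, $\Delta_{s\circ a}=\Delta_a\Delta_s$, and the pullback Beck--Chevalley isomorphism $\Pi_{f'}\Delta_a\iso\Delta_\pi\Pi_f$ of Corollary \ref{cor:pb bc DP dop} (valid because $\pi$ is a discrete op-fibration), I obtain
$$\church{Q'}=\Sigma_{t'}\Pi_{f'}\Delta_{s'}\iso\Sigma_t\Sigma_\pi\Delta_\pi\Pi_f\Delta_s.$$
The counit $\epsilon\taking\Sigma_\pi\Delta_\pi\to\id$ then supplies a natural transformation $\theta\taking\church{Q'}\to\Sigma_t\Pi_f\Delta_s=\church{Q}$. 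The crucial claim is that $\theta$ is Cartesian: $\epsilon$ is Cartesian by Lemma \ref{lemma:cartesian}, precomposition with $\Pi_f\Delta_s$ preserves Cartesianness, and postcomposition with $\Sigma_t$ preserves it because $t$ is a discrete op-fibration, so $\Sigma_t$ is computed by fiberwise disjoint unions and therefore preserves pullbacks (positivity of $\Set$, exactly as in the proof of Lemma \ref{lemma:cartesian}). Evaluating at $\singleton^S$ gives $\church{Q'}(\singleton^S)\iso\partial t'\iso P(\singleton^S)$ over $\church{Q}(\singleton^S)$, with $\theta_{\singleton^S}$ corresponding to $\phi_{\singleton^S}$ under this identification since $\pi=\int\phi_{\singleton^S}$. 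Applying the structural fact of the first paragraph to both Cartesian transformations $\phi$ and $\theta$, whose sources agree at $\singleton^S$ over $\church{Q}$, yields $P(X)\iso\church{Q}(X)\cross_{\church{Q}(\singleton^S)}P(\singleton^S)\iso\church{Q'}(X)$ naturally, i.e.\ $\church{Q'}\iso P$.

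Essential uniqueness would follow from the canonicity of this reconstruction: if $\church{Q''}\iso P$ for another $(\Delta,\Sigma)$-restricted query $Q''$, the isomorphism is in particular Cartesian, so the value of $Q''$ at $\singleton^S$ and its leg toward $T$ are forced to agree with $A'\to T$ up to equivalence, and the remaining legs are forced by the pullback above, which is precisely the equivalence relation of Definition \ref{def:cat of dmq}. I expect the main obstacle to be the verification that $\theta$ is Cartesian --- specifically that $\Sigma_t$ preserves the relevant pullbacks --- together with the bookkeeping needed to make the isomorphism $P(X)\iso\church{Q}(X)\cross_{\church{Q}(\singleton^S)}P(\singleton^S)$ natural in $X$, so that the final comparison $P\iso\church{Q'}$ is an isomorphism of functors rather than a merely objectwise bijection.
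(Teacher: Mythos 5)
Your proposal is correct and follows essentially the same route as the paper's own proof: you take $t'=\int_T P(\singleton^S)$ with the comparison functor $\pi=\int_T\phi_{\singleton^S}$ (the paper's $c$), pull back along $f$ to get the middle leg, build the transformation $\church{Q'}\to\church{Q}$ from the pullback Beck--Chevalley isomorphism of Corollary \ref{cor:pb bc DP dop} followed by the counit of $(\Sigma_\pi,\Delta_\pi)$, and conclude exactly as the paper does, from the fact that two Cartesian transformations over $\church{Q}$ agreeing at the terminal object have isomorphic sources. Your explicit check that $\Sigma_t$ preserves the relevant pullbacks (fiberwise coproducts and positivity of $\Set$) usefully spells out a detail the paper compresses into its citation of Lemma \ref{lemma:cartesian}, but it is the same argument.
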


\begin{proof}

Suppose that $Q$ is represented by the bottom row in the diagram below
\begin{align}\label{dia:cartesian map of polys}
\xymatrix{Q':&S\ar@{=}[d]&B\cross_AC\ar[l]_{s'}\ar[r]^{f'}\ar[d]_b\ullimit&C\ar[r]^{t'}\ar[d]_c&T\ar@{=}[d]\\
Q:&S&B\ar[l]^s\ar[r]_f&A\ar[r]_t&T}
\end{align}
We form the rest of the diagram as follows. Let $\ast$ be the terminal object in $S\inst$, so in particular $\int_TQ(\ast)\iso t$. Let $t'=\int_TP(\ast)$ and define $c\taking t'\to t$ over $T$ to be $\int_T\phi(\ast)$. Let $b$ be the pullback of $c$ and $s'=s\circ b$. The Diagram (\ref{dia:cartesian map of polys}) now formed, let $Q'\taking S\queryto T$ be the top row. Note that $c$, and therefore $b$ and $s$, are discrete op-fibrations, so in particular $Q'$ is a $(\Delta,\Sigma)$-restricted data migration query. Note also that there was essentially no choice in the definition of $Q'$.

The map $\church{Q'}\to\church{Q}$ is given by 
$$
\Sigma_{t'}\Pi_{f'}\Delta_{s'}=\Sigma_{t}\Sigma_{c}\Pi_{f'}\Delta_b\Delta_s\To{\iso}\Sigma_t\Sigma_c\Delta_c\Pi_f\Delta_s\To{\eta_c}\Sigma_t\Pi_f\Delta_s,
$$
so it is Cartesian by Lemma \ref{lemma:cartesian}. But if both $P$ and $Q'$ are Cartesian over $Q$ and if they agree on the terminal object (as they do by construction), then they are isomorphic. Thus $P\iso Q'$ is a $(\Delta,\Sigma)$-restricted data migration query. 

\end{proof}

\begin{lemma}[Yoneda Functorialization]\label{lemma:yoneda}

Let $C$ be a category and let $b\taking C\to\Set$ be a functor with $s\taking B\to C$ its Grothendieck category of elements. Then there is a natural isomorphism of functors 
$$\Hom_{C\set}(b,-)\iso\Pi_t\Pi_s\Delta_s$$ 
where $t\taking I\to\vect{0}$ is the unique functor. Moreover, for any $b'\taking C\to\Set$ with $s'\taking B'\to C$ its category of elements, there is a bijection 
$$\Hom_{\Cat/C}(B,B')\iso\Hom_{C\set}(b,b')\iso\Hom(\Pi_t\Pi_{s'}\Delta_{s'},\Pi_t\Pi_s\Delta_s).$$

\end{lemma}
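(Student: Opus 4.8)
The plan is to establish the displayed isomorphism first, and then to read off both bijections of the ``Moreover'' clause from it together with the Yoneda lemma. I would begin by simplifying the right-hand side. Composition of pullback functors gives $\Delta_{t\circ s}=\Delta_s\Delta_t$, so passing to right adjoints yields $\Pi_{t\circ s}\iso\Pi_t\Pi_s$. Writing $t_B\taking B\to\vect{0}$ for the unique functor $t\circ s$, this identifies $\Pi_t\Pi_s\Delta_s$ with $\Pi_{t_B}\Delta_s$, and by Proposition \ref{prop:limits are pushforwards} the functor $\Pi_{t_B}$ is the limit over $B$. Hence for any instance $X\taking C\to\Set$ we have $\Pi_t\Pi_s\Delta_s(X)\iso\lim_B(X\circ s)$. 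The heart of the matter is then the bijection $\Hom_{C\set}(b,X)\iso\lim_B(X\circ s)$, which I would check directly from the definition of the Grothendieck category of elements. An element of $\lim_B(X\circ s)$ is a family $(\xi_{(c,x)})$ with $\xi_{(c,x)}\in X(c)$ indexed by the objects $(c,x)$ of $B=\int_C b$, satisfying $X(f)(\xi_{(c,x)})=\xi_{(c',x')}$ for every edge $(c,x)\to(c',x')$, i.e. every $f\taking c\to c'$ with $b(f)(x)=x'$. A natural transformation $\alpha\taking b\to X$ is exactly an assignment $x\mapsto\alpha_c(x)\in X(c)$ for each $x\in b(c)$ whose naturality condition reads $X(f)(\alpha_c(x))=\alpha_{c'}(b(f)(x))$; setting $\xi_{(c,x)}:=\alpha_c(x)$ matches the two descriptions term by term. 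Naturality in $X$ is immediate, since post-composition with a map $X\to X'$ acts identically on both sides, so we obtain the natural isomorphism $\Hom_{C\set}(b,-)\iso\Pi_t\Pi_s\Delta_s$ of functors $C\set\to\Set$.

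For the ``Moreover'' clause, the first bijection $\Hom_{\Cat/C}(B,B')\iso\Hom_{C\set}(b,b')$ is a restatement of Proposition \ref{prop:on dopf}. Both $s$ and $s'$ are discrete op-fibrations by Lemma \ref{lemma:grothendieck is dop}, so they are objects of the full subcategory $Dopf_C\ss\Cat_{/C}$; fullness gives $\Hom_{\Cat/C}(B,B')=\Hom_{Dopf_C}(\int b,\int b')$, and the equivalence $\int\taking C\inst\to Dopf_C$ then identifies this with $\Hom_{C\set}(b,b')$. For the second bijection I would feed the first isomorphism into Yoneda: applying it to both $b$ and $b'$ exhibits $\Pi_t\Pi_s\Delta_s$ and $\Pi_t\Pi_{s'}\Delta_{s'}$ as the representable functors $\Hom_{C\set}(b,-)$ and $\Hom_{C\set}(b',-)$ on $C\set$. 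Since the covariant Yoneda embedding $Z\mapsto\Hom_{C\set}(Z,-)$ is fully faithful and contravariant, natural transformations $\Pi_t\Pi_{s'}\Delta_{s'}\to\Pi_t\Pi_s\Delta_s$ correspond bijectively to morphisms $b\to b'$ in $C\set$, which is the claimed bijection.

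The step I expect to require the most care is bookkeeping rather than conceptual difficulty. First, the variance in the closing Yoneda argument must be tracked exactly, since natural transformations between representables run opposite to the maps of their representing objects, and this is precisely what makes the two outer $\Hom$-sets in the statement agree. Second, it is essential that the first part produce a genuine isomorphism of functors on $C\set$ and not merely a pointwise bijection, since otherwise it could not be plugged into Yoneda; thus the naturality-in-$X$ verification, though routine, is load-bearing. The composition identity $\Pi_t\Pi_s\iso\Pi_{t\circ s}$ and the term-by-term matching of compatible families with natural transformations are both standard.
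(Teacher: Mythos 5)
Your proof is correct, but it takes a genuinely different route to the main isomorphism than the paper does. The paper's proof never touches elements: it writes $\Pi_t\Pi_s\Delta_s(-)\iso\Hom_\Set(\ast,\Pi_t\Pi_s\Delta_s(-))$, shuttles the terminal object backwards across the $(\Delta,\Pi)$ adjunctions to obtain $\Hom_{B\set}(\ast^B,\Delta_s(-))$, applies the $(\Sigma_s,\Delta_s)$ adjunction to get $\Hom_{C\set}(\Sigma_s(\ast^B),-)$, and concludes with the deGrothendieckification identity $\Sigma_s(\ast^B)=\partial s\iso b$ coming from Proposition \ref{prop:on dopf}. You instead compose right adjoints to reduce the right-hand side to $\Pi_{t\circ s}\Delta_s$, invoke Proposition \ref{prop:limits are pushforwards} to rewrite it as $\lim_B(X\circ s)$, and then verify by hand the classical bijection between natural transformations $b\to X$ and compatible families indexed by the objects of $\int_C b$ --- in effect a direct proof of the limit-over-category-of-elements formula. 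Each approach buys something: in the paper's chain every step is an adjunction isomorphism, so naturality in $X$ comes for free, whereas in your argument the naturality-in-$X$ check is, as you correctly flag, load-bearing (though routine); conversely, your argument is more elementary, makes the content of the isomorphism visible, and needs neither $\Sigma_s$ nor the $\partial/\int$ machinery for the first claim. For the ``Moreover'' clause the two proofs coincide in substance --- the paper says it ``follows by Proposition \ref{prop:on dopf} and the usual Yoneda-style argument'' --- and your version simply spells that out: fullness of $Dopf_C\ss\Cat_{/C}$ together with Lemma \ref{lemma:grothendieck is dop} gives the first bijection, and the fully faithful, contravariant Yoneda embedding gives the second, with the reversal of direction (maps $\Pi_t\Pi_{s'}\Delta_{s'}\to\Pi_t\Pi_s\Delta_s$ corresponding to maps $b\to b'$) tracked exactly as the statement requires.
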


\begin{proof}

The second claim follows from the first by Proposition \ref{prop:on dopf} and the usual Yoneda-style argument. For the first, we have \begin{align*}
\Pi_u\Pi_s\Delta_s(-)\iso\Hom_\Set(\ast,\Pi_u\Pi_s\Delta_s(-))\iso\Hom_{B\set}(\ast^B,\Delta_s(-))&\iso\Hom_{C\set}(\Sigma_s(\ast^B),-)\\&\iso\Hom_{C\set}(b,-).
\end{align*}

\end{proof}

\begin{lemma}\label{lemma:crucial}

Suppose given the diagram to the left, where $u$ and $u'$ are discrete op-fibrations:
$$
\xymatrix@=30pt{
S\ar@{=}[d]&D\ar[l]_u\ar[r]^g&C\ar@{=}[d]\\
S&X\ar[l]^{u'}\ar[r]_{g'}&C
}
\hspace{1in}
\xymatrix@=30pt{
S\ar@{=}[d]&D\ar[l]_u\ar[r]^g\ar@{}[dr]|{\beta\Uparrow}&C\ar@{=}[d]\\
S&X\ar[u]^a\ar[l]^{u'}\ar[r]_{g'}&C
}
$$
and a natural transformation $j\taking\Pi_g\Delta_u\to\Pi_{g'}\Delta_{u'}$. Then there exists an essentially unique diagram as to the right, such that $j$ is the composition (see Proposition \ref{prop:dmf under nt}), 
\begin{align}\label{dia:composite for j}
\Pi_g\Delta_u\To{\eta_a}\Pi_g\Pi_a\Delta_a\Delta_u\To{\beta}\Pi_{g'}\Delta_a\Delta_u\iso\Pi_{g'}\Delta_{u'}.
\end{align}

\end{lemma}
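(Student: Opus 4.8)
The plan is to pin down the natural transformation $j$ one object of $C$ at a time and to recognize the resulting data, via Yoneda, as exactly a pair $(a,\beta)$. Fix $c\in\Ob(C)$ and write $q_c\taking(c\down g)\to D$ and $t_c\taking(c\down g)\to\vect{0}$ for the comma-category projection and the terminal functor. By Construction \ref{const:Pi} together with Proposition \ref{prop:limits are pushforwards}, the functor $\Pi_g\Delta_u(-)(c)\taking S\inst\to\Set$ is $I\mapsto\lim_{(c\down g)}(I\circ u\circ q_c)=\Pi_{t_c}\Delta_{uq_c}(I)$. The key point is representability. The projection $q_c$ is a discrete op-fibration (a direct check, cf. the example following Corollary \ref{cor:fibers of discrete op-fibrations}), so $uq_c$ is a discrete op-fibration as a composite of such (immediate from Definition \ref{def:discrete op-fibration}); hence, using $\lim_B\iso\Hom_{B\inst}(\ast^B,-)$ and the adjunction $\Sigma_{uq_c}\dashv\Delta_{uq_c}$ (Remark \ref{rmk:generalized sigma}), I obtain
$$\Pi_g\Delta_u(I)(c)\iso\Hom_{S\inst}\big(\Sigma_{uq_c}\ast^{(c\down g)},\,I\big)=\Hom_{S\inst}(b_c,\,I),\qquad b_c:=\partial(uq_c),$$
the last equality being the formula $\partial\pi=\Sigma_\pi(\ast)$ for deGrothendieckification. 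Concretely $b_c(s)=\{(d,\gamma\taking c\to gd)\mid u(d)=s\}$, and symmetrically $\Pi_{g'}\Delta_{u'}(I)(c)\iso\Hom_{S\inst}(b'_c,I)$ with $b'_c:=\partial(u'q'_c)$ and $b'_c(s)=\{(x,\delta\taking c\to g'x)\mid u'(x)=s\}$.

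Next I would apply the ordinary Yoneda lemma. By representability the component $j_c\taking\Hom(b_c,-)\to\Hom(b'_c,-)$ is induced by a unique morphism of $S$-instances $\theta_c\taking b'_c\to b_c$; thus $\theta_c(s)$ sends $(x,\delta\taking c\to g'x)$ with $u'x=s$ to some $(d,\gamma\taking c\to gd)$ with $ud=s$. The family $\{\theta_c\}_{c\in\Ob(C)}$ is natural in $c$ precisely because each component of $j$ is a morphism of $C$-instances: a map $\gamma_0\taking c\to c_1$ induces the reindexing $b_{c_1}\to b_c$, $(d,\gamma)\mapsto(d,\gamma\circ\gamma_0)$ (and likewise for $b'$), and the $\theta$'s commute with these.

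To recover $(a,\beta)$ I would evaluate $\theta$ on ``identity'' elements. For $x\in\Ob(X)$ take $c:=g'(x)$ and the element $(x,\id_{g'x})\in b'_{g'x}(u'x)$; writing its image $\theta_{g'x}(x,\id)=(d,\gamma\taking g'x\to gd)$ with $ud=u'x$, I set $a(x):=d$ and $\beta_x:=\gamma\taking g'x\to g(ax)$. Naturality of $\{\theta_c\}$ forces $a$ to be a functor with $ua=u'$ and $\beta\taking g'\to ga$ to be a natural transformation, and it forces the general value $\theta_c(x,\delta)=(ax,\,\beta_x\circ\delta)$. This last formula is exactly the morphism of instances induced, under the representability isomorphisms above, by the prescribed composite
$$\Pi_g\Delta_u\To{\eta_a}\Pi_g\Pi_a\Delta_a\Delta_u\To{\Pi_\beta}\Pi_{g'}\Delta_a\Delta_u\iso\Pi_{g'}\Delta_{u'},$$
where $\eta_a$ is the unit of $\Delta_a\dashv\Pi_a$ and $\Pi_\beta$ comes from Proposition \ref{prop:dmf under nt}. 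This simultaneously yields existence (the composite reproduces $j$) and essential uniqueness: any $(a,\beta)$ realizing $j$ must, by Yoneda, induce the same $\theta_c$, hence agree with the recovered pair up to the canonical isomorphism supplied by the equivalence $Dopf_S\simeq S\inst$ of Proposition \ref{prop:on dopf}.

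The main obstacle is the reassembly in the final step: translating the naturality of the single transformation $j$ (in both the instance variable and in $c\in C$) into the two separate assertions that $a$ is functorial over $S$ and that $\beta$ is a genuine natural transformation, and then verifying that the explicit formula $\theta_c(x,\delta)=(ax,\beta_x\circ\delta)$ is precisely the one produced by $\eta_a$ followed by $\Pi_\beta$ through the chain of representability isomorphisms. Everything else is bookkeeping assembled from results already in hand (Construction \ref{const:Pi}, Propositions \ref{prop:limits are pushforwards}, \ref{prop:on dopf}, and \ref{prop:dmf under nt}, and Remark \ref{rmk:generalized sigma}); the genuinely delicate point is keeping the two distinct kinds of naturality cleanly separated.
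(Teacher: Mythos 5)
Your proposal is correct and follows essentially the same route as the paper's proof: you evaluate pointwise at each $c\in\Ob(C)$ via the comma-category limit formula for $\Pi$, invoke representability by the deGrothendieckifications $b_c=\partial(uq_c)$ (which is exactly the content of Lemma \ref{lemma:yoneda}, ``Yoneda Functorialization''), obtain maps $\theta_c$ natural in $c$, and recover $(a,\beta)$ by evaluating at the identity elements $(x,\id_{g'x})$ --- precisely the paper's construction. The only cosmetic difference is that you phrase the Yoneda step as instance morphisms $b'_c\to b_c$ while the paper passes through functors $(c\down g')\to(c\down g)$ over $S$, which are identified under the equivalence of Proposition \ref{prop:on dopf}.
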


\begin{proof}

Choose an object $c\taking\vect{0}\to C$ in $C$. Form the diagram 
$$
\xymatrix{
&D\ar[ld]_u\ar[dr]^g&&(c\down g)\ar[ll]_p\ar[rd]^q\\
S&&C&&\vect{0}\ar[ll]_c\\
&X\ar[ul]^{u'}\ar[ur]_{g'}&&(c\down g')\ar[ll]^{p'}\ar[ur]_{q'}}
$$
Let $s=u\circ p$ and $s'=u'\circ p'$, and let $t\taking S\to\vect{0}$ denote the unique functor. It is easy to show that $p$ and $p'$, and hence $s$ and $s'$ are discrete op-fibrations. We have a natural transformation 
\begin{align*}
\Pi_t\Pi_s\Delta_s=\Pi_q\Delta_s=\Pi_q\Delta_p\Delta_u\;\iso\;&\Delta_c\Pi_g\Delta_u\\
\To{j}&\Delta_c\Pi_{g'}\Delta_{u'}\iso\Pi_{q'}\Delta_{p'}\Delta_{u'}=\Pi_{q'}\Delta_{s'}=\Pi_t\Pi_{s'}\Delta_{s'}.
\end{align*}
By Lemma \ref{lemma:yoneda} this is equivalent to giving a map $j_c\taking(c\down g')\to (c\down g)$ over $S$. Moreover, given a map $f\taking c_0\to c_1$ in $C$, the natural transformation $\Delta_f\taking\Delta_{c_0}\to\Delta_{c_1}$ (see Proposition \ref{prop:dmf under nt}) induces a commutative diagram of functors $S\to\Set$, as to the left
$$
\xymatrix{\Delta_{c_1}\Pi_{g'}\Delta_{u'}&\Delta_{c_1}\Pi_g\Delta_u\ar[l]\\\Delta_{c_0}\Pi_{g'}\Delta_{u'}\ar[u]&\Delta_{c_0}\Pi_g\Delta_u\ar[u]\ar[l]}
\hspace{1in}
\xymatrix{(c_1\down g')\ar[r]^{j_{c_1}}\ar[d]_f&(c_1\down g)\ar[d]^f\\(c_0\down g')\ar[r]_{j_{c_0}}&(c_0\down g)}
$$
which induces a commutative diagram of categories over $S$ as to the right. 

We are now in a position to construct a unique functor $a\taking X\to D$ with $u'=u\circ a$ and natural transformation $\beta\taking g'\to g\circ a$ agreeing with each $j_c$. Given an object $x\in\Ob(X)$, we apply the map $j_{g'x}\taking (g'x\down g')\to(g'x\down g)$ to $(x,\id_{g'x})$ to get some $(d,g'x\To{b} gd)$. We assign $a(x):=d$ and $\beta_x:=b$, and note that $u'(x)=u(d)$ as above. A similar argument applies to morphisms, so $a$ and $\beta$ are constructed. To see that $j$ is the composite given in (\ref{dia:composite for j}), one checks it on objects and arrows in $C$ using the formula from Construction \ref{const:Pi} in conjunction with Proposition \ref{prop:limits are pushforwards}.

\end{proof}

\begin{theorem}\label{thm:syntactic query morphism}

Suppose that $P,Q\taking S\queryto T$ are $(\Delta,\Sigma)$-restricted data migration queries, and $X\taking\church{P}\to\church{Q}$ is a natural transformation of functors. Then there exists an essentially unique morphism of $(\Delta,\Sigma)$-restricted data migration queries $x\taking P\to Q$ (i.e. a diagram of the form (\ref{dia:morphism of polys})) such that $\church{x}\iso X$. In other words, the functor $\RQry(S,T)\to\Fun(S,T)$ is fully faithful.

\end{theorem}

\begin{proof}

Let $*$ denote the terminal object in $S\inst$. We define a functor $\ol{P}\taking S\inst\to T\inst$ as follows. For $I\taking S\to\Set$, define $\ol{P}(I)$ as the fiber product
$$
\xymatrix{\ol{P}(I)\ar[r]\ar[d]\ullimit&\church{Q}(I)\ar[d]\\\church{P}(*)\ar[r]_{X(\ast)}&\church{Q}(*).}
$$
There is a induced natural transformation $i\taking\church{P}\to\ol{P}$ and an induced Cartesian natural transformation $X'\taking \ol{P}\to\church{Q}$, such that $X=X'\circ i$. 

Let $v\taking C\to T$ and $t\taking A\to T$ denote the category of elements $v:=\int_T\church{P}(*)=\int_T\ol{P}(*)$ and $t:=\int_T\church{Q}(*)$ respectively, and let $c:=\int_TX(*)\taking v\to t$ over $T$. If $P$ and $Q$ are the top and bottom rows in the diagram below, then Lemma \ref{lemma:cartesian over poly is poly} and in particular Diagram (\ref{dia:cartesian map of polys}) implies that there is an essentially unique way to fill out the middle row, and its maps down to the bottom row, as follows:
$$\xymatrix{
P:&S\ar@{=}[d]&D\ar[l]_u\ar[r]^g&C\ar@{=}[d]\ar[r]^v&T\ar@{=}[d]\\
P':&S\ar@{=}[d]&B\cross_AC\ar[l]_{s'}\ar[r]^{f'}\ar[d]_b\ullimit&C\ar[d]^c\ar[r]^{v}&T\ar@{=}[d]\\
Q:&S&B\ar[l]^s\ar[r]_f&A\ar[r]_t&T
}
$$
with $s'=s\circ b$ and $\church{P'}=\ol{P}$. 

Lemma \ref{lemma:cartesian} implies that $i\taking\Sigma_v\Pi_g\Delta_u\to\Sigma_v\Pi_{f'}\Delta_{s'}$ induces a natural transformation $j\taking\Pi_g\Delta_u\to\Pi_{f'}\Delta_{s'}$ with $\Sigma_vj=i$. The result follows by Lemma \ref{lemma:crucial}.

\end{proof}

\section{Typed signatures}

\subsection{Definitions and data migration}

\begin{definition}

A {\em typed signature} $\ol{\mcC}$ is a sequence $\ol{\mcC}:=(\mcC,\mcC_0,i,\Gamma)$ where $\mcC$ is a signature, $\mcC_0$ is a discrete category,
\footnote{In fact, one does not need to assume that $\mcC_0$ is a discrete category for the following results to hold. Still, it is conceptually simpler. To get a feeling for what could be done if $\mcC_0$ were not discrete, consider the possibility of a table having two data columns, an attribute $A$ of type string and an attribute $B$ of type integer, where the integer in $B$ was the {\em length} of the string in $A$.}
$i\taking\mcC_0\to\mcC$ is a functor, and $\Gamma\taking\mcC_0\to\Set$ is a functor,
$$\xymatrix@=25pt{
\mcC_0\ar[rr]^i\ar[dr]_\Gamma&&\mcC\\
&\Set
}
$$
The signature $\mcC$ is called the {\em structure part} of $\ol{\mcC}$, and the rest is called the {\em typing setup}.

Suppose $(\mcC',\mcC_0,'i',\Gamma')$ is another typed signature. A {\em typed signature morphism from $\ol{\mcC}$ to $\ol{\mcC'}$}, denoted $\ol{F}=(F,F_0)\taking\ol{\mcC}\to\ol{\mcC'}$, consists of a functor $F\taking\mcC\to\mcC'$ and a functor $F_0\taking\mcC_0\to\mcC'_0$ such that the following diagram commutes:
$$\xymatrix@=25pt{
\mcC_0\ar[rr]^i\ar[dr]_\Gamma\ar[dd]_{F_0}&&\mcC\ar[dd]^F\\
&\Set\\
\mcC_0'\ar[ur]^{\Gamma'}\ar[rr]_{i'}&&\mcC'
}
$$
The category of typed signatures is denoted $\TSig$.

A {\em $\ol{\mcC}$-instance} $\ol{I}$ is a pair $\ol{I}:=(I,\delta)$ where $I\taking\mcC\to\Set$ is a functor, called the {\em structure part of $\ol{I}$} together with a natural transformation $\delta\taking I\circ i\to\Gamma$, called the {\em data part of $\ol{I}$}.
$$\xymatrix@=25pt{
\mcC_0\ar[rr]^i\ar[dr]_\Gamma&\ar@{}[d]|(.4){\stackrel{\delta}{\Leftarrow}}&\mcC\ar[dl]^I\\
&\Set
}
$$
Suppose $\ol{I'}:=(I',\delta')$ is another $\ol{\mcC}$-instance. An {\em instance morphism from $\ol{I}$ to $\ol{I'}$}, denoted $\alpha\taking\ol{I}\to\ol{I'}$, is a natural transformation $\alpha\taking I\to I'$ such that $\delta'\circ\alpha=\delta$. 
$$\xymatrix@=25pt{
\mcC_0\ar[rr]^i\ar[dr]_\Gamma&\ar@{}[d]|(.4){\stackrel{\delta'}{\Leftarrow}}&\mcC\ar@/^.5pc/[dl]^{I}\ar@{}[dl]|{\stackrel{\alpha}{\Leftarrow}}\ar@/_.5pc/[dl]_{I'}\\
&\Set
}
$$
The category of $\ol{\mcC}$-instances is denoted $\ol{\mcC}\inst$.

\end{definition}

\begin{proposition}

Let $\ol{\mcC}$ be a typed signature. Then the category $\ol{\mcC}\inst$ is a topos.

\end{proposition}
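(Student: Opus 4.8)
The plan is to exhibit $\ol{\mcC}\inst$ as a slice of an honest presheaf topos and then invoke the fundamental theorem of topos theory. The starting observation is that an object of $\ol{\mcC}\inst$ is a pair $(I,\delta)$ with $I\taking\church{\mcC}\to\Set$ a functor and $\delta\taking\Delta_i(I)\to\Gamma$ a morphism in $\mcC_0\inst$, where $\Delta_i\taking\mcC\inst\to\mcC_0\inst$ is composition with $i$ (recall $\Delta_i(I)=I\circ i$, so $\delta$ is exactly the data part in the definition of a typed instance); a morphism $(I,\delta)\to(I',\delta')$ is a map $\alpha\taking I\to I'$ satisfying $\delta'\circ\Delta_i(\alpha)=\delta$. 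In other words, $\ol{\mcC}\inst$ is precisely the comma category $(\Delta_i\down\Gamma)$, with $\Gamma$ regarded as a functor from the terminal category.

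The key maneuver is to transpose along the adjunction $\Delta_i\dashv\Pi_i$. The right pushforward $\Pi_i\taking\mcC_0\inst\to\mcC\inst$ exists as the right adjoint to $\Delta_i$ (Remark \ref{rmk:generalized sigma}); concretely it is right Kan extension along $i$, which exists because $\Set$ is complete and $\church{\mcC}$ is small. Since the adjunction bijection $\Hom_{\mcC_0\inst}(\Delta_i I,\Gamma)\iso\Hom_{\mcC\inst}(I,\Pi_i\Gamma)$ is natural in $I$, a datum $\delta\taking\Delta_i I\to\Gamma$ corresponds bijectively to a map $\tilde\delta\taking I\to\Pi_i\Gamma$, and the commutation condition $\delta'\circ\Delta_i(\alpha)=\delta$ transposes to $\tilde{\delta'}\circ\alpha=\tilde\delta$. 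Hence the transpose assembles into an isomorphism of categories
$$\ol{\mcC}\inst\;\iso\;\mcC\inst\big/\Pi_i\Gamma,$$
the slice of $\mcC\inst$ over the single fixed object $\Pi_i\Gamma$.

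It then remains to combine two standard facts. First, $\mcC\inst=\Set^{\church{\mcC}}$ is a topos: because $\mcC$ is a finite presentation, $\church{\mcC}$ is a small category, so the category of $\Set$-valued functors on it is a Grothendieck (presheaf) topos. Second, by the fundamental theorem of topos theory \cite{Johnstone:MR1953060}, every slice $\mcE/X$ of a topos $\mcE$ is again a topos. Applying this with $\mcE=\mcC\inst$ and $X=\Pi_i\Gamma$ yields the conclusion.

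The substance of the argument is concentrated in the middle step, so the main thing to get right is the transposition: I would verify carefully that the adjunction bijection is natural enough to carry morphisms of typed instances to morphisms in the slice (and back), so that one genuinely obtains an \emph{isomorphism of categories} and not merely a bijection on objects. Everything else is bookkeeping — confirming that $\Pi_i$ exists and that $\church{\mcC}$ is small, both immediate here, and quoting the slice theorem. One point worth flagging is that discreteness of $\mcC_0$ is never used in this argument; it only simplifies the description of $\Gamma$ and $\Pi_i\Gamma$. Thus the same proof establishes that $\ol{\mcC}\inst$ is a topos for the more general typed signatures contemplated in the footnote to the definition.
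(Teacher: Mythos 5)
Your proof is correct and is precisely the paper's argument: the paper's one-line proof asserts that $\ol{\mcC}\inst$ is equivalent to the slice topos $\mcC\set/\Pi_i\Gamma$, and you have simply supplied the transposition along $\Delta_i\dashv\Pi_i$ that justifies this equivalence before quoting the standard facts that presheaf categories are toposes and slices of toposes are toposes. Your closing observation that discreteness of $\mcC_0$ is never used also matches the paper's own footnote to the definition of typed signature.
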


\begin{proof}

$\ol{\mcC}\inst$ is equivalent to the slice topos $\mcC\set/\Pi_i\Gamma$.

\end{proof}

\begin{construction}

Let $\ol{\mcC}:=(\mcC,\mcC_0,i,\Gamma)$ be a typed signature. We set up the tables for it as follows. For every arrow in $\mcC$ we make a binary table; we call these {\em arrow tables}. For every object $c\in\Ob(\mcC)$, let $N=i^\m1(c)\ss\mcC_0$. We make an $1+|N|$ column table, where $|N|$ is the cardinality of $N$; we call these {\em node tables}. For each node table, one column is the primary key column and the other $N$ columns are called {\em data columns}. The {\em data type} for each $n\in N$ is the set $\Gamma(n)$.

Let $(I,\delta)$ be an instance of $\ol{\mcC}$. For each object $c\in\Ob(\mcC)$ we fill in the primary key column of the node table with the set $I(c)\in\Ob(\Set)$. For each data column $n\in N$ we have a function $\delta_n\taking I(c)\to\Gamma(n)$, which we use to fill in the data in that column. For every arrow $f\taking c\to c'$ in $\mcC$ we fill in the primary key column of the arrow table with $I(c)$, and we fill in the other column with the function $I(f)\taking I(c)\to I(c')$.

\end{construction}

\begin{example}

If $\mcC_0=\emptyset$ then for any category presentation $\mcC$ there is a unique functor $i\taking\mcC_0\to\mcC$ and a unique functor $\Gamma\taking\mcC_0\to\Set$, so there is a unique schema $\ol{\mcC}=(\mcC,\emptyset,i,\Gamma)$ with structure $\mcC$ and empty domain setup. 

Given a functor $I\taking\mcC\to\Set$, there is a unique instance $\ol{I}:=(I,!)\in\ol{\mcC}\inst$ with that structure part. The data part of $\ol{I}$ is empty.

\end{example}

\begin{example}

Let $\mcC=\fbox{$\bullet^X$}$ be a terminal category, and let $\mcC_0=\{\tn{First},\tn{Last}\}$. There is a unique $i\taking\mcC_0\to\mcC$. Let 
$$\Gamma(\tn{First})=\Gamma(\tn{Last})=Strings.$$
Thus we have our schema $\ol{\mcC}$.

An instance on $\ol{\mcC}$ consists of a functor $I\taking\mcC\to\Set$ and a natural transformation $\delta\taking I\circ i\to\Gamma$. Let $I(X)=\{1,2\}$, let $\delta_{\tn{First}}(1)=\tt{David}$, let $\delta_{\tn{First}}(2)=\tt{Ryan}$, let $\delta_{\tn{Last}}(1)=\tt{Spivak}$, let $\delta_{\tn{Last}}(2)=\tt{Wisnesky}.$ We display this as 
$$\begin{tabular}{| l || l | l |}
\bhline
\multicolumn{3}{|c|}{X}\\\bhline
{\bf ID}&{\bf First}&{\bf Last}\\\bbhline
1&\tt{David}&\tt{Spivak}\\\hline
2&\tt{Ryan}&\tt{Wisnesky}\\
\bhline
\end{tabular}
$$

\end{example}

\begin{definition}\label{def:dedup}

Suppose given a typed signature $\ol{\mcC}$ and a typed instance $\ol{I}$,  
$$\xymatrix@=25pt{
\mcC_0\ar[rr]^i\ar[dr]_\Gamma&\ar@{}[d]|(.4){\stackrel{\delta}{\Leftarrow}}&\mcC\ar[dl]^I\\
&\Set
}
$$
We have a morphism $\delta\taking I\to\Pi_i\Gamma$. We say our instance $\ol{I}$ is {\em relational} if $\delta$ is a monomorphism. We define the {\em relationalization of $\ol{I}$} denoted $REL(\ol{I})$ to be the image $\tn{im}(\delta)\ss\Pi_i\Gamma$.

\end{definition}

\begin{proposition}\label{prop:typed Delta}

Let  $\ol{\mcC}=(\mcC,\mcC_0,i,\Gamma)$ and $\ol{\mcC'}=(\mcC',\mcC_0',i',\Gamma')$ be typed signatures, and let $\ol{F}=(F,F_0)\taking\ol{\mcC}\to\ol{\mcC'}$ be a typed signature morphism. Then the pullback functor $\Delta_F\taking\mcC'\inst\to\mcC\inst$ of untyped instances extends to a functor of typed instances 
$$\Delta_{\ol{F}}\taking\ol{\mcC'}\inst\to\ol{\mcC}\inst.$$

\end{proposition}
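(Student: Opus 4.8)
The plan is to extend $\Delta_F$ to typed instances by reusing the untyped pullback on structure parts and transporting the data part along $F_0$. Recall that the typed signature morphism $\ol{F}=(F,F_0)$ carries two commutativity conditions: the structure square $F\circ i=i'\circ F_0$ and the data triangle $\Gamma=\Gamma'\circ F_0$. Given a typed $\ol{\mcC'}$-instance $\ol{I'}=(I',\delta')$ with $\delta'\taking I'\circ i'\to\Gamma'$, I would set
$$\Delta_{\ol{F}}(\ol{I'}):=\big(\Delta_F(I'),\ \delta'\ast F_0\big),$$
where $\Delta_F(I')=I'\circ F$ is the untyped pullback of Construction \ref{const:Delta} and $\delta'\ast F_0$ is the whiskering of $\delta'$ by $F_0$.

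First I would verify that $\delta'\ast F_0$ has the correct type to be a data part for $\Delta_F(I')$, namely a natural transformation $\big(\Delta_F(I')\big)\circ i\to\Gamma$. At an object $n\in\mcC_0$ its component is $\delta'_{F_0(n)}\taking I'\big(i'(F_0(n))\big)\to\Gamma'\big(F_0(n)\big)$. The structure square gives $i'(F_0(n))=F(i(n))$, so the domain is $I'(F(i(n)))=\big(\Delta_F(I')\circ i\big)(n)$; the data triangle gives $\Gamma'(F_0(n))=\Gamma(n)$ for the codomain. Hence $\delta'\ast F_0\taking\Delta_F(I')\circ i\to\Gamma$, and $\Delta_{\ol{F}}(\ol{I'})$ is a genuine $\ol{\mcC}$-instance.

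Next I would define $\Delta_{\ol{F}}$ on morphisms. Given $\alpha\taking\ol{I'}\to\ol{J'}$, i.e. a natural transformation $\alpha\taking I'\to J'$ with $\eta'\circ(\alpha\ast i')=\delta'$ (writing $\ol{J'}=(J',\eta')$), I would put $\Delta_{\ol{F}}(\alpha):=\Delta_F(\alpha)=\alpha\ast F$. To check this is a morphism of typed $\ol{\mcC}$-instances I must show $(\eta'\ast F_0)\circ\big((\alpha\ast F)\ast i\big)=\delta'\ast F_0$. Componentwise at $n\in\mcC_0$ the left-hand side is $\eta'_{F_0(n)}\circ\alpha_{i'(F_0(n))}$, again using $F\circ i=i'\circ F_0$; by the compatibility of $\alpha$ evaluated at $F_0(n)\in\mcC_0'$ this equals $\delta'_{F_0(n)}=(\delta'\ast F_0)_n$, as required. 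Preservation of identities and composites is then inherited directly from the untyped functor $\Delta_F$.

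The entire argument is essentially bookkeeping: the two equalities built into the definition of a typed signature morphism are exactly what make the source and target of the whiskered transformation $\delta'\ast F_0$ come out right, and the same structure square governs the morphism-level check. I therefore do not expect a genuine obstacle; the only point requiring care is keeping the structure and data conditions straight and confirming the componentwise identity for morphisms once the type of $\delta'\ast F_0$ is settled.
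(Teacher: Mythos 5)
Your proposal is correct and takes essentially the same approach as the paper: the paper likewise sets the structure part to $\Delta_F(I')=I'\circ F$ and defines the data part as the composite $\Delta_i\Delta_F I'=\Delta_{F_0}\Delta_{i'}I'\To{\Delta_{F_0}\delta'}\Delta_{F_0}\Gamma'=\Gamma$, which is exactly your whiskering $\delta'\ast F_0$ made legitimate by the same two commutativity conditions ($F\circ i=i'\circ F_0$ and $\Gamma=\Gamma'\circ F_0$) you invoke. Your componentwise verification of the morphism-level condition is a point the paper leaves implicit, but it follows the same bookkeeping and introduces no new idea.
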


\begin{proof}

An object $(I',\delta')\in\Ob(\ol{\mcC'}\inst)$ is drawn to the left; simply compose with $F$ to get the diagram on the right
$$
\xymatrix@=25pt{
\mcC_0\ar[rr]^i\ar[dr]_\Gamma\ar[dd]_{F_0}&&\mcC\ar[dd]^F\\
&\Set\\
\mcC_0'\ar[ur]^{\Gamma'}\ar[rr]_{i'}&\ar@{}[u]|{\stackrel{\delta'}{\Leftarrow}}&\mcC'\ar[ul]_{I'}
}
\hsp
\xymatrix@=25pt{
\mcC_0\ar[rr]^i\ar[dr]_\Gamma\ar[dd]_{F_0}&&\mcC\ar[dd]^F\ar[dl]^{I'\circ F}\\
&\Set\\
\mcC_0'\ar[ur]^{\Gamma'}\ar[rr]_{i'}&\ar@{}[u]|(.4){\stackrel{\delta'}{\Leftarrow}}&\mcC'\ar[ul]_{I'}
}
$$
More formally, by Lemma \ref{lemma:comparison morphism} we have a natural transformation $\Delta_F\Pi_{i'}\to\Pi_i\Delta_{F_0}$, so we set $I=\Delta_FI'$ and we set $\delta'$ to be the composite
$$
\Delta_iI=\Delta_i\Delta_FI'=\Delta_{F_0}\Delta_{i'}I'\To{\Delta_{F_0}\delta'}\Delta_{F_0}\Gamma'=\Gamma
$$

\end{proof}

\begin{remark}\label{rem:interpretation of typed delta}

Suppose given the following diagram: 
\begin{align}\label{dia:typed delta helper}
\xymatrix@=25pt{
\mcC_0\ar[rr]^i\ar[dr]_\Gamma\ar[dd]_{F_0}&&\mcC\ar[dd]^F\\
&\Set\\
\mcC_0'\ar[ur]^{\Gamma'}\ar[rr]_{i'}&&\mcC'
}
\end{align}

A $\ol{\mcC'}$-instance is a functor $\int\Pi_{i'}\Gamma'\to\Set$, whereas a $\ol{\mcC}$ instance is a functor $\int\Pi_{i}\Gamma\to\Set$. We can reinterpret the typed-$\Delta$ functor using the following commutative diagram 
$$\xymatrix{
\int\Pi_i\Gamma\ar[dr]&\int\Delta_F\Pi_{i'}\Gamma'\ar[l]_q\ar[d]\ar[r]^r\ullimit&\int\Pi_{i'}\Gamma'\ar[d]\\
&\mcC\ar[r]_F&\mcC'
}
$$
The functor $\Delta_{\ol{F}}$ is given by the composition $\Sigma_q\Delta_r$.

Note that if (the exterior square of) Diagram \ref{dia:typed delta helper} is a pullback square, then $q$ is an isomorphism.

\end{remark}

\begin{definition}

Let  $\ol{\mcC}=(\mcC,\mcC_0,i,\Gamma)$ and $\ol{\mcC'}=(\mcC',\mcC_0,i',\Gamma)$ be typed signatures, and let $\ol{F}=(F,F_0)\taking\ol{\mcC}\to\ol{\mcC'}$ be a typed signature morphism. We say that $\ol{F}$ is {\em $\Pi$-ready} if $F_0=\id_{\mcC_0}$.

\end{definition}

\begin{proposition}\label{prop:typed pi}

Let  $\ol{\mcC}=(\mcC,\mcC_0,i,\Gamma)$ and $\ol{\mcC'}=(\mcC',\mcC_0,i',\Gamma)$ be typed signatures, and let $\ol{F}=(F,F_0)\taking\ol{\mcC}\to\ol{\mcC'}$ be a typed signature morphism that is $\Pi$-ready. Then the right pushforward functor $\Pi_F\taking\mcC\inst\to\mcC'\inst$ of untyped instances extends to a functor of typed instances 
$$\Pi_{\ol{F}}\taking\ol{\mcC}\inst\to\ol{\mcC'}\inst.$$

\end{proposition}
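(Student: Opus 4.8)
The plan is to exploit the identification used in the topos proposition: a typed instance $(I,\delta)$ is the same thing as an object of the slice category $\mcC\set/\Pi_i\Gamma$. Indeed, by the adjunction $\Delta_i\dashv\Pi_i$ the data part $\delta\taking I\circ i=\Delta_iI\to\Gamma$ transposes to a morphism $\hat\delta\taking I\to\Pi_i\Gamma$ in $\mcC\set$, and an instance morphism $\alpha\taking(I,\delta)\to(J,\delta^J)$ is precisely a morphism over $\Pi_i\Gamma$. Under this dictionary, defining $\Pi_{\ol F}$ reduces to producing a functor $\mcC\set/\Pi_i\Gamma\to\mcC'\set/\Pi_{i'}\Gamma$ whose action on structure parts is $\Pi_F$.

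First I would record what $\Pi$-readiness buys. The statement already fixes a common $\mcC_0$ and a common $\Gamma$ for the source and target, and $\Pi$-readiness adds $F_0=\id_{\mcC_0}$. Feeding this into the commuting square of the typed signature morphism gives $i'=F\circ i$ (since $F\circ i=i'\circ F_0=i'$), while the typing itself is unchanged. Next I would establish the key isomorphism $\Pi_F\Pi_i\iso\Pi_{i'}$: because $\Delta_{i'}K=K\circ F\circ i=\Delta_i\Delta_FK$, we have $\Delta_{i'}=\Delta_i\Delta_F$, and taking right adjoints of this composite (with uniqueness of adjoints) yields a canonical natural isomorphism $\Pi_F\Pi_i\To{\iso}\Pi_{i'}$.

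With these in hand the construction is immediate, since any functor induces a functor on slices over a chosen object. Thus $\Pi_F\taking\mcC\set\to\mcC'\set$ induces $\mcC\set/\Pi_i\Gamma\to\mcC'\set/\Pi_F\Pi_i\Gamma$, and post-composing with the isomorphism $\Pi_F\Pi_i\Gamma\iso\Pi_{i'}\Gamma$ lands in $\mcC'\set/\Pi_{i'}\Gamma\simeq\ol{\mcC'}\inst$. Concretely, $\Pi_{\ol F}(I,\delta)=(\Pi_FI,\delta')$, where $\delta'$ is the composite $\Pi_FI\circ i'=\Delta_i\Delta_F\Pi_FI\To{\Delta_i\epsilon_I}\Delta_iI=I\circ i\To{\delta}\Gamma$ with $\epsilon$ the counit of $\Delta_F\dashv\Pi_F$, and on morphisms $\Pi_{\ol F}(\alpha)=\Pi_F\alpha$. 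That $\Pi_F\alpha$ respects the data parts, and that the assignment preserves identities and composites, are then forced by naturality of $\epsilon$ (equivalently, they are automatic from the slice description), so no real computation remains.

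The only genuine content, and the thing to get exactly right, is the identification $\Pi_F\Pi_i\Gamma\iso\Pi_{i'}\Gamma$ and the observation that $\Pi$-readiness is precisely what makes this isomorphism connect the two correct typing objects. I expect the subtle point to be purely bookkeeping: tracking the direction of the adjoint isomorphism and confirming it is natural in $\Gamma$ so that the induced slice functor is well defined. By contrast, for a non-$\Pi$-ready morphism one would only obtain a comparison map $\Pi_F\Pi_i\Gamma\to\Pi_{i'}\Gamma'$ into a genuinely different typing, which need not be invertible; this is what explains the necessity of the hypothesis and why, unlike for $\Delta$, no such restriction-free extension is available.
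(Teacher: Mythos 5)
Your proposal is correct, and its computational core coincides exactly with the paper's proof: the paper defines $\delta'$ as the composite $\Delta_{i'}\Pi_FI=\Delta_i\Delta_F\Pi_FI\To{\Delta_i\epsilon_I}\Delta_iI\To{\delta}\Gamma$, which is precisely the formula you arrive at after unwinding your slice-category construction. Where you differ is in packaging: the paper simply writes down this composite on objects (leaving the action on morphisms and functoriality implicit), whereas you first transpose $\delta$ to $\hat\delta\taking I\to\Pi_i\Gamma$, identify $\ol{\mcC}\inst$ with the slice $\mcC\set/\Pi_i\Gamma$ (an identification the paper itself uses to prove $\ol{\mcC}\inst$ is a topos), derive $\Pi_F\Pi_i\iso\Pi_{i'}$ from $\Delta_{i'}=\Delta_i\Delta_F$ by uniqueness of right adjoints, and then invoke the induced functor on slices. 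This buys you functoriality and compatibility with instance morphisms for free, and it is close in spirit to the paper's own Remark \ref{rem:interpretation of typed pi}, which reinterprets $\Pi_{\ol{F}}$ via $\int\Pi_i\Gamma$ and $\int\Pi_F\Pi_i\Gamma$. One minor inaccuracy in your closing remark: for a general (non-$\Pi$-ready) typed signature morphism, with $\Gamma=\Delta_{F_0}\Gamma'$, the canonical comparison obtained from Lemma \ref{lemma:comparison morphism} runs $\Pi_{i'}\Gamma'\to\Pi_F\Pi_i\Gamma$ (transpose $\Delta_F\Pi_{i'}\to\Pi_i\Delta_{F_0}$ at $\Gamma'$), not $\Pi_F\Pi_i\Gamma\to\Pi_{i'}\Gamma'$ as you state; what genuinely fails without $\Pi$-readiness is the existence of a map in the direction you would need, so your diagnosis of why the hypothesis matters is right even though the stated direction is flipped. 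This does not affect the proof itself.
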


\begin{proof}

Given the diagram to the left, we form $\Pi_F(I)\taking\mcC'\to\Set$ to get the diagram on the right:
$$
\xymatrix@=25pt{
\mcC_0\ar[rr]^i\ar[dr]_\Gamma\ar@{=}[dd]&\ar@{}[d]|(.4){\stackrel{\delta}{\Leftarrow}}&\mcC\ar[dd]^F\ar[dl]^I\\
&\Set\\
\mcC_0\ar[ur]^{\Gamma}\ar[rr]_{i'}&&\mcC'
}
\hsp
\xymatrix@=25pt{
\mcC_0\ar[rr]^i\ar[dr]_\Gamma\ar@{=}[dd]&\ar@{}[d]|(.4){\stackrel{\delta}{\Leftarrow}}&\mcC\ar[dd]^F\ar[dl]^I\\
&\Set\ar@{}[r]|(.6){\epsilon\Uparrow}&\\
\mcC_0\ar[ur]^{\Gamma}\ar[rr]_{i'}&&\mcC'\ar[ul]_{\Pi_FI}
}
$$
The morphism $\epsilon\taking\Delta_F\Pi_FI\to I$ is the counit map. We set $\delta'\taking\Delta_{i'}\Pi_FI\to \Gamma$ to be the composite
$$\Delta_{i'}\Pi_FI=\Delta_i\Delta_F\Pi_FI\To{\epsilon}\Delta_iI\To{\delta}\Gamma.$$

\end{proof}

\begin{remark}\label{rem:interpretation of typed pi}

A $\ol{\mcC}$-instance is a functor $\int\Pi_i\Gamma\to\Set$, whereas a $\ol{\mcC'}$ instance is a functor $\int\Pi_{i'}\Gamma=\int\Pi_F\Pi_i\Gamma\to\Set$. We can reinterpret the typed-$\Pi$ functor using the following diagram (see Remark \ref{rem:interpretation of distributive law}). 
$$\xymatrix{
\int\Delta_F\Pi_F\Pi_i\Gamma\ullimit\ar[d]_e\ar[r]^g\ar[d]_e&\int\Pi_F\Pi_i\Gamma\ar[dd]^v\\
\int\Pi_i\Gamma\ar[d]_u&\\
\mcC\ar[r]_F&\mcC'
}
$$
The functor $\Pi_{\ol{F}}$ is given by 
$$\Pi_g\Delta_e\taking(\dispInt\Pi_i\Gamma)\set\to(\dispInt\Pi_F\Pi_i\Gamma)\set.$$

\end{remark}

\begin{definition}

Let  $\ol{\mcC}=(\mcC,\mcC_0,i,\Gamma)$ and $\ol{\mcC'}=(\mcC',\mcC_0,i',\Gamma)$ be typed signatures, and let $\ol{F}=(F,F_0)\taking\ol{\mcC}\to\ol{\mcC'}$ be a typed signature morphism. We say that $\ol{F}$ is {\em $\Sigma$-ready} if
\begin{itemize}
\item $p$ is a discrete opfibration, and 
\item $C_0=p^{\m1}(C_0')$, i.e. the following is a fiber product of categories,
$$\xymatrix{\mcC_0\ar[r]^i\ar[d]_{p_0}\ullimit&\mcC\ar[d]^p\\\mcC_0'\ar[r]_{i'}&\mcC'}$$.
\end{itemize}

\end{definition}

\begin{proposition}\label{prop:typed sigma}

Let  $\ol{\mcC}=(\mcC,\mcC_0,i,\Gamma)$ and $\ol{\mcC'}=(\mcC',\mcC_0',i',\Gamma')$ be typed signatures, and let $\ol{p}=(p,p_0)\taking\ol{\mcC}\to\ol{\mcC'}$ be a typed signature morphism that is $\Sigma$-ready. Then the left pushforward functor $\Sigma_p\taking\mcC\inst\to\mcC'\inst$ of untyped instances extends to a functor of typed instances 
$$\Sigma_{\ol{p}}\taking\ol{\mcC}\inst\to\ol{\mcC'}\inst.$$

\end{proposition}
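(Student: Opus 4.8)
The plan is to keep the structure part $\Sigma_p I$ unchanged and to manufacture the required data part $\delta'\taking\Delta_{i'}\Sigma_p I\to\Gamma'$ out of the given $\delta\taking\Delta_i I\to\Gamma$. Two preliminary observations drive everything. First, since $\ol{p}$ is $\Sigma$-ready, $p$ is a discrete op-fibration, so $\Sigma_p$ is the honest left adjoint of Construction \ref{const:Sigma}; moreover the defining square of $\Sigma$-readiness exhibits $\mcC_0$ as $\mcC_0'\cross_{\mcC'}\mcC$, so by Proposition \ref{prop:dop and pullback} the map $p_0$ is a discrete op-fibration as well and $\Sigma_{p_0}\dashv\Delta_{p_0}$. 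Second, the commutativity of the typed signature morphism gives $\Gamma=\Gamma'\circ p_0=\Delta_{p_0}\Gamma'$.

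The key step is to import the Beck--Chevalley isomorphism. The $\Sigma$-readiness square
$$\xymatrix{\mcC_0\ar[r]^i\ar[d]_{p_0}\ullimit&\mcC\ar[d]^p\\\mcC_0'\ar[r]_{i'}&\mcC'}$$
is exactly the situation of Proposition \ref{prop:pb beck sigma} (a pullback square in which the right-hand vertical map $p$ is a discrete op-fibration), reading $\mcC_0'$ for $D$, $\mcC'$ for $C$, $\mcC$ for $X$, and $\mcC_0$ for $Y$. That proposition therefore supplies a natural isomorphism $\Sigma_{p_0}\Delta_i\To{\iso}\Delta_{i'}\Sigma_p$ of functors $\mcC\inst\to\mcC_0'\inst$; call it $\theta$.

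I would then define $\delta'$ as the transpose of $\delta$ across $\theta$. Concretely, $\delta'$ is the composite
$$\delta'\taking\Delta_{i'}\Sigma_p I\To{\theta_I^{\m1}}\Sigma_{p_0}\Delta_i I\To{\Sigma_{p_0}\delta}\Sigma_{p_0}\Gamma=\Sigma_{p_0}\Delta_{p_0}\Gamma'\To{\epsilon}\Gamma',$$
where $\epsilon\taking\Sigma_{p_0}\Delta_{p_0}\to\id$ is the counit of the $(\Sigma_{p_0},\Delta_{p_0})$ adjunction; equivalently, $\delta'$ is the adjunct of $\delta\taking\Delta_i I\to\Delta_{p_0}\Gamma'$ under $\Sigma_{p_0}\dashv\Delta_{p_0}$, precomposed with $\theta_I^{\m1}$. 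This sets $\Sigma_{\ol{p}}(\ol{I}):=(\Sigma_p I,\delta')$.

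The remaining work, and the one place where care is needed, is functoriality: for an instance morphism $\alpha\taking\ol{I}\to\ol{J}$ --- a natural transformation $\alpha\taking I\to J$ with $\delta_J\circ\Delta_i\alpha=\delta_I$ --- I must check that $\Sigma_p\alpha$ is again an instance morphism, i.e. that $\delta'_J\circ\Delta_{i'}\Sigma_p\alpha=\delta'_I$. This is a diagram chase combining the naturality of $\theta$ in its instance argument with the naturality of the adjunction counit $\epsilon$; no new geometric input is required, so it is routine but must be done to confirm that $\Sigma_{\ol{p}}$ preserves identities, composition, and the data-compatibility condition. I do not expect any genuine obstruction here, precisely because every piece of $\delta'$ was built from natural transformations.
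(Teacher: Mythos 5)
Your proposal is correct and takes essentially the same route as the paper: the paper likewise applies the pullback Beck--Chevalley isomorphism of Proposition \ref{prop:pb beck sigma} to the $\Sigma$-readiness square and defines $\delta'$ as the identical composite $\Delta_{i'}\Sigma_pI\To{\iso}\Sigma_{p_0}\Delta_iI\To{\Sigma_{p_0}\delta}\Sigma_{p_0}\Delta_{p_0}\Gamma'\To{\epsilon}\Gamma'$. Your closing functoriality check is a routine naturality chase that the paper leaves implicit.
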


\begin{proof}

An instance $\delta\taking\Delta_iI\to\Gamma$ is drawn below
$$
\xymatrix@=25pt{
\mcC_0\ar[rr]^i\ar[dr]_\Gamma\ar[dd]_{p_0}&\ar@{}[d]|(.4){\stackrel{\delta}{\Leftarrow}}&\mcC\ar[dd]^p\ar[dl]^I\\
&\Set\\
\mcC_0'\ar[ur]^{\Gamma'}\ar[rr]_{i'}&&\mcC'
}
$$
Let $I'=\Sigma_pI$; we need a morphism $\Delta_{i'}I'\to\Gamma'$. Since $\Gamma\iso\Delta_{p_0}\Gamma'$, we indeed have by Proposition \ref{prop:pb beck sigma}
$$
\Delta_{i'}\Sigma_pI\To{\iso}\Sigma_{p_0}\Delta_iI\To{\Sigma_{p_0}\delta}\Sigma_{p_0}\Delta_{p_0}\Gamma'\To{\epsilon}\Gamma'.
$$
\end{proof}

\begin{remark}\label{rem:interpretation of typed sigma}

A $\ol{\mcC}$-instance is a functor $\int\Pi_i\Gamma\to\Set$, whereas a $\ol{\mcC'}$ instance is a functor $\int\Pi_{i'}\Gamma'\to\Set$. We can reinterpret the typed-$\Sigma$ functor using the following diagram.
\begin{align}\label{dia:typed sigma helper}
\xymatrix{
\int\Pi_i\Gamma\ullimit\ar[d]\ar[r]^{p'}&\int\Pi_{i'}\Gamma'\ar[d]\\
\mcC\ar[r]_p&\mcC'
}
\end{align}
The above diagram is a fiber product diagram because of the isomorphism (see Corollary \ref{cor:pb bc DP dop}):
$$
\Pi_i\Gamma=\Pi_i\Delta_{p_0}\Gamma'\iso\Delta_p\Pi_{i'}\Gamma'.
$$
At this point we can interpret our typed pushforward using the natural isomorphism
$$\Sigma_{\ol{p}}\iso\Sigma_{p'}.$$

\end{remark}

\begin{definition}\label{def:typed FQL query}

Let $\ol{C}=(S,S_0,i_S,\Gamma_S)$ and $\ol{T}=(T,T_0,i_T,\Gamma_T)$ be typed signatures. A {\em typed FQL query} $Q$ from $\ol{S}$ to $\ol{T}$, denoted $Q\taking\ol{S}\queryto\ol{T}$ is a triple of typed signature morphisms $(\ol{F},\ol{G},\ol{H})$:
$$
\ol{S}\From{\ol{F}}\ol{S'}\To{\ol{G}}\ol{S''}\To{\ol{H}}\ol{T}
$$
such that $\ol{G}$ is $\Pi$-ready and $\ol{H}$ is $\Sigma$-ready.

\end{definition}

\subsection{Typed query composition}
\begin{proposition}[Comparison morphism for typed $\Delta,\Pi$]\label{prop:typed comparison Delta Pi}

Suppose given the following diagram of typed signatures, 
\begin{align}\label{dia:compare delta pi}
\xymatrix{
D_0\ar@{..>}[ddrrr]^(.8){\Gamma_D}\ar[rrrr]^{F_0}\ar[rd]^{i_Y}\ar@{=}[ddd]&&&&C_0\ar@{..>}[ddl]^(.7){\Gamma_C}\ar@{=}'[d][ddd]\ar[dr]^{i_X}\\
&Y\ar[rrrr]^(.4){G}\ar[ddd]_{q}&&&&X\ar[ddd]^{p}\\
&&&\Set\\
D_0\ar@{..>}[urrr]^(.7){\Gamma_D}\ar'[r][rrrr]^{F_0}\ar[dr]_{i_D}&&&&C_0\ar@{..>}[ul]^(.6){\Gamma_C}\ar[dr]^{i_C}\\
&D\ar[rrrr]^{F}&&&&C
}
\end{align}
such that all diagrams commute, except that the front square for which we have a natural transformation $\alpha\taking Fq\to pG$:
$$
\xymatrix{Y\ar[r]^-G\ar[d]_q\ar@{}[dr]|{\stackrel{\alpha}{\Nearrow}}&X\ar[d]^p\\D\ar[r]_F&C
}$$
Then the comparison transformation for untyped instances $\Delta_{F}\Pi_{p}\to\Pi_{q}\Delta_{G}$, from Lemma \ref{lemma:comparison morphism}, extends to a {\em typed comparison morphism} of typed queries $\ol{X}\inst\to\ol{D}\inst$,
$$\Delta_{\ol{F}}\Pi_{\ol{p}}\to\Pi_{\ol{q}}\Delta_{\ol{G}}.$$

\end{proposition}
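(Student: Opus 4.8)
The plan is to produce the typed comparison morphism on structure parts using the untyped comparison morphism of Lemma~\ref{lemma:comparison morphism}, and then verify that this structure-level map is a genuine morphism of typed instances, i.e. that it intertwines the data parts. Recall that a morphism $(J,\delta_J)\to(J',\delta_{J'})$ of typed $\ol{D}$-instances is exactly a natural transformation $\phi\taking J\to J'$ of structure parts satisfying $\delta_{J'}\circ\Delta_{i_D}(\phi)=\delta_J$. So, fixing a typed $\ol{X}$-instance $(I,\delta_I)$ with $\delta_I\taking\Delta_{i_X}I\to\Gamma_C$, I let $\phi_I\taking\Delta_F\Pi_pI\to\Pi_q\Delta_GI$ be the untyped comparison morphism, and I must check the single compatibility equation $\delta_{\mathrm{RHS}}\circ\Delta_{i_D}(\phi_I)=\delta_{\mathrm{LHS}}$, where $\delta_{\mathrm{LHS}}$ and $\delta_{\mathrm{RHS}}$ are the data parts of $\Delta_{\ol F}\Pi_{\ol p}(I,\delta_I)$ and $\Pi_{\ol q}\Delta_{\ol G}(I,\delta_I)$.

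First I would compute the two data parts by unfolding Propositions~\ref{prop:typed pi} and~\ref{prop:typed Delta} in turn, using the strict commutativities of \eqref{dia:compare delta pi}: $p\circ i_X=i_C$, $q\circ i_Y=i_D$, $F\circ i_D=i_C\circ F_0$, $G\circ i_Y=i_X\circ F_0$, and $\Gamma_D=\Gamma_C\circ F_0$. These identities turn the various restrictions $\Delta_i$ into the corresponding restrictions along the attribute inclusions; after the dust settles, both $\delta_{\mathrm{LHS}}$ and $\delta_{\mathrm{RHS}}$ factor on the left through $\Delta_{F_0}(\delta_I)$. It therefore suffices to prove the equation
$$\Delta_{i_Y}(\epsilon_{q,\Delta_GI})\circ\Delta_{i_D}(\phi_I)=\Delta_{F_0}\Delta_{i_X}(\epsilon_{p,I})$$
of maps $\Delta_{i_D}\Delta_F\Pi_pI\to\Delta_{F_0}\Delta_{i_X}I$, where $\epsilon_p,\epsilon_q$ are the counits of the $(\Delta_p,\Pi_p)$ and $(\Delta_q,\Pi_q)$ adjunctions.

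The observation that makes this tractable is that the comparison morphism of Lemma~\ref{lemma:comparison morphism} is, by construction, the $\Delta_q\dashv\Pi_q$ adjunct of $\psi:=(\Delta_G\,\epsilon_p)\circ(\Delta_\alpha\,\Pi_p)\taking\Delta_q\Delta_F\Pi_p\to\Delta_G$. Hence $\epsilon_q\circ\Delta_q(\phi_I)=\psi_I$ by the triangle identity, and since $\Delta_{i_D}=\Delta_{i_Y}\Delta_q$ the left-hand side above rewrites as $\Delta_{i_Y}(\psi_I)=\Delta_{i_Y}(\Delta_G\epsilon_{p,I})\circ\Delta_{i_Y}\big((\Delta_\alpha)_{\Pi_pI}\big)$. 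Using $G\circ i_Y=i_X\circ F_0$, the first factor is exactly $\Delta_{F_0}\Delta_{i_X}(\epsilon_{p,I})$, so it only remains to see that $\Delta_{i_Y}(\Delta_\alpha)$ restricts to the identity. This is where I expect the real content to lie: by Proposition~\ref{prop:dmf under nt}, $\Delta_\alpha$ acts by $K\mapsto K(\alpha_{(-)})$, and on the image of $i_Y$ the component $\alpha_{i_Y(a)}$ is an endomorphism of $Fi_D(a)=i_CF_0(a)$; the commutativity hypotheses on \eqref{dia:compare delta pi} (the fact that $\alpha$ lives only on the front square while all attribute faces commute on the nose) force $\alpha\ast i_Y=\id$, whence $\Delta_{i_Y}(\Delta_\alpha)=\id$ and the equation holds.

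Finally I would note that naturality in $(I,\delta_I)$ is inherited from the naturality of the untyped $\phi$ together with the fact that typed-instance morphisms are detected on structure parts, so the objectwise maps assemble into the desired typed natural transformation $\Delta_{\ol F}\Pi_{\ol p}\to\Pi_{\ol q}\Delta_{\ol G}$. The main obstacle is the middle step: correctly matching the untyped comparison morphism (a threefold composite of two adjunction transformations and $\Delta_\alpha$) against the counit-based typed data parts, and in particular checking that $\alpha$ becomes trivial after restriction to the attribute categories. A cleaner but more setup-heavy alternative would be to pass to categories of elements via Remarks~\ref{rem:interpretation of typed delta} and~\ref{rem:interpretation of typed pi}, rewriting $\Delta_{\ol F}\Pi_{\ol p}$ and $\Pi_{\ol q}\Delta_{\ol G}$ as untyped composites and deducing the comparison morphism directly from Corollary~\ref{cor:pb bc DP dop} and Lemma~\ref{lemma:comparison morphism} on the relevant $\int\Pi_i\Gamma$.
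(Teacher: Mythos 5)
Your overall route is the paper's own: fix a typed instance $(I,\delta)$, unfold the two data parts via Propositions~\ref{prop:typed Delta} and~\ref{prop:typed pi} (your $\delta_{\mathrm{LHS}}$ and $\delta_{\mathrm{RHS}}$ are exactly the paper's displays (\ref{dia:delta-pi 1}) and (\ref{dia:delta-pi 2})), and check that the untyped comparison morphism of Lemma~\ref{lemma:comparison morphism} intertwines them. The paper performs this check with one large pasting diagram whose commutativity ``follows by the triangle identities and the associativity law''; you organize the identical computation through the adjunct $\psi=(\Delta_G\epsilon_p)\circ(\Delta_\alpha\Pi_p)$ of the comparison morphism. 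These are the same argument, and your reduction of everything to the single identity $\Delta_{i_Y}\Delta_\alpha=\id$, equivalently $\alpha\ast i_Y=\id$, is accurate --- indeed it isolates the crux more explicitly than the paper does.

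But your justification of that crux is wrong: the stated hypotheses do \emph{not} force $\alpha\ast i_Y=\id$. Commutativity of the top, bottom, left and right faces only forces the source and target of the whiskered $2$-cell to coincide, $Fqi_Y=pGi_Y=i_CF_0$; the $2$-cell itself can be a nonidentity natural endomorphism of $i_CF_0$. Concretely: let $D_0=C_0=Y=D=\vect{0}$, let $X=C$ be the one-object category whose endomorphisms form the monoid $\NN$, let $p,q,F_0$ be identities and $F,G,i_X,i_C$ pick the unique object $c$. Every face commutes on the nose, yet $\alpha$ may be any endomorphism $t$ of $c$; taking $I(c)=\{a,b\}$ with $I(t)$ the swap and $\delta(a)\neq\delta(b)$, the comparison map $\phi_I=I(t)$ fails to intertwine the data parts. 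So the proposition as literally stated needs $\alpha\ast i_Y=\id$ as an additional hypothesis (this is presumably what ``all diagrams commute'' is meant to include, read $2$-categorically). To be fair, the paper's proof makes the same silent assumption: the left-hand composite of its verification diagram contains the factor $\Delta_{i_Y}\Delta_\alpha$, yet is identified with the first part of (\ref{dia:delta-pi 1}), which contains no $\alpha$. The condition does hold in both places the proposition is invoked: in Corollary~\ref{cor:pullback BC for typed Delta Pi} one has $\alpha=\id$, and in Corollary~\ref{cor:comma BC for typed Delta Pi} the functor $i_Y$ is constructed to factor through the canonical $F\times_Cp\to(F\down p)$, so every component $\alpha_{i_Y(d_0)}$ is an identity. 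With $\alpha\ast i_Y=\id$ either added to the hypotheses or verified per application as in those corollaries, your proof is complete and coincides with the paper's.
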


\begin{proof}

Suppose given an $\ol{X}$-instance $(I,\delta\taking\Delta_{i_X}I\to\Gamma_C$. The formulas in Propositions \ref{prop:typed Delta} and \ref{prop:typed pi} become
\begin{align}\label{dia:delta-pi 1}
\Delta_{i_D}\Delta_F\Pi_pI=\Delta_{F_0}\Delta_{i_C}\Pi_pI
=\Delta_{F_0}\Delta_{i_X}\Delta_p\Pi_pI
\To{\epsilon_p}\Delta_{F_0}\Delta_{i_X}I
\To{\delta}\Delta_{F_0}\Gamma_C=\Gamma_D\\\label{dia:delta-pi 2}
\Delta_{i_D}\Pi_q\Delta_GI=\Delta_{i_Y}\Delta_q\Pi_q\Delta_GI
\To{\epsilon_q}\Delta_{i_Y}\Delta_GI
=\Delta_{F_0}\Delta_{i_X}I
\To{\delta}\Delta_{F_0}\Gamma_C=\Gamma_D
\end{align}

We need to show that the comparison morphism above commutes with these maps to $\Gamma_D$. To see this, consider the following commutative diagram:
$$\xymatrix{
\color{blue}{\Delta_{i_D}\Delta_F\Pi_p}\ar@{=}[d]\\
\Delta_{i_Y}\Delta_q\Delta_F\Pi_p\ar[d]_{\alpha}\ar[r]^{\eta_q}&\Delta_{i_Y}\Delta_q\Pi_q\Delta_q\Delta_F\Pi_p\ar[d]^\alpha\\
\Delta_{i_Y}\Delta_G\Delta_p\Pi_p\ar@{=}[d]&\Delta_{i_Y}\Delta_q\Pi_q\Delta_G\Delta_p\Pi_p\ar[l]_{\epsilon_q}\ar[d]^{\epsilon_p}\\
\Delta_{F_0}\Delta_{i_X}\Delta_p\Pi_p\ar[d]_{\epsilon_p}&\Delta_{i_Y}\Delta_q\Pi_q\Delta_G\ar@{=}[r]\ar[d]^{\epsilon_q}&\color{ForestGreen}{\Delta_{i_D}\Pi_q\Delta_G}\\
\color{red}{\Delta_{F_0}\Delta_{i_X}}\ar@{=}[r]&\Delta_{i_Y}\Delta_G
}
$$
The left-hand composite $\color{blue}{\Delta_{i_D}\Delta_F\Pi_p}\to\color{red}{\Delta_{F_0}\Delta_{i_X}}$ is the first part of (\ref{dia:delta-pi 1}), the lower-right zig-zag $\color{ForestGreen}{\Delta_{i_D}\Pi_q\Delta_G}\to\color{red}{\Delta_{F_0}\Delta_{i_X}}$ is the first part of (\ref{dia:delta-pi 2}), and the map from top to lower-right $\color{blue}{\Delta_{i_D}\Delta_F\Pi_p}\to\color{ForestGreen}{\Delta_{i_D}\Pi_q\Delta_G}$ is $\Delta_{i_D}$ applied to the comparison morphism. The commutativity of the diagram (which follows by the triangle identities and the associativity law) is what we were trying to prove.

\end{proof}

\begin{corollary}[Pullback Beck-Chevalley for typed $\Delta,\Pi$]\label{cor:pullback BC for typed Delta Pi}

Suppose given Diagram (\ref{dia:compare delta pi}) such that the front square is a pullback:
$$
\xymatrix{Y\ar[r]^{G}\ar[d]_{q}\ullimit&X\ar[d]^p\\D\ar[r]_{F}&C
}$$
and $p$ is a discrete op-fibration. Then the typed comparison morphism is an isomorphism: 
$$\Delta_{\ol{F}}\Pi_{\ol{p}}\To{\iso}\Pi_{\ol{q}}\Delta_{\ol{G}}$$

\end{corollary}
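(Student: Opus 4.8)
The plan is to reduce the typed statement to the corresponding untyped Beck-Chevalley isomorphism, which is already in hand, by exploiting the fact that a morphism of typed instances is invertible precisely when its underlying morphism of structure instances is. Concretely, I would first recall from the construction in Proposition \ref{prop:typed comparison Delta Pi} that the typed comparison morphism $\Delta_{\ol F}\Pi_{\ol p}\to\Pi_{\ol q}\Delta_{\ol G}$ is built so that, on structure parts, it is literally the untyped comparison morphism $\Delta_F\Pi_p\to\Pi_q\Delta_G$ of Lemma \ref{lemma:comparison morphism}. Indeed, the typed functors $\Delta_{\ol F},\Pi_{\ol p},\Pi_{\ol q},\Delta_{\ol G}$ agree on structure with their untyped namesakes by Propositions \ref{prop:typed Delta} and \ref{prop:typed pi}, and the data-part compatibility is exactly what the large commuting diagram at the end of Proposition \ref{prop:typed comparison Delta Pi} verifies.

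The main input is then the untyped statement. Since the front square is a pullback and $p$ is a discrete op-fibration, the untyped comparison morphism $\Delta_F\Pi_p\to\Pi_q\Delta_G$ is an isomorphism by the pullback Beck-Chevalley result for $\Delta,\Pi$ along discrete op-fibrations (Corollary \ref{cor:pb bc DP dop}; equivalently, the exactness of the pullback square asserted in Proposition \ref{prop:pb beck sigma}). Thus for every typed instance $\ol J\in\ol X\inst$ the component of the typed comparison morphism at $\ol J$ has, as its underlying map of structure instances, an isomorphism in $D\inst$.

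To finish, I would argue that the forgetful functor $\ol D\inst\to D\inst$, sending $(I,\delta)\mapsto I$, reflects isomorphisms: if $\alpha\taking(I,\delta)\to(I',\delta')$ is a morphism of typed $\ol D$-instances whose structure part $\alpha\taking I\to I'$ is an untyped isomorphism, then the defining compatibility $\delta=\delta'\circ\Delta_{i_D}\alpha$ rearranges to $\delta'=\delta\circ\Delta_{i_D}(\alpha^{-1})$, so $\alpha^{-1}$ is again a typed morphism and $\alpha$ is a typed isomorphism. Applying this componentwise promotes the pointwise untyped isomorphisms of the previous paragraph to a natural isomorphism of typed functors, proving the corollary. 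I expect the only delicate point to be bookkeeping rather than mathematics: one must check that the two composites down to $\Gamma_D$ really do coincide with the untyped comparison morphism sandwiched between them, so that "underlying structure part" genuinely means the untyped comparison morphism. Once that identification is granted, the untyped Beck-Chevalley result carries the entire weight of the argument.
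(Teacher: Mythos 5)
Your proposal is correct and takes essentially the same route as the paper: the paper's proof likewise invokes the untyped pullback Beck--Chevalley isomorphism (Corollary \ref{cor:pb bc DP dop}) and then transfers it to typed instances via Proposition \ref{prop:typed comparison Delta Pi}, with the only additional remark being that $i_Y\taking D_0\to Y$ is induced by the universal property of the fiber product. Your explicit observation that the forgetful functor $\ol{D}\inst\to D\inst$ reflects isomorphisms is exactly the step the paper leaves implicit in the phrase ``the result follows from Proposition \ref{prop:typed comparison Delta Pi}.''
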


\begin{proof}

By Corollary \ref{cor:pb bc DP dop}, the comparison morphism is an isomorphism, 
$$\Delta_F\Pi_p\To{\iso}\Pi_q\Delta_G.$$ 
The  result follows from Proposition \ref{prop:typed comparison Delta Pi}, where $i_Y\taking D_0\to Y$ is induced by the fact that $Y$ is a fiber product.

\end{proof}

\begin{corollary}[Comma Beck-Chevalley for typed $\Delta, \Pi$]\label{cor:comma BC for typed Delta Pi}

Suppose given Diagram (\ref{dia:compare delta pi}) such that the front square is a comma category:
$$
\xymatrix{Y=(F\down p)\ar[r]^-G\ar[d]_q\ar@{}[dr]|{\stackrel{\alpha}{\Nearrow}}&X\ar[d]^p\\D\ar[r]_F&C}
$$
Then the typed comparison morphism is an isomorphism: 
$$\Delta_{\ol{F}}\Pi_{\ol{p}}\To{\iso}\Pi_{\ol{q}}\Delta_{\ol{G}}$$

\end{corollary}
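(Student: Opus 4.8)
The plan is to reduce the statement to its untyped analogue, Corollary \ref{cor:BC Pi}, in exactly the same way that the pullback version, Corollary \ref{cor:pullback BC for typed Delta Pi}, is reduced to Corollary \ref{cor:pb bc DP dop}. The only genuinely new ingredient is producing the functor $i_Y\taking D_0\to Y$ needed to fit the comma square into Diagram (\ref{dia:compare delta pi}); in the pullback case this came for free from the universal property of the fiber product, but here I must build it by hand.

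First I would construct $i_Y\taking D_0\to(F\down p)$ explicitly. For an object $n\in\Ob(D_0)$ set $d=i_D(n)\in\Ob(D)$ and $x=i_X(F_0(n))\in\Ob(X)$. Because the outer square of (\ref{dia:compare delta pi}) commutes we have $F(d)=F(i_D(n))=i_C(F_0(n))=p(i_X(F_0(n)))=p(x)$, so the identity morphism $\id_{i_C(F_0(n))}$ is a legitimate comma datum $\gamma\taking F(d)\to p(x)$. Define $i_Y(n):=(d,x,\id)$; since $D_0$ is discrete this determines $i_Y$ on all of $D_0$. By construction $q\circ i_Y=i_D$ and $G\circ i_Y=i_X\circ F_0$, and the canonical natural transformation $\alpha$ of the comma category restricts to the identity along $i_Y$, so the data $\ol{Y}:=(Y,D_0,i_Y,\Gamma_D)$ together with the $\Pi$-ready morphisms $\ol{q}$ and $\ol{G}$ assemble into a valid instance of Diagram (\ref{dia:compare delta pi}).

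Next I would invoke Corollary \ref{cor:BC Pi}, which (after the evident relabelling $D\To{F}C\From{p}X$ of its hypotheses) says that for the comma square the untyped comparison morphism $\Delta_F\Pi_p\To{\iso}\Pi_q\Delta_G$ is an isomorphism. Proposition \ref{prop:typed comparison Delta Pi} then promotes this to the typed comparison morphism $\Delta_{\ol{F}}\Pi_{\ol{p}}\to\Pi_{\ol{q}}\Delta_{\ol{G}}$, whose underlying structure part is precisely the untyped comparison morphism just shown to be invertible. Finally I would use that $\ol{\mcC}\inst$ is equivalent to the slice topos $\mcC\set/\Pi_i\Gamma$, so that the forgetful functor to $\mcC\set$ (remembering only the structure part of an instance and of an instance morphism) reflects isomorphisms: a morphism of typed instances is invertible exactly when its structure part is. Since the structure part here is an isomorphism, the typed comparison morphism is an isomorphism, as required.

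The main obstacle is the construction of $i_Y$ and the verification that it is compatible with all the surrounding data ($q$, $G$, $\alpha$, and the two copies of $\Gamma_D$); everything else is a formal transfer along the slice-topos description of typed instances, identical in spirit to the proof of Corollary \ref{cor:pullback BC for typed Delta Pi}, with Corollary \ref{cor:BC Pi} replacing Corollary \ref{cor:pb bc DP dop} as the untyped input.
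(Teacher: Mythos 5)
Your proposal is correct and takes essentially the same route as the paper: the paper likewise deduces the typed statement from the untyped Corollary \ref{cor:BC Pi} via Proposition \ref{prop:typed comparison Delta Pi}, defining $i_Y\taking D_0\to Y$ as the map factoring through the canonical functor $F\times_Cp\to(F\down p)$, which is exactly your explicit assignment $n\mapsto(i_D(n),\,i_X(F_0(n)),\,\id)$. The only difference is that you spell out the final (and easy) observation that a typed instance morphism whose structure part is invertible is itself invertible, a step the paper leaves implicit.
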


\begin{proof}

By Corollary \ref{cor:BC Pi}, the comparison morphism is an isomorphism, 
$$\Delta_F\Pi_p\To{\iso}\Pi_q\Delta_G.$$ 
The result follows from Proposition \ref{prop:typed comparison Delta Pi}, where $i_Y\taking D_0\to Y$ is the induced map factoring through the canonical morphism $F\times_Cp\to (F\down p)$.

\end{proof}



\begin{proposition}[Pullback Beck-Chevalley for typed $\Sigma,\Delta$]\label{prop:pullback BC for typed Delta Sigma}

Suppose given the following commutative diagram of typed signatures, 

$$\xymatrix{
Y_0\ar@{..>}[ddrrr]^(.8){\Gamma_Y}\ar[rrrr]^{G_0}\ar[rd]^{i_Y}\ar[ddd]_{q_0}&&&&X_0\ar@{..>}[ddl]^(.7){\Gamma_X}\ar'[d][ddd]^{p_0}\ar[dr]^{i_X}\\
&Y\ar[rrrr]^(.4){G}\ar[ddd]_{q}&&&&X\ar[ddd]^{p}\\
&&&\Set\\
D_0\ar@{..>}[urrr]^(.7){\Gamma_D}\ar'[r][rrrr]^{F_0}\ar[dr]_{i_D}&&&&C_0\ar@{..>}[ul]^(.6){\Gamma_C}\ar[dr]^{i_C}\\
&D\ar[rrrr]^{F}&&&&C
}
$$
in which $p\taking X\to C$ is a discrete op-fibration, and the four side squares
$$
\xymatrix{Y_0\ar[r]^{i_Y}\ar[d]_{q_0}\ullimit&Y\ar[d]^q\\D_0\ar[r]_{i_D}&D}\hsp
\xymatrix{Y_0\ar[r]^{G_0}\ar[d]_{q_0}\ullimit&X_0\ar[d]^{p_0}\\D_0\ar[r]_{F_0}&C_0}\hsp
\xymatrix{Y\ar[r]^{G}\ar[d]_{q}\ullimit&X\ar[d]^p\\D\ar[r]_{F}&C}\hsp
\xymatrix{X_0\ar[r]^{i_X}\ar[d]_{p_0}\ullimit&X\ar[d]^p\\C_0\ar[r]_{i_C}&C}
$$
are pullbacks. Then there is an isomorphism 
$$\Sigma_{\ol{q}}\Delta_{\ol{G}}\To{\iso}\Delta_{\ol{F}}\Sigma_{\ol{p}}$$
of functors $\ol{X}\inst\to\ol{D}\inst$.

\end{proposition}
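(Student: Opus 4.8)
The plan is to reduce the typed statement to the untyped pullback Beck-Chevalley isomorphism of Proposition \ref{prop:pb beck sigma}, in the spirit of Proposition \ref{prop:typed comparison Delta Pi}: take the known structure-part isomorphism and verify it is compatible with the data parts, so that it lifts to an isomorphism of typed instances. First I record the pieces of the cube I will use. Because the front square is a pullback and $p$ is a discrete op-fibration, so are $q$ (pullback of $p$ along $F$), $p_0$ (the right face is a pullback), and $q_0$ (the left face is a pullback), by Proposition \ref{prop:dop and pullback}. In particular both $\ol{p}$ and $\ol{q}$ are $\Sigma$-ready, so $\Sigma_{\ol{p}},\Sigma_{\ol{q}}$ are defined by Proposition \ref{prop:typed sigma}, while $\Delta_{\ol{F}},\Delta_{\ol{G}}$ are defined by Proposition \ref{prop:typed Delta}. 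Applying Proposition \ref{prop:pb beck sigma} to the front square gives the untyped isomorphism $\Phi\taking\Sigma_q\Delta_G\To{\iso}\Delta_F\Sigma_p$, and applying it to the back square (built from the discrete categories $Y_0,X_0,D_0,C_0$) gives a second untyped isomorphism $\Phi_0\taking\Sigma_{q_0}\Delta_{G_0}\To{\iso}\Delta_{F_0}\Sigma_{p_0}$. Since the typed functors have these as their structure parts, $\Phi$ is automatically the structure part of the desired typed comparison, and it remains only to check that the data parts agree.

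Next I would write out the two data parts explicitly from the formulas in Propositions \ref{prop:typed Delta} and \ref{prop:typed sigma}. Starting from a typed instance $(I,\delta)$ with $\delta\taking\Delta_{i_X}I\to\Gamma_X$, the $\Sigma_{\ol{q}}\Delta_{\ol{G}}$ side produces
$$\delta_{qG}\taking\Delta_{i_D}\Sigma_q\Delta_GI\To{\iso}\Sigma_{q_0}\Delta_{G_0}\Delta_{i_X}I\To{\Sigma_{q_0}\Delta_{G_0}\delta}\Sigma_{q_0}\Delta_{q_0}\Gamma_D\To{\epsilon_{q_0}}\Gamma_D,$$
using the left-face comparison $\Delta_{i_D}\Sigma_q\iso\Sigma_{q_0}\Delta_{i_Y}$ and the identifications $\Delta_{i_Y}\Delta_G=\Delta_{G_0}\Delta_{i_X}$ and $\Delta_{G_0}\Gamma_X=\Gamma_Y=\Delta_{q_0}\Gamma_D$. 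Dually the $\Delta_{\ol{F}}\Sigma_{\ol{p}}$ side produces
$$\delta_{Fp}\taking\Delta_{i_D}\Delta_F\Sigma_pI=\Delta_{F_0}\Delta_{i_C}\Sigma_pI\To{\Delta_{F_0}(\iso)}\Delta_{F_0}\Sigma_{p_0}\Delta_{i_X}I\To{\Delta_{F_0}\Sigma_{p_0}\delta}\Delta_{F_0}\Sigma_{p_0}\Delta_{p_0}\Gamma_C\To{\Delta_{F_0}\epsilon_{p_0}}\Gamma_D,$$
using the right-face comparison $\Delta_{i_C}\Sigma_p\iso\Sigma_{p_0}\Delta_{i_X}$ and $\Delta_{F_0}\Gamma_C=\Gamma_D$. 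The claim to prove is then $\delta_{qG}=\delta_{Fp}\circ\Delta_{i_D}\Phi$, and by naturality of $\Phi_0$ applied to the morphism $\delta$ this reduces to the two coherence facts stated below.

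The hard part is these two coherences. The first is a cube coherence: the restriction $\Delta_{i_D}\Phi$ of the front-square isomorphism, postcomposed with the right-face comparison, equals $\Phi_0$ precomposed with the left-face comparison, i.e. $\Delta_{F_0}(\iso)\circ\Delta_{i_D}\Phi=\Phi_0\circ(\iso)$ as maps $\Delta_{i_D}\Sigma_q\Delta_GI\to\Delta_{F_0}\Sigma_{p_0}\Delta_{i_X}I$. This is precisely the statement that the Beck-Chevalley mates attached to the six faces of the cube paste coherently, and it is where the hypothesis that all four side squares are pullbacks is used at once; I would prove it by realizing each isomorphism as a mate and invoking functoriality of mates under pasting of pullback squares (the bookkeeping of Lemma \ref{lemma:lots of pullbacks}). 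The second is the counit coherence $\epsilon_{q_0}=\Delta_{F_0}\epsilon_{p_0}\circ\Phi_0$, which is immediate from the construction of $\Phi_0$ as a Beck-Chevalley mate built from units and counits (Lemma \ref{lemma:comparison morphism}) together with the triangle identities. Granting these, the diagram relating $\delta_{qG}$ and $\delta_{Fp}$ commutes by naturality and the triangle identities, exactly as in Proposition \ref{prop:typed comparison Delta Pi}. Finally, a structure-part isomorphism respecting data parts is an isomorphism of typed instances, and naturality in $(I,\delta)$ upgrades this to the asserted $\Sigma_{\ol{q}}\Delta_{\ol{G}}\To{\iso}\Delta_{\ol{F}}\Sigma_{\ol{p}}$. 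I expect the first coherence, the cube-level compatibility of the two Beck-Chevalley isomorphisms, to be the main obstacle, since it is the one step that genuinely invokes all four pullback hypotheses simultaneously.
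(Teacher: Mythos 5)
Your proposal is correct and its skeleton is the paper's own: reduce to the untyped pullback Beck--Chevalley isomorphism of Proposition \ref{prop:pb beck sigma}, write the two data parts as exactly the composites the paper records in its displays (\ref{dia:sigma delta thing1}) and (\ref{dia:sigma delta thing2}), and then check that the Beck--Chevalley isomorphism commutes with the two resulting maps to $\Gamma_D$. The only divergence is in how that final compatibility is verified. The paper does it monolithically: a coherence square whose top edge is the back-face comparison $\Phi_0$ (applied at $\Delta_{i_X}$) and whose bottom edge is $\Delta_{i_D}$ applied to the front-face comparison $\Phi$, expanded into a large ``least common denominator'' diagram checked cell by cell using only naturality and the triangle identities---a chase the authors themselves call unenlightening. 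You instead factor the check into your cube coherence $\Delta_{F_0}(\iso)\circ\Delta_{i_D}\Phi=\Phi_0\circ(\iso)$ plus the counit coherence $\epsilon_{q_0}=\Delta_{F_0}\epsilon_{p_0}\circ\Phi_0$, closing with naturality of $\Phi_0$ at $\delta$; this is exactly the cleaner organization the paper gestures at when it says an easier proof of its big diagram likely exists, and both coherences do hold as you claim (the first by pasting of mates, the second by a triangle identity). Two small corrections. First, Lemma \ref{lemma:lots of pullbacks} is not the right citation for your mate-pasting step: that lemma asserts that certain squares of Grothendieck constructions are pullbacks and says nothing about functoriality of mates; the correct justification is the standard pasting lemma for mates (or simply the paper's hand chase, whose cells are nothing but naturality squares and triangle identities, cf.\ Lemma \ref{lemma:comparison morphism}). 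Second, the cube coherence is not ``where all four pullback hypotheses are used at once''---it is a formal consequence of the cube merely commuting (e.g.\ $\Delta_{G_0}\Delta_{p_0}=\Delta_{q_0}\Delta_{F_0}$ and $\Delta_{i_Y}\Delta_q=\Delta_{q_0}\Delta_{i_D}$); the pullback hypotheses enter earlier, to make $q$, $p_0$, $q_0$ discrete op-fibrations (Proposition \ref{prop:dop and pullback}), to make $\Phi$, $\Phi_0$, and the left- and right-face comparisons isomorphisms (Proposition \ref{prop:pb beck sigma}), and to make $\ol{p}$, $\ol{q}$ $\Sigma$-ready with $\Gamma_Y=\Delta_{q_0}\Gamma_D$ and $\Gamma_X=\Delta_{p_0}\Gamma_C$, which your data-part formulas silently require.
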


\begin{proof}

Note that the functors $q,p_0$, and $q_0$ are discrete opfibrations too. Suppose given an $\ol{X}$-instance $(I,\delta\taking\Delta_{i_X}I\to\Gamma_X$. By Proposition \ref{prop:pb beck sigma}, we have a Beck-Chevalley isomorphism $\Sigma_{q}\Delta_{G}\To{\iso}\Delta_{F}\Sigma_{p}$. The formulas in Propositions \ref{prop:typed Delta} and \ref{prop:typed sigma} become
\begin{align}\label{dia:sigma delta thing1}
\Delta_{i_D}\Sigma_q\Delta_GI\To{\iso}\Sigma_{q_0}\Delta_{i_Y}\Delta_GI
&=\Sigma_{q_0}\Delta_{G_0}\Delta_{i_X}I\\\nonumber
&\To{\delta}\Sigma_{q_0}\Delta_{G_0}\Gamma_X=\Sigma_{q_0}\Gamma_Y\\\nonumber
&=\Sigma_{q_0}\Delta_{q_0}\Gamma_D\To{\epsilon_{q_0}}\Gamma_D\\\label{dia:sigma delta thing2}
\Delta_{i_D}\Delta_F\Sigma_pI=\Delta_{F_0}\Delta_{i_C}\Sigma_pI
&\To{\iso}\Delta_{F_0}\Sigma_{p_0}\Delta_{i_X}I\\\nonumber
&\To{\delta}\Delta_{F_0}\Sigma_{p_0}\Gamma_X=\Delta_{F_0}\Sigma_{p_0}\Delta_{p_0}\Gamma_C\\\nonumber
&\To{\epsilon_{p_0}}\Delta_{F_0}\Gamma_C=\Gamma_D
\end{align}

We need to show that the Beck-Chevalley isomorphism commutes with these maps to $\Gamma_D$. It suffices to show that that the following diagram commutes:
$$\xymatrix@=15pt{
\color{red}{\Sigma_{q_0}\Delta_{G_0}\Delta_{i_X}}\ar[r]^-{\eta_{p_0}}\ar@{=}[d]&
\Sigma_{q_0}\Delta_{G_0}\Delta_{p_0}\Sigma_{p_0}\Delta_{i_X}\ar@{=}[r]&
\Sigma_{q_0}\Delta_{q_0}\Delta_{F_0}\Sigma_{p_0}\Delta_{i_X}\ar[r]^{\epsilon_{q_0}}&
\color{blue}{\Delta_{F_0}\Sigma_{p_0}\Delta_{i_X}}\ar[d]^{\eta_p}\\
\Sigma_{q_0}\Delta_{i_Y}\Delta_G\ar[d]_{\eta_q}&&&\Delta_{F_0}\Sigma_{p_0}\Delta_{i_X}\Delta_p\Sigma_p\ar@{=}[d]\\
\Sigma_{q_0}\Delta_{i_Y}\Delta_q\Sigma_q\Delta_G\ar@{=}[d]&&&\Delta_{F_0}\Sigma_{p_0}\Delta_{p_0}\Delta_{i_C}\Sigma_p\ar[d]^-{\epsilon_{p_0}}\\
\Sigma_{q_0}\Delta_{q_0}\Delta_{i_D}\Sigma_q\Delta_G\ar[d]_{\epsilon_{q_0}}&&&\Delta_{F_0}\Delta_{i_C}\Sigma_p\ar@{=}[d]\\
\color{ForestGreen}{\Delta_{i_D}\Sigma_q\Delta_G}\ar[r]_-{\eta_p}&
\Delta_{i_D}\Sigma_q\Delta_G\Delta_p\Sigma_p\ar@{=}[r]&
\Delta_{i_D}\Sigma_q\Delta_q\Delta_F\Sigma_p\ar[r]_-{\epsilon_q}&
\color{orange}{\Delta_{i_D}\Delta_F\Sigma_p}
}
$$
Indeed, the left-hand composite is the inverse to (\ref{dia:sigma delta thing1}), the right-hand composite is the inverse to (\ref{dia:sigma delta thing2}), the bottom map is the Beck-Chevalley isomorphism, and the top map relates $\Sigma_{q_0}\Delta_{G_0}$ to $\Delta_{F_0}\Sigma_{p_0}$, which completes the comparison between the two maps to $\Gamma_D$ above.

Proving that the above diagram commutes may be much easier than what follows, which is quite unenlightening. It is mainly an exercise in finding something analogous to a least common denominator in the square above. We include it for completeness. 
$$\footnotesize\xymatrix@=15pt{
\color{blue}{\Delta_{F_0}\Sigma_{p_0}\Delta_{i_X}}\ar[r]^{\eta_p}&\Delta_{F_0}\Sigma_{p_0}\Delta_{i_X}\Delta_p\Sigma_p\ar@{=}[r]&\Delta_{F_0}\Sigma_{p_0}\Delta_{p_0}\Delta_{i_C}\Sigma_p\ar@/^1pc/[ddr]^{\epsilon_{p_0}}\\
\Sigma_{q_0}\Delta_{q_0}\Delta_{F_0}\Sigma_{p_0}\Delta_{i_X}\ar[u]^{\epsilon_{q_0}}\ar[r]^{\eta_p}&\Sigma_{q_0}\Delta_{q_0}\Delta_{F_0}\Sigma_{p_0}\Delta_{i_X}\Delta_p\Sigma_p\ar[u]_{\epsilon_{q_0}}\ar@{=}[r]&\Sigma_{q_0}\Delta_{q_0}\Delta_{F_0}\Sigma_{p_0}\Delta_{p_0}\Delta_{i_C}\Sigma_p\ar[d]^{\epsilon_{p_0}}\ar[u]_{\epsilon_{q_0}}\\
\color{red}{\Sigma_{q_0}\Delta_{G_0}\Delta_{i_X}}\ar[u]^{\eta_{p_0}}\ar[r]^{\eta_p}\ar@{=}[d]&\Sigma_{q_0}\Delta_{G_0}\Delta_{i_X}\Delta_p\Sigma_p\ar[u]_{\eta_{p_0}}\ar@{=}[d]\ar@{=}[r]&\Sigma_{q_0}\Delta_{q_0}\Delta_{F_0}\Delta_{i_C}\Sigma_p\ar@{=}[d]\ar[r]^{\epsilon_{q_0}}&\Delta_{F_0}\Delta_{i_C}\Sigma_p\ar@{=}[d]\\
\Sigma_{q_0}\Delta_{i_Y}\Delta_G\ar[d]_{\eta_q}\ar[r]^{\eta_p}&\Sigma_{q_0}\Delta_{i_Y}\Delta_G\Delta_p\Sigma_p\ar[d]^{\eta_q}\ar@{=}[r]&\Sigma_{q_0}\Delta_{q_0}\Delta_{i_D}\Delta_F\Sigma_p\ar[r]^{\epsilon_{q_0}}&\color{orange}{\Delta_{i_D}\Delta_F\Sigma_p}\\
\Sigma_{q_0}\Delta_{q_0}\Delta_{i_D}\Sigma_q\Delta_G\ar[d]_{\epsilon_{q_0}}\ar[r]^{\eta_p}&\Sigma_{q_0}\Delta_{q_0}\Delta_{i_D}\Sigma_q\Delta_G\Delta_p\Sigma_p\ar[d]^{\epsilon_{q_0}}\ar@{=}[r]&\Sigma_{q_0}\Delta_{q_0}\Delta_{i_D}\Sigma_q\Delta_q\Delta_F\Sigma_p\ar[u]_{\epsilon_q}\ar[d]^{\epsilon_{q_0}}\\
\color{ForestGreen}{\Delta_{i_D}\Sigma_q\Delta_G}\ar[r]^{\eta_p}&\Delta_{i_D}\Sigma_q\Delta_G\Delta_p\Sigma_p\ar@{=}[r]&\Delta_{i_D}\Sigma_q\Delta_q\Delta_F\Sigma_p\ar@/_1pc/[uur]_{\epsilon_q}
}
$$
Every square in the above diagram clearly commutes, completing the proof.

\end{proof}

\begin{proposition}[Typed distributivity]\label{prop:typed distributivity}

Suppose given the diagram of typed signatures
$$\xymatrix{
\Set&C_0\ar[l]_{\Gamma_C}\ar[r]^{i_C}\ar[d]_{u_0}\ullimit&C\ar[d]^u\\
&B_0\ar[ul]^{\Gamma_B}\ar[r]^{i_B}\ar@/_1pc/[rr]_{i_A}&B\ar[r]^f&A
}
$$
such that $u$ is a discrete op-fibration and the square is a pullback. Note that $\ol{u}=(u,u_0)$ is $\Sigma$-ready and that $\ol{f}=(f,\id_{B_0})$ is $\Pi$-ready. Construct the {\em typed distributivity diagram}
$$\xymatrix{
N_0\ar[r]^{i_N}\ar[d]_{e_0}\ullimit&N\ar[r]^g\ar[d]^e\ullimit&M\ar[dd]^v\\
C_0\ar[r]^{i_C}\ar[d]_{u_0}\ullimit&C\ar[d]^u&\\
B_0\ar[r]^{i_B}&B\ar[r]^f&A
}
$$
as follows. First form the distributivity diagram as in Proposition \ref{prop:distributive}, which is the big rectangle to the right. Now take $N_0$ to be the pullback of either the top-left square, the big-left rectangle, or the big outer square --- all are equivalent. The functors $v$ and $e$ are discrete opfibrations. Note that $\ol{g}=(g,\id_{N_0})$ is $\Pi$-ready and $\ol{v}=(v,u_0\circ e_0)$ is $\Sigma$-ready.

Then there is an isomorphism 
$$\Sigma_{\ol{v}}\Pi_{\ol{g}}\Delta_{\ol{e}}\To{\iso}\Pi_{\ol{f}}\Sigma_{\ol{u}}$$
of functors $\ol{C}\inst\to\ol{A}\inst$.

\end{proposition}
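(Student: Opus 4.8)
The plan is to first dispatch the structure parts using the untyped distributivity law, and then to verify that the resulting structure isomorphism is compatible with the data parts. By construction (see Propositions~\ref{prop:typed Delta}, \ref{prop:typed pi}, and \ref{prop:typed sigma}), each of the typed functors $\Delta_{\ol e}$, $\Pi_{\ol g}$, $\Sigma_{\ol v}$, $\Pi_{\ol f}$, $\Sigma_{\ol u}$ acts on an instance by performing the corresponding untyped operation on the structure part and equipping the result with an explicitly prescribed map of data parts into the relevant $\Gamma$. Since $u$ is a discrete op-fibration and $N\iso B\cross_AM$, the underlying signature morphisms already satisfy the untyped isomorphism $\Sigma_v\Pi_g\Delta_e\To{\iso}\Pi_f\Sigma_u$ of Proposition~\ref{prop:distributive}. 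Thus the only remaining content of the proposition is that, for a fixed $\ol C$-instance $(I,\delta\taking\Delta_{i_C}I\to\Gamma_C)$, the two induced maps of data parts into $\Gamma_A$ --- one from the composite $\Sigma_{\ol v}\Pi_{\ol g}\Delta_{\ol e}$ and one from $\Pi_{\ol f}\Sigma_{\ol u}$ --- agree under the structure-part isomorphism.

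To carry this out I would expand both data-part maps explicitly. Chaining the formulas of Propositions~\ref{prop:typed Delta} and \ref{prop:typed pi} along the morphisms $e$, $g$, $v$ on the left, and chaining Propositions~\ref{prop:typed sigma} and \ref{prop:typed pi} along $u$, $f$ on the right, produces two zig-zags built from units, counits, and the canonical identifications $\Gamma\iso\Delta\Gamma'$ supplied by the pullback squares among the typing setups. The core of the argument is then a large but routine diagram chase --- in the exact spirit of the one in the proof of Proposition~\ref{prop:pullback BC for typed Delta Sigma} --- showing that these two zig-zags coincide after composing with the structure-part comparison. The chase uses only the triangle identities for the $(\Sigma,\Delta)$ and $(\Delta,\Pi)$ adjunctions together with the pullback Beck--Chevalley isomorphisms of Proposition~\ref{prop:pb beck sigma} and Corollary~\ref{cor:pb bc DP dop}, which are available precisely because $u$, $v$, and $e$ are discrete op-fibrations and the relevant side squares are pullbacks.

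A cleaner, though more setup-heavy, alternative avoids the chase by passing to Grothendieck categories of elements. Using the reinterpretations in Remarks~\ref{rem:interpretation of typed delta}, \ref{rem:interpretation of typed pi}, and \ref{rem:interpretation of typed sigma}, each typed functor becomes an honest untyped data-migration functor (or a short composite thereof) on the categories $\int\Pi_{i_\bullet}\Gamma_\bullet$; one then checks that these element categories and the induced maps assemble into a single untyped distributivity diagram, whereupon the result is immediate from Proposition~\ref{prop:distributive} (cf.\ Remark~\ref{rem:interpretation of distributive law}). Either way, the main obstacle is the same: verifying that the distributivity construction is compatible with the data parts. In the diagram-chase approach this is the commutativity of the large comparison diagram; in the reinterpretation approach it is the verification that forming categories of elements carries the typed distributivity diagram to a genuine untyped one --- equivalently, that the pullback squares defining $N_0$, $M$, and $N$ are preserved. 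This is exactly where the $\Sigma$-readiness of $\ol u$ and the pullback condition on the typing setups are used.
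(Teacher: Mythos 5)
Your proposal is correct, and your ``alternative'' route is in fact the paper's actual proof: the paper passes to Grothendieck categories of elements exactly as you sketch, assembling $\int\Pi_{i_C}\Gamma_C$, $\int\Delta_f\Pi_f\Pi_{i_C}\Gamma_C$, $\int\Pi_f\Pi_{i_C}\Gamma_C$, $\int\Pi_{i_B}\Gamma_B$, etc., into a cube over the untyped distributivity square (Diagram \ref{dia:typed distributivity helper}), and then uses Remarks \ref{rem:interpretation of typed delta}, \ref{rem:interpretation of typed pi}, and \ref{rem:interpretation of typed sigma} to rewrite $\Pi_{\ol{f}}\Sigma_{\ol{u}}\iso\Pi_f\Sigma_{u'}$ and $\Sigma_{\ol{v}}\Pi_{\ol{g}}\Delta_{\ol{e}}\iso\Sigma_{v'}\Pi_{g'}\Delta_{e'}$, so that one application of the untyped law (Proposition \ref{prop:distributive}) upstairs finishes the argument. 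You correctly locate the crux of this route, though your gloss of it as ``the pullback squares defining $N_0$, $M$, and $N$ are preserved'' slightly undersells it: in the paper all six faces of the cube are seen to be pullbacks fairly easily, and the real work is identifying the apex, i.e.\ proving $\Pi_f\Pi_{i_C}\Gamma_C=\Pi_f\partial u'\iso\Pi_g\partial r=\Pi_g\Delta_e\Pi_{i_C}\Gamma_C$ so that the top face is a genuine distributivity square; the paper does this by an adjunction/lifting correspondence that leans on an auxiliary diagonal distributivity diagram for $\partial u'$, not on pullback-preservation alone. Your primary route --- expanding the two data-part zig-zags and chasing, in the style of Proposition \ref{prop:pullback BC for typed Delta Sigma} --- is plausible in principle, since the untyped distributivity isomorphism is itself the composite of $\eta_u$, a Beck--Chevalley isomorphism, and $\epsilon_v$; but note a small slip (the left-hand chain needs the typed-$\Sigma$ formula of Proposition \ref{prop:typed sigma} for $\ol{v}$, not only Propositions \ref{prop:typed Delta} and \ref{prop:typed pi}), and the resulting chase would be at least as unwieldy as the one the authors themselves call ``quite unenlightening'' --- which is presumably why the paper prefers the elements reinterpretation, trading a large naturality chase for one geometric verification and a citation of the already-proved untyped statement.
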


\begin{proof}

By Proposition \ref{prop:distributive} we have a distributivity isomorphism 
\begin{align}\label{dia:dist for typed}
\Sigma_v\Pi_g\Delta_e\To{\iso}\Pi_f\Sigma_u
\end{align}
of functors $C\inst\to A\inst$, but we want an isomorphism of functors $\ol{C}\inst\to\ol{A}\inst$. 

Suppose given a $\ol{C}$-instance $(I,\delta\taking\Delta_{i_C}I\to\Gamma_C)$. This is equivalent to a natural transformation $\delta\taking I\to\Pi_{i_C}\Gamma_C$ or equivalently a functor $\delta\taking\int\Pi_{i_C}\Gamma_C\to\Set$. The two sides of isomorphism (\ref{dia:dist for typed}) give rise to the same instance $A\to\Set$, but not a priori the same typed instance. A typed $A$-instance is a functor $\int\Pi_f\Pi_{i_B}\Gamma_B\to\Set$. It may be useful to find $\int\Pi_{i_C}\Gamma_C$ (middle vertex of left-hand square) and $\int\Pi_f\Pi_{i_B}\Gamma_B$ (bottom right back vertex) in the diagram below, which we will presently describe:
\begin{align}\label{dia:typed distributivity helper}
\xymatrix@=15pt{
~\\
&&\int\Delta_f\Pi_f\Pi_{i_C}\Gamma_C\ar[rrr]^{g'}\ar[dddl]_{e'}\ar[dl]\ar[dddd]&&&\int\Pi_f\Pi_{i_C}\Gamma_C\ar[ddddddll]^{v'}\ar[ddll]\ar[dddd]\\
&\int\Delta_e\Pi_{i_C}\Gamma_C\ar[dd]\ar[dl]_r\\
\int\Delta_f\Pi_f\partial u\ar[rrr]^(.6)g\ar[dd]_e&&&\int\Pi_f\partial u\ar[dddd]^(.3)v\\
&\int\Pi_{i_C}\Gamma_C\ar[dl]_q\ar[dddl]^{u'}\ar[dd]\\
C\ar[dd]_u&&\int\Delta_f\Pi_f\Pi_{i_B}\Gamma_B\ar[rrr]^(.55){g''}\ar[dl]_{e''}&&&\int\Pi_f\Pi_{i_B}\Gamma_B\ar[ddll]^{v''}\\
&\int\Pi_{i_B}\Gamma_B\ar[dl]^{u''}\\
B\ar[rrr]_f&&&A
}
\end{align}
The front square is the (untyped) distributivity diagram of Proposition \ref{prop:distributive}; note that it is a pullback. The bottom square is our interpretation of $\Pi_{\ol{f}}$ from Remark \ref{rem:interpretation of typed pi}; note that it is a pullback. In the left-hand square we begin by form the little square in the front portion (the part that includes $q,u,u', u''$) as a pullback; see Remark \ref{rem:interpretation of typed sigma}. Form the other little square as a pullback along $e$. Form the diagonal square as another distributivity diagram for $\Pi_f$ of $\partial u'$. Note that we can complete the right-hand square by the functoriality of $\Pi_f$ and that it is a pullback because $\Pi_f$ preserves limits. Finally, form the back square as a pullback. It is now easy to see that all six sides are pullbacks.

Our only task is to show that the top square is a distributivity square. It is already a pullback, so our interest is in the top right corner: we need to show that there is a natural isomorphism 
$$\Pi_f\Pi_{i_C}\Gamma_C=\Pi_f\partial u'\iso^?\Pi_g\partial r=\Pi_g\Delta_e\Pi_{i_C}\Gamma_C.$$
The adjunction isomorphism for $\Pi_g\partial r$ says that for any discrete opfibration $X\to\int\Pi_{f}\partial u$ there is an isomorphism
$$\Hom_{/\int\Delta_f\Pi_{f}\partial u}(g^\m1X,\dispInt\Delta_e\Pi_{i_C}\Gamma_C)\iso\Hom_{/\int\Pi_f\partial u}(X,\dispInt\Pi_g\Delta_e\Pi_{i_C}\Gamma_C).$$
We thus need to show that there is a one-to-one correspondence between functors $X\to\dispInt\Pi_f\Pi_{i_C}\Gamma_C)$ over $\int\Pi_f\partial u$ and dotted arrows in the diagram
$$\xymatrix{
&g^\m1X\ullimit\ar[r]\ar[d]\ar@{-->}[ld]&X\ar[d]\\
\int\Delta_e\Pi_{i_C}\Gamma_C\ar[r]_r&\int\Delta_f\Pi_f\partial u\ar[r]_g&\int\Pi_f\partial u
}
$$

Given a map $X\to\dispInt\Pi_f\Pi_{i_C}\Gamma_C)$, we can pull it back along $g$ to get a map $g^\m1X\to\int\Delta_f\Pi_f\Pi_{i_C}\Gamma_C$ because the top square in (\ref{dia:typed distributivity helper}) is a pullback; this induces the required dotted arrow. Conversely, given a dotted arrow, we obtain the diagram
$$
\xymatrix{
&g^\m1X\ullimit\ar[r]\ar[d]\ar[ld]&X\ar[d]\\
\int\Delta_e\Pi_{i_C}\Gamma_C\ar[r]_r\ar[d]&\int\Delta_f\Pi_f\partial u\ar[r]_g\ar[d]&\int\Pi_f\partial u\ar[d]\\
\int\Pi_{i_C}\Gamma_C\ar[r]_{u'}&B\ar[r]_f&A
}
$$
and because the diagonal diagram in (\ref{dia:typed distributivity helper}) was constructed as a distributivity diagram, this induces a map $X\to\int\Pi_f\Pi_{i_C}\Gamma_C$ as desired.

Now that we have completed the aforementioned task, we are ready to prove the result. Our goal is as follows. We begin with an instance in the middle of the left square, a functor $\delta\taking\int\Pi_{i_C}\Gamma_C\to\Set$. We are interested in $\Pi_{\ol{f}}\Sigma_{\ol{u}}\delta$ and $\Sigma_{\ol{v}}\Pi_{\ol{g}}\Delta_{\ol{e}}\delta$; we consider them in this order.

By Remark \ref{rem:interpretation of typed sigma}, we have a natural isomorphism
$$\Pi_{\ol{f}}\Sigma_{\ol{u}}\iso\Pi_{f}\Sigma_{u'}.$$
Because $e$ is a discrete opfibration, Remarks \ref{rem:interpretation of typed delta}, \ref{rem:interpretation of typed pi}, and \ref{rem:interpretation of typed sigma}, we have a natural isomorphism 
$$\Sigma_{\ol{v}}\Pi_{\ol{g}}\Delta_{\ol{e}}\iso\Sigma_{v'}\Pi_{g'}\Delta_{e'}.$$
The result now follows from the usual distributive law, Proposition \ref{prop:distributive}.

\end{proof}

\begin{theorem}
\label{xyzabc}
Suppose that one has typed FQL queries $Q\taking\ol{S}\queryto\ol{T}$ and $Q'\taking\ol{T}\queryto \ol{U}$ as follows:
\begin{align}
\xymatrix{&\ol{B}\ar[r]^f\ar[dl]_s&\ol{A}\ar[dr]^t&&&\ol{D}\ar[r]^g\ar[dl]_u&\ol{C}\ar[dr]^v\\
\ol{S}\ar@{}[rrr]|Q&&&\ol{T}&\ol{T}\ar@{}[rrr]|{Q'}&&&\ol{U}}
\end{align}
Then there exists a typed FQL query $Q''\taking\ol{S}\queryto\ol{U}$ such that $\church{Q''}\iso\church{Q'}\circ\church{Q}$.

\end{theorem}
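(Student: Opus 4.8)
The plan is to transport the untyped argument of Theorem~\ref{thm:query comp} onto typed signatures, replacing each of its four Beck--Chevalley/distributivity steps by its typed counterpart. Concretely, I would first ignore the typing data and run the construction of diagram~(\ref{dia:query comp}) on the structure parts $S,B,A,T,D,C,U$ exactly as before: form the pullback square (i) of $t$ and $u$ (so that $k$ is a discrete op-fibration by Proposition~\ref{prop:dop and pullback}, being a pullback of the $\Sigma$-part $t$), then the distributivity diagram (ii) for $\Pi_g$ along the discrete op-fibration $k$ (Proposition~\ref{prop:distributive}, yielding a discrete op-fibration $w$), and finally the comma categories $B'=(f\down h)$ in (iii) and $N=(r\down e)$ in (iv). This produces the structure-part morphisms underlying $Q''=(s\circ m\circ n,\ q\circ p,\ v\circ w)$.

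The substantive new step is to equip each auxiliary signature $A',M,D',B',N$ with a typing setup so that the whole of diagram~(\ref{dia:query comp}) becomes a diagram of \emph{typed} signatures satisfying the hypotheses of the typed propositions. I would do this by propagating the typings along the very constructions used. Since the $\Sigma$-part $\ol{t}$ is $\Sigma$-ready its typing square is a pullback, and forming $A_0'$ as the corresponding fiber product (equivalently $k^{\m1}(D_0)$) makes $\ol{k}$ $\Sigma$-ready and all four side typing squares pullbacks, which is exactly what Proposition~\ref{prop:pullback BC for typed Delta Sigma} requires for (i). The $\Pi$-readiness of $\ol{g}$ together with the typed distributivity construction of Proposition~\ref{prop:typed distributivity} supplies the typings on $M$ and $D'$ and makes $\ol{w}$ $\Sigma$-ready and $\ol{q}$ $\Pi$-ready for (ii); the typings on $B'$ and $N$ are the ones induced on comma categories, so that Corollary~\ref{cor:comma BC for typed Delta Pi} (with $\ol{f},\ol{r}$ $\Pi$-ready) applies to (iii) and (iv). Before assembling the isomorphism I would check that $Q''$ is a legal typed FQL query in the sense of Definition~\ref{def:typed FQL query}: its $\Pi$-part $\ol{q}\circ\ol{p}$ is $\Pi$-ready because a composite of $\Pi$-ready maps is $\Pi$-ready, and its $\Sigma$-part $\ol{v}\circ\ol{w}$ is $\Sigma$-ready because composites of discrete op-fibrations are discrete op-fibrations (Proposition~\ref{prop:functor between dopfs is dopf}) and the pullback conditions paste.

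With all typings in place the isomorphism is formal, running the untyped chain one functor at a time but with typed data migration functors and typed comparison maps:
\begin{align*}
\church{Q'}\circ\church{Q}
&=\Sigma_{\ol v}\Pi_{\ol g}\Delta_{\ol u}\Sigma_{\ol t}\Pi_{\ol f}\Delta_{\ol s}
\iso\Sigma_{\ol v}\Pi_{\ol g}\Sigma_{\ol k}\Delta_{\ol h}\Pi_{\ol f}\Delta_{\ol s}\\
&\iso\Sigma_{\ol v}\Sigma_{\ol w}\Pi_{\ol q}\Delta_{\ol e}\Delta_{\ol h}\Pi_{\ol f}\Delta_{\ol s}
\iso\Sigma_{\ol v}\Sigma_{\ol w}\Pi_{\ol q}\Delta_{\ol e}\Pi_{\ol r}\Delta_{\ol m}\Delta_{\ol s}\\
&\iso\Sigma_{\ol v}\Sigma_{\ol w}\Pi_{\ol q}\Pi_{\ol p}\Delta_{\ol n}\Delta_{\ol m}\Delta_{\ol s}
=\church{Q''},
\end{align*}
where the first isomorphism is typed Beck--Chevalley for the pullback (i) (Proposition~\ref{prop:pullback BC for typed Delta Sigma}), the second is typed distributivity for (ii) (Proposition~\ref{prop:typed distributivity}), and the last two are typed comma Beck--Chevalley applied to (iii) and (iv) (Corollary~\ref{cor:comma BC for typed Delta Pi}); in the $(\Delta,\Sigma)$-restricted variant one substitutes the pullback statement of Corollary~\ref{cor:pullback BC for typed Delta Pi} for the comma ones, as in Corollary~\ref{cor:query comp for Delta-restricted}.

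I expect the main obstacle to be the bookkeeping of typing setups rather than the categorical core: one must exhibit compatible typings on $A',M,D',B',N$ and verify \emph{simultaneously} that every side square appearing in Propositions~\ref{prop:pullback BC for typed Delta Sigma} and~\ref{prop:typed distributivity} is a pullback and that the relevant morphisms are $\Pi$- or $\Sigma$-ready, confirming that these properties are stable under the pullback, distributivity, and comma constructions and propagate consistently so that $Q''$ lands back in the class of typed FQL queries. Once the hypotheses of the typed propositions are checked, the displayed chain is a verbatim typed echo of Theorem~\ref{thm:query comp}.
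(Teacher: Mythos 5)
Your proposal is correct and follows essentially the same route as the paper's proof: form the pullback $(i)$ (noting $\ol{k}$ inherits $\Sigma$-readiness from $\ol{t}$), the typed distributivity diagram $(ii)$ via Proposition~\ref{prop:typed distributivity}, and the typed comma squares $(iii)$ and $(iv)$, then assemble the isomorphism chain from Proposition~\ref{prop:pullback BC for typed Delta Sigma}, Proposition~\ref{prop:typed distributivity}, and Corollary~\ref{cor:comma BC for typed Delta Pi}, exactly as the paper does (your extra bookkeeping of how the typing setups propagate is detail the paper leaves implicit). One trivial quibble: Proposition~\ref{prop:functor between dopfs is dopf} is a right-cancellation statement, not the fact that composites of discrete op-fibrations are discrete op-fibrations, though the latter is immediate from Definition~\ref{def:discrete op-fibration} and your conclusion stands.
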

 
\begin{proof}

We will form $Q''$ by constructing the following diagram, which we will explain step by step:

\begin{align}
\xymatrix{
&&&\ol{N}\ar[rr]^{\ol{p}}\ar[dl]_n\ar@{}[dr]|{(iv)}&&\ol{D'}\ar[r]^{\ol{q}}\ar[dl]^{\ol{e}}\ar@{}[ddr]|{(ii)}&\ol{M}\ar[dd]^{\ol{w}}\\
&&\ol{B'}\ar[rr]^{\ol{r}}\ar[dl]_{\ol{m}}\ar@{}[dr]|{(iii)}&&\ol{A'}\ar[rd]^{\ol{k}}\ar[ld]_{\ol{h}}\ar@{}[dd]|{(i)}&&&\\
&\ol{B}\ar[dl]_{\ol{s}}\ar[rr]^{\ol{f}}&&\ol{A}\ar[dr]_{\ol{t}}&&\ol{D}\ar[dl]^{\ol{u}}\ar[r]^{\ol{g}}&\ol{C}\ar[dr]^{\ol{v}}&\\
\ol{S}&&&&\ol{T}&&&\ol{U}}
\end{align}
We form the pullback $(i)$; note that since $\ol{t}$ was $\Sigma$-ready, so is $\ol{k}$. Now form the typed distributivity diagram $(ii)$ as in Proposition \ref{prop:typed distributivity}; note that $\ol{w}$ is $\Sigma$-ready and $\ol{q}$ is $\Pi$-ready. Now form the typed comma categories $(iii)$ and $(iv)$ as in Proposition \ref{prop:typed comparison Delta Pi}. Note that $\ol{r}$ and $\ol{p}$ are $\Pi$-ready. The result now follows from Proposition \ref{prop:pullback BC for typed Delta Sigma}, \ref{prop:typed distributivity}, and Corollary \ref{cor:comma BC for typed Delta Pi}, 

\end{proof}


\begin{thebibliography}{10}

\bibitem{DBLP:books/aw/AbiteboulHV95}
Serge Abiteboul, Richard Hull, and Victor Vianu.
\newblock {\em Foundations of Databases}.
\newblock Addison-Wesley, 1995.

\bibitem{Abiteboul:1989:OIQ:67544.66941}
Serge Abiteboul and Paris~C. Kanellakis.
\newblock Object identity as a query language primitive.
\newblock In {\em Proceedings of the 1989 ACM SIGMOD International Conference
  on Management of Data}, SIGMOD '89, pages 159--173, New York, NY, USA, 1989.
  ACM.

\bibitem{Bachmair89completionwithout}
Leo Bachmair, Nachum Dershowitz, and David~A. Plaisted.
\newblock Completion without failure, 1989.

\bibitem{BW}
Michael Barr and Charles Wells, editors.
\newblock {\em Category theory for computing science, 2nd ed.}
\newblock 1995.

\bibitem{Buneman92theoreticalaspects}
P.~Buneman, S.~Davidson, and A.~Kosky.
\newblock Theoretical aspects of schema merging.
\newblock In {\em EDBT}, 1992.

\bibitem{Carmody1995459}
S.~Carmody, M.~Leeming, and R.F.C. Walters.
\newblock The todd-coxeter procedure and left kan extensions.
\newblock {\em Journal of Symbolic Computation}, 19(5):459 -- 488, 1995.

\bibitem{Fagin:2005:CSM:1114244.1114249}
Ronald Fagin, Phokion~G. Kolaitis, Lucian Popa, and Wang-Chiew Tan.
\newblock Composing schema mappings: Second-order dependencies to the rescue.
\newblock {\em ACM Trans. Database Syst.}, 30(4):994--1055, December 2005.

\bibitem{Fleming02adatabase}
Michael Fleming, Ryan Gunther, and Robert Rosebrugh.
\newblock A database of categories.
\newblock {\em Journal of Symbolic Computing}, 35:127--135, 2002.

\bibitem{GK}
N.~Gambino and J.~Kock.
\newblock Polynomial functors and polynomial monads.

\bibitem{haas:clio}
Laura~M. Haas, Mauricio~A. Hern\'{a}ndez, Howard Ho, Lucian Popa, and Mary
  Roth.
\newblock Clio grows up: from research prototype to industrial tool.
\newblock In {\em SIGMOD '05}.

\bibitem{xyz}
Jieh Hsiang and Michael Rusinowitch.
\newblock On word problems in equational theories.
\newblock In {\em Automata, Languages and Programming}, volume 267 of {\em
  LNCS}. 1987.

\bibitem{opac-b1094856}
Bart Jacobs.
\newblock {\em Categorical logic and type theory}.
\newblock PhD thesis, Mathematics, Amsterdam, Lausanne, New York, 1999.

\bibitem{Johnson200251}
Michael Johnson and Robert Rosebrugh.
\newblock Sketch data models, relational schema and data specifications.
\newblock {\em Electronic Notes in Theoretical Computer Science}, 61(0):51 --
  63, 2002.

\bibitem{Johnstone:MR1953060}
Peter~T. Johnstone.
\newblock {\em Sketches of an elephant: a topos theory compendium. {V}ol. 1}.
\newblock 2002.

\bibitem{Spivak:1202.2591}
David~I Spivak.
\newblock Database queries and constraints via lifting problems.
\newblock 2012.

\bibitem{Spivak:2012:FDM:2324905.2325108}
David~I. Spivak.
\newblock Functorial data migration.
\newblock {\em Inf. Comput.}, 217:31--51, August 2012.

\bibitem{Wong:1994:QNC:921235}
Limsoon Wong.
\newblock {\em Querying nested collections}.
\newblock PhD thesis, Philadelphia, PA, USA, 1994.
\newblock AAI9503855.

\end{thebibliography}
\end{document}